\setlist[enumerate]{nosep, topsep=1ex}
\setlist[itemize]{nosep, topsep=1ex}
\setlist[description]{nosep}
\newcommand{\alg}[1]{\textsc{\bfseries \footnotesize #1}}
\newcommand{\bigO}[1]{O \left( #1 \right)}
\newcommand{\tO}[1]{\tilde{O} (#1)}
\newcommand{\bigtO}[1]{\tilde{O} \left( #1 \right)}
\newcommand{\E}{\mathbb{E}}
\newtheorem{theorem}{Theorem}[section]
\newtheorem{corollary}[theorem]{Corollary}
\newtheorem{definition}[theorem]{Definition}
\newtheorem{lemma}[theorem]{Lemma}
\newtheorem{proposition}[theorem]{Proposition}
\newtheorem{assumption}[theorem]{Assumption}
\newcommand{\Z}{\mathbb{Z}}
\newcommand{\R}{\mathbb{R}}
\newcommand{\family}{\mathcal{F}}
\newcommand{\eps}{\varepsilon}
\newcommand{\given}{\textrm{\xspace s.t. \xspace}}
\newcommand{\andT}{\textrm{\xspace and \xspace}}
\newcommand{\otherwise}{\textrm{\xspace o/w \xspace}}
\newcommand{\ceil}[1]{\left\lceil #1 \right\rceil}
\newcommand{\floor}[1]{\left\lfloor #1 \right\rfloor}
\newcommand{\set}[1]{\{#1\}}
\newcommand{\wt}{\mathsf{wt}}
\newcommand{\rank}{\mathrm{rank}}
\newcommand{\twoApproxAPASP}{\alg{2ApproxAPSP}}
\newcommand{\randomApproxAPASP}{\alg{Random2ApproxAPSP}}
\newcommand{\additiveAPASP}{\alg{AdditiveAPASP}}
\newcommand{\dominatingAPASP}{\alg{DominatingSetAPASP}}
\newcommand{\twoedgeAPASP}{\alg{Distance2APASP}}
\newcommand{\alphabetaAPASP}{\alg{MultAddAPASP}}
\newcommand{\boundedAdditiveAPASP}{\alg{BoundedAdditiveAPASP}}
\newcommand{\shortAdditiveAPASP}{\alg{ShortAdditiveAPASP}}
\newcommand{\longMultiplicativeAPASP}{\alg{LongMultiplicativeAPASP}}
\newcommand{\longMultiplicativeAPASPFMM}{\alg{FMMLongMultiplicativeAPASP}}
\newcommand{\APSP}{\alg{APSP}}
\newcommand{\sparseAPASP}{\alg{SparseAPASP}}
\newcommand{\denseAPASP}{\alg{DenseAPASP}}
\newcommand{\baswanaAPASP}{\alg{(2,1)-APASP}}
\newcommand{\bkAPASP}{\alg{BK2APASP}}
\newcommand{\hittingSet}{\alg{HittingSet}}
\newcommand{\rHittingSet}{\alg{rHittingSet}}
\newcommand{\dominate}{\alg{Dominate}}
\newcommand{\rDominate}{\alg{rDominate}}
\newcommand{\decompose}{\alg{Decompose}}
\newcommand{\rDecompose}{\alg{rDecompose}}
\newcommand{\wDecompose}{\alg{WeightedDecompose}}
\newcommand{\bfs}{\alg{BFS}}
\newcommand{\dijkstra}{\alg{Dijkstra}}
\newcommand{\boundedMinPlus}{\alg{BoundedMinPlus}}
\newcommand{\monotoneMinPlus}{\alg{MonotoneMinPlus}}
\newcommand{\approximateMinPlus}{\alg{ApproximateMinPlus}}
\newcommand{\bdToMonotone}{\alg{BDtoMonotone}}
\newcommand{\minplus}{$(\min, +)$\xspace}
\newcommand{\edgeset}{\mathcal{E}}
\newcommand{\level}{\ell}
\newcommand{\heavy}{h}
\newcommand{\kHeavy}{\overline{h}}
\newcommand{\multapproxlimit}{k}
\title{Faster Approximate All Pairs Shortest Paths}
\author{Barna Saha\thanks{University of California, San Diego. The authors are partially supported by NSF grants 1652303, 1909046, 2112533, and HDR TRIPODS Phase II grant 2217058.} \and Christopher Ye\footnotemark[1]}
\date{}
\begin{document}

\vspace{-20pt}
\maketitle
\vspace{-30pt}

\setcounter{page}{0}
\thispagestyle{empty}

\begin{abstract}
    
The all pairs shortest path problem (APSP) is one of the foundational problems in computer science. 
For weighted dense graphs on $n$ vertices, no truly sub-cubic algorithms exist to compute APSP exactly even for undirected graphs. This is popularly known as the APSP conjecture and has played a prominent role in developing the field of fine-grained complexity. 
The seminal results of Seidel and Zwick show that using fast matrix multiplication (FMM) it is possible to compute APSP on unweighted undirected graphs exactly in $\tO{n^{\omega}}$ time, and can be approximated within $(1+\epsilon)$ factor in weighted undirected graphs in time $\tO{n^{\omega}}$ respectively. 
Here $\omega$ is the exponent of FMM, which currently stands at $\omega=2.37188$. 
Moreover even for unweighted undirected graphs, it is not possible to obtain a $(2-\epsilon)$-multiplicative approximation of APSP for any $\epsilon >0$ in $o(n^\omega)$ time. 
Since 2000, a result by Dor, Halperin, and Zwick gave the best $2$ approximation algorithm for APSP in unweighted undirected graphs in time $\tO{n^{7/3}}$. 
This result was recently improved by Deng, Kirkpatrick, Rong, Williams and Zhong to $\tO{n^{2.2593}}$ using fast min-plus product for bounded-difference matrices which uses FMM as a subroutine (the stated bound here uses new results for computing such min-plus products by Durr). 
In fact both these results obtain a $+2$-additive approximation. 
Recently, Roditty (STOC, 2023) improved the previous bounds for multiplicative $2$-approximation of APSP in unweighted undirected graphs giving the best known bound of $\tO{n^{2.25}}$. All these algorithms are deterministic. 
Roditty also considers estimating shortest paths for all paths of length $\geq k$ for $k \geq 4$, and gives improved bounds when the underlying graph is sparse using randomization. 
Though for dense graphs, the best known bounds still remained at those provided by Dor et al. more than two decades back.

In this paper, we provide a multitude of new results for multiplicative and additive approximations of APSP in undirected graphs for both unweighted and weighted cases. 
We provide new algorithms for multiplicative 2-approximation of unweighted graphs: a deterministic one that runs in $\tO{n^{2.072}}$ time and a randomized one that runs in $\tO{n^{2.0318}}$ on expectation improving upon the best known bound of $\tO{n^{2.25}}$. 
The algorithm uses FMM as well as new combinatorial insights. 
For $2$-approximating paths of length $\geq k$, $k \geq 4$, we provide the first improvement after Dor et al. for dense graphs even just using combinatorial methods, and then improve it further using FMM.
We next consider additive approximations, and provide improved bounds for all additive $\beta$-approximations, $\beta \geq 4$. 
For example, we achieve a running time of $\tO{n^{2.155}}$ for $+4$ additive approximation improving over the previously known bound of $\tO{n^{2.2}}$, and for a $+6$ additive approximation, our algorithm has a running time of $\tO{n^{2.103}}$ as opposed to the $\tO{n^{2.125}}$ time that was previously known. 
For weighted graphs, we show that by allowing small additive errors along with an $(1+\epsilon)$-multiplicative approximation, it is possible to improve upon Zwick's $\tO{n^\omega}$ algorithm. 
For example, it is possible to obtain a bi-criteria $(1+\epsilon, 2w_{u,v})$ approximation in $\tO{n^{2.152}}$ time for the shortest path distance between all vertex pairs $u,v$ where $w_{u,v}$ is the highest weight edge on the $u$-$v$ shortest path. 
Additionally, we provide a landscape of such bi-criteria approximations for weighted and unweighted graphs. 
Our results point out the crucial role that FMM can play even on approximating APSP on unweighted undirected graphs, and reveal new bottlenecks towards achieving a quadratic running time to approximate APSP.

\newpage

\end{abstract}

\newpage
\tableofcontents

\newpage
\setcounter{page}{1}
\section{Introduction} 
\label{sec:intro}

Computing All Pairs Shortest Path (APSP) on graphs is a landmark problem in computer science. 
It is both one of the foundational problems of fine grained complexity, as well as one that directly or indirectly aids in the computation of many important graph and matrix problems. 
A large variety of graph and matrix problems can be fine-grained reduced to either unweighted or weighted APSP showing that a better algorithm for APSP will lead to a better algorithm for all those problems \cite{williams2010subcubic}. 
The classic approaches like Floyd-Warshall computes APSP on weighted dense graphs, $G=(V,E), |V|=n, |E|=m$, in $O(n^3)$ time; whereas the best result known by Williams has a running time of $O\left( \frac{n^3}{2^{\sqrt{\log n}} }\right)$ \cite{williams2014faster} which is still not sub-cubic. 
Indeed the weighted APSP conjecture states that there does not exist any truly subcubic algorithm, that is one running in $O(n^{3 - \varepsilon})$ time for some constant $\varepsilon > 0$. 
However, for unweighted undirected graphs, a seminal result of Seidel showed that APSP can be computed in $O(n^{\omega})$ time \cite{seidel1995all} where $\omega$ is the exponent of fast matrix multiplication, and currently stands at $\omega=2.37188$ \cite{duan2022matrixmult}. More generally, when weights are bounded integers in the range $[-M,M]$, APSP can be solved in $O(Mn^\omega)$ time for undirected graphs \cite{shoshan_zwick, alon1997exponent} and in subcubic time in directed graphs \cite{zwick2002bridgingsets}.

In this paper, we concentrate on undirected graphs, and henceforth all references to graphs indicate undirected graphs if not explicitly mentioned otherwise. 
Interest in computing APSP has naturally led to the study of approximation algorithms. 
An estimate $\hat{\delta}: V \times V \rightarrow \R$ is an $(\alpha, \beta)$-approximation of the actual shortest path metric $\delta: V \times V \rightarrow \mathbb{R}$ if $\delta(u,v) \leq \hat{\delta}(u, v) \leq \alpha \delta(u,v)+\beta $ for all $u,v \in V \times V$. Therefore an $(\alpha,0)$-approximation implies a pure multiplicative approximation whereas a $(1, \beta)$-approximation implies a pure additive approximation.
There is a huge body of literature on approximating APSP, from multiplicative and additive approximation in sub-cubic time \cite{aingworth1999fast, dor2000apasp, cohen2001smallstretch, baswana2010fasterapasp, berman2007approxapsp, deng2022apasp, roditty2023newapasp}, computing distance oracle that trades off preprocessing with query time, to developing space efficient data structures  
 \cite{mendel2007ramsey, baswana2010fasterapasp, patrascu2010distanceoracles, wulff2012approximateoracles, agarwal2013stretch2oracle, chechik2014approximate, chechik2015oracles, elkin2016linearoracle, elkin2016spaceoracle, sommer2016distanceoracle,akav2020almost2, chechik2022nearly2}.
 In a seminal work, Zwick gave an $((1+\epsilon),0)$-approximation algorithm for weighted APSP that runs in $\tilde{O}(\frac{n^{\omega}}{\epsilon} \log{W})$ time \cite{zwick_apsp_weighted_directed} where $W$ is the largest edge weight. Dependency on $W$ was later removed to obtain a strong polynomial running time of $O\left(\frac{n^\omega}{\epsilon} \text{polylog}\left(\frac{n}{\epsilon}\right)\right)$ \cite{bringmann2019noscaling}. Moreover, even for unweighted graphs, a better than $(2,0)$-approximation in $o(n^{\omega})$ time is not possible \cite{dor2000apasp}. Naturally this leads to the question whether a $(2,0)$ approximation is possible in $o(n^\omega)$ time and even better in $O(n^2)$ time. While designing an $\tilde{O}(n^2)$ time  algorithm for a $(2,0)$-approximation still remains open, it is possible to get a $(3,0)$-approximation in $\tilde{O}(n^2)$ time \cite{dor2000apasp,cohen2001smallstretch}.

 Let us first consider unweighted graphs. So far, the best running time to achieve a $(2,0)$-approximation is due to Roditty \cite{roditty2023newapasp}. Roditty gave an algorithm with a running time of $\tilde{O}(n^{2.25})$ for a $(2,0)$ approximation which improves upon the $\tilde{O}(n^{2.2593})$ running time previously known \cite{deng2022apasp, durr2023rect_monotone}.  In fact, both these results are based on Dor et al.'s work \cite{dor2000apasp}. Dor, Halperin and Zwick gave an algorithm to achieve a $(1,2)$-approximation that runs in $\tilde{O}(n^{7/3})$ time and a $(1,4)$-approximation that runs in $\tilde{O}(n^{9/4})$ time among other results. Roditty utilizes the $\tilde{O}(n^{9/4})$ time algorithm and brings in new ideas to show that on paths of length $3$, it is possible to get a $+2$-additive approximation. Moreover, paths of length $1$ can trivially be found exactly in $O(m)$ time, and paths of length $2$ can be approximated within $+2$-additive errors from Dor et al.'s work \cite{dor2000apasp}. All, these together imply a $(2,0)$-approximation in $\tilde{O}(n^{9/4})$ time by Roditty \cite{roditty2023newapasp}. On the other hand, Deng et al. showed the first step of the $\tilde{O}(n^{7/3})$ time algorithm of Dor et al. can be made faster by utilizing fast algorithms for bounded-difference $(\min, +)$ product \cite{chi2022monotone, chi2022boundeddifference, williams2020truly, bringmann2019boundeddifference}, therefore essentially giving a faster $(1,2)$-approximation algorithm. These lead to several interesting open questions.

 \begin{center}{\it Can we use algebraic methods to get a faster $(2,0)$-approximation?}
 \end{center}
 
 Using bounded-difference $(\min, +)$ product in the first step of $\tilde{O}(n^{9/4})$ does not help, as that running time itself is quite large. Dor et al. provided an entire trade-off between running time and additive error. They showed for every even $\beta$, it is possible to approximate APSP within $+\beta$ additive error in time $\tilde{O}(\min(n^{2-\frac{2}{\beta+2}}m^{\frac{2}{\beta+2}}, n^{2+\frac{2}{3\beta-2}}))$. While Deng et al.'s work \cite{deng2022apasp} improved the running time for a $(1,2)$-approximation, it left open the scope of improving the running time for algorithms that allow higher additive errors. The best known bounds for those still stand at where they were more than two decades back. Employing the bounded-difference $(\min,+)$-product as the first step in Dor et al.'s algorithm for higher additive errors provide no improvements.

  \begin{center}{\it Can we use algebraic methods to get faster $(1,\beta)$-approximation for all $\beta >0$?}
 \end{center}

 As will become apparent, the challenge in computing a fast $(2,0)$-approximation lies in handling paths of short lengths, for which a multiplicative $2$-approximation implies a good additive approximation. Another interesting contribution of Roditty's work \cite{roditty2023newapasp} is to provide an improved running time when a $(2,0)$-approximation is sought only for path lengths greater than a certain threshold. In particular, they show that a $(2,0)$-approximation can be obtained for vertex pairs at distance at least $k$ in time $\tilde{O}\left(\min{\left(n^{2-\frac{2}{k+4}}m^{\frac{2}{k+4}}, n^{2+\frac{2}{3k-2}}\right)}\right)$. This improves upon the previous bound of  $\tilde{O}\left(\min{\left(n^{2-\frac{2}{k+2}}m^{\frac{2}{k+2}}, n^{2+\frac{2}{3k-2}}\right)}\right)$ by Dor et al. for sparse graphs while leaving the same bounds for dense graphs. Clearly, this raises the question whether it is possible to get  improved bounds for dense graphs.

 \begin{center}{\it Can we get a faster $(2,0)$-approximation for vertex pairs at distance at least $k$ for dense graphs?}
 \end{center}

 Moving to weighted graphs, as stated before a $((1+\epsilon),0)$-approximation, $\epsilon >0$, is possible in $O\left(\frac{n^\omega}{\epsilon} \text{polylog}\left(\frac{n}{\epsilon}\right)\right)$ time \cite{zwick_apsp_weighted_directed, bringmann2019noscaling}. Multiple works have studied a natural question whether a bi-criteria $(\alpha, \beta)$-approximation, $\alpha >0, \beta >0$, can have better time complexity \cite{baswana2010fasterapasp,baswana2005nearly2approxapasp,berman2007approxapsp,elkin2005asp}. 
 Baswana and Kavitha \cite{baswana2010fasterapasp} and Berman and Kasiviswanathan \cite{berman2007approxapsp} obtained $(2, w_{u, v})$-approximations in $\tO{n^2}$ time where $w_{u, v}$ is the largest weight in the shortest path between $u, v$. Berman and Kasiviswanathan also showed an $(1+\epsilon, 2w(u,v))$-approximation with a running time of $\tO{\frac{n^\omega}{\epsilon^3}\log{\frac{n}{\epsilon}}}$ \cite{berman2007approxapsp}. This later result improved upon a prior work of Elkin where a $(1 + \eps, M \beta(\eps, \rho, \zeta))$-approximation is obtained in time $O(m n^{\rho} + n^{2 + \zeta})$ with $M$ being the ratio between the heaviest and lightest edge in the graph \cite{elkin2005asp}.
The constant $\beta(\eps, \rho, \zeta)$ depends on $\zeta$ as $(1/\zeta)^{\log 1/\zeta}$, inverse exponentially on $\rho$, and inverse polynomially on $\eps$. These lead to an interesting question, if we fix $\alpha=(1+\epsilon)$, can we show a running time trade-offs with varying $\beta$? A similar question applies for all $\alpha < 2$.

\begin{center}{\it Can we get a faster $(\alpha,\beta)$-approximation for weighted graphs where $\alpha=(1+\epsilon)$ and $\beta >0$?}
 \end{center}

\subsection{Our Contributions}

In this paper, we provide multitude of results on approximation APSP on undirected graphs answering all of the above questions. We start with our contributions on unweighted graphs.

\subsubsection{Multiplicative Approximation on Unweighted Graphs (Section \ref{sec:mult-approx-bounded})}

We significantly improve upon the current best known bound of $\tilde{O}(n^{2.25})$ \cite{roditty2023newapasp} for a $(2,0)$ approximation of unweighted APSP. Specifically, we obtain the following theorems. 
Theorem \ref{thm:rand-2-approx} is also obtained in a concurrent work by Dory, Forster, Kirkpatrick, Nazari, Vassilevska Williams, and de Vos \cite{DBLP:journals/corr/DoryFKNWV}.

\begin{restatable*}{theorem}{randtwomultapprox}
    Let $G$ be an undirected, unweighted graph with $n$ vertices.
    \Cref{alg:random-2-approx-apasp} computes a $(2, 0)$-approximate APSP solution in expected time $\bigtO{n^{2.03184039}}$.
    \label{thm:rand-2-approx}
\end{restatable*}
\begin{restatable*}{theorem}{dettwomultapprox}
    Let $G$ be an undirected, unweighted graph with $n$ vertices.
    \Cref{alg:2-approx-apasp} deterministically computes a $(2, 0)$-approximate APSP solution in time $\tO{n^{2.07203166}}$.
    \label{thm:2-approx-apsp}
\end{restatable*}

Our results use FMM with the current best known bounds \cite{duan2022matrixmult, gall2018rectangularmm} and several new combinatorial insights to bring down the running time very close to $O(n^2)$. We also observe that a $(7/3,0)$-approximation on unweighted graphs can be computed in $\tilde{O}(n^2)$ time (see Appendix~\ref{sec:7/3-det-approx}).

\subsubsection{Multiplicative Approximations for Long Paths~(Section \ref{sec:bounded-additive-approx})}

We improve the bounds for a $(2,0)$-approximation on paths of length at least $k$, for all $k \geq 4$ on dense graphs even just using combinatorial techniques and then further using algebraic methods. Combining with Roditty's results \cite{roditty2023newapasp}, these imply an improvement for all cases (sparse and dense) over Dor et al.'s result \cite{dor2000apasp} for paths of length at least $4$.

Below, we state the combinatorial results and the further improvements using FMM are stated in Section 5.

\begin{table}[H]
    \begin{center}
        \begin{tabular}{ |c|c|c|c|c| } 
             \hline
             \multicolumn{4}{|c|}{$(2, 0)$-Multiplicative Approximation for $\delta(u, v) \geq \multapproxlimit$} \\
             \hline
             $\multapproxlimit$ & \cite{dor2000apasp}, \cite{roditty2023newapasp} & \Cref{alg:mult-approx-bk} (Combinatorial) & \Cref{alg:mult-approx-bk-fmm} (uses FMM) \\
             \hline
             $4$ & $n^{11/5} = n^{2.200}$ & $n^{15/7} = n^{2.1429}$ (\Cref{alg:short-additive-apasp}) & $n^{2.01973523}$ \\
             $6$ & $n^{17/8} = n^{2.125}$ & $n^{21/10} = n^{2.1000}$ & $n^{2.01084688}$ \\
             $8$ & $n^{23/11} = n^{2.091}$ & $n^{29/14} = n^{2.0715}$ & $n^{2.00745825}$ \\
             $10$ & $n^{29/14} = n^{2.072}$ & $n^{37/18} = n^{2.0556}$ & $n^{2.00573823}$ \\
             $12$ & $n^{35/17} = n^{2.059}$ & $n^{45/22} = n^{2.0455}$ & $n^{2.00462679}$ \\
             \hline
        \end{tabular}
    \end{center}
    \caption{Improvements in computing $(2, 0)$-approximate APSP for $\delta(u, v) \geq \multapproxlimit$ on undirected, unweighted graphs with $n$ vertices and $m$ edges.
    For $\multapproxlimit = 4$, \Cref{alg:short-additive-apasp} is more efficient than \Cref{alg:mult-approx-bk}.
    While only a few examples are shown above, we obtain improvements for all $\multapproxlimit$.
    See \Cref{prop:mult-approx-bk-fmm-examples} for the derivation of some running times for \Cref{alg:mult-approx-bk-fmm}.
    }
    \label{tbl:2-approx-geq-beta}
\end{table}

\begin{theorem} [Stated as Corollary~\ref{cor:2-approx-d-geq-k}]

    Let $\multapproxlimit \geq 4$ be an even integer.
    Then, we can compute a $(2, 0)$-approximation for distances $\delta(u, v) \geq \multapproxlimit$ combinatorially in expected time
    \begin{equation*}
        \bigtO{\min \left(n^{2 - \frac{2}{\multapproxlimit + 4}} m^{\frac{2}{\multapproxlimit + 4}}, n^{2 + \frac{1}{2(\multapproxlimit - 1)}}, n^{2 + \frac{2}{3\multapproxlimit + 2}}\right)}
    \end{equation*}
    In particular, we output $\hat{\delta}$ such that $\delta(u, v) \leq \hat{\delta}(u, v)$ for all $u, v$ and $\hat{\delta}(u, v) \leq 2 \delta(u, v)$ whenever $\delta(u, v) \geq \multapproxlimit$.
\end{theorem}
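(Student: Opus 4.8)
The plan is to derive the statement from the performance guarantees of the two combinatorial algorithms developed earlier in this section, \Cref{alg:mult-approx-bk} and \Cref{alg:short-additive-apasp}, together with Roditty's algorithm for the sparse regime; each of the three terms in the $\min$ is obtained by instantiating the free parameters of one of these algorithms. The guiding observation is that a $(1,\multapproxlimit)$-approximation is \emph{automatically} a $(2,0)$-approximation on all pairs with $\delta(u,v)\ge\multapproxlimit$, since there $\delta(u,v)+\multapproxlimit\le 2\delta(u,v)$; so it would suffice to overestimate by an additive $\multapproxlimit$. But we can do strictly better than the plain $+\multapproxlimit$ additive algorithm of Dor--Halperin--Zwick, because on a pair at true distance $t\gg\multapproxlimit$ we are allowed additive slack up to $t$, not just $\multapproxlimit$ — the longer the path, the lazier we may be at the coarsest scales of the hierarchy. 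Quantifying this slack is exactly what buys the improved exponents.

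The algorithm follows the layered-BFS template. Fix a constant number $r=\Theta(\multapproxlimit)$ of levels; at level $i$ pick a geometrically spaced degree threshold $s_i$ and, by random sampling (this is the source of the ``expected'' in the statement), a set $D_i$ of $\tilde{O}(n/s_i)$ vertices dominating every vertex of degree $\ge s_i$, with $D_0=V$. Let $G_i$ be $G$ sparsified at level $i$ so that every vertex of degree $>s_i$ keeps only its edges into $D_i$ plus $s_i$ arbitrary others, whence $|E(G_i)|=\tilde{O}(ns_i+n^2/s_i)$. Run BFS from every vertex of $D_i$ in $G_i$ for each $i$, run BFS from every $v\in V$ inside the subgraph induced on vertices of degree $<s_1$, and finish with one round of relaxations in $\tilde{O}(n^2)$. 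The cost is $\sum_i\tilde{O}(|D_i|\cdot|E(G_i)|)$ plus lower-order terms; balancing the $s_i$ and the level count $r$ against the target additive error yields, in the dense regime, either $\tilde{O}(n^{2+1/(2(\multapproxlimit-1))})$ (from \Cref{alg:mult-approx-bk}) or, via a slightly different parameterization realized by \Cref{alg:short-additive-apasp} and advantageous only for $\multapproxlimit\le 6$, $\tilde{O}(n^{2+2/(3\multapproxlimit+2)})$, while the sparse-graph analysis (Roditty's algorithm, or the sparse-regime analysis of \Cref{alg:mult-approx-bk}) gives $\tilde{O}(n^{2-2/(\multapproxlimit+4)}m^{2/(\multapproxlimit+4)})$.

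For correctness, fix $u,v$ with $\delta(u,v)=t\ge\multapproxlimit$ and a shortest $u$--$v$ path $P$. If $P$ uses only vertices of degree $<s_1$ it survives in the low-degree subgraph and the restricted BFS from $u$ finds it exactly. Otherwise, partition $P$ at its high-degree vertices: each time $P$ enters a high-degree block governed by threshold $s_i$, reroute through the $D_i$-dominator of the first such vertex and splice in its BFS tree, paying $O(1)$ additive error; at most $r$ distinct thresholds occur along $P$, so the total overhead is $O(r)=O(\multapproxlimit)$, and a careful accounting keeps it at most $\multapproxlimit$, giving $\hat{\delta}(u,v)\le t+\multapproxlimit\le 2t$. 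For pairs with $\delta(u,v)<\multapproxlimit$ we only need $\delta(u,v)\le\hat{\delta}(u,v)$, which holds because every label we write is realized by an actual walk.

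The main obstacle is this error accounting: every level of the hierarchy we retain costs us in the exponent, so we must certify ``$\hat{\delta}(u,v)\le t+\multapproxlimit$ whenever $t\ge\multapproxlimit$'' with the fewest possible levels, and this forces a delicate case analysis of where along $P$ the high-degree vertices lie relative to the thresholds $s_i$ — precisely the point at which ``we only care about long pairs'' upgrades the sparse exponent from $\multapproxlimit+2$ to $\multapproxlimit+4$, following Roditty. A secondary point is that the three running times are not one formula evaluated at different densities: the two dense terms come from genuinely different parameterizations, so \Cref{alg:short-additive-apasp} wins for $\multapproxlimit\le 6$ (and supplies the $\multapproxlimit=4$ entry of \Cref{tbl:2-approx-geq-beta}) while \Cref{alg:mult-approx-bk} wins for $\multapproxlimit\ge 8$; one checks the parameter optimization separately in each regime and takes the minimum, which is routine given the algorithms' stated guarantees.
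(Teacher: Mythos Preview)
Your opening plan — invoke \Cref{alg:mult-approx-bk}, \Cref{alg:short-additive-apasp}, and Roditty, one for each term in the $\min$ — is the right skeleton, but the way you flesh it out has a concrete gap for two of the three terms.

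The third term $\tilde{O}(n^{2+2/(3\multapproxlimit+2)})$ does \emph{not} follow from \Cref{alg:short-additive-apasp} by itself. Look at its guarantee (\Cref{thm:short-additive-approx}): it delivers a $+\beta$ approximation only for pairs with $\delta(u,v)\le\beta+1$; it makes no claim about longer paths. Setting $\beta=\multapproxlimit$ covers only $\delta(u,v)\in\{\multapproxlimit,\multapproxlimit+1\}$. For $\delta(u,v)\ge\multapproxlimit+2$ the paper separately runs $\denseAPASP$ with parameter $\multapproxlimit+2$, which costs $\tilde{O}(n^{2+2/(3\multapproxlimit+4)})$ (dominated by the third term) and provides a $+(\multapproxlimit+2)$ additive, hence $(2,0)$, approximation there. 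The sparse term has the same structure: Roditty's \Cref{lemma:roditty-short-sparse} only applies to $\delta(u,v)\le\multapproxlimit+2$, and is paired with $\sparseAPASP(G,\multapproxlimit+2)$ for longer paths. Your write-up treats each cited algorithm as already producing a $(2,0)$ approximation for all $\delta\ge\multapproxlimit$, which it does not.

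A second issue: your paragraph describing ``the algorithm'' is not a description of any of the three cited procedures — it is the plain Dor--Halperin--Zwick template, and running BFS from every vertex in the low-degree subgraph, as you write, only recovers the DHZ exponent $2+\tfrac{2}{3\multapproxlimit-2}$. The improvement to $2+\tfrac{1}{2(\multapproxlimit-1)}$ in \Cref{alg:mult-approx-bk} comes precisely from \emph{not} doing that: the bottom level is handled by a black-box call to $\bkAPASP$ (Baswana--Kavitha) on the sparsified graph $G_{l_0+1}$, costing $\tilde{O}(m\sqrt{n})=\tilde{O}(n^{1.5}s_{l_0})$, and this is also the real source of the ``expected time'' qualifier. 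Your correctness sketch (reroute through dominators, pay $O(1)$ per level, total $\le\multapproxlimit$) is the DHZ argument and captures neither the $\bkAPASP$ call nor the neighborhood-size casework ($X(u,i)$, $C(u,i)$, depth-$2$ BFS) that drives \Cref{alg:short-additive-apasp}.
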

In contrast, Roditty achieves a bound of  
\begin{equation*}
    \bigtO{\min \left(n^{2 - \frac{2}{k + 4}} m^{\frac{2}{k + 4}}, n^{2 + \frac{2}{3 \multapproxlimit - 2}}\right)}
\end{equation*} (Corollary 2.6 of Roditty \cite{roditty2023newapasp}).

\Cref{tbl:2-approx-geq-beta} illustrates the comparison in running time between our algorithms and the previous results of Roditty \cite{roditty2023newapasp} and Dor et al. \cite{dor2000apasp} for dense graphs.

\subsubsection{Additive Approximation (Section \ref{sec:monotone-min-plus})}

We show it is possible to get better additive approximations $(1,\beta)$ for all $\beta > 0$. 
Previously such a result was known only for $\beta=2$ \cite{deng2022apasp}.

\begin{restatable*}{theorem}{genadditiveapasp}
    Let $\beta \geq 4$ be an even integer.
    Let $G$ be an undirected, unweighted graph with $n$ vertices.
    \Cref{alg:k-additive-apasp} computes $\hat{\delta}$  such that $\delta(u, v) \leq \hat{\delta}(u, v) \leq \delta(u, v) + \beta$ for all $u, v \in V$ in time,
       $\bigtO{n^{2 + \frac{2x}{\beta + 2}}}$
    where $x$ is the solution to,
       $ \omega\left(1 - \frac{\beta - 2}{\beta + 2} x, 1 - x, 1 - \frac{\beta - 4}{\beta + 2} x\right) = 1 + \frac{4 + 2 \beta}{\beta + 2} x.$\footnote{$\omega(a, b, c)$ is the minimum value such that the product of a $\ceil{n^a} \times \ceil{n^b}$ matrix by a $\ceil{n^b} \times \ceil{n^c}$ matrix can be computed in $O(n^{\omega(a, b, c) + \eps})$ arithmetic operations for any constant $\eps > 0$. Note $\omega = \omega(1, 1, 1)$.}
    \label{thm:gen-additive-apasp}
\end{restatable*}

\Cref{tbl:k-additive-apasp} shows the improvement in running time for various $(1,\beta)$-approximation errors beyond the work of Dor, Halperin, and Zwick \cite{dor2000apasp}. 
The concurrent work \cite{DBLP:journals/corr/DoryFKNWV} obtains a $(1 + \eps, \beta)$-approximation faster than \cite{dor2000apasp} for $\beta \leq 9$. 
We obtain $(1, \beta)$-approximation improving upon \cite{dor2000apasp} for all $\beta$ and do so without incurring any multiplicative error. 
Our algorithm is also faster than the $(1 + \eps, \beta)$-approximation of \cite{DBLP:journals/corr/DoryFKNWV} for $\beta \geq 8$.
Due to the additional multiplicative error, we instead compare \cite{DBLP:journals/corr/DoryFKNWV} with Algorithm \ref{alg:k-weighted-additive-apasp} in Table \ref{tbl:weighted-k-additive-apasp}.

\begin{table}[H]
    \begin{center}
        \begin{tabular}{ |c|c|c| } 
             \hline
             \multicolumn{3}{|c|}{$+\beta$-Additive Approximation} \\
             \hline
             $\beta$ & \cite{dor2000apasp} & Algorithm \ref{alg:k-additive-apasp} \\ 
             \hline
             $4$ & $n^{11/5} = n^{2.2}$ & $n^{2.15506251}$ \\
             $6$ & $n^{17/8} = n^{2.125}$ & $n^{2.10300405}$ \\
             $8$ & $n^{23/11} = n^{2.0909}$  & $n^{2.07733373}$ \\
             $10$ & $n^{29/14} = n^{2.0715}$ & $n^{2.06196791}$ \\
             \hline
        \end{tabular}
    \end{center}
    \caption{Improvements in computing $+\beta$ approximation on undirected, unweighted graphs with $n$ vertices.
    While a few examples are shown above, we obtain improvements for all $\beta \geq 4$.
    All running times are computed with \cite{Complexity}.}
    \label{tbl:k-additive-apasp}
\end{table}

\subsubsection{Weighted Graphs \& \texorpdfstring{$(\alpha, \beta)$}{(alpha, beta)}-approximations (Section \ref{sec:weighted-additive-apasp} \& Section \ref{sec:sub-2-mult-approx})}
We show new results that allow for $(1+\epsilon)$-multiplicative approximation of APSP, and some additive errors to go significantly below $n^\omega$ for weighted graphs. We also show interesting new trade-offs between $\alpha < 2$, and $\beta$ for both unweighted and weighted graphs (see \Cref{thm:sub-2-mult-add-approx} and \Cref{thm:mult-add-weighted-approx}).

\begin{table}[H]
    \begin{center}
        \begin{tabular}{ |c|c|c| } 
             \hline
             \multicolumn{3}{|c|}{Approximation Algorithms on Weighted Graphs} \\
             \hline
             Work & Approximation Factor & Time \\ 
             \hline
             \cite{zwick2002bridgingsets} & $(1 + \eps, 0)$ & $n^{\omega}$ \\
             \hline
             \cite{elkin2005asp} & $(1 + \eps, M \beta(\zeta, \rho, \eps))$ & $m n^{\rho} + n^{2 + \zeta}$ \\
             \hline
             \cite{berman2007approxapsp} & $(1 + \eps, 2 w_{u, v})$ & $n^{2.24}$ \\
             \Cref{alg:2-weighted-additive-apasp} & $(1 + \eps, 2 w_{u, v})$ & $n^{2.1519}$ \\
             \Cref{alg:k-weighted-additive-apasp} & $(1 + \eps, 2 w_{u, v}(\beta))$ & $n^{2 + x/(\beta + 1)}$ \\
             \hline
        \end{tabular}
    \end{center}
    \caption{Comparison with previous results for $(\alpha,\beta)$-approximations on weighted graphs.
    $M$ denotes the ratio between the heaviest and lightest edge in the graph $G$.
    $w_{u, v}(\beta)$ denotes the weight of the $\beta$ heaviest edges on the shortest path between $u, v$ and $w_{u, v} = w_{u, v}(1)$.}
    \label{tbl:2-weighted-additive-apasp}
\end{table}

\begin{restatable*}{theorem}{weightedgenadditiveapasp}
    Let $\beta \geq 2$ be an integer and $\eps > 0$.
    Let $G$ be an undirected, unweighted graph with $n$ vertices.
    \Cref{alg:k-weighted-additive-apasp} computes $\hat{\delta}$ such that $\delta(u, v) \leq \hat{\delta}(u, v) \leq (1 + \eps) \delta(u, v) + 2 w_{u, v}(\beta)$ in time
       $\bigtO{\frac{n^{2 + \frac{x}{\beta + 1}}}{\eps}}$
    where $x$ is the solution to,
      $  \omega\left(1 - \frac{\beta - 1}{\beta + 1} x, 1 - x, 1 - \frac{\beta - 2}{\beta + 1} x\right) = 2 + \frac{x}{\beta + 1}.$
    Here, $w_{u, v}(\beta)$ denotes the total weight of the $\beta$ heaviest edges of a shortest path $P$.
    \label{thm:weighted-additive-apasp}
\end{restatable*}

We compare Algorithm \ref{alg:k-weighted-additive-apasp} with concurrent work \cite{DBLP:journals/corr/DoryFKNWV}. 
In addition to handling weighted graphs, our algorithm is faster than both \cite{dor2000apasp} and \cite{DBLP:journals/corr/DoryFKNWV} for all $\beta \geq 4$.

\begin{table}[H]
    \begin{center}
        \begin{tabular}{ |c|c|c|c| } 
             \hline
             \multicolumn{4}{|c|}{$(1+\eps, \beta)$-Additive Approximation} \\
             \hline
             $\beta$ & \cite{dor2000apasp} (Weighted) & \cite{DBLP:journals/corr/DoryFKNWV} (Unweighted) & Algorithm \ref{alg:k-weighted-additive-apasp} (Weighted) \\ 
             \hline
             $2$ & $n^{7/3} = n^{2.34}$ & $n^{2.152}$ & $n^{2.152}$ \\ 
             $4$ & $n^{11/5} = n^{2.200}$ & $n^{2.119}$ & $n^{2.094}$ \\
             $6$ & $n^{17/8} = n^{2.125}$ & $n^{2.098} $ & $n^{2.058}$ \\
             $8$ & $n^{23/11} = n^{2.0909}$ & $n^{2.084}$ & $n^{2.043}$ \\
             $10$ & $n^{29/14} = n^{2.0715}$ & \cite{dor2000apasp} & $n^{2.034}$ \\
             \hline
        \end{tabular}
    \end{center}
    \caption{Comprison with \cite{DBLP:journals/corr/DoryFKNWV} of computing $(1+\eps, \beta)$ approximation on undirected, unweighted graphs with $n$ vertices.
    While a few examples are shown above, we obtain improvements for all $\beta \geq 4$.
    \cite{dor2000apasp} (with adaptions from \cite{cohen2001smallstretch}) and Algorithm \ref{alg:k-weighted-additive-apasp} additionally handle weighted graphs.}
    \label{tbl:weighted-k-additive-apasp}
\end{table}

\subsection{Other Related Work}

Dor, Halperin and Zwick's results to additively approximate APSP \cite{dor2000apasp} improves upon an earlier work of Aingworth et al. \cite{aingworth1999fast} where a $+2$ additive approximation was obtained in $O(n^{2.5})$ time. Cohen and Zwick \cite{cohen2001smallstretch} observed that the algorithm of Dor et al. \cite{dor2000apasp} obtain $2 w_{u, v}(\beta)$ additive approximations for weighted graphs where $w_{u, v}(\beta)$ denote the weights of the $\beta$ heaviest edges on the shortest path. Cohen and Zwick \cite{cohen2001smallstretch} obtain a variety of multiplicative approximation algorithms for weighted graphs with stretch factors 2, 7/3, 3, which are later improved upon by Baswana and Kavitha \cite{baswana2010fasterapasp}. For directed graphs with real weights in $[0, M]$, Yuster \cite{yuster2012approximate} obtained an additive approximation with error $\eps M$ in time $O(n^{(3 + \omega)/2})$.
Building upon this work, Chan \cite{chan2021all} improved the running time on undirected graphs to $O(n^{(3 + \omega^2)/(\omega + 1)})$.

There is a long line of work investigating approximate distance oracles, where the goal is to trade-off the pre-processing time with query time along with further considerations such as space complexity. Thorup and Zwick gave a stretch $2k - 1$ distance oracle with $O(k)$ query time, $O(k n^{1 + 1/k})$ space, and $O(k m n^{1/k})$ pre-processing time \cite{thorup2005approximate}. Since then a rich literature of work has followed with improvements in pre-processing and query time, space complexity and bi-criteria approximations  \cite{mendel2007ramsey, baswana2010fasterapasp, wulff2012approximateoracles, agarwal2013stretch2oracle, chechik2014approximate, chechik2015oracles, elkin2016linearoracle, elkin2016spaceoracle, patrascu2010distanceoracles, sommer2016distanceoracle,akav2020almost2, chechik2022nearly2}.

Comparisons with the concurrent work of \cite{DBLP:journals/corr/DoryFKNWV} can be found in Theorem \ref{thm:rand-2-approx}, Table \ref{tbl:k-additive-apasp}, and Table \ref{tbl:weighted-k-additive-apasp}.

\section{Preliminaries}

Let $G = (V, E)$ be an undirected, unweighted graph with vertices $V$ and edges $E$.
Let $n$ denote the number of vertices and $m$ the number of edges.
Given a pair of vertices $u, v$, let $\delta(G, u, v)$ denote the distance in $G$ between $u, v$ i.e. the length of the shortest path connecting $u$ and $v$.
When the underlying graph $G$ is clear, we omit this parameter and write $\delta(u, v)$.
A path $P$ can be denoted by a sequence of vertices $(u, u_2, \dotsc, v)$ or by its endpoints $P_{u, v}$.
Given two paths $P, Q$ that share an endpoint (and no other vertices), let $P \circ Q$ denote the concatenation of the two paths.
Let $N(u) = \set{v \in V \given (u, v) \in E}$ denote the neighborhood of $u$ and $N(u, d) = \set{v \in V \given \delta(u, v) \leq d}$ denote the depth $d$ neighborhood of $u$.

\begin{definition}
    \label{def:approx-def}
    Let $G$ be an graph and $\delta(G, u, v)$ denote the length of the shortest path from $u$ to $v$.
    A distance estimate $\hat{\delta}(u, v): V \times V \rightarrow \R$ is,
    \begin{enumerate}
        \item an {\bf $\alpha$ multiplicative approximation} if $\delta(u, v) \leq \hat{\delta}(u, v) \leq \alpha \delta(u, v)$ for all $u, v \in V$.
        This can also sometimes referred to as an $\alpha$ {\bf stretch} approximation.
        \item a {\bf $\beta$ additive approximation} if $\delta(u, v) \leq \hat{\delta}(u, v) \leq \delta(u, v) + \beta$ for all $u, v \in V$.
        This can sometimes be denoted as a {\bf $+\beta$ approximation}.
        \item an {\bf $(\alpha, \beta)$ approximation} if $\delta(u, v) \leq \hat{\delta}(u, v) \leq \alpha \delta(u, v) + \beta$ for all $u, v \in V$.
    \end{enumerate}
\end{definition}

On an unweighted graph, let $\bfs(G, w)$ denote running breadth-first search on graph $G$ from root node $w$.
When the graph $G$ does not need to be specified, this may also be denoted $\bfs(w)$.
We also make use of a truncated $\bfs$, which is an execution of $\bfs$ with bounded depth.
A depth bounded $\bfs$ will be denoted $\bfs(G, w, k)$ for depth $k$ or $\bfs(w, k)$ when the graph $G$ is clear.

\begin{definition}
    \label{def:dominating-set}
    Let $G = (V, E)$ be an undirected, unweighted graph.
    A set of vertices $D$ {\bf dominates} $U \subset V$ if every $u \in U$ is either in $D$ or has a neighbor in $D$.

    For a given vertex $u \in U$, define the {\bf representative of $u$ in $D$}, denoted $r(u, D)$, be an arbitrary vertex $z \in D \cap N(u)$.

    For a given vertex $z \in D$, define the {\bf constituency of $z$ in $U$}, denoted $q(z, D)$, as the set $\set{u \in U \given r(u, D) = z}$.
\end{definition}

\begin{definition}
    \label{def:degree-threshold}
    Let $G = (V, E)$ be an undirected, unweighted graph.
    Let $s$ be a degree threshold.
    Define $V_s = \set{v \in V \given \deg(v) \geq s}$.
    Define $E_s = \set{(u, v) \in E \given \min(\deg(u), \deg(v)) < s}$.
\end{definition}

\begin{restatable}{lemma}{hittingsetlemma}
    \label{lemma:hitting-set}
    Let $U$ be a universe of $n$ elements.
    Let $\family = \set{S_1, \dotsc, S_n}$ denote a collection of subsets $S_i \subset U$ such that $|S_i| \geq s$ for all $i$.
    Then, there is a deterministic algorithm $\hittingSet$ that computes a hitting set $X$ of size $\bigO{\frac{n \log n}{s}}$ of $\family$ in time $\tO{n s}$.
    There is also a randomized algorithm $\rHittingSet$ that with high probability computes a hitting set $X$ of size $\bigO{\frac{n \log n}{s}}$ of $\family$ in time $O(n)$.
\end{restatable}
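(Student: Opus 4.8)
The plan is to prove the two statements separately, starting with the easier randomized one. For $\rHittingSet$: sample each element of $U$ independently into $X$ with probability $p = \frac{c \ln n}{s}$ for a suitable constant $c$. For a fixed set $S_i$ with $|S_i| \geq s$, the probability that $X$ misses $S_i$ entirely is at most $(1 - p)^{|S_i|} \leq (1-p)^s \leq e^{-ps} = e^{-c \ln n} = n^{-c}$. A union bound over the $n$ sets gives failure probability at most $n^{1-c}$, which is $\le 1/n$ for $c \geq 2$, i.e. the algorithm succeeds with high probability. The expected size of $X$ is $np = \frac{cn \ln n}{s} = \bigO{\frac{n \log n}{s}}$, and a Chernoff bound shows $|X|$ is within a constant factor of this with high probability; the sampling itself takes $O(n)$ time (drawing $n$ independent coins). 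One can return the sample without verification, or abort and resample if the size bound is violated; either way the time is $O(n)$ and the size bound holds w.h.p.

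For the deterministic algorithm $\hittingSet$, I would use the standard greedy set-cover / maximum-coverage argument, but implemented carefully to hit the $\tO{ns}$ time bound. Maintain a counter for how many not-yet-hit sets each element of $U$ belongs to; repeatedly pick the element $x$ covering the largest number of currently-uncovered sets, add it to $X$, and remove all sets it covers. A counting argument shows that as long as some set remains uncovered, the best element covers at least a $\frac{1}{s}$-fraction (really, at least an average fraction $\ge |S_i|/|U| \ge s/n$ over the remaining sets, since each remaining set has $\geq s$ elements and there are $\le n$ elements total) of the remaining uncovered sets, so the number of uncovered sets drops by a factor $(1 - 1/n)$... — more precisely, with $R$ remaining sets each of size $\ge s$, the sum over elements of (number of remaining sets containing that element) is $\ge Rs$, so some element lies in $\ge Rs/n$ of them; hence $R$ decreases geometrically and after $\bigO{\frac{n \log n}{s}}$ iterations all sets are covered, giving $|X| = \bigO{\frac{n \log n}{s}}$.

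For the running time of $\hittingSet$: the total input size is $\sum_i |S_i|$, which could be as large as $n^2$ in the worst case, so we cannot afford to read all of it if we want $\tO{ns}$. The trick (this is where the subtlety lies) is that we only need $s$ elements from each set to run the argument — truncate each $S_i$ to an arbitrary subset $S_i'$ of exactly $s$ elements, so the total relevant data has size $ns$. A hitting set for $\{S_i'\}$ is a fortiori a hitting set for $\{S_i\}$, and each $S_i'$ still has size $s$, so the greedy bound on $|X|$ is unchanged. Running greedy on the truncated instance: maintain for each element its list of containing (truncated, uncovered) sets; each iteration find the max-degree element (via a bucket queue keyed by current degree, updated in $O(1)$ amortized per decrement) and delete its sets, decrementing the degrees of their other elements. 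Total work is $\bigO{ns}$ for the pointer updates plus $\bigO{\frac{n \log n}{s} \cdot \text{(queue ops)}}$, all within $\tO{ns}$. The main obstacle I anticipate is exactly this data-structure bookkeeping: making the greedy selection run in time proportional to the truncated instance size $ns$ rather than naively rescanning, and verifying the bucket-queue amortization — the combinatorial covering bound itself is routine.
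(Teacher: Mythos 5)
Your proposal is correct and follows essentially the same route as the paper: the randomized algorithm is identical (independent sampling with $p = \Theta(\log n / s)$, union bound, Chernoff), and the deterministic algorithm is the same greedy procedure, including the key observation that each $S_i$ should first be truncated to $s$ elements so the instance size is $ns$, and the same averaging argument showing the number of uncovered sets decays geometrically (the paper uses the slightly sharper denominator $n - j + 1$ in place of your $n$, but this changes nothing). The paper likewise charges the bookkeeping to the total decrease of $\sum_j c(j)$ from $ns$ to $0$ at $\tO{1}$ per decrement, matching your bucket-queue accounting.
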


We note that we can easily verify that a randomly sampled set is indeed a hitting set in $O(n s)$ time, by checking each set $S_i$ for an element in $D$.
For any vertex $v \in V_s$, we may interpret $N(v)$ as a subset of $V$ of size at least $s$.
In particular, applying the above lemma we can immediately obtain the following.

\begin{restatable}{lemma}{dominatingsetlemma}
    \label{lemma:dominating-set}
    Let $G$ be a graph on $n$ vertices.
    There is a deterministic algorithm $\dominate$ and randomized algorithm $\rDominate$ that computes a dominating set $D$ of size $\bigO{\frac{n \log n}{s}}$ of $V_s$ in time $O(m + n s)$ and $O(n)$ respectively.
\end{restatable}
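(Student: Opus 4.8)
The plan is to reduce the dominating-set task to the hitting-set problem solved by \Cref{lemma:hitting-set}. Take the universe to be $V$ and consider the family $\family = \set{N(v) \given v \in V_s}$ of neighborhoods of the high-degree vertices. By \Cref{def:degree-threshold}, every $v \in V_s$ satisfies $|N(v)| = \deg(v) \geq s$, so $\family$ is a collection of at most $n$ subsets of an $n$-element universe, each of size at least $s$ (padding with arbitrary size-$s$ sets if there are fewer than $n$ of them is harmless) --- exactly the input required by \Cref{lemma:hitting-set}. Running $\hittingSet$ on $\family$ produces a set $X \subseteq V$ of size $\bigO{\frac{n \log n}{s}}$, and we output $D = X$; this defines $\dominate$. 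For $\rDominate$ we instead invoke $\rHittingSet$, or equivalently sample each vertex of $V$ independently with probability $\Theta(\log n / s)$.

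For correctness, observe that $X$ is a hitting set of $\family$ if and only if for every $v \in V_s$ there is some $z \in X \cap N(v)$, i.e. $v$ has a neighbor in $X$; together with those vertices of $V_s$ that happen to lie in $X$, this is precisely the statement that $X$ dominates $V_s$ in the sense of \Cref{def:dominating-set}. Vertices outside $V_s$ impose no constraint and need not be covered. The size bound $\bigO{\frac{n \log n}{s}}$ is inherited verbatim from \Cref{lemma:hitting-set}.

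For the running time of $\dominate$: reading $G$, computing degrees, extracting $V_s$, and listing the relevant adjacency lists truncated to $s$ entries each (which suffices, since we only need $|N(v)| \geq s$) costs $O(m + n)$; invoking the deterministic subroutine of \Cref{lemma:hitting-set} on this family, whose total size is $O(ns)$, then costs $\bigtO{ns}$, for an overall bound of $\bigtO{m + ns}$. For $\rDominate$ no edge inspection is needed at all: a fixed $v \in V_s$ fails to be dominated by a sample $X$ of the stated density with probability at most $(1 - \Theta(\log n/s))^{s} \leq n^{-\Omega(1)}$, so a union bound over the at most $n$ vertices of $V_s$ shows $X$ dominates $V_s$ with high probability, and $|X| = \bigO{\frac{n \log n}{s}}$ with high probability by a Chernoff bound; drawing the sample takes $O(n)$ time.

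The reduction itself is immediate, so there is no real obstacle at this level: all the content lives in \Cref{lemma:hitting-set}, whose deterministic half is where the careful greedy hitting-set procedure with a good running time is needed. The only points to watch are bookkeeping details --- working with size-$s$ truncations of the neighborhoods so the family handed to $\hittingSet$ has total size $O(ns)$ rather than $O(m)$, and noting that the size bound is only meaningful when $s \leq n$.
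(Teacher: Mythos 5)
Your reduction is correct and is essentially the paper's own argument: both apply \Cref{lemma:hitting-set} to the neighborhoods of the vertices in $V_s$ (the paper hits $N(v)\cup\set{v}$, you hit $N(v)$, and either suffices under \Cref{def:dominating-set}), with the same padding remark for collections of fewer than $n$ sets and the same running-time accounting.
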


\begin{restatable}{lemma}{degreedecompositionlemma}
    \label{lemma:degree-decomposition}
    Let $G$ be a graph on $n$ vertices.
    Given degree thresholds $s_1 > s_2 > \dotsc > s_{k - 1}$, there is a deterministic algorithm $\decompose$ and a randomized algorithm $\rDecompose$ that outputs edge sets $\set{E_i}_{i = 1}^{k}$, edge set $E^*$, and vertex sets $\set{D_i}_{i = 1}^{k}$ satisfying,
    \begin{enumerate}
        \item $E_i = \set{(u, v) \in E \given \min(\deg(u), \deg(v)) < s_{i - 1}}$ is the set $E_{s_{i - 1}}$.
        \item $D_i$ dominates $V_{s_i} = \set{v \in V \given \deg(v) \geq s_i}$ and $|D_i| = \bigtO{\frac{n}{s_i}}$. 
        For convenience, $V_{s_i}$ may also be denoted $V_i$.
        \item $D_1 \subset D_2 \subset \dotsc \subset D_k = V$ and $E_k \subset E_{k - 1} \subset \dotsc \subset E_1 = E$.
        \item $E^* = \bigcup_{i = 1}^{k} E_i^{*}$ where each $E_i^* \subset E$ has for every $v \in V_{i}$, at least one edge $(v, w) \in E_i^{*}$ for $w \in D_i$.
    \end{enumerate}
    Furthermore, $\decompose$ runs in $\tO{k n^2}$ time and $\rDecompose$ runs in $\tO{k n}$ time.
    Note that $\rDecompose$ satisfies the above conditions with high probability.

    For a given vertex $v \in V$, define the {\bf level of $v$}, denoted $\level(v)$, as the integer $i$ such that $s_i \leq \deg(v) < s_{i - 1}$.
    For a given edge $e \in E$, define the {\bf level of $e$}, denoted $\level(e)$, as the integer $i$ such that $e \in E_i \setminus E_{i + 1}$.
\end{restatable}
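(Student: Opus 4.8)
The plan is to build the entire decomposition from a single pass over the degree sequence, then invoke \Cref{lemma:dominating-set} once for each threshold, and finally read off $E_i$ and $E_i^*$ essentially for free while computing the dominating sets. First I would compute $\deg(v)$ for every $v$ in $O(m + n)$ time, and sort (or bucket-sort, since degrees are in $[0, n-1]$) the vertices by degree so that for each threshold $s_i$ the set $V_{s_i} = V_i$ can be extracted in $O(n)$ amortized time; this immediately gives property (3)'s chain $V_{s_1} \subset \dotsc \subset V_{s_k}$ (with $s_k$ chosen so $V_{s_k} = V$, i.e. $s_k \le 1$, or more precisely $s_k = 1$). For property (2), I apply $\dominate$ (resp. $\rDominate$) from \Cref{lemma:dominating-set} to $V_{s_i}$ with degree threshold $s_i$: this yields $D_i$ with $|D_i| = \bigtO{n/s_i}$ in time $O(m + n s_i)$ (resp. $O(n)$). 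Summing over $i$ and using $s_1 > \dotsc > s_{k}$ crudely bounds the total by $\tO{k n^2}$ for the deterministic version and $\tO{k n}$ for the randomized one, matching the claimed runtimes.

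The subtlety is ensuring the \emph{nested} structure $D_1 \subset D_2 \subset \dotsc \subset D_k = V$, since running $\dominate$ independently on each $V_{s_i}$ gives no containment guarantee. I would fix this by building the $D_i$ in increasing order of $i$ (decreasing threshold), always passing the already-constructed $D_{i-1}$ as a set of vertices that are "free" to include: concretely, run the hitting-set routine of \Cref{lemma:hitting-set} on the family $\{N(v) : v \in V_{s_i}\}$ but initialize the partial hitting set to $D_{i-1}$, so that only sets $N(v)$ not already hit by $D_{i-1}$ contribute new elements. Since $D_{i-1}$ has size $\bigtO{n/s_{i-1}} \le \bigtO{n/s_i}$, the final $D_i$ still has size $\bigtO{n/s_i}$, and $D_{i-1} \subset D_i$ by construction. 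Taking $s_k = 1$ forces $V_{s_k} = V$ and $D_k = V$. This is the main obstacle; once the monotone construction is in place, everything else is bookkeeping.

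For property (1), $E_i$ is by definition $E_{s_{i-1}}$, and these can be produced by scanning each edge once per level and comparing $\min(\deg(u), \deg(v))$ against $s_{i-1}$; the chain $E_k \subset \dotsc \subset E_1 = E$ follows from $s_0 = \infty$ (so $E_1 = E$) and monotonicity of the thresholds, and costs $\tO{k n^2}$ (resp. $\tO{kn}$, where in the sparse/randomized accounting one charges $O(m)$ total but the statement absorbs this into $\tO{kn}$ since $m = O(n^2)$ and the dominant term is the decomposition overhead — I would state it as $O(km + \ldots)$ and simplify). For property (4), while running $\dominate$ on $V_{s_i}$ I record, for each $v \in V_{s_i}$, one incident edge $(v, w)$ with $w \in D_i$ — such an edge exists precisely because $D_i$ dominates $V_{s_i}$, and the representative map $r(v, D_i)$ of \Cref{def:dominating-set} supplies $w$; collect these into $E_i^*$ and set $E^* = \bigcup_i E_i^*$. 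Finally, the level functions $\level(v)$ and $\level(e)$ are well-defined by the strict nesting of the $V_{s_i}$ and $E_i$ respectively, and can be tabulated in the same passes. Correctness of each numbered property is then immediate from the construction, and the runtime bounds follow by summing the per-level costs as above.
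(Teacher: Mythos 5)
Your proposal is correct and follows essentially the same route as the paper's proof: compute each $D_i$ via \Cref{lemma:dominating-set}, enforce the nesting $D_{i-1} \subset D_i$ by folding $D_{i-1}$ into $D_i$ (the paper simply takes unions, which preserves $|D_i| = \tO{n/s_i}$ since $s_{i-1} > s_i$; your seeding of the greedy hitting set with $D_{i-1}$ is an equivalent variant), read off $E_i$ by comparing $\min(\deg(u),\deg(v))$ to $s_{i-1}$, and record one representative edge per dominated vertex to form $E_i^*$. Your side remark about the $O(km)$ cost of materializing the $E_i$ in the randomized running time is a fair observation about an imprecision the paper's own proof also glosses over (the $E_i$ are effectively represented implicitly by degree labels), and does not affect correctness.
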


For sake of completeness, these proofs are given in 
\Cref{app:degree-decomposition}.

We also quote a result from Baswana and Kavitha \cite{baswana2010fasterapasp} that we use crucially for our randomized algorithms (\Cref{sec:rand-2} and \Cref{sec:bounded-additive-approx}).

\begin{theorem}[\cite{baswana2010fasterapasp}]
    \label{thm:baswana}
    There exists a randomized algorithm $\bkAPASP$ that returns a $(2,0)$-approximation of APSP on undirected unweighted graphs with an expected running time complexity $\bigtO{m\sqrt{n}+n^2}$ and space complexity $O(n^2)$.
\end{theorem}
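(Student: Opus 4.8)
This is the $(2,0)$-approximate APSP algorithm of Baswana and Kavitha, so the plan is to recall it and sketch the analysis. Fix the degree threshold $s = \sqrt n$; using \dominate\ (\Cref{lemma:dominating-set}) compute a dominating set $D$ of $V_s$ with $|D| = \bigtO{n/s} = \bigtO{\sqrt n}$ in time $\bigO{m + n\sqrt n}$, let $R$ be a uniform random sample of $\Theta(\sqrt n \log n)$ vertices, and set $X = D \cup R$. First I would run $\bfs(x)$ in $G$ for every $x \in X$, at total cost $\bigtO{|X|\, m} = \bigtO{m\sqrt n}$, recording $\delta(x, w)$ for all $x \in X$, $w \in V$; for each $u$ let its \emph{pivot} $p(u) \in X$ minimize $\delta(u, \cdot)$, with ties broken by vertex index. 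Second, for every vertex $u$ with $\deg(u) < s$ I would compute the \emph{ball} $B(u) = \set{w \given \delta(u, w) \le \delta(u, p(u))}$ together with the exact distances inside it. Since $R$ alone is a uniform sample of size $\Theta(\sqrt n \log n)$ and $\delta(u, p(u)) \le \delta(u, \text{nearest vertex of } R)$, a union bound over the $n$ vertices gives $|B(u)| = \bigtO{\sqrt n}$ for all such $u$ with high probability; the family $\set{B(u)}$ can then be produced in $\bigtO{m\sqrt n}$ total time by growing, for each vertex $z$, the \emph{cluster} $C(z) = \set{u \given z \in B(u)}$ with the usual Thorup--Zwick truncated search that only expands $u$ once $\delta(z, u) \le \delta(u, p(u))$. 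Third, for every ordered pair $(u, v)$ I would output $\hat\delta(u, v)$, the minimum of four quantities: $\delta(u, v)$ when $\deg(u) < s$ and $v \in B(u)$; $\delta(u, p(u)) + \delta(p(u), v)$; $\delta(v, p(v)) + \delta(p(v), u)$; and $\delta(u, z) + \delta(z, v)$ minimized over all $z$ with $u, v \in C(z)$.

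For correctness, every one of these is an upper bound on $\delta(u,v)$, so $\hat\delta \ge \delta$. If $\deg(u) \ge s$ then $r(u, D) \in D \subseteq X$ is at distance $1$, so $\delta(u, p(u)) \le 1$ and the second term gives $\hat\delta(u,v) \le \delta(u,v) + 2 \le 2\delta(u,v)$ once $\delta(u,v) \ge 2$, while pairs at distance $\le 1$ are recognized directly from $E$; likewise if $\deg(v) \ge s$. So suppose $\deg(u), \deg(v) < s$. If $v \in B(u)$ the first term is exact. Otherwise $\delta(u,v) > \delta(u, p(u))$; if moreover $\delta(u, p(u)) \le \delta(u,v)/2$ then the second term is at most $2\delta(u, p(u)) + \delta(u,v) \le 2\delta(u,v)$, and symmetrically for $v$. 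In the only remaining case $\delta(u, p(u)) > \delta(u,v)/2$ and $\delta(v, p(v)) > \delta(u,v)/2$; let $z^\star$ be the vertex at distance $\lfloor \delta(u,v)/2 \rfloor$ from $u$ along some fixed shortest $u$--$v$ path. Then $\delta(u, z^\star) = \lfloor \delta(u,v)/2 \rfloor < \delta(u, p(u))$ and $\delta(v, z^\star) = \lceil \delta(u,v)/2 \rceil \le \delta(v, p(v))$, so $z^\star \in B(u) \cap B(v)$, i.e.\ $u, v \in C(z^\star)$, and since $z^\star$ lies on a shortest path the fourth term equals $\delta(u, v)$. Hence $\hat\delta$ is always a $(2, 0)$-approximation. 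As $D$ comes from the deterministic \dominate, the only randomness is in $R$, and it affects only the running time (a bad sample merely enlarges some ball, which is detected and triggers a re-sample), so the output is always correct and the time bound holds in expectation. The space is $\bigO{n^2}$, dominated by the table $\hat\delta$; the balls and hub distances together occupy $\bigtO{n^{1.5}}$.

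The main obstacle is the running time of the combination. The first three terms take $\bigO{n^2}$ overall, but computing the fourth by iterating over $C(z) \times C(z)$ for each $z$ costs $\sum_z |C(z)|^2$, which can be as large as $n^{2.5}$ even though $\sum_z |C(z)| = \bigtO{n^{1.5}}$. Baswana and Kavitha's key contribution is a combination scheme that never forms these products: using a degree decomposition in the spirit of \Cref{lemma:degree-decomposition}, they order the work so that from each source only a sparse local subgraph is searched --- the edges incident to that source's ball together with the hub distances --- charging dense instances to the $\bigtO{m\sqrt n}$ breadth-first searches from $X$ and sparse instances to an $\bigO{n^2}$-sized exploration; this is where the bound $\bigtO{m\sqrt n + n^2}$ comes from (the additive $n^2$ being in any case forced by the size of the output). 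Everything else is routine: \hittingSet\ / \dominate\ are correct and run within budget (\Cref{lemma:hitting-set}, \Cref{lemma:dominating-set}), each of the $\bigtO{\sqrt n}$ breadth-first searches costs $\bigtO{m}$, and the cluster growing stays within $\bigtO{m\sqrt n}$.
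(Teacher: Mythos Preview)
This theorem is not proved in the paper: it is quoted verbatim from Baswana and Kavitha \cite{baswana2010fasterapasp} and used purely as a black box (in \Cref{alg:random-2-approx-apasp}, \Cref{alg:mult-approx-bk}, and \Cref{alg:mult-approx-bk-fmm}). There is thus no ``paper's own proof'' to compare your attempt against.

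Your sketch is a reasonable reconstruction of the pivots/balls/clusters framework, and the $(2,0)$-correctness argument via the midpoint vertex $z^\star$ is essentially right. But you yourself name the gap: the whole content of the theorem is the $\bigtO{m\sqrt n + n^2}$ running time, and the only step that threatens it---combining the cluster information without paying $\sum_z |C(z)|^2$---you dispose of with a sentence saying Baswana and Kavitha ``order the work'' via a degree decomposition. That is exactly the nontrivial part of \cite{baswana2010fasterapasp}; everything else (sampling, BFS from hubs, Thorup--Zwick cluster growing) was already standard. So what you have written is the algorithm's skeleton plus a correctness proof, not a proof of the stated time bound. If you actually want to \emph{prove} the theorem rather than cite it, you must spell out how the per-source restricted Dijkstra is set up and why its total work over all sources is $\bigtO{m\sqrt n + n^2}$; as it stands, your last paragraph is a citation in prose form, which is no more than what the paper itself does.
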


\section{Improved \texorpdfstring{$(2,0)$}{(2, 0)}-Approximation}
\label{sec:mult-approx-bounded}

In this section, we provide improved running time bounds for a $(2, 0)$-approximation of APSP on unweighted, undirected graphs.
We begin with a simple randomized algorithm followed by a slightly slower deterministic algorithm.

\subsection{A Randomized \texorpdfstring{$(2,0)$}{(2, 0)}-Approximate APSP}
\label{sec:rand-2}

We prove the following theorem

\randtwomultapprox

\paragraph{High Level Overview}

We begin by briefly recapping the powerful idea of degree decomposition in computing APSP approximations, first initiated by the work of Aingworth, Chekuri, Indyk, and Motwani \cite{aingworth1999fast}.
The idea is roughly as follows.
If we consider all vertices with degree at least $n^{1/2}$, there will be a small set $D$ of size $n^{1/2}\log{n}$ such that every vertex of high degree has a neighbor in $D$ or is itself in $D$.
$D$ is known as a dominating set.
We compute exact distances from $D$, and then for each pair of vertices, we can take the minimum length path passing through a vertex in $D$.
On paths with at least one high degree vertex, this gives a $+2$ approximation as a vertex in $D$ is adjacent to some vertex in the shortest path $P$.
On the other hand, for each pair of vertices whose shortest path contains no high degree vertices, we can compute APSP in time $O(n^{5/2})$ by computing $\bfs$ (or \dijkstra) from every vertex on the graph $G' \subset G$ containing only edges adjacent to low degree vertices. 

Dor, Halperin and Zwick extended this idea to consider three degree thresholds $s_0=n^{2/3}, s_1=n^{1/3}, s_2=0$ \cite{dor2000apasp}. These degree thresholds are used to compute the dominating sets $D_0 \subset D_1 \subset D_2=V$ according to \decompose ~(\Cref{lemma:degree-decomposition}) where $D_i$ dominates $V_i = \set{v \in V \given \deg(v) \geq s_i}$ and $E_i = \set{(u, v) \in E \given \min(\deg(u), \deg(v)) < s_{i - 1}}$ for $i=0,1,2$ (take $s_{-1}=n+1$). They use \dijkstra 
 (instead of \bfs) from all vertices at the lowest level, i.e., in $G_2=(V,E_2)$, resulting in a running time of $\bigtO{n^{2+1/3}}=\bigtO{n^{7/3}}$. 

For a $(2,0)$-approximation, it is enough to have a $(1,2)$-approximation on paths of length up to $3$. 
Using this crucial observation, Roditty was able to extend the decomposition to 4 levels, and then call a $(1,4)$-approximation algorithm of \cite{dor2000apasp} for paths of length at least $4$. 
This reduces the running time from $\bigtO{n^{7/3}}$ to $\bigtO{n^{2+1/4}}=\bigtO{n^{9/4}}$.

To improve the bound, we instead proceed as follows. We consider shortest paths of length $\geq C$, and less than $C$ separately where $C$ is some threshold which we later set to $C=22$. 
We also consider another threshold $x \in (0,1)$, and define $V_0=\{v \,\, \text{ s.t. } deg(v) \geq n^x\}$. We will set $x$ appropriately later.

\paragraph{High Degree Vertices: Paths of length $\leq C$.}
Consider first all shortest paths $P$ with at least one vertex from $V_0$ that is of degree $\geq n^x$. Moreover, concentrate only on paths of length less than $C=22$. Decompose $G$ into $(1-x)\log n$ levels with degree thresholds $t_j = \frac{n}{2^j}$ for $1 \leq j \leq (1 - x) \log n$ and compute dominating sets $C_j$ of size $\tO{\frac{n}{t_j}}$ and edge sets $F_j$ of size $\bigO{n t_{j - 1}}$.
Computing $\bfs$ on the subgraph $(V, F_j)$ from each vertex of $C_j$ therefore altogether requires $\tO{n^2}$ time. Moreover $|\cup_{j} C_j|=\bigtO{n^{1-x}}$.
For each pair of vertices, compute the estimate $\hat{\delta}(u, v) = \min_{w \in \cup_{j} C_j} \delta(w, u) + \delta(w, v)$ with a \minplus product.
Since, we concentrate on paths of length up to a constant $C$, we can compute the above efficiently with the $\boundedMinPlus$ algorithm (\Cref{thm:bounded-min-plus}).
For the value $j$ such that $P \subset F_j$ but $P \not\subset F_{j + 1}$, there is some vertex $w \in P$ such that $t_j \leq \deg(w) < t_{j - 1}$ and $w^* \in C_j$ is a neighbor of $w$.
Thus, the computed estimate is a $+2$ approximation, which is also a $(2, 0)$-approximation. Computation of the \boundedMinPlus~requires time $\bigtO{\omega(1,1-x,1)}$.

\paragraph*{Paths of length $\geq C$.}

For paths of lengths at least $C = 22$, we are allowed to obtain estimates with large additive errors, which can be computed efficiently using known results from \cite{dor2000apasp}.
We can choose any value of $C$ large enough so that executing the additive approximation of \cite{dor2000apasp} does not affect the overall running time.

\begin{restatable}{lemma}{dhzapasp}
    (\cite{dor2000apasp})
    \label{lemma:dhz-apasp}
    Let $\beta \geq 2$ be even.
    Let $G$ be an undirected, unweighted graph with $n$ vertices and $m$ edges.
    There is an algorithm computing a $+\beta$-approximate APSP solution in time
    \begin{equation*}
        \tO{\min(n^{2 - \frac{2}{\beta+2}} m^{\frac{2}{\beta+2}}, n^{2 + \frac{2}{3 \beta-2}})}
    \end{equation*}

    Denote $\sparseAPASP$ the algorithm running in time $\tO{n^{2 - \frac{2}{\beta+2}} m^{\frac{2}{\beta+2}}}$ and $\denseAPASP$ the algorithm running in time $\tO{n^{2 + \frac{2}{3\beta-2}}}$.
\end{restatable}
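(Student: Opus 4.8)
\emph{Proof plan.} This is the additive-approximation theorem of Dor, Halperin, and Zwick \cite{dor2000apasp}, and the plan is to recover it from their degree-decomposition machinery, which is essentially what \Cref{lemma:degree-decomposition} packages. First I would run $\decompose$ with $\beta/2$ geometrically decreasing degree thresholds $s_1 > s_2 > \dotsc > s_{\beta/2}$ (setting $s_0 = n+1$, so that $E_1 = E$, and choosing $s_{\beta/2}$ small enough that $D_{\beta/2} = V$). This produces the nested dominating sets $D_1 \subseteq \dotsc \subseteq D_{\beta/2} = V$ with $|D_i| = \tO{n/s_i}$, the nested edge sets $E_i = E_{s_{i-1}}$ with $|E_i| = O(n s_{i-1})$, and the sparse edge set $E^*$ of size $\tO{n}$ that retains, for every high-degree vertex, one edge to a dominator.

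\emph{The algorithm.} Maintain an estimate $\hat{\delta}$ and sweep the levels. At the bottom level, run $\bfs$ from every vertex of $V$ on $(V, E_{\beta/2})$; this is exact for every pair joined by a path all of whose edges touch a vertex of degree below $s_{\beta/2 - 1}$. For the remaining levels $i$, from each pivot $z \in D_i$ run a shortest-path search on $(V, E_i)$ augmented with a unit edge from every vertex of $V_{s_i}$ to its representative $r(v, D_i)$, and with shortcut edges among pivots carrying the estimates already computed at the previous levels; then set $\hat{\delta}(u, v) \gets \min\{\hat{\delta}(u, v),\ \min_{z \in D_i}(d(z, u) + d(z, v))\}$. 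For the bound $\tO{n^{2 - 2/(\beta + 2)} m^{2/(\beta + 2)}}$ one takes the geometric ratio of the $s_i$ so as to balance the cost $\tO{(n/s_1)m}$ of the top-level $\bfs$ on $E_1 = E$ against the per-level search-and-combine costs; for the bound $\tO{n^{2 + 2/(3\beta - 2)}}$ one instead runs the sweep over the $\tO{n}$-edge graph $(V, E^*)$ carrying the pivot/representative shortcuts, which drops the per-level combine cost to $\tO{n^2}$, and re-optimizes the ratio using only $n^2$-type terms. In both cases the minimum of the two is reported.

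\emph{Correctness.} Fix $u, v$, take a shortest path $P$, and let $w$ be a maximum-degree vertex on $P$. If $\deg(w) < s_{\beta/2 - 1}$ then $P \subseteq E_{\beta/2}$, so the bottom-level $\bfs$ already yields $\hat{\delta}(u, v) = \delta(u, v)$. Otherwise, let $i$ be the level with $s_i \le \deg(w) < s_{i-1}$. Every edge of $P$, and the edge from $w$ to $z = r(w, D_i)$, has an endpoint of degree at most $\deg(w) < s_{i-1}$ and therefore lies in $E_i$; hence a search from $z$ reaches $u$ within $1 + \delta(u, w)$ and $v$ within $1 + \delta(w, v)$, giving surplus at most $2$ from this level alone. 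The general $+\beta$ bound follows by iterating this argument along $P$: since the efficient searches are run on the sparse $E_{s_i}$ (or $E^*$) rather than on all of $E_{s_{i-1}}$, a level-$i$ search can only traverse one contiguous block of vertices of degree in $[s_i, s_{i-1})$ before it must hand off to an estimate imported from a finer level, so the surplus increases by $2$ each time $P$ crosses such a block, of which there are at most $\beta/2$; using the nesting $D_1 \subseteq \dotsc \subseteq D_{\beta/2}$ so that a dominator at one level remains available at every coarser level closes the induction.

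\emph{Main obstacle.} The delicate part is exactly this surplus bookkeeping: making rigorous that a search launched from a level-$i$ pivot and run on the sparsified augmented graph exceeds $\delta(u, v)$ by at most twice the number of high-degree blocks of $P$ above level $i$. This needs an induction over the levels whose hypothesis asserts that the imported estimates are themselves correct up to the current running surplus, and it is where the pivot/representative shortcuts and the nesting of the $D_i$ are used. A subsidiary check is that the dense variant, after substituting $E^*$ for $E_i$, still supports all the routing the above argument requires, and that its threshold recurrence optimizes to exponent $2 + 2/(3\beta - 2)$ rather than the weaker $2 + 2/(\beta + 2)$ obtained by naively setting $m = n^2$ in the first bound.
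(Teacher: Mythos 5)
Your account of the sparse variant is essentially the paper's: $k-1=\beta/2$ thresholds, nested dominating sets, searches from each $w\in D_i$ on $E_i\cup E^*\cup(w\times V)$, and a surplus argument charging $+2$ per level of $P$ below the top (this is exactly Lemma~\ref{lemma:sparse-apasp-approx}); the balancing $|D_i||E_i|=\tO{n^2(m/n)^{1/k}}$ with $s_i=(m/n)^{1-i/k}$ gives $\tO{n^{2-2/(\beta+2)}m^{2/(\beta+2)}}$ as claimed.

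The dense bound is where there is a genuine gap. The exponent $2+\frac{2}{3\beta-2}$ forces $k=\frac{3\beta-2}{2}$ levels --- roughly three times as many as the sparse variant --- since the per-level Dijkstra cost is $\tO{n^{2+1/k}}$. But with $k$ levels your own surplus argument only yields additive error $2(k-1)=3\beta-4$, not $\beta$. The missing idea is the bridging edge sets: in $\denseAPASP$ the search graph at level $i$ is $G_{i,w}=(V,\,E_i\cup(w\times V)\cup E^*\cup\bigcup_{i+j_1+j_2\le 2k+1}D_{j_1}\times D_{j_2})$. The constraint $j_1+j_2\le 2k+1-i$ makes $|D_{j_1}\times D_{j_2}|=\tO{n^{(j_1+j_2)/k}}=\tO{n^{2-(i-1)/k}}$, matching $|E_i|$, so the running time is unaffected; and these cross edges let a level-$i$ search jump directly between dominators of far-apart levels, which is what caps the number of levels that actually contribute error at $\beta/2$ (the paper formalizes this via blocking vertices in Lemma~\ref{lemma:block-approx-error}). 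Your proposed substitute --- running the sweep on $(V,E^*)$ alone --- does not work: $E^*$ contains only one dominator edge per high-degree vertex, so the search can no longer traverse the sub-paths of $P$ lying in $E_i$, and the routing in your correctness argument breaks; nor does dropping the combine cost to $\tO{n^2}$ by itself change the exponent, which is governed by the number of levels you can afford while keeping the error at $\beta$.
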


Consider paths $P$ such that at least one vertex on $P$ has degree $ \geq n^x$. We run \denseAPASP~with a running time of $\bigtO{n^{2+2/(3C-2)}}=\bigtO{2.03125}$.

\paragraph*{Low Degree Vertices.}

Consider the graph $G_x=(V,E_{n^{x}})$ where $E_{n^{x}}=\{(u,v) \,\,s.t.\,\, deg(u) < n^x \text{ or } deg(v) < n^x\}$. Then $|E_{n^{x}}|=O(n^{1+x})$. We simply run an algorithm by Baswana and Kavitha that satisfies Theorem~\ref{thm:baswana} \cite{baswana2010fasterapasp}. This computes a $(2,0)$-approximation of paths contained in $G_x$ with an expected time of $\bigtO{n^{1.5+x}+n^2}$.

\paragraph*{Time Complexity} Balancing $n^{1.5+x}$ with $\omega(1,1-x,1)$, we get $x=0.53184039$, and an overall running time of $\bigtO{n^{2.03184}}$. Note that the running time bound holds on expectation.

\paragraph{Algorithm}

\Cref{alg:random-2-approx-apasp} begins by initializing the distance matrix with adjacency matrix and setting parameters $x, C$ which will be chosen to optimize the running time.

Phase 1 (Lines 4-6), uses \boundedMinPlus~on matrix $M_t$ to estimate distances for paths of length at most $C$ containing at least one vertex of degree $\geq n^x$. 
To do so, Line 4 considers $\log n$ degree thresholds, each of the form $t = \frac{n}{2^j}$. 
Within each threshold, a dominating set $C_{t}$ is computed (\Cref{line:2-rand-approx:dominating-set}) and a edge set $E_{2n/t}$ such that the degree of each vertex does not exceed $\frac{2n}{t}$.
\Cref{line:dominating-apasp:bfs} of $\dominatingAPASP$ computes a $\bfs$ within the sub-graph $G_t = (V, E_{2n/t})$ from each node in the dominating set $C_t$, keeping only distances up to $C + 1$.
Then, for each pair of vertices, $u, v$, a bounded \minplus product is computed in \Cref{line:dominating-apasp:bounded-min-plus} of $\dominatingAPASP$ to compute the shortest path between $u, v$ passing through a vertex $w \in C_t$.
We repeat this computation for all $t$, balancing the thresholds so that each $\bfs$ search can be computed efficiently.
We repeat for all $t$ large enough such that the dominating set $C_t$ is small.

Phase 2 uses $\bkAPASP$ to compute $(2, 0)$ approximations for paths with maximum degree $n^x$.

Phase 3 uses $\denseAPASP$ to compute a $+C$ approximation, which implies a $(2, 0)$ approximation for all paths of length at least $C$.

Next, we give the pseudocode of $\randomApproxAPASP$ and its correctness proof along with a time complexity analysis.

\paragraph{Correctness}

\sloppy

\begin{proof}
    First,  $\delta(u, v) \leq \hat{\delta}(u, v)$ for all $u, v$. 
    For each call to $\dominatingAPASP$, this follows from \Cref{lemma:dominating-apasp-feasible} as $\hat{\delta}(u, v) \geq \delta(G_t, u, v) \geq \delta(u, v)$ as $G_t \subset G$.
    The claim then follows from the correctness of $\bkAPASP$ and $\denseAPASP$.

    We now show that $\hat{\delta}(u, v) \leq 2 \delta(u, v)$ for all $u, v$.
    Let $P$ be a shortest path and $\deg(P) = \max_{v \in V} \deg(v)$ be the maximum degree of a vertex in path $P$.
    
\IncMargin{1em}
\begin{algorithm}
\SetKwInOut{Input}{Input}\SetKwInOut{Output}{Output}
\Input{Unweighted, undirected Graph $G = (V, E)$ with $n$ vertices}
\Output{Distance estimate $\hat{\delta}: U \times V \rightarrow \Z$ such that $\delta(u, v) \leq \hat{\delta}(u, v) \leq 2 \delta(u, v)$ for all $u, v \in V$}
\BlankLine
Fix parameters $x \gets 0.53184039$ and $C \gets 22$

$\hat{\delta}(u, v) \gets \begin{cases}
    1 & (u, v) \in E \\
    \infty & \otherwise
\end{cases}$

\textcolor{blue}{Phase 1: Compute +2 Approximate Distances of High Max Degree Paths}

\For{$t = 2^j$ for $0 \leq j \leq \ceil{(1 - x) \log n}$}{
    $C_t \gets \dominate \left(G, \frac{n}{t} \right)$ and $G_t = \left(V, E_{\frac{2n}{t}}\right)$
    \label{line:2-rand-approx:dominating-set}
    
    $\hat{\delta} \gets \min(\hat{\delta}, \dominatingAPASP(G_t, V, V, C_t, C + 1))$
    \label{line:2-rand-approx:dominating-apasp}
}

\textcolor{blue}{Phase 2: Compute Approximate Distances of Low Degree Paths}

$\hat{\delta} \gets \min(\hat{\delta}, \bkAPASP(G_x))$ where $G_x = (V, E_{n^x})$
\label{line:2-rand-approx:bgs-apasp}

\textcolor{blue}{Phase 3: Compute Approximate Distances of Long Paths}

$\hat{\delta} \gets \min(\hat{\delta}, \denseAPASP(G, C))$
\label{line:2-rand-approx:dense-apasp}

\caption{$\randomApproxAPASP(G)$}
\label{alg:random-2-approx-apasp}
\end{algorithm}
\DecMargin{1em}

\IncMargin{1em}
\begin{algorithm}[H]

\SetKwInOut{Input}{Input}\SetKwInOut{Output}{Output}
\Input{Unweighted, undirected Graph $G = (V, E)$, source and target subsets $V_1, V_2 \subset V$ and dominating subset $W \subset V$, and distance bound $C$}
\Output{Distance estimate $\hat{\delta}: U \times V \rightarrow \Z$ such that $\delta(u, v) \leq \hat{\delta}(u, v) \leq \min_{w \in W} \delta(u, w) + \delta(w, v)$ for all $u, v$}

\BlankLine

\For{$w \in W$}{
    $\hat{\delta}(w) \gets \bfs(G, w)$
}
\label{line:dominating-apasp:bfs}

Construct $V_1 \times W$ matrix $A$ where $A(v, w) = \begin{cases}
    \hat{\delta}(w, v) & \hat{\delta}(w, v) \leq C \\ 
    \infty & \otherwise
\end{cases}$
\label{line:dominating-apasp:matrix-a}

Construct $W \times V_2$ matrix $B$ where $B(w, v) = \begin{cases}
    \hat{\delta}(w, v) & \hat{\delta}(w, v) \leq C \\ 
    \infty & \otherwise
\end{cases}$
\label{line:dominating-apasp:matrix-b}

$\hat{\delta} \gets \boundedMinPlus(A, B, C)$
\label{line:dominating-apasp:bounded-min-plus}

\caption{$\dominatingAPASP(G, V_1, V_2, W, C)$}
\label{alg:dominating-set-apasp}
\end{algorithm}
\DecMargin{1em}

    \paragraph{Case 1: $\delta(u, v) \geq C$}

    In this case, $\denseAPASP$ computes a $+C$ approximation so that,
    \begin{equation*}
        \hat{\delta}(u, v) \leq \delta(u, v) + C \leq 2 \delta(u ,v)
    \end{equation*}

    \paragraph{Case 2: $\deg(P) < n^{x}$}

    In this case, $P \subset G_x = (V, E_{n^x})$ so that $\bkAPASP$ computes,
    \begin{equation*}
        \hat{\delta}(u, v) \leq 2 \delta(u, v)
    \end{equation*}

    \paragraph{Case 3: $\deg(P) \geq n^x$ and $\delta(u, v) \leq C$}

    We handle this case with Phase 1 and obtain a $+2$ approximation.
    Note that if $\delta(u, v) = 1$, then we already computed an exact distance when initializing $\hat{\delta}$.

    Let $w$ be the vertex of maximum degree in $P$.
    Let $j$ be the integer such that $\frac{n}{2^j} \leq \deg(w) < \frac{n}{2^{j - 1}}$, noting that $j \leq \ceil{(1 - x) \log n}$ since $\deg(w) > n^x$.
    Consider the iteration where $t = 2^j$ of $\dominatingAPASP$ (\Cref{alg:dominating-set-apasp}).
    Then, $P \subset G_t = (V, E_{\frac{2n}{t}})$.
    Then, since $\deg(w) \geq \frac{n}{t}$ we have that there is some neighbor of $w$, say $w^* = r(w, C_t) \in C_t$.
    Since $\delta(u, v) \leq C$, by the triangle inequality,
    \begin{equation*}
        \delta(G_t, u, w^*) \leq \delta(u, w) + 1 \leq C + 1
    \end{equation*}
    and similarly $\delta(G_t, w^*, v) \leq C + 1$ so that both entries are finite in matrices $A, B$.
    
    Then, computing the \minplus product,
    \begin{equation*}
        \hat{\delta}_t(u, v) \leq \delta(G_t, u, w^*) + \delta(G_t, w^*, v) \leq \delta(u, w) + \delta(w, v) + 2 = \delta(u, v) + 2
    \end{equation*}

    Then, we immediately obtain,
    \begin{align*}
        \hat{\delta}(u, v) \leq \delta(u, v) + 2 \leq 2 \delta(u, v)
    \end{align*}
\end{proof}
\paragraph*{Time Complexity}
\begin{proof}
    We analyze the time complexity of our algorithm.
    $x, C$ are parameters that will be optimized.

    \paragraph{Phase 1}
    Fix some $t = 2^j$.
    We compute set $C_t$ of size $|C_t| = \tO{t}$ in time $O(m) = O(n^2)$ by Lemma \ref{lemma:dominating-set}.
    The graph $G_t$ has at most $\frac{2n^2}{t}$ edges, so execution of all $|C_t|$ $\bfs$ searches requires $\tO{n^2}$-time.
    We can also upper bound $t$ as $t = O(n^{1 - x})$.
    Since $C = O(1)$, computing the $\boundedMinPlus(M_t, M_t^T, C + 1)$ requires time $C n^{\omega(1, 1 - x, 1)} \geq n^2$.
    Then, since there are at most $\log n$ such $j$, Phase 1 requires overall time,
    \begin{equation*}
        \bigtO{C n^{\omega(1, 1-x, 1)}}
    \end{equation*}

    \paragraph{Phase 2}
    In Phase 2, we call $\bkAPASP$ on the graph $G_x$ which requires $\tO{m \sqrt{n}} = \tO{n^{1.5 + x}}$ expected time.

    \paragraph{Phase 3}
    In Phase 3, we call $\denseAPASP$ which requires time $\tO{n^{2 + 2/(3C - 2)}}$.
    We can equivalently use $\additiveAPASP$ from \Cref{sec:monotone-min-plus} but this will not affect the overall running time of our algorithm.

    Then, balancing $\omega(1, 1-x, 1) = 1.5 + x$, we choose $x = 0.53184039$ \cite{Complexity}, this leads to a time complexity of $\tO{n^{2.03184039}}$.
    
    Then, we simply choose $C$ such that $2 + \frac{2}{3C - 2} \leq 2.03184039$, noting that $C = 22$ suffices.
\end{proof}

\noindent{\bf Note.} Note that if we replace the $\boundedMinPlus$ computation by vanilla $(\min,+)$-product, and set $x=\frac{3}{4}$, then we get a combinatorial (randomized) algorithm matching $\tilde{O}(n^{2.25})$ time bound obtained by Roditty \cite{roditty2023newapasp}. However, Roditty's algorithm is deterministic. In the following \Cref{sec:2-det-approx}, we give a deterministic algorithm that significantly improves upon Roditty's bound.

\subsection{A Deterministic \texorpdfstring{$(2, 0)$}{(2, 0)}-Approximate APSP}
\label{sec:2-det-approx}

We now move to our deterministic algorithm which gives a slightly weaker running time bound, but brings in many new ideas that will be useful in later sections. We prove the following theorem.

\dettwomultapprox

We extend Roditty's deterministic algorithm that uses $4$ levels significantly to consider $k+1=8$ levels. 
We handle paths of length longer than twelve using a $(1,12)$-approximation algorithm of \cite{dor2000apasp} which requires time $n^{2+1/17}$ and does not dominate the total running time. 
We consider the following degree thresholds $s_0 = n^x, s_1 = n^{\frac{6}{7}x}, s_2 = n^{\frac{5}{7}x}, \dotsc, s_6 = n^{\frac{1}{7}x}$ and $s_7=0$.
We will leave $x$ as a parameter to be optimized later.
Given these degree thresholds, we compute dominating sets $D_0 \subset D_1 \subset \dotsc \subset D_6 \subset D_7 = V$ according to $\decompose$ (\Cref{lemma:degree-decomposition}) where $D_i$ dominates $V_i = \set{v \in V \given \deg(v) \geq s_i}$ and $E_i = \set{(u, v) \in E \given \min(\deg(u), \deg(v)) < s_{i - 1}}$ for $i=0,1,..,7$ (take $s_{-1}=n+1$).

\paragraph{High Degree Vertices: Paths of length $\leq C$.}
We handle high degree vertices in a similar way as we did for our randomized algorithm, that is we compute a dominating set of size $\bigtO{n^{1-x}}$ in $\bigtO{n^2}$ time, and compute $\boundedMinPlus$ to obtain a $+2$-additive approximation of all paths of length at most $C$ going through some vertex in $V_0$. This step requires $\bigtO{n^{\omega(1,1-x,1)}}$ time.
\paragraph*{Paths of length $\geq C$.}
For paths of lengths at least $C = 12$, we will use known results from \cite{dor2000apasp}.
Consider paths $P$ such that all vertices on $P$ have degree less than $n^x$. Here, we have decomposed the remaining graph into $k = 7$ levels. $\sparseAPASP$ with $k = 7$ levels achieves a $+12$ approximation. So, we can simply run it.
The algorithm $\sparseAPASP$ executes $\dijkstra(G_{i,w}, w, \hat{\delta})$ from all $w \in D_i$ where $G_{i, w} = (V, E_i \cup E^* \cup (w \times V))$ is equipped with the current estimates $\hat{\delta}$ as edge weights, which requires $\tO{n^{2 + \frac{x}{7}}}$ time as $D_i$ has size $\tO{n^{1 - \frac{7 - i}{7} x}}$ and the graph $G_{i, w}$ has $\bigO{n^{1 + \frac{7 - (i - 1)}{7} x}}$ edges. If path $P$ has a vertex in $V_0$, then we can run $\denseAPASP$ with a running time of $\bigtO{n^{2+1/17}}=\bigtO{n^{2.0588}}$.
Again, since we only need $\denseAPASP$ to compute good approximations on paths with high degree vertices, we can choose any constant value of $C$ large enough such that executing $\denseAPASP$ does not dominate the running time.

\paragraph*{Low Degree Vertices: Paths of length $\leq C$.}

\paragraph{Paths of length 6 or more}
For paths of length at least 6, we compute a $+6$ approximation following $\boundedAdditiveAPASP$ in \Cref{sec:additive-approx-constant}.
We give a brief overview here with special attention to the specific application required with more details provided in \Cref{sec:additive-approx-constant}. We use fast rectangular matrix multiplication judiciously to improve upon \cite{dor2000apasp}.

We compute $\boundedAdditiveAPASP$ on the original graph $G$ and therefore choose new degree thresholds and use a new degree decomposition of the graph $G$.
We first describe our algorithm combinatorially, and then mention where fast matrix multiplication can be applied.
For a full discussion, see \Cref{sec:additive-approx-constant}.
To obtain a $+6$ approximation, choose $7$ degree thresholds $r_1, \dotsc, r_{7}$ and decompose the vertex and edge sets of $G$ as $C_1 \subset \dotsc \subset C_8 = V$ and $F_8 \subset F_7 \subset \dotsc \subset F_1 = E$.
At level $i$, compute $\dijkstra$ from each $w \in C_i$ on the graph
\begin{equation*}
    G_{i, w} = \left(V, F_i \cup \left( \bigcup_{i + j_1 + j_2 \leq 17} C_{j_1} \times C_{j_2} \right) \cup F^* \cup (w \times V) \right)
\end{equation*}
noting that both $F_i, C_{j_1} \times C_{j_2}$ have size $\bigO{n^{1 + \frac{8 - (i - 1)}{8}}}$.
Here $F^* = \bigcup_{i = 1}^{7} F_i^*$ is output by $\decompose$.

Consider a shortest path $P$. We use the following definition of blocking vertices and blocking levels.

\begin{restatable}{definition}{blockingvertices}
    \label{def:blocking-vertices}
    Let $u, v$ be vertices in a graph $G$ and $P$ a path between $u, v$. 
    Let $(D_1, \dotsc, D_k)$, $(E_1, \dotsc, E_k)$ be the outputs of a call to $\decompose$ with degree thresholds $s_1, \dotsc, s_{k - 1}$.
    If there is an edge $(a, b) \in P$ such that $(a, b) \notin E_{\level(u)}$, the {\bf blocking vertex} of $P$ from $u$, denoted $b(u, P)$, is the closest vertex to $v$ that is an endpoint to such an edge.
    If no such edge exists, then $b(u, P) = v$.

    Then, the {\bf blocking vertices of $P$} is the set $B(P) = \set{x_0, x_1, \dotsc, x_t}$ defined in the following manner:
    \begin{enumerate}
        \item $x_0 = u$
        \item $x_1 = v$
        \item $x_i = b \left(x_{i - 1}, P_{x_{i - 1}, x_{i - 2}}\right)$ where $P_{x_{i - 1}, x_{i - 2}}$ is the sub-path of $P$ between $x_{i - 1}, x_{i - 2}$.
    \end{enumerate}

    For any $j \geq \level(v)$, {\bf blocking levels of $P$ at level $j$} is the set $L_B(P, j) = \set{\level(x_i) \given x_i \in B(P) \andT \level(x_i) < j}$.
    Denote $L_B(P) = L_B(P, \level(v))$.
\end{restatable}

We prove that the number of blocking levels directly determines the additive error accumulated on a given path.

\begin{restatable}{lemma}{blockapproxerror1}
    \label{lemma:block-approx-error-1}
    Let $u, v$ be vertices in a graph $G$ and $P$ be a shortest path between $u, v$.
    Let $B(P)$ be the blocking vertices of $P$.
    Then, the distance estimate $\hat{\delta}$ satisfies
    \begin{equation*}
        \hat{\delta}(v^*, u)  \leq \delta(v, u) + 2 |L_B(P)| + 1
    \end{equation*}

    For any level $j \geq \level(v)$, if $v \in D_j$,
    \begin{equation*}
        \hat{\delta}(v, u) \leq \hat{\delta}_{j}(v, u) \leq \delta(v, u) + 2 |L_B(P, j)|
    \end{equation*}
\end{restatable}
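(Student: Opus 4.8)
The plan is to induct on the blocking vertices $B(P) = \set{x_0 = u, x_1 = v, x_2, \dotsc, x_t}$, tracking how the distance estimate $\hat{\delta}$ improves as we move from $v$ back towards $u$ along the chain of blocking vertices. The key structural fact, which comes directly from \Cref{def:blocking-vertices} and the definition of $E^*$ in \Cref{lemma:degree-decomposition}, is that each consecutive pair of blocking vertices $x_{i}, x_{i+1}$ is connected by a sub-path $P_{x_{i+1}, x_i}$ that lies entirely inside $E_{\level(x_i)}$ \emph{except} possibly for the one edge incident to $x_{i+1}$ itself; and since $x_{i+1}$ is the endpoint of an edge leaving $E_{\level(x_i)}$, that edge's other endpoint has degree at least $s_{\level(x_i) - 1}$, hence (after appropriate bookkeeping with the $F^*$/$E^*$ edges and a dominating-set neighbor) $x_{i+1}$ is within distance $1$ in the relevant graph $G_{\level(x_{i+1}), w}$ of a vertex $w \in D_{\level(x_{i+1})}$ from which we ran $\dijkstra$. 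This is what converts ``crossing one blocking level'' into ``$+2$ additive error.''

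The first step is to set up the base case: take the highest-level blocking vertex and observe that from the Dijkstra runs in the construction of $G_{i,w}$ (with current estimates $\hat\delta$ as edge weights), $\hat{\delta}_{\level(x_t)}(x_t, \cdot)$ already has the right form along the final sub-path, with no error yet accrued beyond the single $+1$ from the last incident edge. Then the inductive step: assuming $\hat{\delta}_{\level(x_{i+1})}(x_{i+1}, u) \leq \delta(x_{i+1}, u) + 2|L_B(P_{x_{i+1}, u}, \level(x_{i+1}))|$ for the truncated path, I would argue that when we process level $\level(x_i) < \level(x_{i+1})$, the graph $G_{\level(x_i), w}$ for the appropriate $w$ near $x_i$ contains the sub-path from $x_i$ to $x_{i+1}$ inside $E_{\level(x_i)}$ together with the dominating-set hop, so running Dijkstra from $w$ yields $\hat{\delta}(x_i, u) \leq 1 + \delta(x_i, x_{i+1}) + 1 + \hat{\delta}_{\level(x_{i+1})}(x_{i+1}, u)$; the two $+1$'s are exactly the new $+2$ charged against the newly-added blocking level $\level(x_i) \in L_B(P)$. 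Summing the telescoping inequality over $i = t, t-1, \dotsc, 1$ gives $\hat{\delta}(v^*, u) \leq \delta(v, u) + 2|L_B(P)| + 1$, where the extra $+1$ is the one incident edge at the very first blocking vertex $x_t$ that never got absorbed. For the second, stronger bound: if $v \in D_j$, then $v$ \emph{itself} is one of the roots from which Dijkstra is run at level $j$, so we save the initial $+1$ hop (we do not need a dominating-set neighbor of $v$), and we only count blocking levels strictly below $j$, giving $\hat{\delta}(v, u) \leq \hat{\delta}_j(v, u) \leq \delta(v, u) + 2|L_B(P, j)|$.

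The main obstacle I anticipate is verifying precisely that the relevant sub-path $P_{x_i, x_{i+1}}$ together with the needed dominating-set edge and the $w \times V$ / $F^*$ edges actually all lie inside the graph $G_{\level(x_i), w}$ used in the algorithm — this requires carefully matching the definition of ``blocking vertex'' (closest endpoint to $v$ of an edge leaving $E_{\level(u)}$) against which edges are guaranteed present at which level, and in particular checking that no edge of $P_{x_i, x_{i+1}}$ has been ``lost'' by the time we reach level $\level(x_i)$. A secondary subtlety is the off-by-one accounting of the trailing $+1$: one must be careful whether the edge incident to the top blocking vertex is or is not already covered by an $E^*$ edge, which is exactly why the two displayed bounds differ by $1$ and why the second requires the hypothesis $v \in D_j$. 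I would handle both by first proving a clean single-step claim (``crossing from level $\ell'$ to the next relevant level $\ell < \ell'$ along a blocking segment costs exactly $+2$, and the graph $G_{\ell, w}$ contains a witnessing path'') and then iterating it, keeping the base case isolated so the $+1$ bookkeeping is transparent.
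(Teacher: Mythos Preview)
Your proposal has a genuine structural gap: you have the geometry of the blocking vertices wrong. You write that ``each consecutive pair of blocking vertices $x_i, x_{i+1}$ is connected by a sub-path $P_{x_{i+1}, x_i}$ that lies entirely inside $E_{\level(x_i)}$ except possibly for the one edge incident to $x_{i+1}$,'' and later that ``$G_{\level(x_i), w}$ \dots contains the sub-path from $x_i$ to $x_{i+1}$ inside $E_{\level(x_i)}$.'' This is false. By definition $x_{i+1} = b(x_i, P_{x_i, x_{i-1}})$ is the endpoint closest to $x_{i-1}$ of an edge of $P_{x_i, x_{i-1}}$ not in $E_{\level(x_i)}$; consequently it is $P_{x_{i+1}, x_{i-1}}$ that lies in $E_{\level(x_i)}$, while $P_{x_i, x_{i+1}}$ is precisely the part containing all remaining blocking edges and, in particular, all subsequent blocking vertices $x_{i+2}, \dotsc, x_t$ (see \Cref{lemma:prop-blocking}, Property~4: the paths $P_{x_{i-1}, x_i}$ are \emph{nested}, not a partition of $P$). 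So your inductive step, which tries to traverse $P_{x_i, x_{i+1}}$ inside $E_{\level(x_i)}$ and then invoke the hypothesis on $\hat{\delta}_{\level(x_{i+1})}(x_{i+1}, u)$, simply cannot proceed. You also have the level ordering backwards: $\level(x_i) > \level(x_{i+1})$, not $<$.

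The paper's argument runs in the opposite direction. It inducts on $|L_B(P)|$, recursing on the shorter nested sub-path $P_{v, x_2} = P_{x_1, x_2}$ with the roles of the endpoints swapped (here $x_2$ is the low-level endpoint), and obtains from the inductive hypothesis a bound on $\hat{\delta}_{\level(x_2)}(x_2^*, v)$, i.e.\ the estimate back toward $v$, not toward $u$. Then at the higher level $\level(v)$, Dijkstra from $v^*$ uses the edge $(v^*, x_2^*)$ (weighted by the estimate just obtained, plus one hop via $E^*$), the edge $(x_2^*, x_2) \in E^*$, and finally the sub-path $P_{x_2, u}$, which \emph{is} in $E_{\level(v)}$ by the definition of $x_2 = b(v, P)$. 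The two extra hops give the $+2$ per blocking level; the residual $+1$ is the single $E^*$ edge at the innermost step, which disappears exactly when $v \in D_j$ is itself a Dijkstra source. Your telescoping picture can be made to work, but only after correcting which segment is clear at which level and reversing the direction of recursion.
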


The proof is deferred to \Cref{sec:sparse-apasp-analysis}.

If $P \subset F_{5}$ then we obtain a $+6$ additive approximation as a Corollary to \Cref{lemma:block-approx-error-1} as $|L_B(P)| \leq 3$. 
Thus, suppose $P \subset F_{j}$ but $P \not\subset F_{j + 1}$ for some $j \leq 4$.

\begin{figure}[ht]
    \centering
    \includegraphics[width=0.7\textwidth]{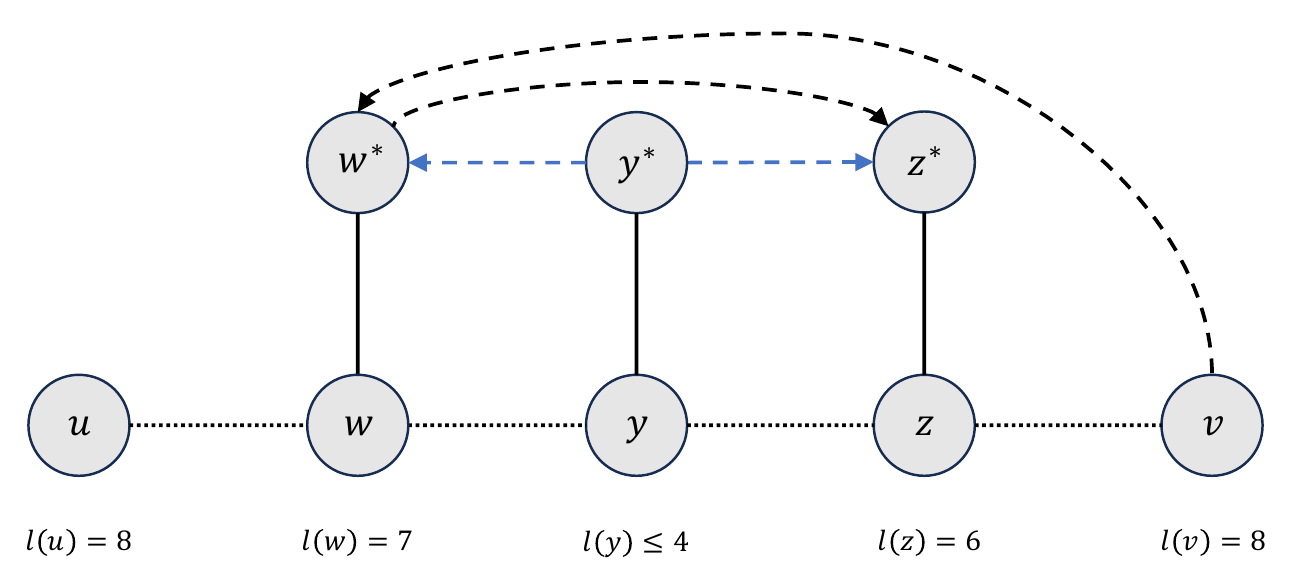}
    \caption{Solid lines denote edges and dotted lines represent paths.
    Blue dashed arrows denote distance estimates used in the \minplus product. 
    Black dashed arrows denote edges $w \times V$ used in $G_{j, w}$.}
    \label{fig:+6-min-plus-approx}
\end{figure}

For any vertex $w \in V_7$ and $w^* = r(w, C_7)$, the graph $G_{7, w^*}$ contains edge sets $F_7 \cup (w^* \times V) \cup \left(\bigcup_{j = 1}^{4} C_j \times C_6\right)$.
For $v \in V$, let $z$ denote the first vertex on $P_{w, v}$ such that the sub-path $P_{z, v} \subset F_7$ and $y \in P_{w, v}$ be the vertex of maximum degree. See \Cref{fig:+6-min-plus-approx}.
Then $\level(z) \leq 6$, and $\level(y) \leq 6$. Suppose $\level(z)=6$ (other cases are only easier).
Let $y^* = r(y, C_j)$ where $j \leq 6$ and $z^* = r(z, C_6)$.
Then, since $P \subset F_j$, $\hat{\delta}(y^*, z^*) \leq \delta(y, z) + 2$ and $\hat{\delta}(y^*, w^*) \leq \delta(y, w) + 2$.
Executing $\dijkstra$ from $w^*$, the path $(w^*, y^*, z^*, z) \circ P_{z, v} \subset G_{7, w^*}$ so that the distance estimate,
\begin{equation*}
    \hat{\delta}(w^*, v) \leq \hat{\delta}(w^*, y^*) + \hat{\delta}(y^*, z^*) + 1 + \delta(z, v) \leq \delta(w, v) + 5
\end{equation*}

Then, returning to the shortest path $P$ with endpoints $u, v$, the graph $G_{8, u}$ contains $F_8 \cup (u \times v)$.
If $w$ denotes the closest vertex to $v$ on $P$ such that $P_{w, u} \subset F_8$ and $w^* = r(w, C_7)$, then the path $(v, w^*, w) \circ P_{w, u} \subset G_{8, v}$ so that the distance estimate computed is at most,
\begin{equation*}
    \hat{\delta}(v, u) \leq \hat{\delta}(v, w^*) + 1 + \delta(w, u) \leq \delta(v, u) + 6
\end{equation*}

Note that we have not actually required the use of every $C_{j_1} \times C_{j_2}$ such that $i + j_1 + j_2 \leq 17$ when executing $\dijkstra$ from $C_j$.
Instead, we have only used the sets $C_i \times C_6$ for $i \leq 4$ when executing $\dijkstra$ from $C_7$.
Instead of adding these edges, we can compute the distances that would have been obtained with these edges with a \minplus product and encode these distances into the smaller edge set $w \times V$.
We encode the distances $C_i \times C_7$ into a matrix $A_i$ and the distances $C_i \times C_6$ into a matrix $B_i$ and compute $A_i * B_i$ with entries bounded by $14$ for $i=1,2,3,4$.
In the example above, this gives the estimate $\hat{\delta}(w^*, z^*) \leq \delta(w, z) + 4$.
Then, when executing $\dijkstra$ from $w^* \in C_7$, the path $(w^*, z^*, z) \circ P_{z, v} \subset G_{7, w^*}$ so the distance estimate is again at most,
\begin{equation*}
    \hat{\delta}(w^*, v) \leq \hat{\delta}(w^*, z^*) + 1 + \delta(z, v) \leq \delta(w, v) + 5
\end{equation*}

\paragraph{Paths of length 5 or less}
We now describe the most challenging cases: obtaining a $(2, 0)$ approximation on shortest paths $P_{u, v}$ of length at most 5.
Let $G_x = (V, E_{n^x})$ denote the graph $G$ pruned so that the maximum degree of $G_x$ is strictly less than $n^x$. We can compute a depth 2 \bfs (that is \bfs tree upto two levels) in time $n^{2x}$ from each vertex, with a total time of $n^{1+2x}$. Our choice of $x$ will ensure this is allowed.

As a warm-up, consider obtaining a $+2$ approximation on paths of length at most 3.
For 2 edge paths, a depth 2 $\bfs$ computes exact distances for all paths of length 2 in $G_x$.
For 3 edge paths $P = (u, u_2, v_2, v)$, assume $\deg(u) \geq \deg(v)$.
A depth 2 $\bfs$ from $u$ implies $\hat{\delta}(u, v_2) = 2$ and if $u^* = r(u, D_{\level(u)})$ then we can efficiently update $\hat{\delta}(u^*, v_2) \leq 1 + \hat{\delta}(u, v_2) = 3$.
Then, in the graph $G_{\level(u), u^*}$, there is the path $(u^*, v_2, v)$ so $\dijkstra$ computes a distance estimate $\hat{\delta}(u^*, v) \leq 4$.
Finally, in the graph $G_{k, v}$ (that is at the lowest level) there is the path $(v, u^*, u)$ so $\dijkstra$ computes a distance estimate $\hat{\delta}(u, v) \leq 5$, which is a $+2$ approximation.

Finally, we consider paths $P = (u, u_2, u_3, v_3, v_2, v)$ of 5 edges.
Paths of length 4 are easier to handle and we will mention the necessary modifications in the full proof.
Recall that we have decomposed the graph $G_x$ into $k = 7$ levels with degree thresholds $s_1, s_2,...,s_6, s_7=0$.

Consider the case where both endpoints have low degree $\deg(u), \deg(v) < s_{6}$, or identically $\level(u), \level(v)=7$.
Let w.l.o.g, $\deg(u_2) \geq \deg(v_2)$. 
Let $u_2^* = r(u_2, D_{\level(u_2)})$.
If $P \subset E_{\level(u_2)}$, then $\hat{\delta}(u_2^*, v) \leq \delta(u_2, v) + 1$.
Otherwise, both $\level(u_3), \level(v_3) < \level(u_2)$ (since even if one of them is at $\level(u_2)$ or higher, we have $P \subset E_{\level(u_2)}$).
Therefore, $\hat{\delta}(v_3^*, u_2^*) \leq \delta(v_3, u_2) + 2$ so the path $(u_2^*, v_3^*, v_3) \circ P_{v_3, v} \subset G_{\level(u_2), u_2^*}$ and $\dijkstra$ computes the desired distance estimate: $\hat{\delta}(u_2^*, v)\leq \delta(u_2,v)+3$.
Then, as $(v, u_2^*, u_2, u) \subset G_{7, v}$, $\dijkstra$ computes a $+4$ additive approximation.
\Cref{fig:u-v-level-k} gives an illustration.

\begin{figure}[ht]
    \centering
    \includegraphics[width=0.7\textwidth]{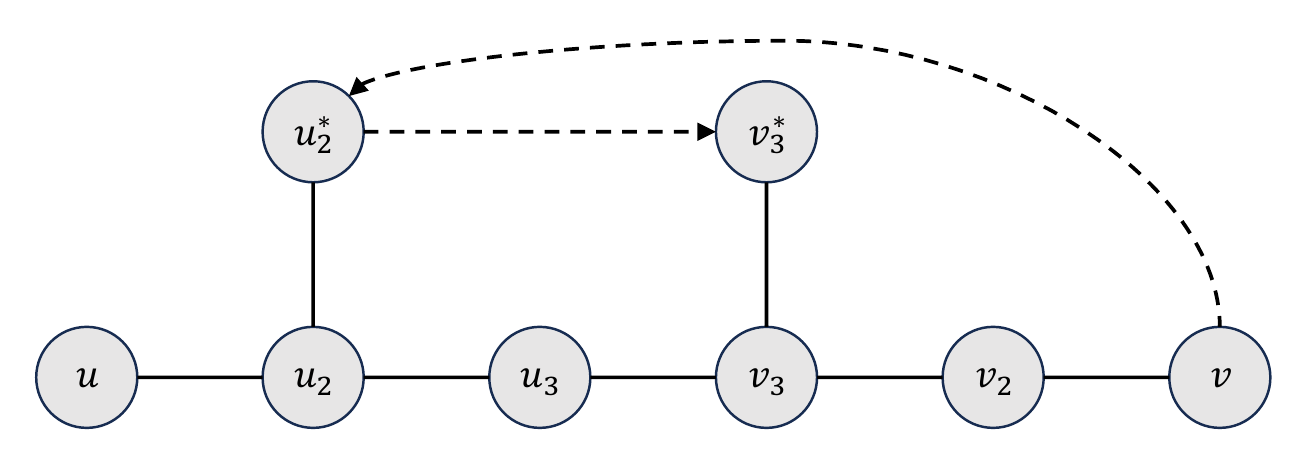}
    \caption{Case 2.3.1: $\level(u) = \level(v) = k$.
    Solid lines denote edges and dotted lines represent paths.
    Black dashed arrows denote edges $w \times V$ in $G_{j, w}$.}
    \label{fig:u-v-level-k}
\end{figure}

Then, assume at least one endpoint (say $u$) has degree $ \geq s_{k - 1}$.
Suppose there is some vertex (say $v_3$) with $n^x > \deg(v_3) \geq s_1$.
Using a bounded rectangular \minplus product for matrices of size $D_1 \times V$ and $D_1 \times D_{6}$, the \minplus product yields an estimate $\hat{\delta}(u^*, v) \leq \delta(u*, v) + 2 \leq \delta(u, v) + 3$.
Then, as $(v, u^*, u) \subset G_{7, v}$, $\dijkstra$ computes a $+4$ additive approximation.
\Cref{fig:2-approx-min-plus} gives an illustration.

\begin{figure}[ht]
    \centering
    \includegraphics[width=0.7\textwidth]{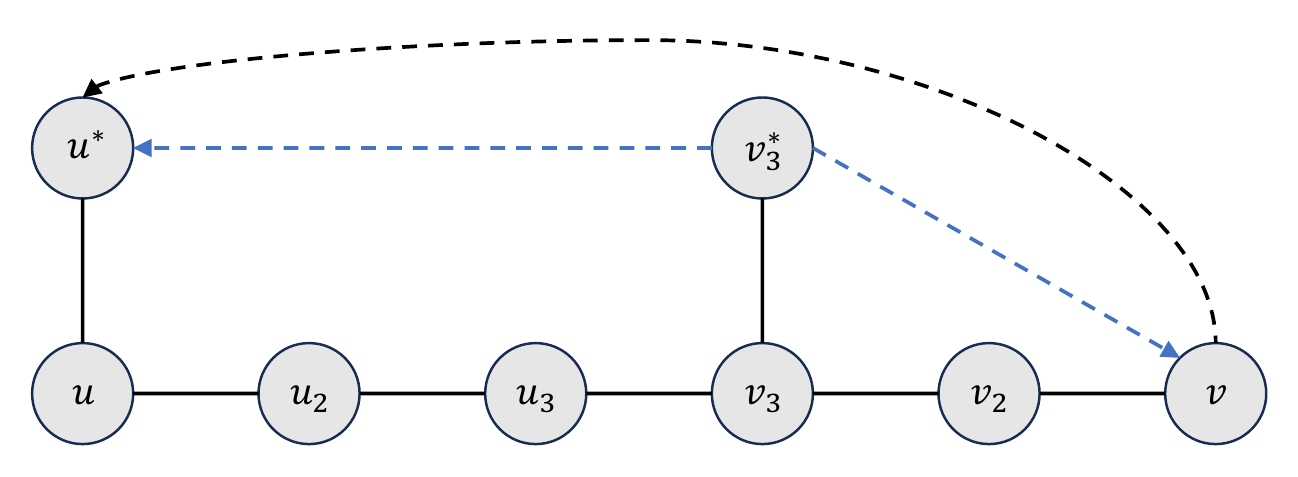}
    \caption{Case 2.3.2: $\deg(u) \geq s_{6}$ and $\deg(v_3) > s_1$.
    Solid lines denote edges in $G$.
    Blue dashed arrows denote distance estimates used in the \minplus product.
    Black dashed arrows denote edges $w \times V$ in $G_{j, w}$.}
    \label{fig:2-approx-min-plus}
\end{figure}

We can now assume that every vertex has degree $< s_1$.
We now compute depth 3 $\bfs$ in $G_2 = (V, E_2)$ from every vertex with degree at most $s_2$. 
One depth 3 $\bfs$ requires $s_2s_1^2=n^{\frac{3k - 4}{k} x}$ time (contrast to $n^{3x}$ if naively applying depth 3 $\bfs$ on the graph $G_x$). The correctness depends on extensive case analysis.

In Case 2.3.3, suppose $\deg(u_2), \deg(v_2) \geq s_2$.
Suppose $\deg(v) \geq \deg(u)$.
From our earlier discussion, since $P \subset E_2$, $\hat{\delta}(u_2^*, v) \leq \delta(u, v) + 1$.
Then, since the path $(v^*, u_2^*, u_2, u) \subset G_{\level(v), v^*}$, $\dijkstra$ computes an estimate,
\begin{equation*}
    \hat{\delta}(v^*, u) \leq \hat{\delta}(v^*, u_2^*) + 2 \leq \delta(u, v) + 3
\end{equation*}
Then, as $(u, v^*, v) \subset G_{8, u}$, $\dijkstra$ computes a $+4$ additive approximation.

\begin{figure}[ht]
    \centering
    \begin{subfigure}[b]{0.45\textwidth}
         \centering
         \includegraphics[width=\textwidth]{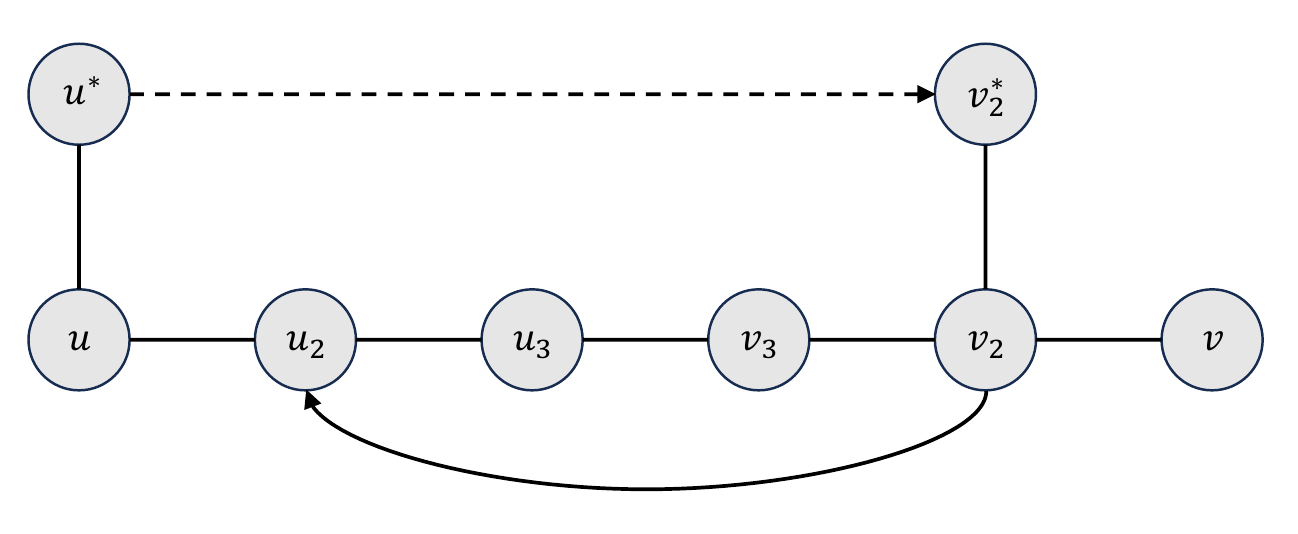}
         \caption{Depth 3 $\bfs$ search from $v_2$.}
         \label{fig:depth-3-v2}
     \end{subfigure}
     \hfill
     \begin{subfigure}[b]{0.45\textwidth}
         \centering
         \includegraphics[width=\textwidth]{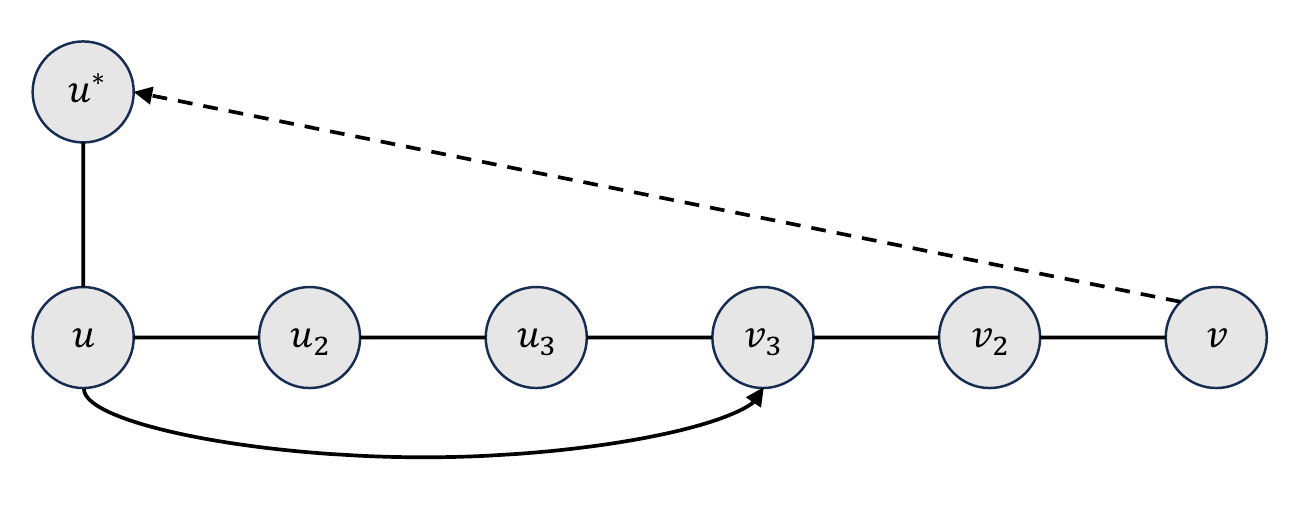}
         \caption{Depth 3 $\bfs$ search from $u$.}
         \label{fig:depth-3-u}
     \end{subfigure}
     \caption{Case 2.3.4 and 2.3.5: We use depth 3 $\bfs$ in \Cref{line:2-approx:depth-3-bfs} to obtain more accurate distance estimates. 
     Solid lines denote edges in $G$.
     Solid arrows denote depth 3 $\bfs$ executions.
     Dotted arrows denote edges $w \times V$ used in executing $\dijkstra$ on the graph $G_{j, w}$.}
    \label{fig:2-approx-depth-3}
\end{figure}

In Case 2.3.4, suppose $\deg(u_2) \geq s_2$ but $\deg(v_2) < s_2$.
If $\deg(v) \geq \deg(u)$, then we can proceed exactly as above, so assume $\deg(u) > \deg(v)$.
The depth 3 $\bfs$ from $v_2$ computes $\hat{\delta}(v_2, u_2) = 3$ and \Cref{line:2-approx:truncated-bfs-hitting-set} updates $\hat{\delta}(v_2^*, u_2) = 1 + \hat{\delta}(v_2, u_2) = 4$. If $\deg(v_2) \geq \deg(u)$, then, the path $(v_2^*, u_2, u, u^*) \subset G_{\level(v_2), v_2^*}$, $\dijkstra$ computes a distance estimate at most,
\begin{equation*}
    \hat{\delta}(v_2^*, u^*) \leq \hat{\delta}(v_2^*, u_2) + 2 \leq 6
\end{equation*}
Since $\deg(u) > \deg(v)$, $(u^*, v_2^*, v_2, v) \subset G_{\level(u), u^*}$, $\dijkstra$ computes a distance estimate at most,
\begin{equation*}
    \hat{\delta}(u^*, v) \leq \hat{\delta}(u^*, v_2^*) + 2 \leq 8
\end{equation*}
Then, as $(v, u^*, u) \subset G_{7, u}$, $\dijkstra$ computes a distance estimate at most 9, a $+4$ additive approximation.
See \Cref{fig:depth-3-v2} for an illustration.

On the other hand $\deg(u) > \deg(v_2)$.
If $\deg(u) > s_2$, we obtain a $+2$ approximation following the previous discussion of Case 2.3.3.
Otherwise, if $\deg(u) < s_2$ then the depth 3 $\bfs$ from $u$ computes $\hat{\delta}(u, v_3) = 3$ and \Cref{line:2-approx:truncated-bfs-hitting-set} updates $\hat{\delta}(u^*, v_3) = 1 + \hat{\delta}(u, v_3) = 4$.
Since $\deg(u) > \deg(v_2)$, $(u^*, v_3, v_2, v) \subset G_{\level(u), u^*}$ so $\dijkstra$ computes a distance estimate at most,
\begin{equation*}
    \hat{\delta}(u^*, v) \leq \hat{\delta}(u^*, v_3) + 3 \leq 6
\end{equation*}
Then, as $(v, u^*, u) \subset G_{7, u}$, $\dijkstra$ computes a distance estimate at most 7, which is a $+2$ additive approximation.
See \Cref{fig:depth-3-u} for an illustration.

In the full proof we give a detailed argument on all possible variations of length 5 paths as well as the modifications necessary to compute a $+4$ additive approximation on length 4 paths.

\paragraph*{Time Complexity}

The time-complexity to compute \boundedMinPlus~to handle paths with at least one vertex in $V_0$ is $\tilde{O}(\omega(1,1-x,1))$. This computation dominates the \boundedMinPlus~computation in Case 2.3.2 \Cref{fig:2-approx-min-plus} (see \Cref{prop:2-approx-min-plus-convexity}). The time complexity to handle paths with low degree vertices for paths of length at most $5$ is 
\begin{equation*}
    \bigtO{n^{1 + 2x} + n^{1 + \frac{3k - 4}{k}x} + n^{2 + \frac{x}{k}}}
\end{equation*}
where the first term is from running depth 2 \bfs, the second term is from running depth 3 \bfs~from vertices with degree at most $s_2$, and the last term is from running \dijkstra. If we balance these terms, we get we choose $x = 0.4414248$ and $k = 7$ to obtain the time complexity $\tO{n^{2.07203166}}$. The non-dominating terms include time to run \boundedAdditiveAPASP on paths of length $[6,12]$ which requires $\tilde{O}(n^{2.0638})$ by \Cref{cor:bounded-additive-examples} and \denseAPASP(G,12) that requires time $\tO{n^{35/17}} = \tO{n^{2.0589}}$ by Lemma \ref{lemma:dhz-apasp}.

\paragraph{Algorithm}
\Cref{alg:2-approx-apasp} computes a deterministic $(2,0)$-approximation of APSP on unweighted graphs. 
Phase 1 (Lines \ref{line:2-approx:dominating-loop}-\ref{line:2-approx:dominating-apasp}), uses \boundedMinPlus on matrix $M_t$ to estimate distances for paths of length at most $C$ containing at least one vertex of degree $\geq n^x$. 
To do so, Line \ref{line:2-approx:dominating-loop} considers $\log n$ degree thresholds, each of the form $t = \frac{n}{2^j}$. 
Within each threshold, a dominating set $C_{t}$ is computed (\Cref{line:2-approx:dominating-set}) and a edge set $E_{2n/t}$ such that the degree of each vertex does not exceed $\frac{2n}{t}$.
\Cref{line:dominating-apasp:bfs} of $\dominatingAPASP$ computes a $\bfs$ within the sub-graph $G_t = (V, E_{2n/t})$ from each node in the dominating set $C_t$, keeping only distances up to $C + 1$.
Then, for each pair of vertices, $u, v$, a bounded \minplus product is computed in \Cref{line:dominating-apasp:bounded-min-plus} of $\dominatingAPASP$ to compute the shortest path between $u, v$ passing through a vertex $w \in C_t$.
We repeat this computation for all $t$, balancing the thresholds so that each $\bfs$ search can be computed efficiently.
We repeat for all $t$ large enough such that the dominating set $C_t$ is small.

Phase 2 handles paths with low degree vertices. \Cref{line:2-approx:decompose-k} uses $\decompose$ on the remaining sparse graph into $k = 7$ degree levels specified in \Cref{line:2-approx:degree-thresholds}.

Phase 2.1, computes $\bfs$ from each vertex in $D_1$ (Lines \ref{line:2-approx:define-g-s-1}-\ref{line:2-approx:2.1-dominating-set}) and use a bounded \minplus product to find approximate distances between each pair of vertices $V \times D_{k - 1}$ that goes through a vertex in $D_1$ in \Cref{line:dominating-apasp:bounded-min-plus} of $\dominatingAPASP$.

In Phase 2.2, a depth 2 $\bfs$ is computed from each vertex in $G_x$ (\Cref{line:2-approx:depth-2-bfs}), as well as depth 3 $\bfs$ from each vertex of degree at most $s_2$ in the graph $G_2 = (V, E_2)$ (\Cref{line:2-approx:depth-3-bfs}).

\IncMargin{1em}
\begin{algorithm}[H]
\SetKwInOut{Input}{Input}\SetKwInOut{Output}{Output}
\Input{Unweighted, undirected Graph $G = (V, E)$ with $n$ vertices}
\Output{Distance estimate $\hat{\delta}: U \times V \rightarrow \Z$ such that $\delta(u, v) \leq \hat{\delta}(u, v) \leq 2 \delta(u, v)$ for all $u, v \in V$}
\BlankLine
Fix parameters $x \gets 0.45509125$ and $C \gets 12$

$\hat{\delta}(u, v) \gets \begin{cases}
    1 & (u, v) \in E \\
    \infty & \otherwise
\end{cases}$

\textcolor{blue}{Phase 1: Compute +2 Approximate Distances of High Max Degree Paths}

\For{$t = 2^j$ for $0 \leq j \leq \ceil{(1 - x) \log n}$}{
    \label{line:2-approx:dominating-loop}

    $C_t \gets \dominate \left(G, \frac{n}{t} \right)$ and $G_t = \left(V, E_{\frac{2n}{t}}\right)$
    \label{line:2-approx:dominating-set}
    
    $\hat{\delta} \gets \min(\hat{\delta}, \dominatingAPASP(G_t, V, V, C_t, C + 1))$
    \label{line:2-approx:dominating-apasp}
}

\textcolor{blue}{Phase 2: Compute Approximate Distances of Low Degree Paths}

$s_i \gets n^{\frac{k - i}{k} x}$ for all $1 \leq i \leq k - 1 = \frac{C}{2}$
\label{line:2-approx:degree-thresholds}

$(D_{1}, D_{2}, \dotsc, D_{k}), (E_{1}, E_{2}, \dotsc, E_{k}), E^* \gets \decompose(G_x, (s_{1}, s_{2}, \dotsc, s_{k - 1}))$ where $G_{x} \gets (V, E_{n^{x}})$
\label{line:2-approx:decompose-k}

\textcolor{blue}{Phase 2.1: Computing Approximate Distances of Paths with Vertex in $V_{s_1}$}

$G_{s_1} \gets (V, E_{n^x} \cup E^*)$
\label{line:2-approx:define-g-s-1}

$\hat{\delta} \gets \min(\hat{\delta}, \dominatingAPASP(G_{s_1}, V, D_{k - 1}, D_1, C + 2))$
\label{line:2-approx:2.1-dominating-set}

\textcolor{blue}{Phase 2.2: Compute Approximate Distances for $\delta(u, v) \leq 5$}

\For{$v \in V$}{
    $M_{v, 2} \gets \bfs(G_x, v, 2)$ \label{line:2-approx:depth-2-bfs}

    \lIf{$\deg(u) \leq s_2$}{$M_{u, 3} \gets \bfs(G_2, u, 3)$ where $G_2 = (V, E_2)$} \label{line:2-approx:depth-3-bfs}

    $\hat{\delta}(v, u) \gets \min(\hat{\delta}(u, v), M_{v, 2}(u), M_{v, 3}(u))$
}

\textcolor{blue}{Phase 2.3: Compute Approximate Distances of Low Degree Paths}

\For{$1 \leq i \leq k$}{
    \For{$w \in D_i$}{
        $\hat{\delta}(w, v) \gets \min(\hat{\delta}(w, v), \min_{u \in q(w, D_i)} 1 + \hat{\delta}(u, v))$ for all $v \in V$ \label{line:2-approx:truncated-bfs-hitting-set}
        
        $\hat{\delta} \gets \dijkstra(G_{i, w}, w, \hat{\delta})$ where $G_{i, w} \gets \left(V, E_i \cup (w \times V) \cup E^* \right)$
    }
}

\textcolor{blue}{Phase 3: Compute Approximate Distances of Long Paths}

$\hat{\delta} \gets \min(\hat{\delta}, \boundedAdditiveAPASP(G, 6, C), \denseAPASP(G, C))$

\caption{$\twoApproxAPASP(G)$}
\label{alg:2-approx-apasp}
\end{algorithm}
\DecMargin{1em}
Phase 2.3 (Lines 26-29), performs $\dijkstra$ from each vertex $w \in D_i$ on the graph $(V, E_i \cup E^* \cup (w \times V))$ with the current estimates $\hat{\delta}$ equipped as edge weights.
Before $\dijkstra$ is executed, the distances from a vertex $w \in D_i$ in the dominating set is updated by examining distances from its constituency (see Definition \ref{def:dominating-set}).

\sloppy

Finally, Line 31 handles paths of distance between 6 and $12$ with a call to $\boundedAdditiveAPASP$ and all paths of distance larger than $12$ with a call to $\denseAPASP$.
Note that we can also easily use \Cref{alg:k-additive-apasp} in place of $\denseAPASP$, although this will not affect the asymptotic performance of our final algorithm.

\subsubsection*{Proof of \texorpdfstring{\Cref{thm:2-approx-apsp}}{}}

We start with the correctness proof, followed by a run time analysis.
\paragraph*{Correctness.}
\begin{proof}

    Consider a vertex pair $u, v \in V \times V$.
    Let $\delta(u, v)$ denote the length of a shortest path between $u, v$.
    From Lemma \ref{lemma:2-approx-feasible}, we have that $\delta(u, v) \leq \hat{\delta}(u, v)$ for all $u, v$. 
    We therefore focus on proving $\hat{\delta}(u, v) \leq 2 \delta(u, v)$ for all $u, v$.
    We can assume $\delta(u, v) \geq 2$ since 1 edge paths can easily be found by examining the adjacency matrix.

    \paragraph{Case 1: $\deg(P) \geq n^x$}

    Suppose $\delta(u, v) \geq C$.
    By the correctness of $\denseAPASP$, we have $\hat{\delta}(u, v) \leq \delta(u, v) + C \leq 2 \delta(u, v)$.
    Thus, in the following, assume $\delta(u, v) \leq C$

    The remaining proof is similar to Case 3 of the randomized algorithm.

    \paragraph{Case 2: $\deg(P) < n^x$}

    We now prove the correctness of Algorithm \ref{alg:2-approx-apasp} on low-degree paths.

    \paragraph{Case 2.1: $\delta(u, v) \geq C$}

    Correctness immediately follows from $\denseAPASP$.
    In particular,
    \begin{equation*}
        \hat{\delta}(u, v) \leq \delta(u, v) + C \leq 2 \delta(u, v)
    \end{equation*}

    \paragraph{Case 2.2: $6 \leq \delta(u, v) \leq C$}

    Correctness immediately follow from $\boundedAdditiveAPASP$.
    In particular,
    \begin{equation*}
        \hat{\delta}(u, v) \leq \delta(u, v) + 6 \leq 2 \delta(u, v)
    \end{equation*}

    \paragraph{Case 2.3: $4 \leq \delta(u, v) \leq 5$}

    Denote the path $P = (u, u_2, u_3, v_3, v_2, v)$.
    Note that if $\delta(u, v) = 4$, we in fact have $u_3 = v_3$.

    Recall that for any vertex, we define the level of vertex $v$ as $\level(v)$ to be the index $i$ where $s_i \leq \deg(v) < s_{i - 1}$. 
    For any path $P$, define $\level(P) = \min_{v \in P} \level(v)$ to be the minimum level of any vertex in $P$, or alternatively the level of the maximum degree vertex of $P$.
    Let $\hat{\delta}_{j}$ denote the estimate after executing $\dijkstra$ from all $w \in D_j$.
    We now proceed by case analysis on the levels of the vertices comprising shortest path $P$.

    \paragraph{Case 2.3.1: $\level(u) = \level(v) = k$}

    We show a $+4$ additive error in this case.
    Without loss of generality, assume $\level(u_2) \leq \level(v_2)$.
    Then, for length $4$ paths when $u_3 = v_3$, we actually have $P \subset E_{\level(u_2)}$ so that after the execution of $\dijkstra$ at the $\level(u_2)$ iteration, $\hat{\delta}_{\level(u_2)}(u_2^*, v) \leq \delta(u_2, v) + 1$ where $u_2^* = r\left(u_2, D_{\level(u_2)}\right)$.
    Then, in the $k$-th iteration, when we execute $\dijkstra$ from $v$, we have access to the edges $(v, u_2^*), (u_2^*, u_2), (u_2, u)$ to find a path of length at most $\delta(u, v) + 2$.

    Now, suppose we have a length 5 path so that $u_3 \neq v_3$.
    If $P \notin E_{\level(u_2)}$ then $\level(u_3), \level(v_3) < \level(u_2)$.
    Then, $P \subset E_{\level(v_3)}$ so that from $v_3^* = r\left(v_3, D_{\level(v_3)}\right)$ the path $(v_3^*, v_3) \circ P_{v_3, u_2} \circ (u_2, u_2^*) \subset G_{\level(v_3), v_3^*}$ and $\dijkstra$ computes an estimate,
    \begin{equation*}
        \hat{\delta}_{\level(v_3)}(v_3^*, u_2^*) \leq \delta(v_3, u_2) + 2
    \end{equation*}

    Then, from $u_2^*$, since $(u_2^*, v_3^*, v_3) \circ P_{v_3, v} \subset G_{\level(u_2), u_2^*}$, $\dijkstra$ computes,
    \begin{equation*}
        \hat{\delta}_{\level(u_2)}(u_2^*, v) \leq \hat{\delta}_{\level(v_3)}(u_2^*, v_3^*) + 1 + \delta(v_3, v) \leq \delta(u_2, v_3) + 3 + \delta(v_3, v) \leq \delta(u_2, v) + 3
    \end{equation*}

    Finally, in the $k$-th iteration $(v, u_2^*, u_2, u) \subset G_{k, v}$ so $\dijkstra$ finds a path of length at most $\delta(u, v) + 4$.

    \paragraph{Case 2.3.2: $\level(P) = 1$ and $\min(\level(u), \level(v)) \leq k - 1$}

    Recall that $P = (u, u_2, u_3, v_3, v_2, v)$. 
    For $\delta(u, v) = 4$, note that $u_3 = v_3$.

    This case is handled by Phase 2.1.
    Without loss of generality, assume $\deg(u) \geq \deg(v)$ so that $\deg(u) \geq s_{k - 1}$.
    Since $\level(P) = 1$, there is some vertex $w$ such that $\deg(w) \geq s_1$.
    Then, let $w^* = r(w, D_1) \in D_1$ and $u^* = r(u, D_{k - 1}) \in D_{k - 1}$.
    Since $P \subset G_x$,
    \begin{align*}
        B(w^*, u^*) =  \hat{\delta}_{s_1}(w^*, u^*) &\leq \delta(w, u) + 2 \leq C + 2 \\
        A(w^*, v) = \hat{\delta}_{s_1}(w^*, v) &\leq \delta(w, v) + 1 \leq C + 1
    \end{align*}

    Therefore, since both entries are finite, we obtain in the \minplus product an estimate at most,
    \begin{equation*}
        \hat{\delta}(u^*, v) \leq \delta(w, u) + \delta(w, v) + 3 \leq \delta(u, v) + 3
    \end{equation*}

    Then, in Phase 2.3, when we executing $\dijkstra$ from $G_{k, v}$, we take the edge $(v, u^*)$ and $(u^*, u)$ to find,
    \begin{equation*}
        \hat{\delta}(v, u) \leq \hat{\delta}(u^*, v) + 1 \leq \delta(u, v) + 4 \leq 2 \delta(u, v)
    \end{equation*}

    \paragraph{Case 2.3.3: $\level(P) \geq 2$ and $\level(u_2) = \level(v_2) = 2$}

    Recall that $P = (u, u_2, u_3, v_3, v_2, v)$. 
    For $\delta(u, v) = 4$, note that $u_3 = v_3$.

    We handle this special case by noting that the number of distinct levels in the path $P$ is small.
    Without loss of generality, suppose $\level(v) \leq \level(u)$.
    Then, after the 2nd iteration,
    \begin{equation*}
        \hat{\delta}_{2}(u_2^*, v) \leq \delta(u_2, v) + 1
    \end{equation*}

    as $P \subset E_2$ and $E_2 \cup E^* \subset G_{2, u_2^*}$.
    Then, after the $\level(v)$-th iteration, we take the path $(v^*, u_2^*, u_2, u)$ to obtain,
    \begin{equation*}
        \hat{\delta}_{\level(v)}(v^*, u) \leq \hat{\delta}_{2}(v^*, u_2^*) + 2 \leq \delta(u_2, v) + 4 \leq \delta(u, v) + 3
    \end{equation*}

    In the final iteration, we take the path $(v, u^*, u)$ to obtain a $+4$ approximation.
    
    \paragraph{Case 2.3.4: $\level(P) \geq 2$ and $\min(\level(u_2), \level(v_2)) = 2, \max(\level(u_2), \level(v_2)) \geq 3$}

    Recall that $P = (u, u_2, u_3, v_3, v_2, v)$. 
    For $\delta(u, v) = 4$, note that $u_3 = v_3$.

    Without loss of generality, assume $\level(u_2) = 2$ and $\level(v_2) \geq 3$.
    If $\level(v) \leq \level(u)$, then we can proceed in the above case to obtain a $+4$ approximation.
    Thus, assume $\level(u) < \level(v)$.
    We can now break our analysis into two sub-cases.
    In the remaining cases, we use the Depth 3 BFS search from Line \ref{line:2-approx:depth-3-bfs} to ensure correctness.

    Suppose $\level(v_2) \leq \level(u) < \level(v)$.
    This case is illustrated in Figure \ref{fig:depth-3-v2}.
    Since $\level(v_2) \geq 3$, we have $\deg(v_2) < s_2$.
    Then, $\bfs(G_2, v_2, 3)$ finds $u_2$ so that $\hat{\delta}(v_2, u_2) = \delta(v_2, u_2) = 3$.
    Then, denote $v_2^* = r(v_2, D_{\level(v_2)})$.
    After Line \ref{line:2-approx:truncated-bfs-hitting-set} since $v_2 \in q(v_2^*, D_{\level(v_2)})$, we obtain $\hat{\delta}_{\level(v_2)}(v_2^*, u^*) \leq \delta(v_2^*, u) + 1$ via the path $(v_2^*, u_2, u, u^*)$.
    Then in the $\level(u)$-th iteration, since $\level(u) < \level(v)$, we execute $\dijkstra$ from $u^*$ and use the path $(u^*, v_2^*, v_2, v)$ to find $\hat{\delta}_{\level(u)}(u^*, v) \leq \delta(u, v) + 3$.
    Then, executing $\dijkstra$ from $v$ in the $k$-th iteration, we take the path $(v, u^*, u)$ and find an estimate, $\hat{\delta}(u, v) \leq \delta(u, v) + 4$.

    Finally, suppose $\level(u) < \level(v_2)$.
    First, if $\level(u) = 2 = \level(P)$, then from Lemma \ref{lemma:sparse-apasp-approx} we have that after the second iteration $\hat{\delta}_{2}(u^*, v) \leq \delta(u, v) + 1$ and obtain a $+2$ approximation in the $k$-th iteration via the path $(v, u^*, u)$.
    We can therefore assume $\level(u) \geq 3$.

    Then, we execute a Depth 3 $\bfs$ from $u$ as illustrated in Figure \ref{fig:depth-3-u} so that after Line \ref{line:2-approx:depth-3-bfs} we have $M_{u, 3}(v_3) = 3$ (for length 4 paths, we in fact obtain $M_{u, 3}(v_2) = 3$).
    In particular, after Line \ref{line:2-approx:truncated-bfs-hitting-set}, $\hat{\delta}(u^*, v_3) \leq \delta(u, v_3) + 1$ where $u^* = r(u, D_{\level(u)})$.

    Since $\level(u) < \level(v_2), \level(v)$, in the $\level(u)$-th iteration, we find $v$ from $u^*$ via the path $(u^*, v_3, v_2, v)$ so that $\hat{\delta}_{\level(u)}(u^*, v) \leq \delta(u, v) + 1$.
    Finally, in the $k$-th iteration, from $v$ we take the path $(v, u^*, u)$ to find,
    \begin{equation*}
        \hat{\delta}(u, v) \leq \delta(u, v) + 2
    \end{equation*}
    and obtain a $+2$ approximation.

    \paragraph{Case 2.3.5: $\level(P) \geq 2$ and $\min(\level(u_2), \level(v_2)) \geq 3$}

    Recall that $P = (u, u_2, u_3, v_3, v_2, v)$. 
    For $\delta(u, v) = 4$, note that $u_3 = v_3$.

    Without loss of generality, assume $\level(u_2) \leq \level(v_2)$.
    Since $u, u_2, v_2, v$ all have degree at most $s_2$, we can freely afford Depth 3 $\bfs$ from any of these vertices in the graph $G_2$.

    Without loss of generality, suppose $\level(u) \leq \level(v)$.
    For length 4 paths, we have computed a $\hat{\delta}(u, v_2) = 3$ after Line \ref{line:2-approx:depth-3-bfs} so that after $\level(u)$-th iteration, $\hat{\delta}(u^*, v) \leq \delta(u, v) + 1$ and we obtain a $+2$ approximation in the $k$-th iteration via the path $(v, u^*, u)$.
    Similarly, if $\level(u) \leq \level(v_2)$, we obtain a $+2$ approximation for length 5 paths.

    Thus, we have $\level(v_2) < \level(u)$.
    Then, as in the above case, we compute a Depth 3 $\bfs$ from $v_2$ so that after the $\level(v_2)$-th iteration, $\hat{\delta}_{\level(v_2)}(v_2^*, u) \leq \delta(v_2, u) + 1$.
    Then, after the $\level(u)$-th iteration, $\hat{\delta}_{\level(u)}(u^*, v) \leq \delta(u, v) + 3$ and we obtain a $+4$ approximation in the $k$-th iteration.

    \paragraph{Case 2.4: $\delta(u, v) = 3$}

    For length 3 paths, we prove that we find a $+2$ approximation.
    Denote the path $P = (u, u_2, v_2, v)$.

    We can assume that the maximum degree vertex of $P$ is either $u_2, v_2$.
    Otherwise, let $u$ be the endpoint of maximum degree.
    We compute $\hat{\delta}_{\level(u)}(u^*, v) \leq \delta(u, v) + 1$ and obtain a $+2$ approximation in the $k$-th iteration via path $(v, u^*, u)$.
    
    Suppose without loss of generality $\level(u) \leq \level(v)$.
    Then, $M_{u, 2}(v_2) = 2$ so that after Line \ref{line:2-approx:truncated-bfs-hitting-set},
    \begin{equation*}
        \hat{\delta}(u^*, v_2) \leq 3
    \end{equation*}
    where $u^* = r(u, D_{\level(u)})$.
    Executing $\dijkstra$ from $u^*$, we can take the path $(u^*, v_2, v)$ and obtain $\hat{\delta}_{\level(u)}(u^*, v) \leq \delta(u, v) + 1$.
    Then, in the $k$-th iteration, we take the path $(v, u^*, u)$ and obtain a $+2$ approximation.

    \paragraph{Case 2.5: $\delta(u, v) = 2$}

    For length 2 paths, we in fact find the exact distance.
    Since $P \subset G_x$, in Line \ref{line:2-approx:depth-2-bfs} we compute depth 2 $\bfs$ from each vertex, and therefore find $\hat{\delta}(u, v) = \delta(u, v) = 2$.
\end{proof}

\paragraph*{Time Complexity}
\begin{proof}

    We analyze the time complexity of our algorithm.
    $x, C$ are parameters that will be optimized.

    \paragraph{Phase 1}
    Following the same arguments as Phase 1 of the randomized algorithm, we can bound the running time of Phase 1 as,
    \begin{equation*}
        \bigtO{n^{\omega(1, 1-x, 1)}}
    \end{equation*}

    \paragraph{Phase 2}
    
    In Phase 2, we have a graph $G_x$ of maximum degree $n^x$ and therefore $m \leq n^{1 + x}$.
    From Lemma \ref{lemma:degree-decomposition}, the execution of $\decompose$ requires $O(m) = O(n^{1+x})$ time.

    In Phase 2.1, we can upper bound $|D_1| = \tO{n^{1 - \frac{(k - 1)}{k}x}}$ by Lemma \ref{lemma:degree-decomposition}.
    Likewise, $|D_{k - 1}| = \tO{n^{1 - \frac{x}{k}}}$.
    Executing $\bfs$ on graph $G_{s_1}$ with $O(n^{1 + x})$ edges from each $w \in D_1$ requires $\tO{n^{2 + \frac{x}{k}}}$ time.

    Now, computing the \minplus product of $A, B$ requires time,
    \begin{equation*}
        \bigtO{n^{\omega\left(1, 1 - \frac{k - 1}{k} x, 1 - \frac{x}{k} \right)}} = \bigtO{n^{\omega(1, 1 - x, 1)}}
    \end{equation*}
    by Proposition \ref{prop:2-approx-min-plus-convexity}.

    In Phase 2.2, computing Line \ref{line:2-approx:depth-2-bfs} requires time $O(n^{2x})$ while computing Line \ref{line:2-approx:depth-3-bfs} requires time $\bigO{n^{\frac{3k - 4}{k}x}}$.
    Overall, Phase 2.2 requires time,
    \begin{equation*}
        \bigO{n^{1 + 2x} + n^{1 + \frac{3k - 4}{k}x}}
    \end{equation*}

    Finally, we turn our attention to Phase 2.3.
    Fix some iteration $i$.
    Since the sets $q(w, D_i)$ are disjoint, $\sum_{w \in D_i} |q(w, D_i)| \leq n$ so that Line \ref{line:2-approx:truncated-bfs-hitting-set} requires time $O(n^2)$ over all $w \in D_i$.
    We bound $|D_i| \leq \tO{n^{1 - \frac{k-i}{k}x}}$ by Lemma \ref{lemma:degree-decomposition}.
    Each invocation of $\dijkstra$ requires time $|E_i| = O(n^{1 + \frac{k-i+1}{k}x})$ so that across all $w \in D_i$, all invocations of $\dijkstra$ can be completed in time $\tO{n^{2 + \frac{x}{k}}}$.

    We ignore Phase 3 for now as we will find that it is not the computational bottleneck.
    Phases 1 and 2 require time,
    \begin{equation*}
        \bigtO{n^{\omega(1, 1-x, 1)} + n^{1 + 2x} + n^{1 + \frac{3k - 4}{k}x} + n^{2 + \frac{x}{k}}}
    \end{equation*}

    Using \cite{Complexity}, we choose $x = 0.4414248$ and $k = 7$ to obtain the time complexity $\tO{n^{2.07203166}}$ for Phase 1 and 2.
    Finally, we conclude the proof by noting that $\boundedAdditiveAPASP(G, 6, C)$ requires time $\tO{n^{2.05794292}}$ by Corollary \ref{cor:bounded-additive-examples} and $\denseAPASP(G, C)$ requires time $\tO{n^{35/17}} = \tO{n^{2.0589}}$ by Lemma \ref{lemma:dhz-apasp}.
\end{proof}

\section{\texorpdfstring{$(2, 0)$}{(2, 0)}-Multiplicative Approximations for Long Paths}
\label{sec:bounded-additive-approx}

In this section, we present algorithms for computing $(2, 0)$ approximations for paths of at least some length.
We begin with a combinatorial result that gives a $(2, 0)$ multiplicative approximation for paths of length at least $k$ for even values of $k \geq 4$.

\subsection{Multiplicative Approximation for Long Paths}

\begin{theorem}
    \label{thm:mult-approx-bk}
    Let $k \geq 4$ be an even integer.
    \Cref{alg:mult-approx-bk} computes a $(2, 0)$ approximation for all paths of length $\delta(u, v) \geq k$ in expected time $\bigtO{n^{2 + \frac{1}{2(k - 1)}}}$.
\end{theorem}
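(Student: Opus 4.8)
\emph{Plan.} For a pair $u,v$ with $\delta(u,v)\geq k$ an additive error of $+k$ already suffices, since $\delta(u,v)+k\leq 2\delta(u,v)$; so it is enough to produce an estimate whose additive error never exceeds $k$. The plan is to combine the Dor--Halperin--Zwick degree‑decomposition framework with the Baswana--Kavitha routine $\bkAPASP$ of \Cref{thm:baswana}: a short hierarchy of dominating sets handles the (dense) part of a shortest path that passes through high‑degree vertices, while $\bkAPASP$, run on the sparse residual graph, handles the part that stays low‑degree. The leverage comes from the fact that only \emph{long} paths must be approximated, so the residual graph can be made considerably sparser than in the general $(2,0)$‑APASP problem and a correspondingly denser top degree threshold can be afforded, which is exactly what makes the two costs meet at $n^{2+\frac{1}{2(k-1)}}$.

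\emph{The algorithm.} Fix a parameter $a$ and a geometric chain of degree thresholds $n\geq s_0>s_1>\dots>s_r=n^{a}$ with $r=k-2$ levels and ratio $s_{i-1}/s_i=n^{(1-a)/r}$. Run $\decompose$ (\Cref{lemma:degree-decomposition}) on $G$ with these thresholds to get nested dominating sets $D_0\subseteq\dots\subseteq D_r$ with $|D_i|=\tO{n/s_i}$, edge sets $E_i$ with $|E_i|=O(ns_{i-1})$, and $E^*$. For $i=0,1,\dots,r$ in turn, and for each $w\in D_i$, first propagate $\hat{\delta}$ through the constituency of $w$ and then run $\dijkstra$ from $w$ on $G_{i,w}=(V,E_i\cup E^*\cup(w\times V))$ with the current $\hat{\delta}$ as edge weights, exactly as in \Cref{alg:2-approx-apasp}; each level costs $\tO{\tfrac{n}{s_i}\cdot ns_{i-1}}=\tO{n^{2+(1-a)/r}}$. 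Finally run $\bkAPASP$ on the sparse graph $G_{n^{a}}=(V,E_{n^{a}})$, which has $O(n^{1+a})$ edges, in expected time $\tO{n^{1+a}\sqrt n+n^2}=\tO{n^{3/2+a}}$, and take the entrywise minimum of the two estimates.

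\emph{Correctness and balancing.} Feasibility $\delta\leq\hat{\delta}$ is immediate, as every value used is the length of an actual walk or comes from $\bkAPASP$. For the upper bound, fix a shortest path $P$ with $\delta(u,v)\geq k$. If every vertex of $P$ has degree $<n^{a}$ then $P\subseteq G_{n^{a}}$ and $\bkAPASP$ returns a $(2,0)$‑estimate. Otherwise $P$ meets a vertex of degree $\geq n^{a}$, and I would bound the additive error of the multi‑level Dijkstra pass through the blocking‑vertex machinery (\Cref{def:blocking-vertices}, \Cref{lemma:block-approx-error-1}): each block contributes only $+2$, and since $P$ is long one can route it so that only a bounded number of blocking transitions occur before it descends into the sparse part, keeping the accumulated error at most $k$; hence $\hat{\delta}(u,v)\leq\delta(u,v)+k\leq 2\delta(u,v)$. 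Equating the per‑level cost $n^{2+(1-a)/r}$ with the Baswana--Kavitha cost $n^{3/2+a}$ and substituting $r=k-2$ forces $(1-a)/r=a-\tfrac12$, i.e. $a=\tfrac{k}{2(k-1)}$, and a total expected running time of $\bigtO{n^{3/2+a}}=\bigtO{n^{(4k-3)/(2k-2)}}=\bigtO{n^{2+\frac{1}{2(k-1)}}}$, as claimed; the $n^2$ terms and the $O(1)$‑many levels are absorbed into $\bigtO{\cdot}$, just as in the proof of \Cref{thm:rand-2-approx}.

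\emph{Main obstacle.} The delicate step is the error analysis. With $r=k-2$ degree levels a naive blocking‑vertex count gives additive error on the order of $2(k-2)$, which is too large. Bringing it down to $k$ requires genuinely using the hypothesis $\delta(u,v)\geq k$ — a sufficiently long shortest path should be charged to its own length rather than to the full level hierarchy — and carefully arranging how the \emph{multiplicative} $\bkAPASP$ estimate on the low‑degree sub‑path is concatenated with the \emph{additive} estimates on the high‑degree sub‑paths, so that the composite error stays purely additive and at most $k$. That gluing, together with the threshold bookkeeping, is the technical heart of the argument.
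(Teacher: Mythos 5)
Your running-time balance is exactly the paper's: the cutoff degree $n^{a}$ with $a=\tfrac{k}{2(k-1)}$ and the identity $\tfrac{3}{2}+a = 2+\tfrac{1}{2(k-1)}$ are precisely what \Cref{alg:mult-approx-bk} uses (there $s_{l_0}=n^{1-\frac{k-2}{2(k-1)}}=n^{k/(2(k-1))}$ and $\bkAPASP$ is run on $G_{l_0+1}$). But the correctness argument has a genuine gap, and you have in fact located it yourself: with only $r=k-2$ levels and the plain edge sets $E_i\cup E^*\cup(w\times V)$, the blocking-level bound gives additive error up to $2(k-2)>k$, and neither of your proposed repairs works. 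The hypothesis $\delta(u,v)\ge k$ does not reduce the number of blocking levels a path can visit (a long path can still descend through every level), so you cannot "charge the error to the path's length." And gluing a $\bkAPASP$ estimate on a low-degree sub-path to additive estimates on the rest fails outright: a multiplicative $2$-approximation of a sub-path of length $\ell_1$ contributes $+\ell_1$ additive error to the concatenation, which is unbounded. The paper avoids any such gluing — it is a clean dichotomy: either $P\subset E_{l_0+1}$ and $\bkAPASP$ alone handles the pair, or $P\not\subset E_{l_0+1}$ and the Dijkstra hierarchy alone handles it.

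The missing mechanism is structural. The paper uses a much finer hierarchy of $l=2(k-1)$ levels (so each Dijkstra level still costs only $n^{2+1/l}$, and the bottom level is $D_l=V$, which your construction lacks — without a final level running $\dijkstra$ from every vertex, pairs whose endpoints are low-degree but whose shortest path visits a high-degree vertex receive no estimate at all, and adding such a level on $E_{n^{a}}$ would cost $n^{2+a}\approx n^{2.5}$). Crucially, it also adds the dense edge sets $D_{j_1}\times D_{j_2}$ for $j+j_1+j_2\le 2l+1$ to $G_{j,w}$, as in $\denseAPASP$. Because a path not contained in $E_{l_0+1}$ must contain a vertex $x_t$ of level at most $l_0=k-2$, that very high-degree vertex serves as a hub: the edges $D_{l_0}\times D_{l-\frac{l_0}{2}}$ let the algorithm jump from level $l-\frac{l_0}{2}+1$ directly through $x_t^*$ with only a $+5$ error, collapsing the top $l-\frac{l_0}{2}$ levels into one step. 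Only about $\frac{l_0}{2}-1=\frac{k}{2}-2$ blocking levels then remain to contribute $+2$ each, giving total additive error at most $2(\frac{k}{2}-1)+2=k$. That shortcut — not the length hypothesis — is what brings the error down from $2l$ to $k$; the assumption $\delta(u,v)\ge k$ is used only at the very end to convert the unconditional $+k$ additive bound into a $(2,0)$ multiplicative one.
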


In \Cref{sec:application-mult-approx-long-path}, we improve the above bound for $k=4$ and $5$.

\paragraph{High Level Overview}

As a simple example, we begin with by showing that we can obtain a $(2, 0)$ approximation for $\delta(u, v) \geq 4$ in time $\tO{n^{13/6}}$.
Decompose the graph with degree thresholds $s_1 = n^{5/6}, s_2 = n^{4/6}, \dotsc, s_5 = n^{1/6}$.
Let $P$ be a shortest path of length at least 4.
If $P \subset E_3$, then we execute $\bkAPASP(G_3)$ in time $n^{13/6}$ where $G_3 = (V, E_3)$ has maximum degree $s_2 = n^{4/6}$.
Otherwise, $P \not\subset E_3$ and vertex $w \in P$ of maximum degree has $\deg(w) \geq s_2$ and is dominated by $w^* = r(w, D_{\level(w)})$.
Suppose $\deg(v) \geq \deg(u)$.
Let $z = b(v, P)$ be the blocking vertex so $\deg(z) \geq s_5$ and $z^* = r(z, D_{\level(z)}) \in D_5$.
From $w^*$, since $P \subset E_{\level(w)}$,
\begin{align*}
    \hat{\delta}(w^*, z^*) &\leq \delta(w, z) + 2 \\
    \hat{\delta}(w^*, v) &\leq \delta(w, v) + 1
\end{align*}
Since $D_{\level(w)} \subset D_2$, and $2 + 5 + 6 \leq 13$, the path $(v, w^*, z^*, z) \circ P_{z, u}$ exists in $G_{6, v}$ (see Line 49 of \Cref{alg:mult-approx-bk}) and $\dijkstra$ computes a $+4$ approximation.

For the remaining overview, assume $k\geq 6$.
We want to obtain a $(2, 0)$ approximation for paths of length at least $\multapproxlimit$.
Of course, it suffices to compute a $+\multapproxlimit$ approximation. 
Let $P$ be a path.
We will choose $l$ levels with degree thresholds $s_1, s_2, \dotsc, s_{l - 1}$ (note that $s_l=0$).
In order to compute a $+\multapproxlimit$ approximation, it suffices to have for all $w$ with $\deg(w) \geq s_{l - \frac{\multapproxlimit - 4}{2}}$ and for all $v \in V$, an estimate
\begin{equation*}
    \hat{\delta}(w^*, v) \leq \delta(w, v) + 5
\end{equation*}
Then, following a similar argument to \Cref{lemma:block-approx-error}, we obtain a $+\multapproxlimit$ additive approximation at the $l$-th level.
In order to guarantee a $+5$ error at the $(l - \frac{\multapproxlimit - 4}{2})$ level, we require the edge set $D_{l_0} \times D_{l - \frac{\multapproxlimit - 2}{2}}$ for some choice of $l_0$.

Suppose $P \subset E_{l_0 + 1}$.
Then, we run $\bkAPASP(G_{l_0 + 1})$ and compute a $(2, 0)$ approximation on $P$.
Otherwise, $P \not\subset E_{l_0 + 1}$ so the maximum degree vertex $z \in P$ has $\deg(z) \geq s_{l_0}$ and there is some vertex $z^* = r(z, D_{\level(z)})$.
Let $P_{w, v}$ be the sub-path from $w$ to $v$ and let $y = b(w, P_{w, v})$ be the blocking vertex so that $\level(y) \leq l - \frac{\multapproxlimit - 2}{2}$ and there is $y^* = r(y, D_{l - \frac{\multapproxlimit - 2}{2}})$.
Since $D_{\level(z)} \subset D_{l_0}$ and $P \subset E_{\level(z)}$,
\begin{align*}
    \hat{\delta}(w^*, z^*) &\leq \delta(w, z) + 2 \\
    \hat{\delta}(z^*, y^*) &\leq \delta(z, y) + 2
\end{align*}
If $D_{l_0} \times D_{l - \frac{\multapproxlimit - 2}{2}} \subset G_{l - \frac{\multapproxlimit - 4}{2}, w^*}$, the path $(w^*, z^*, y^*, y) \circ P_{y, v} \subset G_{l - \frac{\multapproxlimit - 4}{2}, w^*}$ and $\dijkstra$ computes an estimate at most,
\begin{equation*}
    \hat{\delta}(w^*, v) \leq \delta(w, v) + 5
\end{equation*}

The only restriction on $l_0$ therefore is that
\begin{equation*}
    l_0 + \left(l - \frac{\multapproxlimit - 4}{2} \right) + \left(l - \frac{\multapproxlimit - 2}{2} \right) \leq 2 l + 1
\end{equation*}
or
\begin{equation*}
    l_0 \leq \multapproxlimit - 2
\end{equation*}
which justifies our choice of $l_0 = \multapproxlimit - 2$.
We then choose $l$ to optimize the running time trade-off between the executions of $\dijkstra$ which requires time $\tO{n^{2 + 1/l}}$ and $\bkAPASP$ which requires time $\bigtO{n^{1.5} s_{l_0}} = \bigtO{n^{1.5} s_{\multapproxlimit - 2}} = \bigtO{n^{2.5 - \frac{\multapproxlimit - 2}{l}}}$.

\paragraph{Algorithm}

Our algorithm begins by initializing the distance estimates with the adjacency matrix and setting parameters $l_0, l$.
Next, \Cref{line:mult-approx-bk:decompose} decomposes the graph with $l - 1$ degree thresholds.

In Phase 1, at each level $j$, we compute $\dijkstra$ from $w \in D_j$ on the graph $G_{j, w}$ in \Cref{line:mult-approx-bk:dijkstra}.
The graph $G_{j, w}$ consists of edge set $E_j \cup E^* \cup (w \times V)$ and $D_{j_1} \times D_{j_2}$ for all $j + j_1 + j_2 \leq 2l - 1$.

Finally, we take the minimum between the estimate computed above and the estimate given by $\bkAPASP$ on the graph $G_{k_0 + 1} = (V, E_{k_0 + 1})$.

\IncMargin{1em}
\begin{algorithm}[ht]
\SetKwInOut{Input}{Input}
\SetKwInOut{Output}{Output}
\Input{Unweighted, undirected graph $G = (V, E)$ with $n$ vertices; even integer $k$}
\Output{Distance estimate $\hat{\delta}: U \times V \rightarrow \Z$ such that $\delta(u, v) \leq \hat{\delta}(u, v)$ for all $u, v \in V$ and $\hat{\delta}(u, v) \leq 2 \delta(u, v)$ whenever $\delta(u, v) \geq k$}

\BlankLine

\textcolor{blue}{Phase 0: Set up and Decompose Graph}

$\hat{\delta}(u, v) \gets \begin{cases}
    1 & (u, v) \in E \\
    \infty & \otherwise
\end{cases}$

$l \gets 2(k - 1)$

$l_0 \gets k - 2$

$s_{i} \gets n^{1 - \frac{i}{l}}$ for all $1 \leq i \leq l - 1$

$(D_{1}, D_{2}, \dotsc, D_{l}), (E_{1}, E_{2}, \dotsc, E_{l}), E^* \gets \decompose(G, (s_{1}, s_{2}, \dotsc, s_{l-1}))$
\label{line:mult-approx-bk:decompose}

\textcolor{blue}{Phase 1: Compute Distance Estimates on High Degree Paths}

\For{$1 \leq j \leq l$}{
    \For{$w \in D_{j}$}{
        $G_{j, w} \gets \left(V, E_j \cup \left( \bigcup_{j + j_1 + j_2 \leq 2l + 1} D_{j_1} \times D_{j_2} \right) \cup E^* \cup (w \times V) \right)$
        
        $\hat{\delta} \gets \dijkstra(G_{j, w}, w, \hat{\delta})$
        \label{line:mult-approx-bk:dijkstra}
    }
}

$\hat{\delta} \gets \min \left( \hat{\delta}, \bkAPASP(G_{l_0 + 1}) \right)$ where $G_{l_0 + 1} \gets (V, E_{l_0 + 1})$
\label{line:mult-approx-bk:bk-apasp}

\caption{$\longMultiplicativeAPASP(G, k)$} 
\label{alg:mult-approx-bk}
\end{algorithm}
\DecMargin{1em}

\paragraph{Warm-Up: $k = 4$}

We first show that we can obtain a $\tO{n^{13/6}}$ algorithm for computing $(2, 0)$ approximation for paths of length at least 4.

\begin{proof}
    Since $k = 4$, we have $l = 6$ and $l_0 = 2$.
    We decompose the graph with degree thresholds $s_1 = n^{5/6}, s_2 = n^{4/6}, \dotsc, s_5 = n^{1/6}$.
    
    We execute $\bkAPASP$ on $G_{3} = (V, E_3)$.
    Therefore, if $P \subset E_3$, we obtain a $(2, 0)$ approximation via $\bkAPASP$.
    
    Otherwise, $P \not\subset E_3$.
    In this case, we claim to compute a $+4$ approximation, which implies a $(2, 0)$ approximation for $\delta(u, v) \geq 4$.
    The vertex $w \in P$ of maximum degree has $\deg(w) \geq s_2$ and is dominated by $w^* = r(w, D_{\level(w)})$.
    
    Without loss of generality, suppose $\deg(v) \geq \deg(u)$.
    Let $z = b(v, P)$ be the blocking vertex from $v$.
    Since $P \not\subset E_3$, $z \neq u$ and $\deg(z) \geq s_5$.
    
    If $w = z$ then since $P \subset E_{\level(w)}$, $B(P) = \set{u, v, w}$ (\Cref{def:blocking-vertices}) and we obtain a $+4$ approximation.
    In particular,
    \begin{equation*}
        \hat{\delta}_{\level(w)}(w^*, v) \leq \delta(w, v) + 1
    \end{equation*}
    where $w^* = r(w, D_{\level(w)})$.
    If $\level(v) = 6$, then,
    \begin{equation*}
        \hat{\delta}(v, u) \leq \hat{\delta}_{\level(w)}(v, w^*) + 1 + \delta(w, u) \leq \delta(u, v) + 2
    \end{equation*}
    Otherwise, if $v^* = r(v, D_{\level(v)})$,
    \begin{equation*}
        \hat{\delta}_{\level(v)}(v^*, u) \leq \hat{\delta}_{\level(w)}(v^*, w^*) + 1 + \delta(w, u) \leq \delta(v, u) + 3
    \end{equation*}
    Of course, this implies a $+4$ approximation when executing $\dijkstra$ on $G_{6, u}$ by taking the path $(u, v^*, v)$.

    Therefore, $w \neq z$.
    From $w^*$, since $P \subset E_{\level(w)}$,
    \begin{align*}
        \hat{\delta}(w^*, z^*) &\leq \delta(w, z) + 2 \\
        \hat{\delta}(w^*, v) &\leq \delta(w, v) + 1
    \end{align*}
    Since $D_{\level(w)} \subset D_2$ and $z^* \in D_5$, and $2 + 5 + 6 \leq 13$, the path $(v, w^*, z^*, z) \circ P_{z, u}$ exists in $G_{6, v}$ and $\dijkstra$ computes a $+4$ approximation.
    \begin{equation*}
        \hat{\delta}(v, u) \leq \hat{\delta}(v, w^*) + \hat{\delta}(w^*, z^*) + 1 + \delta(z, u) = \delta(v, u) + 4
    \end{equation*}

    To analyze the running time, it suffices to observe that for each $i$, $|D_i| = \tO{n^{i/6}}$ while $G_{i, w}$ has $O(n^{2 - (i-1)/6})$ edges so that the computation of $\dijkstra$ requires $\tO{n^{13/6}}$.
    Finally, computing $\bkAPASP$ on $G_3 = (V, E_3)$ requires $\tO{m \sqrt{n}} = \tO{n^{13/6}}$.
\end{proof}

\paragraph{Correctness}

Finally, we prove \Cref{thm:mult-approx-bk} for general $k \geq 6$.

\begin{proof}

First, we note that $\delta(u, v) \leq \hat{\delta}(u, v)$ since $\bkAPASP$ is correct and every edge weight in $G_{j, w}$ is a distance estimate computed by some path in the original graph $G$.

Suppose $P \subset E_{l_0 + 1}$.
By the correctness of $\bkAPASP$, $\hat{\delta}(u, v) \leq 2 \delta(u, v)$.

Then, suppose $P \not\subset E_{l_0 + 1}$.
Suppose $\level(v) \leq \level(u)$.
Recall from Lemma \ref{lemma:block-approx-error} the blocking vertices $B(P) = \set{x_0, x_1, \dotsc, x_t}$ and levels $L_B(P)$ of path $P$.

Let $a = \min_{\level(x_i) \leq l - \frac{l_0}{2} + 1} i$ be the minimum index of an element in the blocking set such that $\level(x_i) \leq l - \frac{l_0}{2} + 1$. Since $P$ is not contained in $E_{l_0 + 1}$, we can assume $a$ exists and $a \geq 1$.
Otherwise, every edge is in $E_{l - \frac{l_0}{2} + 2} = E_{(3\multapproxlimit+2)/2} \subset E_{\multapproxlimit - 1} = E_{l_0 + 1}$.
Since $l \geq \level(x_1) > \level(x_2) > \dotsc > \level(x_t) \geq 1$ (\Cref{lemma:prop-blocking}), we can upper bound,
\begin{equation*}
    a \leq l - \left(l - \frac{l_0}{2} + 1\right) = \frac{l_0}{2} - 1 = \frac{\multapproxlimit}{2} - 2
\end{equation*}

Let $x_a \in B(P)$ be the corresponding vertex in $B(P)$.
Since $P$ has an edge not in $E_{l_0 + 1}$, the last blocking vertex of minimum level must have $\level(x_t) \leq l_0$.
Recall that we denote $v^* = r\left(v, D_{\level(v)}\right)$ for any vertex $v$.
Since $P \subset E_{\level(x_t)}$, we again have $\hat{\delta}_{\level(x_t)}(x_t^*, x_a^*) \leq \delta(x_t, x_a) + 2$ and $\hat{\delta}_{\level(x_t)}(x_t^*, x_{a+1}^*) \leq \delta(x_t, x_{a + 1}) + 2$.

Consider the $\level(x_a)$-th iteration.
The edges $D_{\level(x_t)} \times D_{\level(x_{a + 1})}$ are in $G_{\level(x_a), x_a}$ as,
\begin{equation*}
    \level(z) + \level(x_{a + 1}) + \level(x_{a}) \leq l_0 + \left(l - \frac{l_0}{2}\right) + \left(l - \frac{l_0}{2} + 1\right) = 2 l + 1
\end{equation*}
Next observe that $x_t \in P_{x_a, x_{a + 1}}$ (\Cref{lemma:prop-blocking}), as each blocking vertex $x_{i + 1} = b(x_i, P_{x_i, x_{i - 1}})$ is by definition in the sub-path $P_{x_i, x_{i - 1}}$.
Therefore we have the path $(x_a^*, x_t^*, x_{a + 1}^*, x_{a + 1}) \circ P_{x_{a + 1}, x_{a - 1}}$ so $\dijkstra$ computes an estimate at most,
\begin{align*}
    \hat{\delta}_{\level(x_a)}(x_a^*, x_{a - 1}) &\leq \hat{\delta}_{\level(x_t)}(x_a^*, x_t^*) + \hat{\delta}_{\level(x_t)}(x_t^*, x_{a + 1}^*) + 1 + \delta(x_{a + 1}, x_{a - 1}) \\ 
    &\leq \delta(x_a, x_t) + \delta(x_t, x_{a + 1}) + \delta(x_{a + 1}, x_{a - 1}) + 5 \\ 
    &\leq \delta(x_a, x_{a - 1}) + 5
\end{align*}

Then, following a similar argument to the inductive step of Lemma \ref{lemma:block-approx-error}, we claim the following for all $1 \leq j \leq a$.
\begin{equation*}
    \hat{\delta}_{\level(x_j)}(x_j^*, x_{j - 1}) \leq \delta(x_j, x_{j - 1}) + 2(2 + (a - j)) + 1
\end{equation*}
where we have established the base case $j = a$ above.
We now proceed by induction for $j < a$.
Consider an execution of $\dijkstra$ from $x_j^*$ in $G_{\level(x_j), x_j^*}$.
Let $x_{j + 1}$ be the blocking vertex from the previous iteration.
We take the edges $(x_j^*, x_{j + 1}^*)$, $(x_{j + 1}^*, x_{j + 1}) \in E^*$, and the remaining edges in $E_{\level(x_j)}$.
By induction, we have,
\begin{align*}
    \hat{\delta}_{\level(x_j)} (x_j^*, x_{j - 1}) &\leq \hat{\delta}_{\level(x_{j + 1})} (x_j^*, x_{j + 1}^*) + 1 + \delta(x_{j + 1}, x_{j - 1}) \\
    &\leq \delta(x_j, x_{j + 1}) +  2 (2 + (a - (j + 1))) + 3 + \delta(x_{j + 1}, x_{j - 1}) \\
    &\leq \delta(x_{j}, x_{j - 1}) + 2 (2 + (a - j)) + 1
\end{align*}

Thus, we have,
\begin{equation*}
    \hat{\delta}_{\level(v)}(v^*, u) = \hat{\delta}_{\level(x_1)} (x_1^*, x_0) \leq \delta(v, u) + 2 (a + 1)  + 1
\end{equation*}

From $u$, we take the edge $(u, v^*)$, followed by $(v^*, v) \in E^*$ so that,
\begin{align*}
    \hat{\delta}(u, v) &\leq \hat{\delta}_{\level(v)}(u, v^*) + 1 \\
    &\leq \delta(u, v) + 2 (a + 1) + 2 \\
    &\leq \delta(u, v) + \multapproxlimit
\end{align*}
\end{proof}

\paragraph{Time Complexity}

\begin{proof}
    We set $l_0 = k - 2$ and leave $l$ to be optimized.

    $\decompose$ requires $\tO{l n^2}$ time.
    Over all $D_j$, the invocations of $\dijkstra$ requires $\tO{n^{2 + 1/l}}$ time, as for each i, $|D_i| = \tO{n/s_i}$ and $|E_i| = O(n s_{i - 1})$ while $|D_{j_1} \times D_{j_2}| = \bigtO{\frac{n}{s_{j_1}} \times \frac{n}{s_{j_2}}} = \bigtO{n^{\frac{j_1 + j_2}{l}}} = \bigtO{n^{2 - \frac{j - 1}{l}}}$ as $j_1 + j_2 \leq 2l + 1 - j$.
    
    $\bkAPASP$ on $G_{l_0 + 1}$ requires $s_{l_0} n^{1.5} = n^{1.5 + 1 - \frac{k - 2}{l}}$ time.
    Balancing,
    \begin{equation*}
        2 + \frac{1}{l} = 2.5 - \frac{k - 2}{l}
    \end{equation*}
    we solve to obtain,
    \begin{align*}
        2 l + 1 &= \frac{5}{2} l - k + 2 \\
        \frac{l}{2} &= k - 1 \\
        l &= 2 (k - 1)
    \end{align*}

    Thus, the algorithm runs in time $\tO{n^{2 + 1/l}} = \tO{n^{2 + \frac{1}{2 k - 1}}}$
\end{proof}

\subsection{Faster Multiplicative Approximations for Long Paths with FMM}

In this section, we improve upon \Cref{thm:mult-approx-bk} using fast matrix multiplication.

\begin{theorem}
    \label{thm:mult-approx-bk-fmm}
    Let $k \geq 4$ be an even integer.
    \Cref{alg:mult-approx-bk} computes a $(2, 0)$ approximation for all paths of length $\delta(u, v) \geq k$ in expected time $\tO{n^{2 + \frac{x}{l - k + 2}}}$ where $x, l$ are chosen to minimize,
    \begin{equation*}
        \max \left( \omega\left( 1 - \frac{k - 2}{2(l - k + 2)} x, 1 - x, 1 - \frac{k - 4}{2(l - k + 2)} x \right), 2 + \frac{x}{l - k + 2}, 1.5 + x \right)
    \end{equation*}
\end{theorem}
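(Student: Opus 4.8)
The plan is to run the algorithm of \Cref{thm:mult-approx-bk} (\Cref{alg:mult-approx-bk}) with the degree thresholds reparametrized and the dense product-edge sets handled algebraically. Keep $l_0 = k - 2$, but now leave the total number of levels $l$ free, pin the $l_0$-th threshold to $s_{l_0} = n^x$, and split the degree range $[1, n^x]$ geometrically into the remaining $l - k + 2$ levels. The degree-$\geq n^x$ portion of the decomposition --- only $O(1)$ coarse levels, too few to run $\dijkstra$ on directly --- is instead handled by the dominating-set plus bounded $(\min,+)$-product machinery already used in Phase 1 of \Cref{alg:random-2-approx-apasp} ($\tO{n^2}$ for all the truncated $\bfs$ searches, plus one bounded product), and the product-edge sets $D_{j_1} \times D_{j_2}$ that would otherwise be inserted into the $\dijkstra$ graphs are realized by a bounded rectangular $(\min,+)$ product passing through the middle set $D_{l_0}$, so that fast rectangular matrix multiplication applies.

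For correctness I would re-run the proof of \Cref{thm:mult-approx-bk} essentially verbatim. Its blocking-vertex analysis (\Cref{def:blocking-vertices}, \Cref{lemma:block-approx-error-1}, and the body of that proof) is stated entirely in terms of level indices and never uses numerical threshold values, so it survives the reparametrization unchanged. The one new point is that the $(\min,+)$ product must reproduce the estimate the explicit product edges would have given: at the critical iteration the middle index ranges over $D_{l_0}$, which contains the blocking vertex $x_t^*$ since $\level(x_t) \leq l_0$, so the product returns $\hat\delta(x_a^*, x_{a+1}^*) \leq \delta(x_a, x_{a+1}) + 4$ exactly as in the combinatorial argument; feeding this as a super-edge into $\dijkstra$ then propagates the $+5$ at the transition level and the $+k$ error overall, which is a $(2,0)$ bound once $\delta(u,v) \geq k$. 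Feasibility $\delta \leq \hat\delta$ is immediate, every estimate being the length of an actual walk.

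For the running time I would isolate and balance three quantities. (a) The bounded rectangular $(\min,+)$ product: its two factors have shapes $n^{1 - \frac{k-2}{2(l-k+2)}x} \times n^{1-x}$ and $n^{1-x} \times n^{1 - \frac{k-4}{2(l-k+2)}x}$, where the outer dimensions are the sizes $\tO{n/s_j}$ of the dominating sets at the two transition levels near $l - \frac{k-2}{2}$ and the middle dimension is $|D_{l_0}| = \tO{n^{1-x}}$; being bounded, it costs $\tO{n^{\omega\left(1 - \frac{k-2}{2(l-k+2)}x,\, 1-x,\, 1 - \frac{k-4}{2(l-k+2)}x\right)}}$. (b) The $\dijkstra$ executions over the $l - k + 2$ low levels cost $\tO{n^{2 + x/(l-k+2)}}$ by the same accounting as in \Cref{thm:mult-approx-bk}, now with geometric ratio $n^{x/(l-k+2)}$ on $[1,n^x]$. (c) $\bkAPASP$ on the subgraph $G_{l_0+1}$ of edges incident to a vertex of degree $< n^x$, which has $O(n^{1+x})$ edges, costs $\tO{n^{1.5 + x}}$. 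The stated bound is the common value at the optimum over $x$ and $l$ of the maximum of these three exponents; $\decompose$ ($\tO{l n^2}$) and the constant-length cleanup are lower order.

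The step I expect to be the main obstacle is justifying that the $(\min,+)$ products may be taken \emph{bounded}, i.e.\ with entries capped by an absolute constant, since $\hat\delta(x_a^*, x_{a+1}^*)$ is a genuine distance estimate and need not be $O(1)$. I would resolve this exactly as the deterministic $(2,0)$ algorithm does: truncate the product matrices at a large constant $C$ and dispatch all pairs with $\delta(u,v) > C$ to $\denseAPASP(G,C)$ (or $\boundedAdditiveAPASP$), which returns a $+C \leq 2\delta(u,v)$ estimate in time $\tO{n^{2 + 2/(3C-2)}}$, negligible for $C$ a sufficiently large constant; for $k \leq \delta(u,v) \leq C$ every intermediate distance used is $O(1)$, so the $\boundedMinPlus$ cost model applies. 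A secondary check is that this three-way balance admits an $x \in (0,1)$ with the matrix-multiplication term not strictly dominating for every even $k \geq 4$ --- which is where the numerical search of \cite{Complexity} produces the tabulated exponents.
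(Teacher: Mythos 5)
Your proposal matches the paper's proof essentially step for step: the same choice $l_0 = k-2$ with $s_{l_0} = n^x$ and $l-k+2$ geometric sub-levels below it, the same replacement of the $D_{j_1}\times D_{j_2}$ super-edges by a truncated rectangular $(\min,+)$ product of shapes $n^{1-\frac{k-2}{2(l-k+2)}x}\times n^{1-x}$ and $n^{1-x}\times n^{1-\frac{k-4}{2(l-k+2)}x}$ fed by the $\tO{n^2}$-time dominating-set $\bfs$ phase, the same blocking-vertex correctness argument yielding the $+4$ product estimate and $+5$ at the transition level, the same three-way balance against $\dijkstra$ and $\bkAPASP$, and the same dispatch of paths longer than a constant $M$ to $\denseAPASP$ to keep the product entries bounded. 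The only cosmetic imprecision is that the middle index set is the per-threshold dominating set $C_{i^*}$ (of size $\tO{n^{1-x}}$, with $x_t^* = r(x_t, C_{i^*})$) rather than $D_{l_0}$ itself, but this is exactly the Phase-1 machinery you cite and changes neither the correctness argument nor the exponent.
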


In \Cref{tbl:2-approx-geq-beta} and \Cref{prop:mult-approx-bk-fmm-examples}, we exhibit a few running times for $6 \leq k \leq 10$.

\paragraph{High Level Overview}

We begin by showing that we can obtain a $(2, 0)$ approximation for $\delta(u, v) \geq 4$ in time $\tO{n^{2.01973523}}$.
Set parameters $x = 0.51973523, M = 35$ and $l = 29$ so that $l_0 = 2$.
Decompose the graph with degree thresholds $s_{2} = n^x,  s_3 = n^{26x/27}, \dotsc, s_{28} = n^{x/27}$.
Let $P$ be a shortest path of length at least 4.
If $P \subset E_3$, then we execute $\bkAPASP(G_3)$ in time $n^{1.5 + x} = n^{2.01973523}$ where $G_3 = (V, E_3)$ has maximum degree $s_2 = n^{x}$.
If $\delta(u, v) \geq M$, then $\denseAPASP$ computes a $+M$ approximation and we obtain a $(2, 0)$ approximation.
Otherwise, $P \not\subset E_3$ is of length at most $M$ and vertex $w \in P$ of maximum degree has $\deg(w) \geq s_2$.
Suppose $\deg(v) \geq \deg(u)$.
Let $z = b(v, P)$ be the blocking vertex so $\deg(z) \geq s_{28}$ and $z^* = r(z, D_{\level(z)}) \in D_{28}$.
Let $\frac{n}{2^i} \leq \deg(w) \leq \frac{n}{2^{i - 1}}$.
Since $P \subset F_i$, we obtain the estimates from $w^*$,
\begin{align*}
    \hat{\delta}(w^*, z^*) &\leq \delta(w, z) + 2 \leq M + 2 \\
    \hat{\delta}(w^*, v) &\leq \delta(w, v) + 1 \leq M + 1
\end{align*}
Then, the bounded \minplus product computes,
\begin{equation*}
    \hat{\delta}(v, z^*) \leq \delta(v, z) + 3
\end{equation*}
as $v \in V, z^* \in D_{28}, w^* \in C_i$.
Finally, the path $(v, z^*, z) \circ P_{z, u}$ exists in $G_{6, v}$ and $\dijkstra$ computes a $+4$ approximation.

For the remaining overview, assume $k \geq 6$.
In \Cref{alg:mult-approx-bk}, we obtained a $+5$ approximation after the $\left(l - \frac{\multapproxlimit - 4}{2}\right)$ iteration by including edge sets $D_{i} \times D_{l - \frac{\multapproxlimit - 2}{2}}$ for all $i \leq l_0$ in the edges of $G_{l - \frac{\multapproxlimit - 4}{2}, w}$ for $w \in D_{l - \frac{\multapproxlimit - 4}{2}}$.
Using fast matrix multiplication, we will pre-compute the distances obtained from these edges and instead encode them in smaller edge set $w \times V$.

Consider a shortest path $P$.
If $P \subset E_{l_0 + 1}$, then we again invoke $\bkAPASP$ to compute a $(2, 0)$ approximation.
Therefore, suppose $P \not\subset E_{l_0 + 1}$.
For a large enough constant $M$, we can invoke $\denseAPASP(G, M)$ (\Cref{lemma:dhz-apasp}) or $\additiveAPASP$ (\Cref{thm:gen-additive-apasp}) to compute a $+M$ approximation without affecting the overall running time.
Therefore, we can also assume that $\delta(u, v) \leq M$.

We decompose the graph $G$ into $(1 - x) \log n$ levels with each degree threshold defined as $t_i = \frac{n}{2^i}$.
We compute dominating sets $C_i$ of size $\tO{n/t_i}$ and edge sets $F_i$ of size at most $O(n t_{i - 1}) = O(n t_i)$.
Again, to obtain a $+k$ additive approximation, it suffices to have a $+5$ approximation after the $l - \frac{\multapproxlimit - 4}{2}$ iteration of computing $\dijkstra$.
Consider the blocking vertices $B(P) = \set{x_0, x_1, \dotsc, x_t}$ (\Cref{def:blocking-vertices}).
Let $a$ be the minimum index such that $\level(x_a) \leq l - \frac{\multapproxlimit - 4}{2}$ so that $\level(x_{a + 1}) \leq l - \frac{\multapproxlimit - 2}{2}$ and $\level(x_t) \leq l_0$ as we have $P \not\subset E_{k_0 + 1}$.
Then, if $\frac{n}{2^{i^*}} \leq \deg(x_t) \leq \frac{n}{2^{i^* - 1}}$, we have $P \subset F_{i^*}$ and in the $i^*$-th call to $\dominatingAPASP$, 
\begin{align*}
    \hat{\delta}(x_t^*, x_a^*) &\leq \delta(x_t, x_a) + 2 \\
    \hat{\delta}(x_t^*, x_{a+ 1}^*) &\leq \delta(x_t, x_{a + 1}) + 2
\end{align*}
By our assumption $\delta(u, v) \leq M$ both entries in the matrices constructed by $\dominatingAPASP$ are finite.
Furthermore, $x_t \in P_{x_a, x_{a + 1}}$ (\Cref{lemma:prop-blocking}) so the bounded \minplus product computes,
\begin{equation*}
    \hat{\delta}(x_a^*, x_{a + 1}^*) \leq \delta(x_a, x_{a + 1}) + 4
\end{equation*}

Then, in the graph $G_{\level(x_a), x_a^*}$ there exists the path $(x_a^*, x_{a + 1}^*, x_{a + 1}) \circ P_{x_{a + 1}, x_{a - 1}}$ and we obtain an estimate at most,
\begin{equation*}
    \hat{\delta}(x_a^*, x_{a - 1}^*) \leq \delta(x_{a}, x_{a - 1}) + 5
\end{equation*}

\paragraph{Algorithm}

\Cref{alg:mult-approx-bk} and \Cref{alg:mult-approx-bk-fmm} are similar overall, so we only highlight the differences here.
In Phase 1, we use a bounded \minplus product to replace the edge sets $D_{j_1} \times D_{j_2}$.

In \Cref{line:mult-approx-bk-fmm:fmm-decompose}, we decompose the graph $G$ into $(1 - x) \log n$ degree thresholds and at each threshold invoke $\dominatingAPASP$ to compute approximate distances between $D_{l - \frac{l_0}{2}}$ and $D_{l - \frac{l_0}{2} + 1}$ in \Cref{line:mult-approx-bk-fmm:dominating-apasp}.
In Phase 2, for each level $j \geq l_0 + 1$, we execute $\dijkstra$ from each vertex $w \in D_j$ on the graph $G_{j, w} = (V, E_j \cup E^* \cup (w \times V))$ in \Cref{line:mult-approx-bk-fmm:dijkstra}.

Finally, we take the minimum of our estimate with the estimate produced by $\bkAPASP$ and $\denseAPASP$.

\paragraph{Analysis}

We briefly argue that \Cref{alg:mult-approx-bk-fmm} equipped with fast matrix multiplication should outperform the combinatorial \Cref{alg:mult-approx-bk}.
Indeed, suppose we use naive matrix multiplication, so that the largest matrix multiplication contains matrices of size $D_{l - \frac{l_0}{2}} \times D_{l_0}$ and $D_{l_0} \times D_{l - \frac{l_0}{2} + 1}$ which requires time,
\begin{align*}
    \bigtO{\frac{n^3}{s_{l_0} s_{l - \frac{l_0}{2}} s_{l - \frac{l_0}{2} + 1}}} &= \bigtO{n^{3 - \left( 1 + \frac{l_0}{2(l - l_0))} + \frac{l_0 - 2}{2(l - l_0}\right)x}} \\
    &= \bigtO{n^{3 - \left( \frac{2l - 2}{2(l - l_0)} \right) x}}
\end{align*}

Balancing the running times,
\begin{equation*}
    3 - \frac{l - 1}{l - l_0} x = 2 + \frac{x}{l - l_0}
\end{equation*}

we obtain,
\begin{align*}
    x &= \frac{l - l_0}{l}
\end{align*}

so that the overall running time is $2 + \frac{1}{l}$, which matches the combinatoiral algorithm \Cref{alg:mult-approx-bk}.

\IncMargin{1em}
\begin{algorithm}[ht]
\SetKwInOut{Input}{Input}
\SetKwInOut{Output}{Output}
\Input{Unweighted, undirected graph $G = (V, E)$ with $n$ vertices; approximation parameter $k$}
\Output{Distance estimate $\hat{\delta}: U \times V \rightarrow \Z$ such that $\delta(u, v) \leq \hat{\delta}(u, v)$ for all $u, v \in V$ and $\hat{\delta}(u, v) \leq 2 \delta(u, v)$ whenever $\delta(u, v) \geq k$}

\BlankLine

\textcolor{blue}{Phase 0: Set up and Decompose Graph}

$\hat{\delta}(u, v) \gets \begin{cases}
    1 & (u, v) \in E \\
    \infty & \otherwise
\end{cases}$

Parameters $x, l$ chosen to minimize 
\begin{equation*}
    t_{\max} = \max \left( \omega\left( 1 - \frac{k - 2}{2(l - k + 2)} x, 1 - x, 1 - \frac{k - 4}{2(l - k + 2)} x \right), 2 + \frac{x}{l - k + 2}, 1.5 + x \right)
\end{equation*}

$M$ chosen so that $2 + \frac{2}{3 M - 2} \leq t_{\max}$

$l_0 \gets \multapproxlimit - 2$

$s_{l - i} \gets n^{\frac{i}{l - l_0} x}$ for all $1 \leq i \leq l - l_0$

$(D_{l_0}, D_{l_0 + 1}, \dotsc, D_{l}), (E_{l_0}, E_{l_0 + 1}, \dotsc, E_{l}), E^* \gets \decompose(G, (s_{l_0 + 1}, s_{l_0 + 2}, \dotsc, s_{l - 1}))$

\textcolor{blue}{Phase 1: Compute Intermediate Distance Estimates on High Degree Paths}

$t_{i} \gets \frac{n}{2^i}$ for all $1 \leq i \leq b - 1 = \ceil{(1 - x) \log n}$

$(C_{1}, C_{2}, \dotsc, C_{b}), (F_{1}, F_{2}, \dotsc, F_{b}), F^* \gets \decompose(G, (t_{1}, t_{2}, \dotsc, t_{b - 1}))$
\label{line:mult-approx-bk-fmm:fmm-decompose}

\For{$1 \leq i \leq b$}{
    $\hat{\delta} \gets \min\left(\hat{\delta}, \dominatingAPASP\left(G_i, D_{l - \frac{l_0}{2}}, D_{l - \frac{l_0}{2} + 1}, C_i, M + 2\right)\right)$ where $G_i = (V, F_i)$.
    \label{line:mult-approx-bk-fmm:dominating-apasp}
}

\textcolor{blue}{Phase 2: Compute Distance Estimates}

\For{$l_0 + 1 \leq j \leq l$}{
    \For{$w \in D_{j}$}{
        $G_{j, w} \gets \left(V, E_j \cup E^* \cup (w \times V) \right)$
        
        $\hat{\delta} \gets \dijkstra(G_{j, w}, w, \hat{\delta})$ 
        \label{line:mult-approx-bk-fmm:dijkstra}
    }
}

$\hat{\delta} \gets \min \left( \hat{\delta}, \bkAPASP(G_{l_0 + 1}) \right)$ where $G_{l_0 + 1} \gets (V, E_{l_0 + 1})$

$\hat{\delta} \gets \min \left( \hat{\delta}, \denseAPASP(G, M) \right)$

\caption{$\longMultiplicativeAPASPFMM(G, k)$} 
\label{alg:mult-approx-bk-fmm}
\end{algorithm}
\DecMargin{1em}

\paragraph{Warm-Up: $k = 4$}

We begin by showing that we can obtain a $(2, 0)$ approximation for $\delta(u, v) \geq 4$ in time $\tO{n^{2.01973523}}$.

\begin{proof}

Using \cite{Complexity}, we set parameters $x = 0.51973523$ and $l = 29$ so that $l_0 = 2$.
Choosing $M = 35$ suffices to ensure $\denseAPASP$ requires time $\bigtO{n^{2.0195}}$ (\Cref{lemma:dhz-apasp}).

Decompose the graph with degree thresholds $s_{2} = n^x,  s_3 = n^{26x/27}, \dotsc, s_{28} = n^{x/27}$.
Let $P$ be a shortest path of length at least 4.
If $P \subset E_3$, then we execute $\bkAPASP(G_3)$ in time $n^{1.5 + x} = n^{2.01973523}$ where $G_3 = (V, E_3)$ has maximum degree $s_2 = n^{x}$.
If $\delta(u, v) \geq M$, then $\denseAPASP$ computes a $+M$ approximation and we obtain a $(2, 0)$ approximation.

Then, assume $P \not\subset E_3$ is of length at most $M$ and vertex $w \in P$ of maximum degree has $\deg(w) \geq s_2$.
Suppose $\deg(v) \geq \deg(u)$.
Let $z = b(v, P)$ be the blocking vertex so $\deg(z) \geq s_{28}$ and $z^* = r(z, D_{\level(z)}) \in D_{28}$.
If $z = w$, we obtain a $+4$ approximation as in \Cref{alg:mult-approx-bk}.

Assume $z \neq w$.
Let $\frac{n}{2^i} \leq \deg(w) \leq \frac{n}{2^{i - 1}}$.
Since $P \subset F_i$, we obtain the estimates,
\begin{align*}
    \hat{\delta}(w^*, z^*) &\leq \delta(w, z) + 2 \leq M + 2 \\
    \hat{\delta}(w^*, v) &\leq \delta(w, v) + 1 \leq M + 1
\end{align*}
Then, the bounded \minplus product computes,
\begin{equation*}
    \hat{\delta}(v, z^*) \leq \delta(v, z) + 3
\end{equation*}
as $v \in V, z^* \in D_{28}, w^* \in C_i$.
Finally, the path $(v, z^*, z) \circ P_{z, u}$ exists in $G_{6, v}$ and $\dijkstra$ computes a $+4$ approximation.

The computation of every invocation of $\dijkstra$ requires time $\tO{n^{2 + x/27}}$, since $l - l_0 = 27$.
The computation of $\bkAPASP$ requires time $\tO{n^{1.5 + x}}$.
The overall complexity of $\dominatingAPASP$ is dominated by the computation of bounded \minplus products, which requires time,
\begin{equation*}
    \bigtO{n^{\omega\left(1 - \frac{x}{27}, 1 - x, 1\right)}}
\end{equation*}
where we have again plugged in $l = 29, l_0 = 2$.
Plugging in $x = 0.51973523$ yields the desired complexity.
\end{proof}

\paragraph{Correctness}

We prove \Cref{thm:mult-approx-bk} for general $k \geq 6$.

\begin{proof}
Let $u, v$ be vertices and $P$ a shortest path of length $\delta(u, v)$.

First, note $\delta(u, v) \leq \hat{\delta}(u, v)$ for all $u, v$ as $\dominatingAPASP$ returns $\hat{\delta}(u, v) \geq \delta(G_i, u, v) \geq \delta(u, v)$ by \Cref{lemma:dominating-apasp-feasible}.
In Phase 2, every edge weight in $G_{j, w}$ is the length of some path previously found in the original graph $G$.
Finally, $\bkAPASP$ and $\denseAPASP$ return estimates $\hat{\delta}(u, v) \geq \delta(u, v)$.

Now, we only need to show $\hat{\delta}(u, v) \leq 2 \delta(u, v)$
If $P \subset E_{l_0 + 1}$, this immediately follows from the correctness of $\bkAPASP$.
If $\delta(u, v) \geq M$, then we have correctness by $\denseAPASP$, since,
\begin{equation*}
    \hat{\delta}(u, v) \leq \delta(u, v) + M \leq 2 \delta(u, v)
\end{equation*}

Therefore, assume $P \not\subset E_{l_0 + 1}$ and $\delta(u, v) \leq M$.
Without loss of generality, suppose $\level(v) \leq \level(u)$.
Recall from Lemma \ref{lemma:block-approx-error} the blocking vertices $B(P) = \set{x_0, x_1, \dotsc, x_t}$ and levels $L_B(P)$ of path $P$.

Let $a = \min_{\level(x_i) \leq l - \frac{l_0}{2} + 1} i$ be the minimum index of an element in the blocking set such that $\level(x_i) \leq l - \frac{l_0}{2} + 1$.
Since $l \geq \level(x_1) > \level(x_2) > \dotsc > \level(x_t) \geq 1$, we can upper bound,
\begin{equation*}
    a \leq l - \left(l - \frac{l_0}{2} + 1\right) = \frac{l_0}{2} - 1 = \frac{\multapproxlimit}{2} - 2
\end{equation*}
Since $P$ is not contained in $E_{l_0 + 1}$, we can assume $a$ exists and $a \geq 1$.

Let $x_a \in B(P)$ be the corresponding vertex in $B(P)$.
Since $P$ has an edge not in $E_{l_0 + 1}$, the last blocking vertex of minimum level must have $\level(x_t) \leq l_0$.
Recall that we denote $v^* = r\left(v, D_{\level(v)}\right)$ for any vertex $v$.

Let $\frac{n}{2^{i^*}} \leq \deg(x_t) < \frac{n}{2^{i^* - 1}}$ and $x_t^* = r(x_t, C_{i^*})$.
Since $P \subset F_{i^*}$, the $i^*$-th invocation of $\dominatingAPASP$ computes 
\begin{align*}
    \hat{\delta}_{i^*}(x_t^*, x_a^*) &\leq \delta(x_t, x_a) + 2 \leq M + 2 \\
    \hat{\delta}_{i^*}(x_t^*, x_{a + 1}^*) &\leq \delta(x_t, x_{a + 1}) + 2 \leq M + 2
\end{align*}
These entries are finite in the matrices $A_{i^*}, B_{i^*}$ constructed by $\dominatingAPASP$.

Note that $x_t \in P_{x_a, x_{a + 1}}$ (\Cref{lemma:prop-blocking}).
Furthermore, $\level(x_a) \leq l - \frac{l_0}{2} + 1$ implies $x_{a}^* \in D_{l - \frac{l_0}{2} + 1}$ and $\level(x_{a + 1}) < \level(x_a)$ implies $x_{a + 1}^* \in D_{l - \frac{l_0}{2}}$.
Then, the bounded \minplus product computes,
\begin{align*}
    \hat{\delta}_{i^*}(x_a^*, x_{a + 1}^*) &\leq \hat{\delta}_{i^*}(x_a^*, x_t^*) + \hat{\delta}_{i^*}(x_t^*, x_{a + 1}^*) \\ 
    &\leq \delta(x_a, x_t) + \delta(x_t, x_{a + 1}) + 4 \\ 
    &\leq \delta(x_a, x_{a + 1}) + 4
\end{align*}

Then, in the $\level(x_a)$-th iteration, we can take the edge sequence $(x_a^*, x_{a + 1}^*, x_{a + 1})$ and the remaining edges in $E_{\level(x_a)}$ to compute a distance estimate at most,
\begin{equation*}
    \hat{\delta}(x_a^*, x_{a - 1}^*) \leq \hat{\delta}_{i^*}(x_a^*, x_{a + 1}^*) + 1 + \delta(x_{a + 1}, x_{a - 1}) \leq \delta(x_a, x_{a - 1}) + 5
\end{equation*}

The remaining proof follows exactly as in \Cref{thm:mult-approx-bk}.

\end{proof}

\paragraph{Time Complexity}

\begin{proof}
    We set $l_0 \gets \multapproxlimit - 2$ and leave $x, l, M$ to be optimized.
    Phase 0 requires $\tO{l n^2}$ time overall as we only invoke $\decompose$.
    Invoking $\dijkstra$ requires $\tO{n^{2 + \frac{x}{l - l_0}}}$ time as $|D_j| = \tO{n/s_j}$ and $|E_j| \leq O(n s_{j - 1})$.
    $\bkAPASP$ requires $\tO{m \sqrt{n}} = \tO{n^{1.5} s_{l_0}} = \tO{n^{1.5 + x}}$ time as we call $\bkAPASP$ on graph $G_{l_0 + 1}$, a graph with maximum degree $s_{l_0} = n^x$.
    
    Phase 1 calls $\dominatingAPASP$ $b = \tO{1}$ times.
    In each call, we call $\bfs$ on the graph $(V, F_i)$ from dominating set $C_i$ in time $\bigtO{|C_i||F_i|} = \bigtO{\frac{n}{t_i} {2 n t_i}} = \bigtO{n^2}$.
    Then, we compute a $\boundedMinPlus$ on matrices of size at most $\left|D_{l - \frac{l_0}{2}}\right| \times |C_i|$ and $|C_i| \times \left|D_{l - \frac{l_0}{2} + 1}\right|$.
    Since $|C_i| = O(|C_b|) = \bigtO{\frac{n}{t_b}} = \bigtO{2^b} = \bigtO{n^{1 - x}}$, all computations of $\boundedMinPlus$ require time at most,
    \begin{align*}
        \bigtO{n^{\omega\left( 1 - \frac{\frac{l_0}{2}}{l - l_0} x, 1 - x, 1 - \frac{\frac{l_0}{2} - 1}{l - l_0} x \right)}} &= \bigtO{n^{\omega\left( 1 - \frac{l_0}{2(l - l_0)} x, 1 - x, 1 - \frac{l_0 - 2}{2(l - l_0)} x \right)}} \\
        &= \bigtO{n^{\omega\left( 1 - \frac{k - 2}{2(l - k + 2)} x, 1 - x, 1 - \frac{k - 4}{2(l - k + 2)} x \right)}}
    \end{align*}

    To optimize, we use \cite{Complexity} and choose $x, l$ minimizing,
    \begin{align*}
        t_{\max} = \max \left( \omega\left( 1 - \frac{k - 2}{2(l - k + 2)} x, 1 - x, 1 - \frac{k - 4}{2(l - k + 2)} \right), 2 + \frac{x}{l - k + 2}, 1.5 + x \right)
    \end{align*}

    Finally, note that $\denseAPASP$ requires $\tO{n^{2 + \frac{2}{3M - 2}}}$ time.
    Given our choice of $x, l$, we simply choose $M$ to be a large enough constant such that the running time of $\denseAPASP$ does not exceed $2 + \frac{2}{3M - 2} \leq T_{\max}$.
    We give a few example running times in \Cref{prop:mult-approx-bk-fmm-examples} and \Cref{tbl:2-approx-geq-beta}.
\end{proof}

\section{Additive Approximation via Monotone \texorpdfstring{\minplus}{(min, +)} Product}
\label{sec:monotone-min-plus}

In the following section, we will use the fast monotone \minplus product of Chi, Duan, Xie, and Zhang \cite{chi2022monotone} to the additive approximation framework of Dor et al. \cite{dor2000apasp}.
In particular, this will generalize the result of Deng et al. \cite{deng2022apasp} to all distances.
Whereas Deng et al. \cite{deng2022apasp} use fast rectangular matrix multiplication for bounded difference matrices, we will require a more general fast rectangular matrix multiplication.
To do so, let us begin by designing a fast algorithm for multiplying rectangular matrices, of which only one is monotone.

\subsection{Rectangular Monotone \texorpdfstring{\minplus}{(min, +)} Product}

Following the algorithm of \cite{chi2022monotone}, we give a simple extension of their result for monotone matrices to rectangular monotone matrices.
This result slightly extends the work of Durr \cite{durr2023rect_monotone} to handle the most general case of rectangular matrices, which is required in our application.

\begin{theorem}
    \label{thm:rect-monotone-min-plus}
    Let $A$ be a $n^{a} \times n^{b}$ integer matrix with non-negative entries bounded by $O(n^{\mu})$.
    Let $B$ be a $n^{b} \times n^{c}$ row-monotone integer matrix with non-negative entries bounded by $O(n^{\mu})$.
    Then, there is an algorithm $\monotoneMinPlus$ computing $C = A * B$ in time,
    \begin{equation*}
        \bigtO{n^{\frac{(a + b + \mu) + \omega(a, b, c)}{2}}}
    \end{equation*}
\end{theorem}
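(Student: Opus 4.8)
The plan is to run the monotone \minplus\ product algorithm of Chi, Duan, Xie, and Zhang \cite{chi2022monotone} essentially verbatim, in the rectangular refinement of Durr \cite{durr2023rect_monotone}, and to verify that every step survives the replacement of the square exponent $\omega(1,1,1)$ by the rectangular exponent $\omega(a,b,c)$, with the three shape exponents $a,b,c$ and the entry-size exponent $\mu$ carried through. So the substance of the proof is a bookkeeping argument rather than a new algorithmic idea. I would begin with the standard preprocessing: shift and scale so that all entries of $A$ and $B$ are non-negative integers bounded by $N = O(n^{\mu})$, noting that an affine shift applied to the rows/columns preserves row-monotonicity of $B$ and alters the product only by a quantity that can be added back. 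Throughout, the monotone axis is the $n^{c}$ axis: each row of $B$ (indexed by the contracted index $k \in [n^{b}]$) is non-decreasing in the column index $j \in [n^{c}]$.

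Next I would recall the skeleton of the Chi--Duan--Xie--Zhang algorithm. It uses a single granularity/block parameter $\rho$ that splits the computation of $C(i,j) = \min_{k} A(i,k) + B(k,j)$ into (i) a \emph{coarse} part, in which $B$ is approximated — exploiting that its rows are monotone — by a matrix with far fewer ``effective'' values, and the resulting products are carried out by fast rectangular matrix multiplication at total cost $\tilde{O}\!\left(\rho \cdot n^{\omega(a,b,c)}\right)$; and (ii) a \emph{correction} part, in which the residual error is repaired directly, with cost governed by the total variation of the rows of $B$. For the correction bound one uses the monotonicity hypothesis crucially: since row $k$ of $B$ is non-decreasing with entries in $[0, O(n^{\mu})]$, its total variation $\sum_{j} |B(k,j+1)-B(k,j)| = B(k, n^{c}) - B(k,1)$ is $O(n^{\mu})$; summing over the $n^{b}$ rows and the $n^{a}$ left indices $i$, the correction work is $\tilde{O}\!\left(n^{a+b+\mu}/\rho\right)$.

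Balancing the two terms, $\rho \cdot n^{\omega(a,b,c)} = n^{a+b+\mu}/\rho$, gives $\rho = n^{(a+b+\mu-\omega(a,b,c))/2}$ and total running time $\tilde{O}\!\left(n^{((a+b+\mu)+\omega(a,b,c))/2}\right)$, as claimed; when this $\rho$ is less than one — which can only occur in regimes where $n^{\omega(a,b,c)}$ already dominates $n^{a+b+\mu}$ — one simply sets $\rho = 1$ and the stated bound still holds. This is exactly the argument behind the square-case bound $\tilde{O}(n^{(3+\omega)/2})$ (set $a=b=c=\mu=1$), and behind Durr's partially rectangular version, so the task is to confirm it degrades gracefully to arbitrary $a,b,c$.

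The main obstacle is making the coarse part genuinely cost $\tilde{O}\!\left(\rho \cdot n^{\omega(a,b,c)}\right)$ rather than $\tilde{O}\!\left(\rho \cdot n^{\mu} \cdot n^{\omega(a,b,c)}\right)$: a naive ``one Boolean product per value level'' reduction loses an $n^{\mu}$ factor, and Chi--Duan--Xie--Zhang avoid this by using row-monotonicity of $B$ to amortize the Boolean multiplications across levels via a sliding-window / prefix argument. Checking that this amortization is undisturbed by the rectangular shapes — in particular that the ``left'' dimension $n^{a}$ and ``right'' dimension $n^{c}$ play precisely the roles Durr's analysis assigns them, and that the monotone axis ($n^{c}$) is never confused with the contracted axis ($n^{b}$) — is where the care is required. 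Everything else is a direct substitution of $\omega(a,b,c)$ for $\omega(1,1,1)$ and re-optimization of the single parameter $\rho$.
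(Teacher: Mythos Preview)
Your high-level plan---run Chi--Duan--Xie--Zhang verbatim with rectangular shapes, track the exponents $a,b,c,\mu$, and rebalance a single parameter to get $\tilde O\bigl(n^{((a+b+\mu)+\omega(a,b,c))/2}\bigr)$---is exactly what the paper does, and your balancing equation is correct. So at the level of ``which paper to follow and what comes out,'' you match.

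But your summary of what the CDXZ algorithm actually \emph{is} does not match the algorithm, and if you tried to fill in details from your description you would not recover their result. CDXZ is not ``coarsen $B$ to fewer effective values, then repair via total variation,'' nor does it amortize Boolean products with a sliding-window/prefix trick. The algorithm picks a random prime $p\approx n^{\alpha}$, computes the quotient product $C^{\ast}=\lfloor A/p\rfloor\ast\lfloor B/p\rfloor$ directly via segment trees (this is where monotonicity enters: each row of $B^{\ast}$ has only $O(n^{\mu-\alpha})$ intervals, giving the $n^{a+b+\mu-\alpha}$ term), and then recovers the residues $C^{(\ell)}$ for $\ell=h,h-1,\dots,0$ by \emph{polynomial} matrix multiplication with $x,y$-degree $O(n^{\alpha})$---this is the $n^{\omega(a,b,c)+\alpha}$ term. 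The ``correction'' is subtracting contributions from error-triple sets $T_b^{(\ell)}$ whose expected size is bounded \emph{probabilistically} over the choice of $p$ (a divisor-counting argument), not by any total-variation calculation. Your total-variation sentence, as written, would also apply to a column-monotone $B$ and does not isolate why monotonicity along the $n^c$ axis is what drives the segment-tree cost. The paper's proof sketch reproduces these CDXZ ingredients (random prime, polynomial matmul, the $T_b^{(\ell)}$ bound in \Cref{lemma:error-set-bound-monotone}) and checks that each survives the dimension change; that is the bookkeeping you should be doing, not the mechanism you described.
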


Since the algorithm and proof remain largely unchanged, we only provide a sketch of the proof.

Let $\alpha \in [0, 1]$ be a parameter to be fixed, and let $p$ be a uniformly random prime chosen in $[40 n^{\alpha}, 80 n^{\alpha}]$.
We make the same simplifying assumption as in \cite{chi2022monotone}.
By Lemma 3.4 of \cite{chi2022monotone}, this assumption is justified.

\begin{assumption}
    (Assumption 3.1 of \cite{chi2022monotone})
    \label{ass:monotone-assumption}
    Let $i \in [n^a], k \in [n^b], j \in [n^c]$.
    For all $i, k$, $A_{ik}$ is either $\infty$ or $(A_{ik} \mod p) < \frac{p}{3}$.
    For all $k, j$, $B_{kj}$ is either $\infty$ or $(B_{kj} \mod p) < \frac{p}{3}$.
    Each row of $B$ is monotone.
\end{assumption}

\begin{lemma}
    \label{lemma:assumption-reduction}
    (Lemma 3.4 of \cite{chi2022monotone})
    Let $A$ be a $n^{a} \times n^{b}$ integer matrix with non-negative entries bounded by $O(n)$.
    Let $B$ be a $n^{b} \times n^{c}$ row-monotone integer matrix with non-negative entries bounded by $O(n)$.
    The computation $A * B$ can be reduced to a constant number of $A^i * B^i$ where $A^i, B^i$ satisfy Assumption \ref{ass:monotone-assumption}.
\end{lemma}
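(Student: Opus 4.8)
I recall this is Lemma~3.4 of Chi, Duan, Xie, and Zhang \cite{chi2022monotone}; the plan is to reproduce their reduction and to check that it goes through verbatim in the rectangular setting. The only obstruction to \Cref{ass:monotone-assumption} is parts (a) and (b): some entries of $A$ or of $B$ may have residue modulo the (fixed) prime $p$ lying in $[p/3, p)$. Part (c), row-monotonicity of $B$, must be preserved, and that is where the bookkeeping is delicate.

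First I would fix an $O(1)$-size set of shifts $S \subseteq \{0, 1, \dotsc, p - 1\}$ meeting every window of $\ceil{p/3}$ consecutive residues of $\Z_p$ --- for instance $S = \{0, \ceil{p/3}, \ceil{2p/3}\}$ works --- so that every integer $x \ge 0$ admits some $c \in S$ with $(x + c) \bmod p < p/3$. Assign to each pair $(i, k)$ the shift $c_A(i,k) \in S$ that is smallest with $(A_{ik} + c_A(i,k)) \bmod p < p/3$, and similarly a shift $c_B(k, j)$ for $B$. For a fixed $(c, c') \in S \times S$ I would build $A^{(c)}$ and $B^{(c')}$ as follows: $A^{(c)}_{ik} = A_{ik} + c$ when $c_A(i,k) = c$ and $A^{(c)}_{ik} = \infty$ otherwise; $B^{(c')}_{kj} = B_{kj} + c'$ when $c_B(k,j) = c'$, and otherwise (the \emph{discarded} entries) $B^{(c')}_{kj} = B_{k j''} + c'$ where $j''$ is the smallest column $> j$ with $c_B(k, j'') = c'$ (taking $\infty$ if no such column exists). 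A short check shows: every finite entry of $A^{(c)}$ and $B^{(c')}$ has residue $< p/3$; all entries are non-negative and bounded by $O(n)$ (since $p = O(n)$); and each row of $B^{(c')}$ is still non-decreasing, because a discarded entry is replaced by the value of the nearest surviving entry to its right. Hence each of the $|S|^2 = O(1)$ pairs $(A^{(c)}, B^{(c')})$ satisfies \Cref{ass:monotone-assumption}, and the algorithm of \Cref{thm:rect-monotone-min-plus} can be run on it. (If the rows of $B$ are non-increasing one fills discarded entries from the left instead; this case is symmetric.)

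To finish I would set $\hat C_{ij} = \min_{(c, c') \in S \times S} \big( (A^{(c)} * B^{(c')})_{ij} - c - c' \big)$ and claim $\hat C = A * B$. The lower bound is immediate: for every $(c, c')$ and every $k$ one has $A^{(c)}_{ik} \ge A_{ik} + c$ and $B^{(c')}_{kj} \ge B_{kj} + c'$ (the rightward fill only increases entries, by monotonicity), so $(A^{(c)} * B^{(c')})_{ij} - c - c' \ge \min_k (A_{ik} + B_{kj}) = (A * B)_{ij}$. For the matching upper bound, let $k^\star$ attain $(A*B)_{ij}$ and take $c = c_A(i, k^\star)$, $c' = c_B(k^\star, j)$; then $A^{(c)}_{ik^\star} = A_{ik^\star} + c$ and $B^{(c')}_{k^\star j} = B_{k^\star j} + c'$ exactly (as $(k^\star, j)$ is not discarded in class $c'$), whence $(A^{(c)} * B^{(c')})_{ij} - c - c' \le (A * B)_{ij}$.

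The step I expect to be the crux is verifying that restricting $B$ to a residue class can be done without destroying row-monotonicity: naively replacing discarded entries by $\infty$ fails (an $\infty$ to the left of a finite entry breaks monotonicity), and the fix --- filling with the nearest surviving entry to the right --- is precisely what also makes the lower-bound half of the correctness argument go through, so the two concerns resolve together. Everything else is routine bookkeeping with $O(1)$ blow-up, which is why it suffices to sketch it.
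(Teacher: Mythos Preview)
Your reduction is correct but differs from the paper's (which reproduces \cite{chi2022monotone} verbatim). Both split $A$ into three matrices by residue class, with off-class entries set to $\infty$. For $B$, however, the paper does \emph{not} fill from the right: instead, each entry is rounded \emph{up} to the nearest value in the target residue class---e.g.\ in $B^{(1)}$, an entry with residue $\ge p/3$ is replaced by $p\lfloor B_{kj}/p + 1\rfloor$, and analogous shifts are used for $B^{(2)}, B^{(3)}$. A short case analysis on the quotient $\lfloor B_{kj}/p\rfloor$ shows this preserves row-monotonicity. The advantage of the paper's construction is that every modified entry changes by less than $p$ and lands on one of the three fixed residues $\{0, \lceil p/3\rceil, \lceil 2p/3\rceil\}$; this structure is exactly what \Cref{lemma:error-set-bound-monotone} exploits when bounding $\E|T_b^{(l)}|$ over the random prime $p$. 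Your fill-from-right construction replaces a discarded $B_{kj}$ by some $B_{kj''}$ where $j''$ itself depends on $p$, which would complicate that downstream probabilistic argument. So your proof of the lemma as stated stands, but the paper's version is tailored to mesh with the rest of the algorithm.
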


\begin{proof}
    We only define the relevant matrices, leaving the proof to \cite{chi2022monotone}.
    Define for $i \in \set{1, 2, 3}$,
    \begin{align*}
        A_{ik}^{((i)} &= \begin{cases}
            A_{ik} & \frac{(i - 1) p}{3} < (A_{ik} \mod p) < \frac{i p}{3} \\
            \infty & \otherwise
        \end{cases} \\
        B_{kj}^{(1)} &= \begin{cases}
            B_{kj} & (B_{kj} \mod p) < \frac{p}{3} \\
            p \floor{B_{kj}/p + 1} & \frac{p}{3} \leq (B_{kj} \mod p) < \frac{2p}{3} \\
            p \floor{B_{kj}/p + 1} & \frac{2p}{3} \leq (B_{kj} \mod p)
        \end{cases} \\
         B_{kj}^{(2)} &= \begin{cases}
            p \floor{B_{kj}/p} + \ceil{p/3} & (B_{kj} \mod p) < \frac{p}{3} \\
            B_{kj} & \frac{p}{3} \leq (B_{kj} \mod p) < \frac{2p}{3} \\
            p \floor{B_{kj}/p + 1} & \frac{2p}{3} \leq (B_{kj} \mod p)
        \end{cases} \\
         B_{kj}^{(3)} &= \begin{cases}
            p \floor{B_{kj}/p} + \ceil{2p/3} & (B_{kj} \mod p) < \frac{p}{3} \\
            p \floor{B_{kj}/p} + \ceil{2p/3} & \frac{p}{3} \leq (B_{kj} \mod p) < \frac{2p}{3} \\
            B_{kj} & \frac{2p}{3} \leq (B_{kj} \mod p)
        \end{cases} 
    \end{align*}
    Then, the matrices,
    \begin{equation*}
        A^{(1)}, A^{(2)} - \ceil{p/3}, A^{(3)} - \ceil{2p/3}, B^{(1)}, B^{(2)} - \ceil{p/3}, B^{(3)} - \ceil{2p/3}
    \end{equation*}
    all satisfy Assumption \ref{ass:monotone-assumption}.
\end{proof}

Define $h$ such that $2^{h - 1} \leq p < 2^h$.
For all $0 \leq l \leq h$, define $A_{ik}^{(l)} = \floor{\frac{A_{ik} \mod p}{2^l}}$ if $A_{ik}$ is finite and $\infty$ otherwise.
Define $B_{kj}^{(l)} = \floor{\frac{B_{kj} \mod p}{2^l}}$ if $B_{kj}$ is finite and $\infty$ otherwise.
Define $A_{ik}^* = \floor{A_{ik}/p}$ if $A_{ik}$ is finite and $\infty$ otherwise.
Define $B_{kj}^* = \floor{B_{kj}/p}$ if $B_{kj}$ is finite and $\infty$ otherwise.
Since $A^*, B^*$ are monotone, $C^* = A^* * B^*$ can be computed in $\tO{n^{(a + b + \mu) - \alpha}}$ time using segment trees as described in \cite{chi2022monotone}.
If $C_{ij}$ is finite, then by Assumption \ref{ass:monotone-assumption} $C_{ij} \mod p < \frac{2p}{3}$ so that $C_{ij}^* = \floor{C_{ij} / p}$.

In the next phase, we compute $C^{(l)}$ for $l = h, h - 1, \dotsc, 0$.
Note that $C^{(l)}$ is not necessarily $A^{(l)} * B^{(l)}$.
We require $C^{(l)}$ to satisfying the following properties if $C_{ij}$ is finite:
\begin{enumerate}
    \item $\floor{\frac{(C_{ij} \mod p) - 2(2^{l} - 1)}{2^{l}}} \leq C_{ij}^{(l)} \leq \floor{\frac{(C_{ij} \mod p) + 2(2^{l} - 1)}{2^{l}}}$ 
    \item If $C_{i, j_0}^* = C_{i, j_1}^*$ for $j_0 < j_1$, the elements $C_{i, j_0}^{(l)}, C_{i, j_0 + 1}^{(l)}, \dotsc, C_{i, j_1}^{(l)}$ are monotonically non-decreasing.
\end{enumerate}

Notice that when $l = 0$, we have $C_{ij}^{(0)} = C_{ij} \mod p$, which combined with $C_{ij}^*$ allows us to recover $C_{ij}$.

The entries of $A^{(l)}, B^{(l)}, C^{(l)}$ are therefore non-negative integers at most $\bigO{\frac{n^{\alpha}}{2^l}}$ or $\infty$.
Since $B^{(l)}$ is monotone, every row is composed of at most $\bigO{\frac{n^{\alpha}}{2^l}}$ intervals, where all the values in each interval are identical.
Now, $C^*$ has $O(n^{\mu - \alpha})$ intervals in each row.
Within each interval, $C^{(l)}$ has $\bigO{\frac{n^{\alpha}}{2^l}}$ intervals, so that $C^{(l)}$ has $\bigO{\frac{n^{\mu}}{2^l}}$ such intervals in each row.

\begin{definition}
    \label{def:monotone-min-plus-segment}
    (Definition 3.1 of \cite{chi2022monotone})
    A segment $(i, k, (j_0, j_1))$ with respect to $B^{(l)}, C^{(l)}$ where $i \in [n^a], k \in [n^b], j_0, j_1 \in [n^c]$ and $j_0 \leq j_1$ satisfies that for all $j_0 \leq j \leq j_1$,
    $B_{kj}^{(l)} = B_{k, j_0}^{(l)}$, $B_{kj}^{*} = B_{k, j_0}^{*}$, $C_{ij}^{(l)} = C_{i, j_0}^{(l)}$, and $C_{ij}^{*} = C_{i, j_0}^{*}$.
\end{definition}

We will also maintain sets $T_{b}^{(l)}$ for $-10 \leq b \leq 10$ where $T_{b}^{(l)}$ consists of segments $(i, k, (j_0, j_1))$ satisfying $A_{ik} < \infty$ and $A_{ik}^* + B_{kj}^* \neq C_{ij}^*$ and $A_{ik}^{(l)} + B_{kj}^{(l)} = C_{ij}^{(l)} + b$ for all $j \in [j_0, j_1]$.
Finally, we turn to the computation of each $C^{(l)}$ matrix.

\subsubsection*{Computing \texorpdfstring{$C^{(l)}$}{C(l)}}

\begin{enumerate}
    \item Since $p < 2^h$, $A^{(h)}, B^{(h)}$ are zero-matrices.
    Likewise, $C^{(h)} = 0$ satisfies the required properties.
    For $b \neq 0$, $T_{b}^{(h)} = \emptyset$ and $T_{0}^{(h)}$ includes all segments $(i, k, (j_0, j_1))$ where $A_{ik} < \infty$ and $A_{i,k}^* + B_{k,j_0}^* \neq C_{i,j_0}^*$.

    \item Let $l < h$. 
    We will construct $C^{(l)}, T_b^{(l)}$ from $C^{(l + 1)}, T_{b}^{(l + 1)}$.
    \begin{enumerate}
        \item {\bf Polynomial Matrix Multiplication}
        
        Construct $A_{ik}^p = x^{A_{ik}^{(l)} - 2A_{ik}^{(l + 1)}} y^{A_{ik}^{(l + 1)}}$ if $A_{ik}$ finite and $0$ otherwise.
        Define $B_{kj}^p$ analogously.
        Compute $C^p = A^p B^p$.
        Since the $x$ degree is either 0 or 1, and the $y$ degree is at most $n^{\alpha}$, this phase requires $\tO{n^{\omega(a, b, c) + \alpha}}$ time.
        
        \item {\bf Subtracting Error Terms}

        If $C_{ij}^p = 0$, then $C_{ij}^{(l)} = \infty$.
        Otherwise, for all $b$, we collect all monomials $\lambda x^c y^d$ where $d = C_{ij}^{(l + 1)} + b$ and let $C_{ijb}^p(x)$ be the sum of all such $\lambda x^c$.
        Next, we compute,
        \begin{equation*}
            R_{ijb}^p(x) = \sum_{(i, k, (j_0, j_1)) \in T_{b}^{(l + 1)}, j \in [j_0, j_1]} x^{A_{ik}^{(l)} - 2 A_{ik}^{(l + 1)} + B_{kj}^{(l)} - 2 B_{kj}^{(l + 1)}}
        \end{equation*}

        Let $s_{ijb}$ be the minimum degree of $x$ in $C_{ijb}^p(x) - R_{ijb}^p(x)$ and compute $c_{ijb} = 2d + s_{ijb}$ (or $c_{ijb} = \infty$ if $s_{ijb} = 0$).
        Finally, output $C_{ij}^{(l)} = \min_b c_{ijb}$.

        Constructing $C_{ijb}^p(x)$ requires $\tO{n^{a+c+\alpha}}$ time.
        Notice each $T_{b}^{(l + 1)}$ has at most $2$ different $B_{kj}^{(l)}$ and thus two $R_{ijb}^p(x)$.
        Since each interval in $B_{kj}^{(l + 1)}$ is broken into at most $2$ intervals, we can compute $C_{ijb}^p(x) - R_{ijb}^p(x)$ using segment trees in $\tO{|T_{b}^{(l + 1)}|}$-time.
        The overall time required in this stage is therefore $\tO{n^{a + c + \alpha} + |T_{b}^{(l + 1)}|}$.
        
        \item {\bf Computing Triples $T_b^{(l)}$}

        It is shown in \cite{chi2022monotone} that each segment with respect to $B^{(l + 1)}, C^{(l + 1)}$ splits into $O(1)$ segments with respect to $B^{(l)}, C^{(l)}$ and furthermore $\bigcup_{b} T_{b}^{(l)} \subset \bigcup_{b} T_{b}^{(l + 1)}$, so it suffices to enumerate $T_{b}^{(l + 1)}$, breaking each segment into $O(1)$ sub-segments.
        The time required is $\tO{|T_{b}^{(l + 1)}|}$.
    \end{enumerate}
\end{enumerate}

It then suffices to bound the size of the sets $T_{b}^{(l)}$, which we do below in Lemma \ref{lemma:error-set-bound-monotone}.
Finally, we can optimize parameter $\alpha$ and compute the overall running time.
Note that the overall running time is,
\begin{equation*}
    \bigtO{n^{(a + b + \mu) - \alpha} + n^{\omega(a, b, c) + \alpha} + n^{a + c + \alpha}} = \bigtO{n^{(a + b + \mu) - \alpha} + n^{\omega(a, b, c) + \alpha}} 
\end{equation*}

giving the desired result.
The proof of correctness follows exactly as in \cite{chi2022monotone}.

\begin{lemma}
    \label{lemma:error-set-bound-monotone}
    The expected number of segments in $T_{b}^{(l)}$ is $\tO{n^{(a + b + \mu) - \alpha}}$
\end{lemma}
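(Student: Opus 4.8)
The plan is to follow the argument of Chi, Duan, Xie, and Zhang \cite{chi2022monotone} (their Lemma 3.3), adapting the exponent bookkeeping to the rectangular setting. The key observation is that a segment $(i, k, (j_0, j_1))$ belongs to $\bigcup_b T_b^{(l)}$ only if $A_{ik}^* + B_{kj_0}^* \neq C_{ij_0}^*$, i.e. the ``integer part'' of the tentative sum along index $k$ disagrees with the true quotient $C_{ij_0}^*$. Since $C^*$ is the true $(\min,+)$ product of the quotient matrices and $A_{ik}^* + B_{kj}^* \geq C_{ij}^*$ always, such a disagreement means $A_{ik}^* + B_{kj}^* > C_{ij}^*$. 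The heart of the matter is the randomness of the prime $p \in [40 n^\alpha, 80 n^\alpha]$: for a fixed triple $(i, k, j)$ that is a ``witness'' (i.e. $A_{ik} + B_{kj} = \text{(true min-plus value, call it } D_{ij})$ in the original matrices before taking residues), the event that $A_{ik}^* + B_{kj}^* \neq C_{ij}^*$ forces a carry across the modulus, which by Assumption \ref{ass:monotone-assumption} (each residue is $< p/3$) happens only when $\lfloor A_{ik}/p \rfloor + \lfloor B_{kj}/p \rfloor \neq \lfloor D_{ij}/p \rfloor$; since $A_{ik}, B_{kj}, D_{ij} = O(n^\mu)$, this event has probability $O(n^\mu / n^\alpha) = O(n^{\mu - \alpha})$ over the choice of $p$.

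First I would set up the counting. For each pair $(i,j) \in [n^a] \times [n^c]$ fix one witness index $k = k(i,j)$ realizing $C_{ij}$ (if $C_{ij}$ is finite). For a non-witness triple contributing to some $T_b^{(l)}$, the segment structure plus the monotonicity of $B^{(l)}$ and the interval structure of $C^*$ (each row of $C^*$ has $O(n^{\mu - \alpha})$ intervals, each row of $B^{(l)}$ has $O(n^\alpha/2^l)$ intervals) bounds the number of distinct segments that can start at a given $(i,k)$: across all $j$, the row $C_{i,\cdot}^{(l)}$ together with $B_{k,\cdot}^{(l)}$ and the ``$*$''-matrices partition $[n^c]$ into $O(n^\mu/2^l)$ pieces. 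So the total count is at most $n^a \cdot n^b \cdot O(n^\mu/2^l)$ in the worst case — but this is too weak, and the randomness must cut it down. The refined claim, exactly as in \cite{chi2022monotone}, is that a segment $(i,k,(j_0,j_1))$ lands in $T_b^{(l)}$ only if there is some $j \in [j_0,j_1]$ with $A_{ik}^* + B_{kj}^* \neq C_{ij}^*$, and by the above this happens with probability $O(n^{\mu-\alpha})$ for each candidate $(i,k)$-with-$j$-witness, independently of the segment refinement. Summing, the expected number of $(i,k)$ pairs ever generating a bad segment is $O(n^{a+b} \cdot n^{\mu - \alpha})$; multiplying by the $O(n^\mu/2^l) = \tilde O(1)$-type factors that are absorbed into the $\tilde O$, and noting each bad $(i,k)$ produces $O(1)$ segments per $b$ after the recursive splitting, yields the bound $\tilde O(n^{(a+b+\mu) - \alpha})$.

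The main obstacle is getting the dependence on $j$ (the third index) to not blow up the bound: naively each bad $(i,k)$ pair could spawn a segment for every interval of $B_{k,\cdot}^{(l)}$, i.e. a factor $n^\alpha/2^l$, and one must use that the event ``$A_{ik}^* + B_{kj}^* \neq C_{ij}^*$'' is itself monotone in $j$ within a fixed $C^*$-interval (because $B_{kj}^*$ is non-decreasing and $C_{ij}^*$ is constant there), so the set of bad $j$ within each $C^*$-interval is a suffix, contributing $O(1)$ maximal bad segments per $C^*$-interval rather than $O(n^\alpha/2^l)$. This is precisely the structural lemma of \cite{chi2022monotone} and carries over verbatim since it never used that the matrices were square; I would cite it and only re-verify the final exponent arithmetic: $n^a$ choices of row, $n^b$ choices of the witness column-index, $O(n^{\mu-\alpha})$ carry-probability, and $\tilde O(n^{\mu - \alpha})$-many $C^*$-intervals times $O(1)$ per interval, all of which multiplies (after folding logarithmic factors and the constantly-many values of $b$ into $\tilde O$) to $\tilde O(n^{(a+b+\mu)-\alpha})$, as claimed.
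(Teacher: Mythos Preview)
Your proposal misidentifies the probabilistic event that drives the bound. Membership in $T_b^{(l)}$ requires two things: (i) $A_{ik}^* + B_{kj}^* \neq C_{ij}^*$, and (ii) $A_{ik}^{(l)} + B_{kj}^{(l)} = C_{ij}^{(l)} + b$. You attempt to bound the probability of (i) (the ``carry'' event), but this is not where the saving lies. In fact, for a \emph{witness} triple under Assumption~\ref{ass:monotone-assumption} there is never a carry (the two residues sum to less than $2p/3$), so your stated probability is $0$, not $O(n^{\mu-\alpha})$; and for a non-witness triple with $A_{ik}+B_{kj}-C_{ij} \geq p$, condition (i) holds deterministically. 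The total number of segments satisfying (i) alone can be $\Theta(n^{a+b+\mu}/2^l)$ with no further reduction.

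The paper's argument instead exploits condition (ii). When $2^l > p/100$, every row of $B^{(l)}$ and $C^{(l)}$ already has only $O(n^{\mu}/2^l) = O(n^{\mu-\alpha})$ intervals, so the total segment count is $O(n^{a+b+\mu-\alpha})$ outright. When $2^l \leq p/100$, there are $O(n^{a+b+\mu}/2^l)$ segments in all; for each one, condition (ii) forces $(A_{ik}+B_{kj}-A_{iq}-B_{qj}) \bmod p$ (with $q$ the true witness for $(i,j)$) to lie in a window of width $O(2^l)$, and since this is a fixed nonzero integer of magnitude $O(n^\mu)$ while $p$ is a uniformly random prime of size $\Theta(n^\alpha)$, each of the $O(2^l)$ candidate residues is hit with probability $O(1/n^\alpha)$. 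Multiplying $O(n^{a+b+\mu}/2^l) \cdot O(2^l/n^\alpha)$ gives the claimed $\tilde O(n^{a+b+\mu-\alpha})$.

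Two further symptoms of the gap in your write-up: the assertion ``$O(n^\mu/2^l) = \tilde O(1)$'' is false for small $l$ (this factor is genuinely polynomial and is cancelled only by the $2^l$ in the probability above), and your final-paragraph arithmetic multiplies four factors $n^a \cdot n^b \cdot O(n^{\mu-\alpha}) \cdot \tilde O(n^{\mu-\alpha})$ yet reports $n^{a+b+\mu-\alpha}$, which is off by $n^{\mu-\alpha}$.
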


\begin{proof}
    Suppose first $2^l > \frac{p}{100}$.
    Then, $B^{(l)}$, $C^{(l)}$ have at most $\bigO{\frac{n^{\mu}}{2^l}} = \bigO{\frac{n^{\mu}}{p}} = \bigO{n^{\mu - \alpha}}$ intervals in a given row.
    The overall number of intervals is therefore bounded by $\bigO{n^{(a + b + \mu) - \alpha}}$

    Consider a segment $(i, k, (j_0, j_1))$ and arbitrarily pick $j \in [j_0, j_1]$ where $A_{ik}$ finite and $A_{ik}^* + B_{kj}^* \neq C_{ij}^*$.
    Then, by Assumption \ref{ass:monotone-assumption}, we have $|A_{ik} + B_{kj} - C_{ij}| \geq \frac{p}{3}$.
    We now bound the probability that $(i, k, (j_0, j_1)) \in T_{b}^{(l)}$.
    By definition, this implies that,
    \begin{equation*}
        \floor{\frac{A_{ik} \mod p}{2^l}} + \floor{\frac{B_{kj} \mod p}{2^l}} = C_{ij}^{(l)} + b 
    \end{equation*}
    Since $\floor{\frac{(C_{ij} \mod p) - 2(2^{l} - 1)}{2^{l}}} \leq C_{ij}^{(l)} \leq \floor{\frac{(C_{ij} \mod p) + 2(2^{l} - 1)}{2^{l}}}$ 
    \begin{equation*}
        -4 \leq \frac{A_{ik} \mod p}{2^l} + \frac{B_{kj} \mod p}{2^l} - \frac{C_{ij} \mod p}{2^l} - b \leq 4
    \end{equation*}
    Let $C_{ij} = A_{iq} + B_{qj}$, so that,
    \begin{align*}
        -4 \cdot 2^l \leq (A_{ik} + B_{kj} - A_{iq} - B_{qj}) \mod p - b \cdot 2^{l} \leq 4 \cdot 2^l
    \end{align*}
    so that $(A_{ik} + B_{kj} - A_{iq} - B_{qj}) \mod p$ takes one of $O(2^l)$ values $r$ in the range $[2^l (b - 4), 2^l (b + 4)]$.
    Since $|b| \leq 10$, the largest such value is bounded by $|r| \leq 14 \cdot 2^l < \frac{p}{6} \leq \frac{1}{2} |A_{ik} + B_{kj} - A_{iq} - B_{qj}|$ where we used $2^l < \frac{p}{100}$.

    Then, if $B_{kj}, B_{qj}$ are from $B$ in Lemma \ref{lemma:assumption-reduction}, then,
    \begin{equation*}
        |(A_{ik} + B_{kj} - A_{iq} - B_{qj}) - r| = O(n^{\mu})
    \end{equation*}
    The number of primes larger than $40n^{\alpha}$ that divide this quantity is therefore at most $\frac{\mu}{\alpha} = O(1)$.
    In particular, such a $p$ is chosen with probability $n^{-\alpha}$.

    On the other hand, if $B_{kj}, B_{qj}$ may have been set artificially to some number congruent to $0, \ceil{p/3}, \ceil{2p/3}$ in Lemma \ref{lemma:assumption-reduction}.
    We can still bound the probability that $p$ divides $|(A_{ik} + B_{kj} - A_{iq} - B_{qj}) - r|$, as in \cite{chi2022monotone}.

    Finally, we have $O(2^l)$ such remainders $r$, and in total $\bigO{\frac{n^{a + b + \mu}}{2^l}}$ segments, so that in expectation we can upper bound $|T_{b}^{(l)}| = \tO{n^{(a + b + \mu) - \alpha}}$.
\end{proof}

\subsection{Framework for Additive Approximations via Monotone \texorpdfstring{\minplus}{(min, +)} Product}

We now give an improvement on the generalized algorithm of Dor et al. \cite{dor2000apasp}, using fast matrix multiplication to give a better trade-off between additive approximation error and running time.
This extends the work of Deng et al. \cite{deng2022apasp} to additive error beyond +2.

\genadditiveapasp

Before we present a high level overview of Theorem \ref{thm:gen-additive-apasp}, let us recall some useful definitions.

\begin{definition}
    \label{def:euler-tour}
    For a spanning tree $T \subset E$ of a connected graph $G = (V, E)$ on $n$ vertices, an {\bf Euler Tour} of $T$ is a sequence of vertices $v_1, v_2, \dotsc, v_{2n - 1}$ where each vertex of $G$ appears at least once and the edges $(v_i, v_{i + 1}) \in T$ for all $1 \leq i \leq 2n - 2$.
\end{definition}

Given a spanning tree $T$, an Euler tour of $T$ can be found in $O(|T|)$ time by conducting depth first search on $T$.

\begin{definition}
    \label{def:euler-tour-matrix}
    Let $G = (V, E)$ be a graph on $n$ vertices and $D \subset V$.
    Consider an arbitrary Euler Tour on $G$, denoted $v_1, \dotsc, v_{2n - 1}$.
    $X$ is an {\bf Euler Tour Distance Matrix} of $D$ on $V$ if $A$ is a $|D| \times (2n - 1)$ matrix where $X(w, i) = \delta(w, v_i)$ for $w \in D$ and $v_i$ as ordered in the Euler Tour.
\end{definition}

It was first observed by Deng, Kirkpatrick, Rong, Vassilevska-Williams, and Zhong \cite{deng2022apasp} that it is possible to encode the results of a $\bfs$ search in a row bounded difference matrix by ordering the vertices according to an Euler tour.

We now present a high level description of our algorithm.

\paragraph{High Level Overview}

First, we recap the algorithm of Deng et al. \cite{deng2022apasp} and explain why it does not generalize to additive approximations beyond $+2$.
Consider a $+4$ approximation.
The natural approach is to use a bounded difference matrix multiplication to compute a $+2$ additive approximation on paths with some vertex of degree at least $n^x$ and to invoke $\sparseAPASP(G, 4)$ on paths where all vertices have degree at most $n^x$.
If $\omega_{BD}(1, 1 - x, 1)$ denotes the exponent required to compute a rectangular bounded difference \minplus matrix product with input $n \times n^{1 - x}$ and $n^{1 - x} \times n$, then this algorithm requires time $n^{2 + \frac{x}{3}}$ where $x$ is the solution to $\omega_{BD}(1, 1-x, 1) = 2 + \frac{x}{3}$.
For current bounds on $\omega_{BD}$, this is larger than the $\frac{11}{5}$ given by the combinatorial algorithm of Dor, Halperin, and Zwick \cite{dor2000apasp}.
Why? 
The $\denseAPASP$ algorithm decomposes the graph $G$ into 5 levels, whereas the above algorithm has only decomposed $G$ into 4 levels.

Indeed, only using rectangular matrix multiplication $\omega_{BD}(1, y, 1)$ for some $y \in [0, 1]$, we will not be able to obtain any improvement on $\denseAPASP(G, 4)$.
For example, recall that the degree thresholds are $s_1 = n^{4/5}, s_2 = n^{3/5}, s_3 = n^{2/5}, s_4 = n^{1/5}$.
In order to improve the overall running time, we must decrease $s_4$ so that the final iteration of computing $\dijkstra$ on $E_5$ from all $D_5 = V$ does not require $\tO{n^{11/5}}$ time.
If we wish to decrease $s_4$, then the size of $D_4$ increases.
To limit the overall time of running $\dijkstra$ from $D_4$ on $E_4$, we must therefore decrease $s_3$.
Following the same reasoning, we must decrease $s_2$ and $s_1$.
However, the combinatorial algorithm $\denseAPASP$ also includes the edge set $D_2 \times D_4$ when executing $\dijkstra$ from $D_5 = V$.
In particular, either $D_2$ or $D_4$ needs to reduce in size, which implies that at least one of $s_2, s_4$ needs to increase, contradicting our previous conclusions.

We cannot hope to succeed simply by using fast \minplus matrix multiplication to replace the first level $D_1$.
However, we use fast \minplus matrix multiplication to replace the edge set $D_2 \times D_4$ and more generally $D_{j_1} \times D_{j_2}$ for $j + j_1 + j_2 \leq 2k + 1$.
Consider $+4$ approximation as a basic example.
Let $P$ be a shortest path between $u, v$.
If $P \subset E_3$, then we compute a $+4$ approximation as in $\sparseAPASP$.
Therefore, suppose $P \subset E_j$ and $P \not\subset E_{j + 1}$ for some $j \in \set{1, 2}$.
Let $y$ be the vertex closest to $v$ such that $P_{u, y} \subset E_5$.
Let $z$ be the vertex of maximum degree in $P$.
Let $y^* = r(y, D_4)$ and $z^* = r(z, D_{\level(z)})$.
In the combinatorial algorithm, since $D_2 \times D_4 \subset G_{5, v}$, we find the path $(v, z^*, y^*, y) \circ P_{y, u}$ and compute a $+4$ approximation.
Instead, we use a fast \minplus product to find,
\begin{equation*}
    \hat{\delta}(v, y^*) \leq \min_{w \in D_2} \hat{\delta}(v, w) + \hat{\delta}(w, y^*) \leq \hat{\delta}(v, z^*) + \hat{\delta}(z^*, y^*) \leq \hat{\delta}(v, y) + 3
    \label{eq:+4-min-plus-approx}
\end{equation*}
to obtain the same approximation.
To utilize fast matrix multiplication, let $x$ be some parameter to be optimized.
Since $P \subset E_3$ is handled by $\sparseAPASP$, we decompose the graph $G_x = (V, E_{n^x})$ into 3 levels.
Then, for $1 \leq j \leq (1 - x) \log n$ thresholds $t_j = \frac{n}{2^j}$, we compute a dominating set $C_j$ and edge set $F_j$.
For each $j$, we construct matrix $A_j$ of dimension $D_4 \times C_j$ and matrix $B_j$ of dimension $C_j \times V$ where we ensure $B_j$ is a monotone matrix by using the Euler tour distance matrix construction and the transformation $\bdToMonotone$.
Filling in these matrices with the results computed by $\bfs$ from $C_j$ on the graph $(V, F_j)$, we compute the correct distance estimate required in \Cref{eq:+4-min-plus-approx}.
Since overall the $\bfs$ require time $\tO{n^2}$, the running time of \minplus matrix multiplication dominates.

\begin{figure}[ht]
\centering
\includegraphics[width=0.9\textwidth]{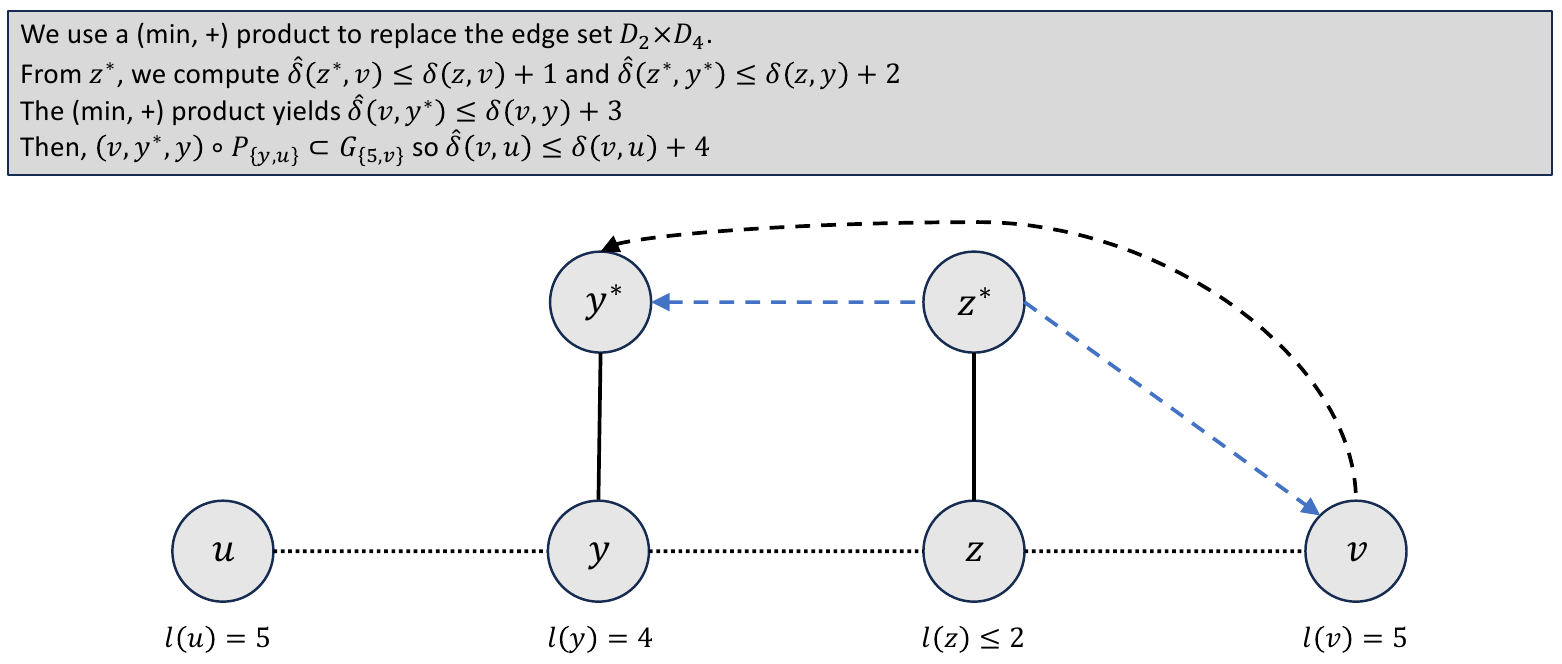}
\caption{Using \minplus product to replace edge set $D_2 \times D_4$.
Solid lines represent eges in $G$.
Dotted black lines represent paths in $G$.
Dashed blue arrows represent distance estimates used in the \minplus product.
Dashed black arrows represent edges $w \times V$ in the graph $G_{j, w}$.
}
\label{fig:min-plus-intro}
\end{figure}

We have already discussed $+6$ approximation in \Cref{sec:2-det-approx}.

For general $+\beta$ approximation, let $k = \frac{3 \beta - 2}{2}$ and $k_0 = k - \floor{k/3} - 2$.
By \Cref{lemma:sparse-apasp-approx}, if any path $P \subset E_{k_0 + 1}$, then we already obtain a $+\beta$-approximation as in $\sparseAPASP$.
Then, to ensure a $+\beta$ approximation for paths containing a higher degree vertex, we need to ensure that we have a $+5$ approximation after the $(k_0 + 3)$-th iteration, we can follow the arguments of \Cref{lemma:sparse-apasp-approx} to obtain a $+\beta$ additive approximation from this point.

We demonstrate this case in \Cref{fig:bounded-min-plus-approx}.
In this example, $a$ is a vertex on the shortest path $P$ between $u, v$ such that $\level(a) = k_0 + 3$.
While the diagram only shows $\hat{\delta}(a^*, v) \leq \delta(a, v) + 5$, a similar argument holds for all $x \in P$.

\begin{figure}[ht]
    \centering
    \includegraphics[width=0.7\textwidth]{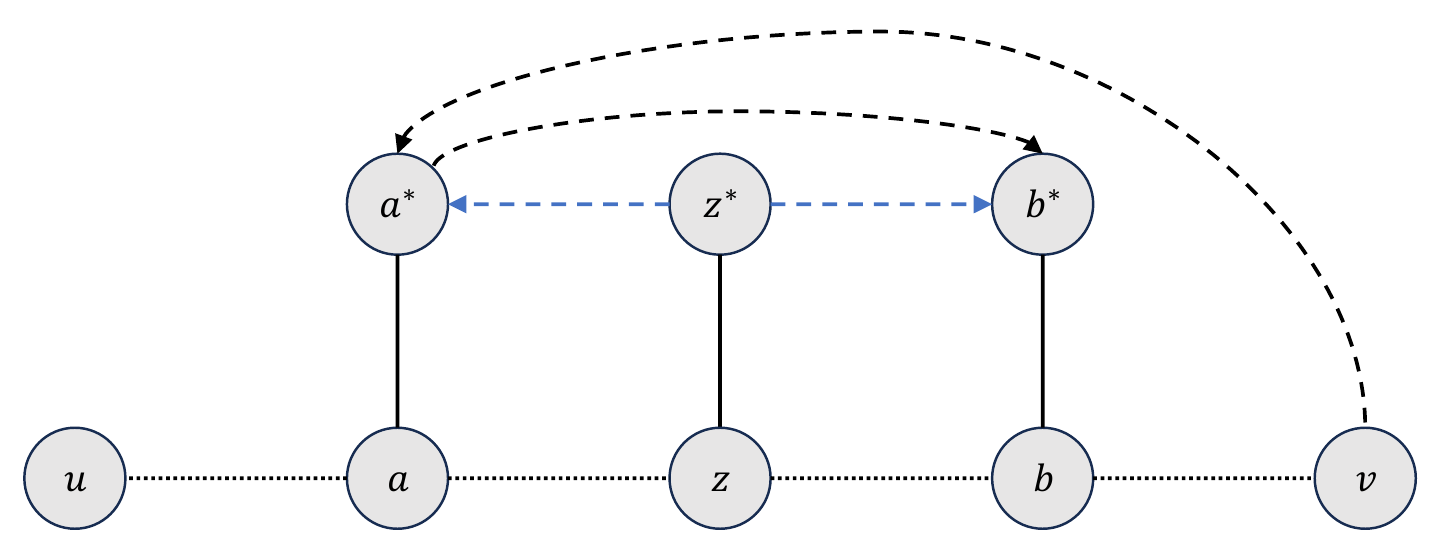}
    \caption{Computing a distance estimate between $u, v$.
    Solid black lines denote edges in $G$.
    Dotted black lines represent paths in $G$.
    Blue dashed arrows denote distance estimates used in the \minplus product (Phase 1). 
    Black dashed arrows denote edges $w \times V$ in the graph $G_{j, w}$ (Phase 2).}
    \label{fig:bounded-min-plus-approx}
\end{figure}

Let $b$ be the vertex closest to $a$ such that $P_{b, v} \subset E_{k_0 + 3}$.
Let $z$ be the highest degree vertex in $P_{a, b}$.
Suppose $\frac{n}{2^j} \leq \deg(z) < \frac{n}{2^{j - 1}}$.
Then, after computing the \minplus product $A_j * B_j$, we have an estimate
\begin{equation*}
    \hat{\delta}(a^*, b^*) \leq \delta(a, b) + 4
\end{equation*}
given by the blue arrows.
In particular from $a^*$, the path $(a^*, b^*, b) \circ P_{b, v} \subset G_{k_0 + 3, a^*}$ so that
\begin{equation*}
    \hat{\delta}(a^*, v) \leq \delta(a, v) + 5
\end{equation*}

We next argue that our algorithm has an improved running time.
Suppose we use naive matrix multiplication. 
Then, the time exponent to compute a $\monotoneMinPlus$ combinatorially is,

\begin{align*}
    \left( 1 - \frac{\beta-2}{\beta+2} x \right) +  \left( 1 - x \right) + \left( 1 - \frac{\beta-4}{\beta+2} x \right) &= 3 - \frac{3 \beta - 4}{\beta + 2} \cdot x
\end{align*}

Then, balancing this with the $2 + \frac{2x}{\beta + 2}$ exponent required to execute $\sparseAPASP$,
\begin{equation*}
    3 - \frac{3\beta - 4}{\beta + 2} \cdot x = 2 + \frac{2 x}{\beta + 2}
\end{equation*}

so that,
\begin{equation*}
    x = \frac{\beta + 2}{3 \beta - 2}
\end{equation*}

and obtain running time exponent $2 + \frac{2}{\beta + 2} \frac{\beta + 2}{3 \beta - 2} = 2 + \frac{2}{3 \beta - 2}$, thus recovering the approximation algorithm of Dor, Halperin, and Zwick \cite{dor2000apasp}.
In particular, by using fast matrix multiplication instead, we obtain a faster running time for all values of $\beta$.

\paragraph{Algorithm}

We are now ready to describe our algorithm.
We begin our algorithm by initializing the distance estimate matrix as the adjacency matrix of the graph $G$, as well as setting parameters $x, k$.
Then, we decompose our graph into $\floor{k/3} + 3$ levels according to the degree thresholds $s_{k_0}, s_{k_0 + 1}, \dotsc, s_{k - 1}$.

In Phase 1, we further decompose the top level into roughly $(1 - x) \log n$ levels.
At each level, we execute $\bfs$ from every vertex $w$ in the dominating set $C_i$ on the graph $(V, F_i)$.
We construct two matrices $A_i, B_i$ encoding the computed distances between $D_{k_0 + 2}, C_i$ and $C_i, D_{k_0 + 3}$ respectively.

In order to apply the $\monotoneMinPlus$ algorithm of \cite{chi2022monotone}, we define $B_i'$ to be the Euler tour distance matrix of $C_i$ on $G_i$ (see Definition \ref{def:euler-tour-matrix}) observing that this is a row-bounded difference matrix.
Then, we modify this matrix into a row-monotone matrix by adding $j$ to each column $j$ and taking a sub-matrix of the columns corresponding to $D_{k_0 + 3}$ to obtain a monotone matrix, as discussed in Section \ref{sec:prelims:fast-min-plus}.
In the pseudocode, this is contained in the invocation of $\bdToMonotone$.
Then, we compute the \minplus product of $A_i, B_i$ using $\monotoneMinPlus$.

In Phase 2, we execute $\dijkstra$ from each $D_i$ for $i \geq k_0 + 1$.
In each iteration, we search the graph $G_{j, w}$ consisting of $E_j \cup E^* \cup (w \times V)$.

We now present the algorithm and proof of Theorem \ref{thm:gen-additive-apasp}.

\IncMargin{1em}
\begin{algorithm}[ht]
\SetKwInOut{Input}{Input}\SetKwInOut{Output}{Output}
\Input{Unweighted, undirected Graph $G = (V, E)$ with $n$ vertices; approximation parameter $\beta$}
\Output{Distance estimate $\hat{\delta}: U \times V \rightarrow \Z$ such that $\delta(u, v) \leq \hat{\delta}(u, v) \leq \delta(u, v) + \beta$ for all $u, v \in V$}

\BlankLine

\textcolor{blue}{Phase 0: Set up and Decompose Graph}

$\hat{\delta}(u, v) \gets \begin{cases}
    1 & (u, v) \in E \\
    \infty & \otherwise
\end{cases}$

$k \gets \frac{3 \beta - 2}{2}$

$x \gets $ solution to $\omega\left(1 - \frac{\beta - 2}{\beta + 2} x, 1 - x, 1 - \frac{\beta - 4}{\beta + 2} x\right) = 1 + \frac{4  + 2 \beta}{\beta + 2} x$

$s_{k - i} \gets n^{i \cdot \frac{x}{\floor{k/3} + 2}}$ for all $1 \leq i \leq \floor{k/3} + 2$

$k_0 \gets k - (\floor{k/3} + 2) = k - \floor{k/3} - 2$

$(D_{k_0}, D_{k_0 + 1}, \dotsc, D_k), (E_{k_0}, E_{k_0 + 1}, \dotsc, E_k), E^* \gets \decompose(G, (s_{k_0}, s_{k_0 + 1}, \dotsc, s_{k - 1}))$

\textcolor{blue}{Phase 1: Estimate Distances on High-Degree Paths}

$t_{i} \gets \frac{n}{2^i}$ for all $1 \leq i \leq l - 1 = \ceil{(1 - x) \log n}$

$(C_{1}, C_{2}, \dotsc, C_{l}), (F_{1}, F_{2}, \dotsc, F_{l}), F^* \gets \decompose(G, (t_{1}, t_{2}, \dotsc, t_{l - 1}))$

\For{$0 \leq i \leq l$}{
    \For{$w \in C_i$}{
        $X_i(w) \gets \bfs(w)$ on $G_i = \left(V, F_{i}\right)$
        
        $\hat{\delta}(w, v) \gets \min(\hat{\delta}, X_i(w, v))$ for all $v \in V$
    }
    Construct $|D_{k_0 + 2}| \times |C_i|$ matrix $A_i$ where $A_i(v, w) = X_i(w, v)$
    
    Construct $|C_i| \times (2n - 1)$ Euler tour distance matrix $B_i'$ of $C_i$ on $G_i$
    
    $B_i \gets \bdToMonotone(B_i')[C_i, D_{k_0 + 3}]$
    
    $\hat{\delta} \gets \min(\hat{\delta}(w, v), \monotoneMinPlus(A_i, B_i))$
    \label{line:additive-apasp:monotone-min-plus}
}

\textcolor{blue}{Phase 2: Estimate Distances on Low-Degree Paths}

\For{$k_0 + 1 \leq j \leq k$}{
    \For{$w \in D_{j}$}{
        $G_{j, w} \gets (V, E_j \cup (u \times V) \cup E^*)$
        
        $\hat{\delta} \gets \dijkstra(G_{j, w}, w, \hat{\delta})$
    }
}

\caption{$\additiveAPASP(G, \beta)$} 
\label{alg:k-additive-apasp}
\end{algorithm}
\DecMargin{1em}

\subsubsection*{Warm-Up: \texorpdfstring{$\beta = 4$}{beta = 4}}

We begin by showing a proof of correctness for the case of $\beta = 4$.
The somewhat simplified argument in this case will be generalized to arbitrary $\beta \geq 6$ below.

\begin{lemma}
    \label{lemma:monotone-4-approximation}
    Algorithm $\additiveAPASP(G, 4)$ returns an estimate $\hat{\delta}$ such that $\delta(u, v) \leq \hat{\delta}(u, v) \leq \delta(u, v) + 4$ for all $u, v$.
\end{lemma}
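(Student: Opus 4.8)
## Proof Plan for Lemma \ref{lemma:monotone-4-approximation}

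The plan is to verify the two inequalities $\delta(u,v) \le \hat\delta(u,v)$ and $\hat\delta(u,v) \le \delta(u,v) + 4$ separately. The first (feasibility) is routine: every value written into $\hat\delta$ is either the adjacency matrix entry, a genuine BFS distance $X_i(w,v) = \delta(G_i, w, v) \ge \delta(u,v)$ on a subgraph $G_i \subset G$, a \monotoneMinPlus\ combination of two such genuine distance estimates (hence realized by an actual walk in $G$), or a \dijkstra\ distance on a graph $G_{j,w}$ whose edge weights are themselves previously-computed valid over-estimates. So feasibility follows by the same bookkeeping used in Lemma \ref{lemma:2-approx-feasible} and Lemma \ref{lemma:dominating-apasp-feasible}.

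For the upper bound, fix a shortest path $P$ between $u,v$. With $\beta = 4$ we have $k = \frac{3\cdot 4 - 2}{2} = 5$, $\floor{k/3} = 1$, so $k_0 = 5 - 1 - 2 = 2$, and the degree decomposition of $G$ has levels $D_{2} \subset D_3 \subset D_4 \subset D_5 = V$ with thresholds $s_2 = n^{x}, s_3 = n^{2x/3}, s_4 = n^{x/3}$. The case split is: (i) if $P \subset E_{k_0 + 1} = E_3$, then all vertices of $P$ have degree below $s_2$ and the \dijkstra\ passes in Phase 2 (levels $j = 3,4,5$) reproduce exactly the $\sparseAPASP(G,4)$ analysis — invoke \Cref{lemma:block-approx-error-1} (or its specialization \Cref{lemma:sparse-apasp-approx}) to conclude $|L_B(P)| \le 2$ and hence $\hat\delta(u,v) \le \delta(u,v) + 4$. (ii) Otherwise $P \subset E_j$ but $P \not\subset E_{j+1}$ for some $j \in \{1,2\}$; let $z \in P$ be the maximum-degree vertex, so $\deg(z) \ge s_{k_0} = s_2 = n^x$. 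Let $y$ be the vertex of $P$ closest to $v$ with $P_{u,y} \subset E_5$; then $\level(y) \le 4$, so $y^* = r(y, D_4) \in D_4$ exists, and $z^* = r(z, D_{\level(z)})$ lies in $D_{\level(z)} \subset D_{k_0 + 2} = D_2$.

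The heart of the argument — and the step I expect to be the main obstacle — is showing the \minplus\ phase delivers $\hat\delta(v, y^*) \le \delta(v,y) + 3$ so that the path $(v, y^*, y)\circ P_{y,u}$ found by \dijkstra\ on $G_{5,v}$ gives a $+4$ estimate. For this I would: let $i$ be the index with $\frac{n}{2^i} \le \deg(z) < \frac{n}{2^{i-1}}$, so $z^* = r(z, C_i)$ and $P \subset F_i$; since $P \subset E_{\level(z)}$, the BFS executions from $C_i$ on $(V, F_i)$ give $X_i(z^*, v) \le \delta(z, v) + 1$ and $X_i(z^*, y^*) \le \delta(z, y) + 2$ (the extra hops paying for the dominator edges). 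These are recorded in the rows/columns of $A_i$ and $B_i'$; I must check that the \bdToMonotone\ transformation and the restriction to columns indexed by $D_{k_0+3} = D_5 = V$ preserve these values up to the shift that \monotoneMinPlus\ accounts for, using \Cref{thm:rect-monotone-min-plus}. The \minplus\ product over the intermediate index $w \in D_{k_0+2} = D_2 \ni z^*$ then yields
\begin{equation*}
\hat\delta(v, y^*) \le X_i(z^*, v) + X_i(z^*, y^*) \le \delta(z,v) + \delta(z,y) + 3 \le \delta(v, y) + 3,
\end{equation*}
the last step by the triangle inequality since $z$ lies on $P$. The delicate points are: confirming that when $y = v$ or $z = v$ the bound degenerates gracefully (here one just reads off the BFS value directly); confirming $y^* \in D_{k_0+3}$ really is among the retained columns after \bdToMonotone; and making sure the Euler-tour reindexing in the definition of $B_i'$ does not disturb which entry corresponds to the pair $(z^*, y^*)$. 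Once $\hat\delta(v,y^*) \le \delta(v,y)+3$ is in hand, the final \dijkstra\ on $G_{5,v} = (V, E_5 \cup (v \times V) \cup E^*)$ traverses $(v, y^*)$ (weight $\le \delta(v,y)+3$), then $(y^*, y) \in E^*$ (weight $1$), then $P_{y,u} \subset E_5$ (weight $\delta(y,u)$), for a total of $\delta(u,v) + 4$, completing the proof.
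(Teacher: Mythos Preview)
Your overall architecture matches the paper's proof: split on whether $P \subset E_{k_0+1} = E_3$, use \Cref{lemma:sparse-apasp-approx} in the easy case, and in the hard case route through the max-degree vertex $z$ via the \monotoneMinPlus\ product to get a $+3$ estimate to a dominator $y^*$, then finish with one Phase~2 \dijkstra. Two issues, one cosmetic and one substantive.

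\textbf{Cosmetic.} With $k_0 = 2$ you have $D_{k_0+2} = D_4$, not $D_2$; and the inner dimension of the product $A_i * B_i$ is $C_i$, not $D_{k_0+2}$. The product is $D_{k_0+2} \times D_{k_0+3} = D_4 \times V$ with intermediate index $C_i \ni z^*$, so what the product actually delivers is $\hat\delta(y^*, \cdot)$ for $y^* \in D_4$, $\cdot \in V$.

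\textbf{Substantive.} Your choice of $y$ and the direction of the final \dijkstra\ are mismatched, and this breaks the ``triangle inequality'' step. You take $y$ to be the vertex closest to $v$ with $P_{u,y} \subset E_5$ and then need $\delta(z,v) + \delta(z,y) \le \delta(v,y)$. But that inequality is an \emph{equality} only when $z$ lies on the subpath $P_{v,y}$, and your definition of $y$ does not force this. Concretely, take $P = (u, z, a, y, y', v)$ with $\deg(u) = \deg(a) = \deg(v) = 1$, $\deg(z) = s_2 + 1$, and $\deg(y) = \deg(y') = s_2$. Then $(u,z),(z,a),(a,y) \in E_5$ (each has a degree-$1$ endpoint) while $(y,y') \notin E_5$, so your $y$ is correctly identified; yet $z$ sits on the $u$-side of $y$, and $\delta(z,v)+\delta(z,y) = 4+2 = 6 > 2 = \delta(v,y)$. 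Your bound then gives only $\hat\delta(v,y^*) \le 9$, and the final \dijkstra\ from $v$ yields at best $13$, not $\delta(u,v)+4 = 9$.

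The paper avoids this by defining $y$ as the vertex of $P$ closest to $v$ with $\deg(y) \ge s_4$; this forces $P_{y,v} \subset E_5$ and, crucially, guarantees $z \in P_{u,y}$ (since $\deg(z) \ge s_2 > s_4$). Then the product gives $\hat\delta(u, y^*) \le \delta(u,z) + \delta(z,y) + 3 = \delta(u,y) + 3$, and the final \dijkstra\ is run from $u$, traversing $(u,y^*),(y^*,y)$ and then $P_{y,v} \subset E_5$. Your argument becomes correct once you swap the roles of $u$ and $v$ consistently (or adopt the paper's definition of $y$).
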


\begin{proof}
    Let $P$ be a shortest path between $u, v$.
    Note that $k = 5, k_0 = 2$.
    
    Suppose $P \subset E_3 = E_{k_0 + 1}$. 
    Then, $L(P, 5) \subset \set{3, 4}$ and $|L(P, 5)| \leq 2$.
    By Lemma \ref{lemma:sparse-apasp-approx}, since $D_5 = D_k = V$, we obtain a $+4$ approximation in Phase 2.

    Thus, $P$ has some vertex with degree at least $s_2 = n^{x}$.
    Let $y$ denote the vertex in $P$ with degree at least $s_4$ closest to $v$.
    Let $z$ be a vertex of maximum degree on $P$ and $i^*$ be the integer $1 \leq i \leq l$ such that $\frac{n}{2^{i^*}} \leq \deg(z) < \frac{n}{2^{i^* - 1}}$.
    Since $P \subset F_{i^*}$, we have in the $i^*$-th iteration of Phase 1 that $X_{i^*}(z^*, u) \leq \delta(z, u) + 1$ and $X_{i^*}(z^*, y^*) \leq \delta(z, y) + 2$ where $z^* = r(z, C_{i^*})$, $y^* = r(y, D_{4})$.
    Therefore, after Line \ref{line:additive-apasp:monotone-min-plus},
    \begin{equation*}
        \hat{\delta}_{i^*}(u, y^*) \leq X_{i^*}(u, z^*) + X_{i^*}(z^*, y^*) \leq \delta(u, z) + \delta(z, y) + 3 = \delta(u, y) + 3
    \end{equation*}
    
    Now, in the $5$-th iteration in Phase 2, when we execute Dijkstra from $u$, we have an edge $(u, y^*)$ with weight at most $\delta(u, y) + 3$, an edge $(y^*, y) \in E$ and the remaining edges in $E_5 = E_k$, so that,
    \begin{equation*}
        \hat{\delta}(u, v) \leq \hat{\delta}_{i^*}(u, y^*) + 1 + \delta(y, v) \leq \delta(u, y) + \delta(y, v) + 4 \leq \delta(u, v) + 4
    \end{equation*}
\end{proof}

\subsubsection*{General Case: \texorpdfstring{$\beta \geq 6$}{b >= 6}}

We now present the proof of Theorem \ref{thm:gen-additive-apasp}.

\begin{proof}
    (Correctness)
    Let $u, v \in V$ be a pair of vertices and $P$ be a shortest path between $u, v$.
    Let $\deg(P) = \max_{v \in P} \deg(v)$ be the maximum degree of any vertex on $P$.
    
    Note by the definition of $k \gets \frac{3 \beta - 2}{2}$ and $\beta$ is an even integer,
    \begin{equation*}
        2 \left( \floor{k/3} + 1 \right) = 2 \left( \floor{\frac{\beta}{2} - \frac{1}{3}} + 1 \right) \leq \beta
    \end{equation*}
    By Lemma \ref{lemma:additive-feasible-estimates}, we only need to show $\hat{\delta}(u, v) \leq \delta(u, v) + 2(\floor{k/3} + 1)$.
    
    Recall that $V_{s} = \set{v \in V \given \deg(v) \geq s}$ from Definition \ref{def:degree-threshold}. 
    
    \paragraph{Case 1: $\deg(P) < s_{k_0}$}

    In this case, we have $P \subset E_{k_0 + 1}$, and we may follow the proof of $\sparseAPASP$ by Dor, Halperin, and Zwick \cite{dor2000apasp}.
    By assumption $L(P, k) \subset \set{k_0 + 1, k_0 + 2, \dotsc, k - 1}$ so that $|L(P, k)| \leq \floor{k/3} + 1$.
    From Lemma \ref{lemma:sparse-apasp-approx}, it immediately follows that $\hat{\delta}(u, v) \leq 2 (\floor{k/3} + 1)$.

    \paragraph{Case 2: $\deg(P) \geq s_{k_0}$}

    By assumption, there is a vertex $w \in P$ with $\deg(w) \geq s_{k_0 + 3}$.
    Then, with Lemma \ref{lemma:k0+3-approx-monotone}, we have that for all $x \in P$,
    \begin{equation*}
        \hat{\delta}_{k_0 + 3}(w^*, x) \leq \delta(w, x) + 5
    \end{equation*}
    
    Then, following the same argument as the inductive step of Lemma \ref{lemma:block-approx-error}, consider any vertex $w \in P$ such that $\level(w) \geq j$ for $j > k_0 + 3$.
    We claim the following for all $j$,
    \begin{equation*}
        \hat{\delta}_{j}(w^*, x) \leq \delta(w, x) + 2 (j - (k_0 + 1)) + 1
    \end{equation*}
    where above we have shown the base case for $j = k_0 + 3$.
    We now proceed by induction for $j > k_0 + 3$.
    Consider an execution of $\dijkstra$ from $w^*$ in $G_{j, w}$.
    For any vertex $x \in P$, let $y$ be the last vertex on the sub-path $P_{w, x}$ with $\level(y) \leq j - 1$.
    
    If no such vertex exists, then $P_{w, x} \subset E_j$ and we compute an exact distance from $w^*$ so that $\hat{\delta}(w^*, x) \leq \delta(w, x) + 1$.
    
    In particular, $\deg(y) \geq s_{j - 1}$ so let $y^* = r(y, D_{j - 1})$.
    We take the edges $(w^*, y^*)$, $(y^*, y) \in E^*$, and the remaining edges in $E_j$.
    By induction, we have,
    \begin{align*}
        \hat{\delta}_{j}(w^*, x) &\leq \hat{\delta}_{j - 1}(w^*, y^*) + 1 + \delta(y, x) \\
        &\leq \delta(w, y) +  2 ((j - 1) - (k_0 + 1)) + 3 + \delta(y, x) \\
        &\leq \delta(w, x) + 2 (j - (k_0 + 1)) + 1
    \end{align*}
    
    Then, on the final iteration, let $w$ be the closest vertex to $v$ on $P$ such that $\deg(w) \geq s_{k - 1}$.
    From $u$, we take the edge $(u, w^*)$, followed by $(w, w^*) \in E^*$ and the remaining path in $E_k$, so that,
    \begin{align*}
        \hat{\delta}(u, v) &\leq \hat{\delta}_{k - 1}(u, w^*) + 1 + \delta(w, v) \\
        &\leq \delta(u, w) + 2 (k - 1 - (k_0 + 1)) + 2 + \delta(w, v) \\
        &\leq \delta(u, v) + 2 (\floor{k/3} + 1)
    \end{align*}
    completing the proof.
\end{proof}

\begin{proof}
    (Time Complexity)

    Recall that $k = O(1)$ is a constant.
    
    Phase 0 requires $O(m)$ time, as $\decompose$ requires $O(k(m + n))$-time.

    We examine Phase 1. 
    Constructing the Euler Tours and $\bfs$ trees require $\tO{n^2}$-time over all iterations, as $|C_i||F_i| = \tO{n^2}$.
    The expensive step therefore is the \minplus product, which we compute with an invocation to $\monotoneMinPlus$.
    By Theorem \ref{thm:rect-monotone-min-plus}, this requires,
    \begin{equation*}
        \bigtO{n^{\frac{(a + b + \mu) + \omega(a, b, c)}{2}}}
    \end{equation*}

    where we apply
    \begin{align*}
        a &= 1 - \frac{\floor{k/3}}{\floor{k/3} + 2} x \\
        b &= 1 - x \\
        c &= 1 - \frac{\floor{k/3} - 1}{\floor{k/3} + 2} x \\
        \mu &= 1
    \end{align*}
    since all distances in $G$ are bounded by $n$.

    In Phase 2, each $|D_{j}| = \bigtO{n^{1 - (k - j) \frac{x}{\floor{k/3} + 2}}}$ and $|E_j| \leq \bigO{n^{1 + (k - j + 1) \frac{x}{\floor{k/3} + 2}}}$.
    Therefore, each invocation of $\dijkstra$ in Phase 2 requires time $\bigtO{n^{2 + \frac{x}{\floor{k/3} + 2}}}$.
    
    Finally, we balance terms to optimize,
    \begin{equation*}
        \frac{\left(3 - \frac{2 \floor{k/3} + 2}{\floor{k/3} + 2} x\right) + \omega\left(1 - \frac{\floor{k/3}}{\floor{k/3} + 2} x, 1 - x, 1 - \frac{\floor{k/3} - 1}{\floor{k/3} + 2} x\right)}{2} = 2 + \frac{x}{\floor{k/3} + 2}
    \end{equation*}

    To get the exact expression, we plug in $k = \frac{3 \beta - 2}{2}$ and in particular $\floor{k/3} = \frac{\beta}{2} - 1$.

    \begin{align*}
        \frac{\left(3 - \frac{2 \beta}{\beta + 2} x\right) + \omega\left(1 - \frac{\beta - 2}{\beta + 2} x, 1 - x, 1 - \frac{\beta - 4}{\beta + 2} x\right)}{2} &= 2 + \frac{2x}{\beta + 2} \\
        \left(3 - \frac{2 \beta}{\beta + 2} x\right) + \omega\left(1 - \frac{\beta - 2}{\beta + 2} x, 1 - x, 1 - \frac{\beta - 4}{\beta + 2} x\right) &= 4 + \frac{4x}{\beta + 2} \\
         \omega\left(1 - \frac{\beta - 2}{\beta + 2} x, 1 - x, 1 - \frac{\beta - 4}{\beta + 2} x\right) &= 1 + \frac{4 + 2 \beta}{\beta + 2}x
    \end{align*}
    \Cref{tbl:k-additive-apasp} exhibits the running times for a few values of $\beta$ utilizing \cite{Complexity}.
\end{proof}

In the following lemma, we show that every distance estimate produced is feasible.
That is, each distance estimate can be attained by some path in $G$.

\begin{lemma}
    \label{lemma:additive-feasible-estimates}
    Algorithm \ref{alg:k-additive-apasp} returns $\hat{\delta}$ such that $\delta(u, v) \leq \hat{\delta}(u, v)$ for all $u, v \in V$.
\end{lemma}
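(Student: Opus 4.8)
The plan is to maintain throughout the execution the invariant that every finite entry of $\hat{\delta}$ is the length of some walk in $G$ between the corresponding pair of vertices. Since $\delta(u, v)$ is a lower bound on the length of any walk from $u$ to $v$, this invariant immediately yields $\delta(u, v) \leq \hat{\delta}(u, v)$ at termination. I would prove the invariant by induction on the sequence of operations performed by \Cref{alg:k-additive-apasp}, observing that every write to $\hat{\delta}$ has the form $\hat{\delta} \gets \min(\hat{\delta}, \cdot)$, so it suffices to check that each value fed into such a minimum is either $\infty$ or realized by a walk in $G$.

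For the base case, the initialization sets $\hat{\delta}(u, v) = 1$ exactly when $(u, v) \in E$, which is the length of the one-edge walk, and $\infty$ otherwise. For Phase 1, fix a level $i$ and $w \in C_i$. Since $F_i \subseteq E$ by \Cref{lemma:degree-decomposition}, the call $\bfs(w)$ on $G_i = (V, F_i)$ returns $X_i(w, v) = \delta(G_i, w, v)$, which is realized by a walk in $G_i \subseteq G$; hence the update $\hat{\delta}(w, v) \gets \min(\hat{\delta}, X_i(w, v))$ preserves the invariant. For the $\monotoneMinPlus(A_i, B_i)$ step, I would argue that, after undoing the column offsets introduced by $\bdToMonotone$ and the Euler-tour reindexing (which is precisely what that construction is designed to accommodate), the $(v, y)$ entry of the returned product equals $\min_{w \in C_i} \bigl( X_i(w, v) + \delta(G_i, w, y) \bigr)$. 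Each summand is a walk length in $G$, and concatenating a $v$–$w$ walk with a $w$–$y$ walk yields a $v$–$y$ walk of the summed length, so every finite entry of the product is again feasible.

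For Phase 2, I would handle the call $\dijkstra(G_{j, w}, w, \hat{\delta})$ with $G_{j, w} = (V, E_j \cup (w \times V) \cup E^*)$. The edges of $E_j$ and $E^*$ are edges of $G$ of weight $1$, hence feasible; every edge $(w, v) \in w \times V$ carries weight equal to the current value $\hat{\delta}(w, v)$, which by the induction hypothesis is either $\infty$ or a walk length in $G$. Therefore any shortest path in $G_{j, w}$ from $w$ to a vertex $x$ produced by $\dijkstra$ can be converted into a walk in $G$ of the same total length, by replacing each traversed edge $(w, v) \in w \times V$ with the corresponding $w$–$v$ walk in $G$ and keeping every other edge unchanged; thus the value $\hat{\delta}(w, x)$ written by $\dijkstra$ is feasible. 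Since the loops over $i$ and over $w$ only compose feasibility-preserving updates, the invariant holds at termination, which gives the claim.

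The main obstacle is the $\monotoneMinPlus$ step: one must verify carefully that the Euler-tour distance matrix together with the additive column shift applied by $\bdToMonotone$ is accounted for, so that the value ultimately compared against $\hat{\delta}$ is genuinely $\min_{w \in C_i}\bigl(X_i(w, v) + \delta(G_i, w, y)\bigr)$. For feasibility an overestimate would be harmless, but an underestimate would break the lemma, so the bookkeeping of the shifts is the one point requiring care; everything else reduces to the routine facts that the minimum of feasible estimates is feasible and that the concatenation of two walks is a walk.
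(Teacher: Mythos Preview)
Your proposal is correct and follows essentially the same approach as the paper: both argue that every distance estimate written to $\hat{\delta}$ is realized by some path (walk) in $G$, treating the initialization, the Phase~1 $\bfs$ and \minplus product, and the Phase~2 $\dijkstra$ calls in turn. Your version is more carefully fleshed out---in particular you explicitly flag and address the $\bdToMonotone$ column-shift bookkeeping, which the paper's proof glosses over entirely---but the underlying idea is identical.
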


\begin{proof}
    This holds simply by observing that every distance estimate is produced by some path in the original graph $G$.

    Every estimate in Phase 1 found by concatenating two paths in $G_i$, a subgraph of $G$.

    In Phase 2, the only edges added are of the form $u \times V$.
    However, the weights of these edges, if finite, are computed by a previous step as the length of some path from $u$ to the relevant vertex $w \in V$.
\end{proof}

We present the key lemma for correctness below.
In particular, we prove that we have a good additive approximation after the $(k_0 + 3)$-th iteration.

\begin{lemma}
    \label{lemma:k0+3-approx-monotone}
    Let $P$ be a shortest path between $u, v$.
    In Phase 2 of Algorithm \ref{alg:k-additive-apasp}, let $w$ be a vertex of $P$ with $\deg(w) \geq s_{k_0 + 3}$.
    Then, for every vertex $x \in P$, the $(k_0 + 3)$-th iteration of Phase 2 finds a path of length $\delta(w, x) + 5$ from $w^*$ to $x$ where $w^* = r(w, D_{k_0 + 3})$.
\end{lemma}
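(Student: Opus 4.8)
The plan is to fix any $x\in P$ and split on whether the sub-path $P_{w,x}$ stays inside $E_{k_0+3}$. If $P_{w,x}\subseteq E_{k_0+3}$, the statement is almost immediate: $\deg(w)\ge s_{k_0+3}$ gives $w\in V_{k_0+3}$, so by the guarantee on $E^*$ from $\decompose$ the edge $(w,w^*)$ with $w^*=r(w,D_{k_0+3})$ lies in $E^*$; then the $(k_0+3)$-th call to $\dijkstra$ from $w^*$ on $G_{k_0+3,w^*}=(V,E_{k_0+3}\cup(w^*\times V)\cup E^*)$ traverses $(w^*,w)\circ P_{w,x}$ and returns an estimate $\le 1+\delta(w,x)\le\delta(w,x)+5$.

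Otherwise $P_{w,x}\not\subseteq E_{k_0+3}$. Let $b$ be the vertex on $P_{w,x}$ closest to $w$ with $P_{b,x}\subseteq E_{k_0+3}$ (a blocking vertex, in the sense of \Cref{def:blocking-vertices}); it exists and $b\neq w$. By minimality of $b$ the $P$-edge entering $b$ from the $w$-side is not in $E_{k_0+3}$, so both its endpoints, in particular $b$, have degree $\ge s_{k_0+2}$; hence $b\in V_{k_0+2}$, and for $b^*:=r(b,D_{k_0+2})\in D_{k_0+2}\subseteq D_{k_0+3}$ we have $(b,b^*)\in E^*$. I will establish that after Phase 1 the estimate satisfies $\hat\delta(w^*,b^*)\le\delta(w,b)+4$. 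Granting this, the $(k_0+3)$-th $\dijkstra$ call from $w^*$ may use the edge $(w^*,b^*)\in w^*\times V$ of weight $\le\delta(w,b)+4$, then the edge $(b^*,b)\in E^*$, then the path $P_{b,x}\subseteq E_{k_0+3}$, giving $\hat\delta_{k_0+3}(w^*,x)\le\delta(w,b)+4+1+\delta(b,x)=\delta(w,x)+5$, where the last equality uses that $b$ lies on the shortest $w$–$x$ path $P_{w,x}$.

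The crux is the Phase-1 bound $\hat\delta(w^*,b^*)\le\delta(w,b)+4$. Let $z$ be a maximum-degree vertex on $P_{w,b}$, so $\deg(z)\ge\deg(w)\ge s_{k_0+3}$. Pick the Phase-1 level $i^*$: if $\deg(z)\ge t_{l-1}$, let $i^*$ be the unique index with $t_{i^*}\le\deg(z)<t_{i^*-1}$ and $z^*:=r(z,C_{i^*})\in C_{i^*}$; otherwise set $i^*:=l$ and $z^*:=z$, using that $C_l=V$. In every case each edge of $P_{w,b}$, as well as $(z^*,z)$, $(w,w^*)$ and $(b,b^*)$, has an endpoint of degree at most $\deg(z)<t_{i^*-1}$, hence lies in $F_{i^*}$; thus the $\bfs$ from $z^*$ on $G_{i^*}=(V,F_{i^*})$ computed in Phase 1 records $X_{i^*}(z^*,b^*)\le\delta(z,b)+2$ and $X_{i^*}(z^*,w^*)\le\delta(z,w)+2$ (a hop $z^*\to z$, the sub-path, a hop to the representative; when $i^*=l$ the first hop is free). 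Since $z\in C_{i^*}$, the entry of $\monotoneMinPlus(A_{i^*},B_{i^*})$ indexed by $(b^*,w^*)\in D_{k_0+2}\times D_{k_0+3}$ — once the column shifts introduced by $\bdToMonotone$ are undone — is at most $X_{i^*}(z^*,b^*)+\delta(G_{i^*},z^*,w^*)\le(\delta(z,b)+2)+(\delta(z,w)+2)=\delta(w,b)+4$, the final equality because $P_{w,b}$ is a shortest path and $z$ lies on it, so $\delta(w,z)+\delta(z,b)=\delta(w,b)$. Symmetry of the estimate matrix then yields $\hat\delta(w^*,b^*)\le\delta(w,b)+4$.

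The step I expect to require the most care is verifying that the Euler-tour/$\bdToMonotone$ encoding feeds $\monotoneMinPlus$ exactly the quantity $\min_{z'\in C_{i^*}}X_{i^*}(z',b^*)+\delta(G_{i^*},z',w^*)$: one must fix a single Euler-tour column per vertex of $D_{k_0+3}$, check the restricted matrix is still row-monotone, and subtract back the added column indices. Secondary bookkeeping includes the cases where $b$ or $w$ already lies in $D_{k_0+2}$ or $D_{k_0+3}$ (a representative hop is saved, which only helps) and the degenerate cases $x\in\{w,b\}$; and one assumes $\beta\ge 6$ so that $k_0+3\le k-1$ and $D_{k_0+3}$ is a genuine dominating level, the boundary case $\beta=4$ being handled separately in \Cref{lemma:monotone-4-approximation}.
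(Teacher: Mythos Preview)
Your case split has a genuine gap. In Case B you rely on the Phase-1 \minplus product to bound $\hat\delta(w^*,b^*)\le\delta(w,b)+4$, but this only works when the maximum-degree vertex $z$ on $P_{w,b}$ has $\deg(z)\gtrsim s_{k_0}=n^x$. The Phase-1 thresholds are $t_i=n/2^i$ with $t_{l-1}\in[n^x/2,\,n^x]$, so the small dominating sets $C_1,\dotsc,C_{l-1}$ only cover vertices of degree at least $t_{l-1}$. Your fallback ``$i^*:=l$, $z^*:=z\in C_l=V$'' is not available within the algorithm's time budget: the paper's running-time analysis explicitly bounds $|C_i|=\tO{n^{1-x}}$, and running $\bfs$ from every vertex on $F_l$ (which has $\Theta(n\,t_{l-1})$ edges) would cost $\tO{n^{2+x}}$, far exceeding the target $\tO{n^{2+2x/(\beta+2)}}$. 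So in the intermediate regime where $P_{w,x}\not\subseteq E_{k_0+3}$ yet $\deg(P_{w,x})\le s_{k_0}$ --- which does occur, since your only lower bound on $\deg(z)$ is $s_{k_0+2}\ll s_{k_0}$ --- Phase 1 gives no usable estimate on $\hat\delta(w^*,b^*)$.

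The paper's proof avoids this by splitting on $\deg(P_{w,x})$ versus $s_{k_0}$ rather than on membership in $E_{k_0+3}$. When $\deg(P_{w,x})\le s_{k_0}$ the sub-path lies in $E_{k_0+1}$, so the levels appearing below $\level(w)\le k_0+3$ are contained in $\{k_0+1,k_0+2\}$; then \Cref{lemma:sparse-apasp-approx}, applied purely to the Phase-2 $\dijkstra$ calls with no Phase-1 input, already yields $\hat\delta_{k_0+3}(w^*,x)\le\delta(w,x)+2\cdot 2+1=\delta(w,x)+5$. Only when $\deg(P_{w,x})>s_{k_0}$ does the paper invoke the \minplus product, and there the high-degree witness is guaranteed to be hit by some $C_{i^*}$ with $i^*\le l-1$. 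Your argument is essentially correct (and matches the paper) in that high-degree case; what is missing is the combinatorial handling of the intermediate-degree case via \Cref{lemma:sparse-apasp-approx}.
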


\begin{proof}
    Let $P_{w, x}$ denote the sub-path of $P$ between $w$ and $x$.

    \paragraph{Case 1: $\deg(P_{w, x}) \leq s_{k_0}$}

    Consider the path $P_{w, x}$.
    By Lemma \ref{lemma:sparse-apasp-approx}, since $\deg(w) \geq s_{k_0 + 3}$, we have that, $L(P_{w, x}) \subset \set{k_0 + 1, k_0 + 2}$ so that,
    \begin{equation*}
        \hat{\delta}_{k_0 + 3}(w^*, x) \leq \delta(w, x) + 5
    \end{equation*}

    \paragraph{Case 2: $\deg(P_{w, x}) > s_{k_0}$}

    We now arrive at the key case depending on Phase 1.
    Let $y$ denote the vertex in $P_{w, x}$ closest to $x$ with degree $\deg(y) \geq s_{k_0 + 2}$ and $z$ be a vertex of maximum degree on $P_{w, x}$.
    Then, let $i^*$ be the integer $1 \leq i \leq l$ such that $\frac{n}{2^{i^*}} \leq \deg(z) < \frac{n}{2^{i^* - 1}}$.
    Let $y^* = r(y, D_{k_0 + 2})$ and $z^* = r(z, C_{i^*})$.
    
    In particular, since $P \subset F_{i^*}$ we have in the $i^*$ iteration of Phase 1 that $X_{i^*}(z^*, w^*) \leq \delta(z, w) + 2$ and $X_{i^*}(z^*, y^*) \leq \delta(z, y) + 2$.
    Now, since $B_i'$ is an Euler tour distance matrix, it is therefore a $1$-bounded difference matrix by the triangle inequality.
    Then, the procedure $\bdToMonotone$ creates a row-monotone matrix keeping values bounded in $O(n)$.
    Finally, $B_i$ is a row-monotone matrix as the sub-matrix of a row-monotone matrix.
    Thus, we can use $\monotoneMinPlus(A_{i^*}, B_{i^*})$ to compute,
    \begin{equation*}
        \hat{\delta}_{i^*}(w^*, y^*) \leq X_{i^*}(w^*, z^*) + X_{i^*}(z^*, y^*) \leq \delta(w, z) + \delta(z, y) + 4 = \delta(w, y) + 4
    \end{equation*}
    
    Now, in the $(k_0 + 3)$-th iteration in Phase 2, when we execute $\dijkstra$ from $w^*$, we have an edge to $y^*$ with weight at most $\delta(w, y) + 4$.
    If $\deg(x) > s_{k_0 + 2}$, and therefore $y = x$, we take an extra edge $(x^*, x) \in E^*$ to find a path of length at most $\delta(w, y) + 4 + 1 = \delta(w, x) + 5$.
    Otherwise, after using the edge $(y^*, y) \in E^*$, the remaining edges of $P_{w, x}$ are in $E_{k_0 + 3}$ and we can conclude,
    \begin{equation*}
        \hat{\delta}_{k_0 + 3}(w^*, x) \leq \hat{\delta}_{i^*}(w^*, y^*) + 1 + \delta(y, x) \leq \delta(w, y) + \delta(y, x) + 5 = \delta(w, x) + 5
    \end{equation*}
\end{proof}

\subsection{Additive Approximations for Constant Length Paths}
\label{sec:additive-approx-constant}

In this section, we give a simpler algorithm that uses fast matrix multiplication to obtain fast algorithms for general additive approximations on small distances.
We employ this result in the proof of Theorem \ref{thm:2-approx-apsp}.

\begin{theorem}
    \label{thm:gen-bounded-additive-apasp}
    Let $\beta \geq 4$ be an even integer and $C > 0$ be a constant.
    Let $G$ be an undirected, unweighted graph with $n$ vertices.
    Algorithm \ref{alg:bounded-additive-apasp} computes $\hat{\delta}$ in time,
    \begin{equation*}
        \bigtO{n^{2 + \frac{2x}{\beta+2}}}
    \end{equation*}
    $x$ is the solution to,
    \begin{equation*}
        \omega \left( 1 - \frac{\beta-2}{\beta+2} x, 1 - x, 1 - \frac{\beta-4}{\beta+2} x \right) = 2 + \frac{2x}{\beta+2}
    \end{equation*}
    such that $\delta(u, v) \leq \hat{\delta}(u, v)$ for all $u, v$ and whenever $\delta(u, v) \leq C$, $\hat{\delta}(u, v) \leq \delta(u, v) + \beta$.
\end{theorem}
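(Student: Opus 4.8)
The plan is to follow the blueprint of \Cref{alg:k-additive-apasp} and the proof of \Cref{thm:gen-additive-apasp}, exploiting the fact that we only need an additive guarantee on pairs at distance at most the constant $C$. I would set $k = \frac{3\beta-2}{2}$ and $k_0 = k - \floor{k/3} - 2$, take degree thresholds $s_{k-i} = n^{i x/(\floor{k/3}+2)}$, and decompose $G$ into the $\floor{k/3}+3$ levels $D_{k_0} \subset \dotsb \subset D_k = V$ via $\decompose$, exactly as before. The simplification is that every $\bfs$ and $\dijkstra$ search is truncated at a constant depth $C' = C + \Theta(1)$, so that all distance estimates written into $\hat\delta$ or into an intermediate matrix are bounded by $O(C) = O(1)$. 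Because of this, in Phase 1 one can build the distance matrices $A_i$ of dimension $|D_{k_0+2}| \times |C_i|$ and $B_i$ of dimension $|C_i| \times |D_{k_0+3}|$ directly from the truncated $\bfs$ trees from $C_i$ on $(V, F_i)$ --- no Euler tour or monotone conversion is needed --- and compute $A_i * B_i$ with the bounded-entry product $\boundedMinPlus$ (\Cref{thm:bounded-min-plus}) in time $\tilde O(C \cdot n^{\omega(a,b,c)})$ rather than the slower monotone product used for \Cref{thm:gen-additive-apasp}.

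For correctness, feasibility $\delta(u,v) \le \hat\delta(u,v)$ is immediate as in \Cref{lemma:additive-feasible-estimates}, since every estimate is realized by a concrete path of $G$. For the upper bound, fix $u,v$ with $\delta(u,v) \le C$ and a shortest $u$--$v$ path $P$, so that all vertices and edges considered below lie on a path of length $\le C$. If $\deg(P) < s_{k_0}$ then $P \subset E_{k_0+1}$, the blocking vertices of $P$ all have level in $\set{k_0+1, \dotsc, k-1}$, hence $|L_B(P)| \le \floor{k/3}+1$, and \Cref{lemma:block-approx-error-1} together with \Cref{lemma:sparse-apasp-approx} gives $\hat\delta(u,v) \le \delta(u,v) + 2(\floor{k/3}+1) = \delta(u,v) + \beta$, the last equality using that $\beta$ is even and $k = \frac{3\beta-2}{2}$. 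Otherwise $\deg(P) \ge s_{k_0}$, so $P$ contains a vertex $w$ with $\deg(w) \ge s_{k_0+3}$; I would establish the exact analogue of \Cref{lemma:k0+3-approx-monotone}, namely that after the $(k_0+3)$-th iteration of Phase 2, $\hat\delta(w^*, x) \le \delta(w,x) + 5$ for every $x \in P$ with $w^* = r(w, D_{k_0+3})$, by repeating that lemma's case split --- the only change being that $\boundedMinPlus$ replaces $\monotoneMinPlus$ and all the intermediate distances $\delta(w,z), \delta(z,y)$ are $\le C$, so truncation at $C'$ is lossless on the relevant matrix entries. Given this base case, I would propagate upward through levels $j > k_0+3$ exactly as in the inductive step of the proof of \Cref{thm:gen-additive-apasp}, obtaining $\hat\delta_j(w^*, x) \le \delta(w,x) + 2(j - (k_0+1)) + 1$ for every $w \in P$ with $\level(w) \ge j$, and on the final iteration conclude $\hat\delta(u,v) \le \delta(u,v) + 2(\floor{k/3}+1) = \delta(u,v) + \beta$.

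For the running time: Phase $0$ costs $O(m)$ by \Cref{lemma:degree-decomposition}. In Phase 1 the truncated $\bfs$ searches cost $\tilde O\big(\sum_i |C_i|\,|F_i|\big) = \tilde O(n^2)$ over the $\tilde O(1)$ sub-levels, and each $\boundedMinPlus$ has matrices of dimensions $|D_{k_0+2}| = \tilde O(n^{1 - \frac{\beta-2}{\beta+2}x})$, $|C_i| = \tilde O(n^{1-x})$, and $|D_{k_0+3}| = \tilde O(n^{1 - \frac{\beta-4}{\beta+2}x})$ --- substituting $k = \frac{3\beta-2}{2}$ and $\floor{k/3} = \frac{\beta}{2}-1$ --- with entries $O(C)$, costing $\tilde O\big( C \cdot n^{\omega(1 - \frac{\beta-2}{\beta+2}x,\, 1-x,\, 1 - \frac{\beta-4}{\beta+2}x)} \big)$. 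In Phase 2 each $\dijkstra$ from $D_j$ on $G_{j,w}$ runs in $\tilde O(|D_j|\,|E_j| + n^2) = \tilde O(n^{2 + x/(\floor{k/3}+2)}) = \tilde O(n^{2 + \frac{2x}{\beta+2}})$, over $O(1)$ levels. Taking $x$ to be the solution of $\omega\big(1 - \frac{\beta-2}{\beta+2}x,\, 1-x,\, 1 - \frac{\beta-4}{\beta+2}x\big) = 2 + \frac{2x}{\beta+2}$ balances the Phase-1 and Phase-2 costs and yields the claimed bound $\tilde O(n^{2 + \frac{2x}{\beta+2}})$.

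The main obstacle is purely the truncation bookkeeping: one must verify that capping every search at depth $C'$ changes no distance estimate used for a pair at distance $\le C$ and keeps every entry of $A_i$, $B_i$ (and every $w \times V$ edge weight) bounded by $O(C)$, since this is exactly what allows $\boundedMinPlus$ to replace $\monotoneMinPlus$ --- and the Euler-tour machinery to be dropped --- while keeping the Phase-1 cost at $\tilde O(n^{\omega(a,b,c)})$. Everything else is a faithful transcription of the arguments already established for $\sparseAPASP$ and \Cref{thm:gen-additive-apasp}.
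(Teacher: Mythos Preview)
Your proposal is correct and follows essentially the same approach as the paper: replace $\monotoneMinPlus$ by $\boundedMinPlus$ (since all relevant entries are $\le C + O(1)$), drop the Euler-tour machinery, and otherwise rerun the correctness argument of \Cref{thm:gen-additive-apasp} via the analogue of \Cref{lemma:k0+3-approx-monotone} and the Phase-2 induction. The paper's version does not truncate the $\bfs$/$\dijkstra$ themselves but instead thresholds the matrix entries at $C+2$; your truncated-search variant is equivalent and the truncation bookkeeping you flag as the main obstacle is indeed the only new verification needed.
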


The above result also implies a more efficient algorithm for approximating distances up to small polynomials.

\begin{corollary}
    Let $\eps > 0$.
    Then, there is an algorithm computing $\hat{\delta}$ in time
    \begin{equation*}
        \bigtO{n^{2 + \frac{2x}{\beta+2}}}
    \end{equation*}
    $x$ is the solution to,
    \begin{equation*}
        \omega \left( 1 - \frac{\beta-2}{\beta+2} x, 1 - x, 1 - \frac{\beta-4}{\beta+2} x \right) + \eps = 2 + \frac{2x}{\beta+2}
    \end{equation*}
    such that $\delta(u, v) \leq \hat{\delta}(u, v)$ for all $u, v$ and whenever $\delta(u, v) \leq n^{\eps}$, $\hat{\delta}(u, v) \leq \delta(u, v) + \beta$.
\end{corollary}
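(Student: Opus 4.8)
The plan is to run \Cref{alg:bounded-additive-apasp} essentially verbatim, but with the length bound parameter $C$ set to $n^{\eps}$ rather than a fixed constant. The only place $C$ influences that algorithm is in the bounded \minplus products of its first phase: the matrices fed to $\boundedMinPlus$ record $\bfs$ distances, and to reconstruct a shortest path of length at most $C$ it suffices to keep entries of magnitude $O(C)$, setting all larger ones to $\infty$. So the first thing I would do is re-check correctness. For any pair $u,v$ with $\delta(u,v) \le n^{\eps}$ and shortest path $P$ between them, every sub-path of $P$ has length at most $n^{\eps}$, so every intermediate distance estimate used along $P$ in the \minplus products (and in the subsequent $\dijkstra$ searches) is at most $\delta(u,v) + O(1) = O(n^{\eps})$; hence truncating the matrix entries at $O(n^{\eps})$ loses nothing, and the correctness analysis of \Cref{thm:gen-bounded-additive-apasp} transfers line for line, giving $\delta(u,v) \le \hat{\delta}(u,v)$ for all $u,v$ and $\hat{\delta}(u,v) \le \delta(u,v) + \beta$ whenever $\delta(u,v) \le n^{\eps}$. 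Paths whose vertices all have degree below $s_{k_0}$ are handled by the $\sparseAPASP$-style second phase exactly as before, and that phase never needed a length bound.

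Next I would redo the running time accounting, where only one term changes. The degree decomposition, together with the aggregate cost of the $\bfs$ searches in the first phase, is still $\tO{n^2}$; with degree thresholds $s_{k-i} = n^{i x/(\floor{k/3}+2)}$ and $k = \tfrac{3\beta-2}{2}$ (so $\floor{k/3}+2 = \tfrac{\beta+2}{2}$), each round of $\dijkstra$ in the second phase still costs $\tO{n^{2 + x/(\floor{k/3}+2)}} = \tO{n^{2 + 2x/(\beta+2)}}$, independent of the edge weight magnitudes. The one change is in the dominant term, the first-phase \minplus products: their matrices now have entries bounded by $O(n^{\eps})$ instead of $O(1)$, so a call to $\boundedMinPlus$ on an $n^{a} \times n^{b}$ by $n^{b} \times n^{c}$ instance costs $\tO{n^{\omega(a,b,c) + \eps}}$, since the standard polynomial-degree encoding of bounded \minplus (a reduction to one rectangular matrix product over polynomials of degree $O(n^{\eps})$) incurs a multiplicative factor equal to the entry bound. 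Here $a = 1 - \tfrac{\beta-2}{\beta+2}x$, $b = 1-x$, and $c = 1 - \tfrac{\beta-4}{\beta+2}x$, exactly as in \Cref{thm:gen-bounded-additive-apasp}.

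Finally I would balance the two dominant terms, i.e.\ solve
\begin{equation*}
\omega\!\left(1 - \tfrac{\beta-2}{\beta+2}x,\ 1-x,\ 1 - \tfrac{\beta-4}{\beta+2}x\right) + \eps \;=\; 2 + \tfrac{2x}{\beta+2}
\end{equation*}
for $x$, which yields the claimed bound $\tO{n^{2 + 2x/(\beta+2)}}$ with every other contribution subsumed. The only quantitative point not already present in \Cref{thm:gen-bounded-additive-apasp} is the claim that bounded \minplus with entry bound $n^{\eps}$ costs $\tO{n^{\omega(a,b,c)+\eps}}$ in the rectangular regime; I expect this to be the one step that needs a careful (though routine) argument, while the rest is a transcription of the proof of \Cref{thm:gen-bounded-additive-apasp} with ``$C$'' replaced throughout by ``$n^{\eps}$''.
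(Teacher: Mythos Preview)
Your proposal is correct and matches the paper's own proof essentially line for line: both observe that the only change from \Cref{thm:gen-bounded-additive-apasp} is that the entry bound in $\boundedMinPlus$ becomes $n^{\eps}$ instead of $O(1)$, which inflates the \minplus product cost from $n^{\omega(a,b,c)}$ to $n^{\omega(a,b,c)+\eps}$, and then rebalance $x$ accordingly. Your write-up is in fact more explicit than the paper's (which omits the correctness re-check entirely), but the approach is identical.
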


\begin{proof}
    The only difference with the proof of Theorem \ref{thm:gen-bounded-additive-apasp} is the setting of $x$ and the complexity analysis.
    For bounded $C$, the computation of the \minplus product requires time,
    \begin{equation*}
        C n^{\omega \left( 1 - \frac{\beta-2}{\beta+2} x, 1 - x, 1 - \frac{\beta-4}{\beta+2} x \right)}
    \end{equation*}

    Now, since the entries are bounded instead by $n^{\eps}$, we bound the computation of the \minplus product by,
    \begin{equation*}
        n^{\omega \left( 1 - \frac{\beta-2}{\beta+2} x, 1 - x, 1 - \frac{\beta-4}{\beta+2} x \right) + \eps}
    \end{equation*}
\end{proof}

We now present a high level overview of Theorem \ref{thm:gen-bounded-additive-apasp}.

\paragraph{High Level Overview}
The algorithm is essentially a simplification of the previous Algorithm \ref{alg:k-additive-apasp}.
Indeed, we can invoke $\boundedMinPlus$ in place of $\monotoneMinPlus$, which is a faster algorithm and does not require the transformation $\bdToMonotone$.

\paragraph{Algorithm}

Algorithm \ref{alg:bounded-additive-apasp} is identical to Algorithm \ref{alg:k-additive-apasp}.
The only modification is the invocation of $\boundedMinPlus$ in place of $\additiveAPASP$, as the entries to the matrices are now bounded by constant $C$.

Below, we exhibit some running times for specific choices of $\beta$, using \cite{Complexity}.

\begin{corollary}
    \label{cor:bounded-additive-examples}
    We obtain the following running times for an additive approximation on distances $\delta(u, v) \leq C$.
    \begin{center}
        \begin{tabular}{ |c|c| } 
             \hline
             $\beta$ & time \\ 
             \hline
             $4$ & $n^{2.09314841}$ \\
             $6$ & $n^{2.05794292}$ \\
             $8$ & $n^{2.04220679}$ \\
             $10$ & $n^{2.03322582}$ \\
             \hline
        \end{tabular}
    \end{center}
\end{corollary}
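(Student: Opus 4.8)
The statement is a direct numerical specialization of Theorem~\ref{thm:gen-bounded-additive-apasp}, so my plan is to treat that theorem as a black box and reduce the claim to solving, for each even $\beta \in \{4,6,8,10\}$, a single equation in the scalar unknown $x$. By the theorem, Algorithm~\ref{alg:bounded-additive-apasp} runs in time $\bigtO{n^{2 + 2x/(\beta+2)}}$ where $x$ is the root of
\begin{equation*}
    \omega\!\left(1 - \frac{\beta - 2}{\beta + 2}\,x,\; 1 - x,\; 1 - \frac{\beta - 4}{\beta + 2}\,x\right) = 2 + \frac{2x}{\beta + 2},
\end{equation*}
and the resulting exponent is exactly $2 + 2x/(\beta+2)$. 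Hence it suffices to (i) show this equation has a unique root $x_\beta \in (0,1)$, (ii) substitute the explicit coefficients for each $\beta$, and (iii) solve numerically using the current best upper bounds on the rectangular matrix multiplication exponent. Nothing combinatorial is left to do at this point; the substantive work is already encapsulated in Theorem~\ref{thm:gen-bounded-additive-apasp}.

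For existence and uniqueness, observe that $x \mapsto \omega\big(1 - \frac{\beta-2}{\beta+2}x,\, 1-x,\, 1 - \frac{\beta-4}{\beta+2}x\big)$ is the composition of $\omega(\cdot,\cdot,\cdot)$ --- which is continuous, convex, and nondecreasing in each coordinate (padding with zeros) --- with three affine functions of $x$ having nonpositive slope (for $\beta \geq 6$ all three slopes are negative; for $\beta = 4$ the third coordinate is the constant $1$). So the left-hand side is continuous and nonincreasing in $x$, while the right-hand side $2 + 2x/(\beta+2)$ is strictly increasing. At $x = 0$ the left-hand side equals $\omega(1,1,1) = \omega \approx 2.3719 > 2$, which exceeds the right-hand side; at $x = 1$ the middle dimension degenerates and the left-hand side equals $\omega\big(\frac{4}{\beta+2},\, 0,\, \frac{6}{\beta+2}\big) = \frac{4}{\beta+2} + \frac{6}{\beta+2} = \frac{10}{\beta+2} < 2$, which is below the right-hand side $2 + \frac{2}{\beta+2}$. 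By the intermediate value theorem the two curves cross, and by monotonicity they cross exactly once, so $x_\beta$ is well defined in $(0,1)$; in particular all three arguments of $\omega$ stay in $(0,1]$, so the expression is meaningful.

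Finally, the numerics. Writing out the coefficients, the equations become $\omega(1 - x/3,\, 1-x,\, 1) = 2 + x/3$ for $\beta = 4$, $\ \omega(1 - x/2,\, 1-x,\, 1 - x/4) = 2 + x/4$ for $\beta = 6$, $\ \omega(1 - 3x/5,\, 1-x,\, 1 - 2x/5) = 2 + x/5$ for $\beta = 8$, and $\omega(1 - 2x/3,\, 1-x,\, 1 - x/2) = 2 + x/6$ for $\beta = 10$. Each is solved by bisection on $x$, evaluating the left-hand side through the known piecewise-linear convex upper bound on $\omega(a,b,c)$ derived from current fast rectangular matrix multiplication, as implemented in the tool of~\cite{Complexity} building on~\cite{duan2022matrixmult, gall2018rectangularmm}; reading off $2 + 2x_\beta/(\beta+2)$ then yields the tabulated exponents $n^{2.09314841}$, $n^{2.05794292}$, $n^{2.04220679}$, $n^{2.03322582}$. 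The main caveat --- more a limitation than an obstacle --- is that $\omega(a,b,c)$ is itself known only up to the current convex-hull estimate, so the numbers in the table are upper bounds that would shrink with future progress on matrix multiplication; modulo that, the proof is a routine one-dimensional root find once Theorem~\ref{thm:gen-bounded-additive-apasp} is in hand.
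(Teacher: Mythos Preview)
Your proposal is correct and matches the paper's approach: the paper simply states that the table is obtained by plugging into Theorem~\ref{thm:gen-bounded-additive-apasp} and solving numerically with the tool of~\cite{Complexity}, which is precisely what you do (your explicit equations and the existence/uniqueness argument are in fact more detailed than what the paper provides).
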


We now present the algorithm and proof of Theorem \ref{thm:gen-bounded-additive-apasp}.

\subsubsection*{Warm-Up: \texorpdfstring{$\beta = 4$}{b = 4}}

We begin by showing a proof of correctness for the case of $\beta = 4$.
The proof is almost identical to Lemma \ref{lemma:monotone-4-approximation}.

\begin{lemma}
    \label{lemma:bounded-4-approximation}
    Algorithm $\boundedAdditiveAPASP(G, 4, C)$ returns an estimate $\hat{\delta}$ satisfying:
    \begin{enumerate}
        \item $\delta(u, v) \leq \hat{\delta}(u, v)$ for all $u, v$
        \item $\hat{\delta}(u, v) \leq \delta(u, v) + 4$ for all $\delta(u, v) \leq C$
    \end{enumerate}
\end{lemma}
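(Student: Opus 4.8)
Lemma \ref{lemma:bounded-4-approximation} is the bounded-distance analogue of Lemma \ref{lemma:monotone-4-approximation}, so the plan is to follow that proof essentially verbatim, with two cosmetic changes: (i) we only need to certify a $+4$ bound for pairs $u,v$ with $\delta(u,v) \le C$, and (ii) the monotone product $\monotoneMinPlus$ is replaced by $\boundedMinPlus$, whose output is correct exactly when the relevant matrix entries are bounded (which is where the $\delta(u,v)\le C$ hypothesis enters). First I would dispose of feasibility: every distance estimate in Phase 1 is the concatenation of two paths inside a subgraph $G_i \subseteq G$, and every estimate in Phase 2 is either a $\dijkstra$ computation on edge weights that were themselves realized by earlier paths, or uses an edge of $w \times V$ whose weight is a previously realized path length; hence $\delta(u,v) \le \hat\delta(u,v)$ for all $u,v$ (this is Lemma \ref{lemma:additive-feasible-estimates} specialized to this algorithm). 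So the content is the upper bound.

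\textbf{Main argument.} Fix $u,v$ with $\delta(u,v)\le C$ and a shortest path $P$ between them; note $k=5$, $k_0 = 2$, and the degree thresholds are $s_1,\dots,s_4$ with $s_{5-i} = n^{i x/3}$. Split into two cases exactly as in Lemma \ref{lemma:monotone-4-approximation}. If $P \subset E_3 = E_{k_0+1}$, then the blocking levels satisfy $L(P,5) \subseteq \set{3,4}$, so $|L(P,5)| \le 2$, and Lemma \ref{lemma:sparse-apasp-approx} (applied with $D_5 = D_k = V$) gives $\hat\delta(u,v) \le \delta(u,v) + 2\cdot 2 = \delta(u,v)+4$ after Phase 2. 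Otherwise $P$ contains a vertex of degree $\ge s_2 = n^x$. Let $y$ be the vertex of $P$ of degree $\ge s_4$ closest to $v$, let $z$ be a maximum-degree vertex of $P$, and let $i^*$ be the index with $\tfrac{n}{2^{i^*}} \le \deg(z) < \tfrac{n}{2^{i^*-1}}$, so $P \subset F_{i^*}$. With $z^* = r(z, C_{i^*})$ and $y^* = r(y, D_4)$, the $\bfs$ executions of the $i^*$-th Phase 1 iteration give $X_{i^*}(z^*, u) \le \delta(z,u)+1$ and $X_{i^*}(z^*, y^*) \le \delta(z,y)+2$. Here is where $C$ is used: since $\delta(u,v) \le C$, these quantities are all $O(C)$, so the corresponding entries of $A_{i^*}$ and $B_{i^*}$ are finite and within the bound passed to $\boundedMinPlus$; hence $\boundedMinPlus$ correctly returns $\hat\delta_{i^*}(u, y^*) \le X_{i^*}(u,z^*) + X_{i^*}(z^*, y^*) \le \delta(u,z) + \delta(z,y) + 3 = \delta(u,y)+3$. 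Finally, in the $5$-th iteration of Phase 2, $\dijkstra$ from $u$ has the edge $(u,y^*)$ of weight $\le \delta(u,y)+3$, the edge $(y^*,y) \in E$, and the rest of $P_{y,v}$ inside $E_5 = E_k$, yielding
\begin{equation*}
    \hat\delta(u,v) \le \hat\delta_{i^*}(u,y^*) + 1 + \delta(y,v) \le \delta(u,y) + \delta(y,v) + 4 = \delta(u,v) + 4 .
\end{equation*}

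\textbf{Where the subtlety lies.} The only genuinely new point compared to Lemma \ref{lemma:monotone-4-approximation} is checking that the $\boundedMinPlus$ call is valid, i.e.\ that every matrix entry actually consulted along the witness path is bounded by the constant $C$ (up to additive constants absorbed into the bound). This is immediate from $\delta(u,v) \le C$ together with the triangle inequality along $P$ — $z$ and $y$ lie on $P$, so $\delta(z,u), \delta(z,y) \le \delta(u,v) \le C$ — but it must be stated, since for pairs with $\delta(u,v) > C$ no guarantee is claimed (consistent with the theorem statement). Everything else is a line-by-line transcription of the $\beta = 4$ monotone case, and I would present it as such, remarking that the general $\beta \ge 6$ case follows identically to the proof of Theorem \ref{thm:gen-additive-apasp} with $\monotoneMinPlus$ replaced by $\boundedMinPlus$ and the additive-error bookkeeping restricted to pairs at distance $\le C$.
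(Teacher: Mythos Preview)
Your proposal is correct and follows essentially the same approach as the paper: the paper's proof is likewise a near-verbatim copy of Lemma~\ref{lemma:monotone-4-approximation}, with the same case split on $P \subset E_3$ versus $P \not\subset E_3$, the same choice of $y$, $z$, $z^*$, $y^*$, and the same observation that $\delta(u,v) \le C$ forces $X_{i^*}(z^*,u) \le C+1$ and $X_{i^*}(z^*,y^*) \le C+2$ so that the relevant entries of $A_{i^*}, B_{i^*}$ are finite and $\boundedMinPlus$ returns the correct value. Your treatment of feasibility is also what the paper does (it is stated separately as Lemma~\ref{lemma:bounded-additive-feasible-estimates}).
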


\begin{proof}
    Let $P$ be a shortest path between $u, v$.
    Note that $k = 5, k_0 = 2$.
    
    Suppose $P \subset E_3 = E_{k_0 + 1}$. 
    Then, $L(P, 5) \subset \set{3, 4}$ and $|L(P, 5)| \leq 2$.
    By Lemma \ref{lemma:sparse-apasp-approx}, since $D_5 = D_k = V$, we obtain a $+4$ approximation in Phase 2.

    Thus, $P$ has some vertex with degree at least $s_2 = n^{x}$.
    Let $y$ denote the vertex in $P$ with degree at least $s_4$ closest to $v$.
    Let $z$ be a vertex of maximum degree on $P$ and $i^*$ be the integer $1 \leq i \leq l$ such that $\frac{n}{2^{i^*}} \leq \deg(z) < \frac{n}{2^{i^* - 1}}$.
    Since $P \subset F_{i^*}$, we have in the $i^*$-th iteration of Phase 1 that $X_{i^*}(z^*, u) \leq \delta(z, u) + 1 \leq C + 1$ and $X_{i^*}(z^*, y^*) \leq \delta(z, y) + 2 \leq C + 2$ where $z^* = r(z, C_{i^*})$, $y^* = r(y, D_{4})$.
    Therefore, since the entries in $A_{i^*}, B_{i^*}$ are finite, we compute in Line \ref{line:additive-apasp:bounded-min-plus},
    \begin{equation*}
        \hat{\delta}_{i^*}(u, y^*) \leq X_{i^*}(u, z^*) + X_{i^*}(z^*, y^*) \leq \delta(u, z) + \delta(z, y) + 3 = \delta(u, y) + 3
    \end{equation*}
    
    Now, in the $5$-th iteration in Phase 2, when we execute Dijkstra from $u$, we have an edge $(u, y^*)$ with weight at most $\delta(u, y) + 3$, an edge $(y^*, y) \in E$ and the remaining edges in $E_5 = E_k$, so that,
    \begin{equation*}
        \hat{\delta}(u, v) \leq \hat{\delta}_{i^*}(u, y^*) + 1 + \delta(y, v) \leq \delta(u, y) + \delta(y, v) + 4 \leq \delta(u, v) + 4
    \end{equation*}
\end{proof}

\IncMargin{1em}
\begin{algorithm}[H]
\SetKwInOut{Input}{Input}
\SetKwInOut{Output}{Output}
\Input{Unweighted, undirected Graph $G = (V, E)$ with $n$ vertices; approximation parameter $\beta$; distance bound $C$}
\Output{Distance estimate $\hat{\delta}: U \times V \rightarrow \Z$ such that $\delta(u, v) \leq \hat{\delta}(u, v)$ for all $u, v \in V$ and $\hat{\delta}(u, v) \leq \delta(u, v) + \beta$ whenever $\delta(u, v) \leq C$.}

\BlankLine

\textcolor{blue}{Phase 0: Set up and Decompose Graph}

$\hat{\delta}(u, v) \gets \begin{cases}
    1 & (u, v) \in E \\
    \infty & \otherwise
\end{cases}$

$k \gets \frac{3 \beta - 2}{2}$

$x \gets$ solution to $\omega \left( 1 - \frac{\floor{k/3}}{\floor{k/3}+2} x, 1 - x, 1 - \frac{\floor{k/3} - 1}{\floor{k/3}+2} x \right) = 2 + \frac{x}{\floor{k/3} + 2}$

$s_{k - i} \gets n^{i \cdot \frac{x}{\floor{k/3} + 2}}$ for all $1 \leq i \leq \floor{k/3} + 2$

$k_0 \gets k - (\floor{k/3} + 2) = k - \floor{k/3} - 2$

$(D_{k_0}, D_{k_0 + 1}, \dotsc, D_k), (E_{k_0}, E_{k_0 + 1}, \dotsc, E_k), E^* \gets \decompose(G, (s_{k_0}, s_{k_0 + 1}, \dotsc, s_{k - 1}))$

\textcolor{blue}{Phase 1: Estimate Distances on High-Degree Paths}

$t_{i} \gets \frac{n}{2^i}$ for all $1 \leq i \leq l - 1 = \ceil{(1 - x) \log n}$

$(C_{1}, C_{2}, \dotsc, C_{l}), (F_{1}, F_{2}, \dotsc, F_{l}), F^* \gets \decompose(G, (t_{1}, t_{2}, \dotsc, t_{l - 1}))$

\For{$0 \leq i \leq l$}{
    \For{$w \in C_i$}{
        $X_i(w) \gets \bfs(w)$ on $G_i = \left(V, F_{i}\right)$
        
        $\hat{\delta}(w, v) \gets \min(\hat{\delta}, X_i(w, v))$ for all $v \in V$
    }
    
    Construct $|D_{k_0 + 2}| \times |C_i|$ matrix $A_i$ where $A_i(v, w) = \begin{cases}
        X_i(w, v) & X_i(w, v) \leq C + 2 \\ 
        \infty & \otherwise
    \end{cases}$
    
    Construct $|C_i| \times |D_{k_0 + 3}|$ matrix $B_i$ where $B_i(w, v) = \begin{cases}
        X_i(w, v) & X_i(w, v) \leq C + 2 \\ 
        \infty & \otherwise
    \end{cases}$
    
    $\hat{\delta} \gets \min(\hat{\delta}, \boundedMinPlus(A_i, B_i, C+2))$
    \label{line:additive-apasp:bounded-min-plus}
}

\textcolor{blue}{Phase 2: Estimate Distances on Low-Degree Paths}

\For{$k_0 + 1 \leq j \leq k$}{
    \For{$w \in D_{j}$}{
        $G_{j, w} = (V, E_j \cup (w \times V) \cup E^*)$
        
        $\hat{\delta} \gets \dijkstra(G_{j, w}, w, \hat{\delta})$
    }
}

\caption{$\boundedAdditiveAPASP (G, \beta, C)$} 
\label{alg:bounded-additive-apasp}
\end{algorithm}
\DecMargin{1em}

\subsubsection*{General Case: \texorpdfstring{$\beta \geq 6$}{b >= 6}}

We now present the proof of Theorem \ref{thm:gen-bounded-additive-apasp}.

\begin{proof}
    (Correctness)
    We will ignore cases that are too similar to Theorem \ref{thm:gen-additive-apasp}.
    Note by the definition of $k \gets \frac{3 \beta - 2}{2}$ and $\beta$ is an even integer,
    \begin{equation*}
        2 \left( \floor{k/3} + 1 \right) = 2 \left( \floor{\frac{\beta}{2} - \frac{1}{3}} + 1 \right) \leq \beta
    \end{equation*}
    
    By Lemma \ref{lemma:bounded-additive-feasible-estimates}, it suffices to check that for $\delta(u, v) \leq C$, we have $\hat{\delta}(u, v) \leq \delta(u, v) + 2(\floor{k/3} + 1)$.
    
    Let $u, v \in V$ be a pair of vertices such that $\delta(u, v) \leq C$.
    Let $P$ be a shortest path between $u, v$.
    Let $\deg(P) = \max_{v \in P} \deg(v)$ be the maximum degree of any vertex on $P$.

    \paragraph{Case 1: $\deg(P) < s_{k_0}$}

    Since $L(P, k) \subset \set{k_0 + 1, \dotsc, k - 1}$, correctness holds as in Theorem \ref{thm:gen-additive-apasp}.

    \paragraph{Case 2: $\deg(P) \geq s_{k_0}$}
    
    By assumption, there is a vertex $w \in P$ with $\deg(w) \geq s_{k_0 + 3}$.
    Then, with Lemma \ref{lemma:k0+3-approx-distance}, we have that for all $x \in P$,
    \begin{equation*}
        \hat{\delta}_{k_0 + 3}(w^*, x) \leq \delta(w, x) + 5
    \end{equation*}
    
    The remaining proof follows as in Theorem \ref{thm:gen-additive-apasp}.
\end{proof}

\begin{proof}
    (Time Complexity)

    As the only modifications are in Phase 1, we examine only the complexity of this phase.
    Each $|C_i| = \tO{\frac{n}{t_i}} = \tO{2^i}$
    Each graph $F_i$ has $O(n t_i)$ edges.
    Thus, each $\bfs$ requires $\tO{n^2}$-time.
    Over $k$ iterations, this is again $\tO{n^2}$-time.
    
    Now, $|D_{k_0 + 2}| = \bigtO{\frac{n}{s_{k_0 + 2}}} = \bigtO{\frac{n}{s_{k - \floor{k/3}}}} = \bigtO{n^{1 - \frac{\floor{k/3}}{\floor{k/3}+2} x}}$.
    Similarly, we can upper bound the size of $D_{k_0 + 3}$ as $|D_{k_0 + 3}| = \bigtO{\frac{n}{s_{k - (\floor{k/3} - 1)}}} = \bigtO{n^{1 - \frac{\floor{k/3 - 1}}{\floor{k/3}+2} x}}$.
    Finally, $|C_i| = \tO{2^l} = \tO{n^{1 - x}}$.
    Thus, each $\boundedMinPlus$ requires time,
    \begin{equation*}
        \omega \left( 1 - \frac{\floor{k/3}}{\floor{k/3}+2} x, 1 - x, 1 - \frac{\floor{k/3} - 1}{\floor{k/3}+2} x \right)
    \end{equation*}

    Recall that Phase 2 requires time $\bigtO{n^{2 + \frac{x}{\floor{k/3} + 2}}}$.
    Finally, we balance terms to optimize,
    \begin{equation*}
        \omega \left( 1 - \frac{\floor{k/3}}{\floor{k/3}+2} x, 1 - x, 1 - \frac{\floor{k/3} - 1}{\floor{k/3}+2} x \right) = 2 + \frac{x}{\floor{k/3} + 2}
    \end{equation*}
    To get the exact expression, we plug in $k = \frac{3 \beta - 2}{2}$ and in particular $\floor{k/3} = \frac{\beta}{2} - 1$.
\end{proof}

In the following lemma, we show that every distance estimate produced is feasible.
That is, each distance estimate can be attained by some path in $G$.

\begin{lemma}
    \label{lemma:bounded-additive-feasible-estimates}
    Algorithm \ref{alg:bounded-additive-apasp} returns $\hat{\delta}$ such that $\delta(u, v) \leq \hat{\delta}(u, v)$ for all $u, v \in V$.
\end{lemma}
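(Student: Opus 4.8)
The plan is to establish, by induction over the operations performed by Algorithm \ref{alg:bounded-additive-apasp} in the order they are executed, the invariant that every finite entry $\hat{\delta}(u, v)$ is the length of some walk from $u$ to $v$ in the original graph $G$. Since any walk of length $\ell$ contains a $u$--$v$ path of length at most $\ell$, this invariant gives $\hat{\delta}(u, v) \geq \delta(u, v)$ whenever $\hat{\delta}(u, v)$ is finite, and the bound holds trivially otherwise. The argument mirrors that of Lemma \ref{lemma:additive-feasible-estimates}, with the only change being the use of $\boundedMinPlus$ in Phase 1, which I would check introduces no difficulty.

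For the base case, after initialization $\hat{\delta}(u, v) = 1$ exactly when $(u, v) \in E$, which is the length of the one-edge walk, and $\hat{\delta}(u, v) = \infty$ otherwise, so the invariant holds. For Phase 1, each $X_i(w)$ is a $\bfs$ tree on $G_i = (V, F_i)$, a subgraph of $G$, so each finite $X_i(w, v)$ equals $\delta(G_i, w, v)$ and is realized by a path of $G_i \subseteq G$; the updates $\hat{\delta}(w, v) \gets \min(\hat{\delta}, X_i(w, v))$ therefore preserve the invariant. The entries of $A_i$ and $B_i$ are each either such an $X_i$ value or $\infty$, so any finite output of $\boundedMinPlus(A_i, B_i, C + 2)$ equals $A_i(p, w) + B_i(w, q) = X_i(w, p) + X_i(w, q)$ for some $w \in C_i$, i.e.\ the length of the concatenation of two paths of $G_i$ through $w$, hence a walk of $G$ from $p$ to $q$; the truncation at $C + 2$ only replaces entries by $\infty$ and never manufactures spurious small values, so taking the minimum with $\hat{\delta}$ again preserves the invariant.

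For Phase 2, the graph $G_{j, w} = (V, E_j \cup (w \times V) \cup E^*)$ carries unit weights on the genuine edges of $E_j \cup E^* \subseteq E$ and weight $\hat{\delta}(w, v')$ on each shortcut edge $(w, v') \in w \times V$; by the inductive hypothesis each finite such weight is already the length of a walk of $G$ from $w$ to $v'$. Consequently any path found in $G_{j, w}$ lifts to a walk of the same length in $G$ — each shortcut edge expanding into its realizing walk — so the distances returned by $\dijkstra(G_{j, w}, w, \hat{\delta})$ are lengths of walks of $G$, and the minimization preserves the invariant. Since the algorithm terminates with the invariant intact, $\delta(u, v) \leq \hat{\delta}(u, v)$ for all $u, v$. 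There is no genuine obstacle in this proof; the only points requiring a word of care are that $\boundedMinPlus$ truncates large entries to $\infty$ rather than returning incorrect small values, and that the edge weights supplied to $\dijkstra$ are themselves realizable path lengths — both immediate from the induction hypothesis.
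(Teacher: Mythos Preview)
Your proof is correct and follows essentially the same approach as the paper: both argue that every finite estimate $\hat{\delta}(u,v)$ is realized by some walk in $G$, with your version spelling out the induction over initialization, Phase~1 (including the $\boundedMinPlus$ step), and Phase~2 more carefully than the paper's one-line appeal to Lemma~\ref{lemma:additive-feasible-estimates}.
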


\begin{proof}
    This holds simply by observing that every distance estimate is produced by some path in the original graph $G$, as argued in Lemma \ref{lemma:additive-feasible-estimates}.
\end{proof}

We present the key lemma for correctness below.
In particular, we prove that we have a good additive approximation after the $(k_0 + 3)$-th iteration.

\begin{lemma}
    \label{lemma:k0+3-approx-distance}
    Let $P$ be a shortest path between $u, v$.
    In Phase 2 of Algorithm \ref{alg:bounded-additive-apasp}, let $w$ be a vertex of $P$ in $V_{k_0 + 3}$. 
    Then, for every vertex $x \in P$, the $(k_0 + 3)$-th iteration of Phase 2 finds a path of length $\delta(w, x) + 5$ from $w^*$ to $x$.
\end{lemma}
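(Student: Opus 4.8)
The plan is to follow the proof of Lemma~\ref{lemma:k0+3-approx-monotone} essentially verbatim, since Algorithm~\ref{alg:bounded-additive-apasp} differs from Algorithm~\ref{alg:k-additive-apasp} only in that $\monotoneMinPlus$ is replaced by $\boundedMinPlus$; the one extra ingredient I need is that every matrix entry that the argument relies on is at most $C+2$, so that it is not discarded when forming $A_i, B_i$. This is where the hypothesis inherited from the calling context (namely $\delta(u,v) \le C$, as in Case~2 of the proof of Theorem~\ref{thm:gen-bounded-additive-apasp}) enters: all the distances appearing below are realized along sub-paths of $P$ and hence bounded by $\delta(u,v) \le C$. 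Write $w^* = r(w, D_{k_0+3})$, fix $x \in P$, and let $P_{w,x}$ denote the sub-path of $P$ from $w$ to $x$. I would split into two cases according to $\deg(P_{w,x}) = \max_{a \in P_{w,x}} \deg(a)$.

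If $\deg(P_{w,x}) < s_{k_0}$, then $P_{w,x} \subset E_{k_0+1}$, and since $w \in V_{k_0+3}$ every blocking vertex of $P_{w,x}$ (Definition~\ref{def:blocking-vertices}) has level in $\set{k_0+1, k_0+2}$, so that $|L_B(P_{w,x})| \le 2$. Applying Lemma~\ref{lemma:sparse-apasp-approx} (equivalently Lemma~\ref{lemma:block-approx-error-1}) with base vertex $w^* \in D_{k_0+3}$ gives $\hat{\delta}_{k_0+3}(w^*, x) \le \delta(w,x) + 5$ directly, with no matrix product required; this step is identical to Case~1 of Lemma~\ref{lemma:k0+3-approx-monotone}.

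If $\deg(P_{w,x}) \ge s_{k_0}$, I would let $y$ be the vertex of $P_{w,x}$ closest to $x$ with $\deg(y) \ge s_{k_0+2}$, let $z$ be a vertex of maximum degree on $P_{w,x}$, and let $i^*$ be the index with $\frac{n}{2^{i^*}} \le \deg(z) < \frac{n}{2^{i^*-1}}$, so that $P \subset F_{i^*}$. Then the $\bfs$ from $z^* = r(z, C_{i^*})$ in the $i^*$-th iteration of Phase~1 yields $X_{i^*}(z^*, w^*) \le \delta(z,w) + 2$ and $X_{i^*}(z^*, y^*) \le \delta(z,y) + 2$, where $y^* = r(y, D_{k_0+2})$; since $\delta(z,w), \delta(z,y) \le \delta(u,v) \le C$, both quantities are at most $C+2$, hence the relevant entries of $A_{i^*}$ and $B_{i^*}$ are finite. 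Therefore $\boundedMinPlus(A_{i^*}, B_{i^*}, C+2)$ computes $\hat{\delta}_{i^*}(w^*, y^*) \le X_{i^*}(w^*, z^*) + X_{i^*}(z^*, y^*) \le \delta(w,y) + 4$. Finally, in the $(k_0+3)$-th iteration of Phase~2 the graph $G_{k_0+3, w^*}$ contains the edge $(w^*, y^*)$ of weight $\le \delta(w,y)+4$, the edge $(y^*, y) \in E^*$, and, because $y$ was chosen as the last vertex of degree $\ge s_{k_0+2}$ on $P_{w,x}$, all remaining edges of $P_{y,x}$ lie in $E_{k_0+3}$; so $\dijkstra$ from $w^*$ returns $\hat{\delta}_{k_0+3}(w^*, x) \le \hat{\delta}_{i^*}(w^*, y^*) + 1 + \delta(y,x) \le \delta(w,x) + 5$. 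The degenerate case $y = x$, i.e. $\deg(x) \ge s_{k_0+2}$, is handled by appending the single edge $(x^*, x) \in E^*$ instead of the tail $P_{y,x}$.

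I expect the main obstacle to be bookkeeping rather than anything deep: one must verify that $y$ has level at most $k_0+2$ so that $D_{k_0+2}$ is the correct dominating set for its representative, that the tail $P_{y,x}$ is genuinely contained in $E_{k_0+3}$, and — the point specific to this bounded variant — that none of the intermediate estimates $X_{i^*}(\cdot,\cdot)$ or their pairwise sums exceed the cap $C+2$, which is precisely what the hypothesis $\delta(u,v) \le C$ guarantees; without that cap the $\boundedMinPlus$ call could silently drop the entries that carry the argument. All other steps are routine and carry over unchanged from Lemma~\ref{lemma:k0+3-approx-monotone}.
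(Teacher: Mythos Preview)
Your proposal is correct and mirrors the paper's own proof essentially line for line: the same two-case split on $\deg(P_{w,x})$, the same choice of $y$, $z$, and $i^*$, and the same use of $\boundedMinPlus$ followed by $\dijkstra$ in $G_{k_0+3,w^*}$. The one point you rightly single out as new relative to Lemma~\ref{lemma:k0+3-approx-monotone}---that the entries $X_{i^*}(z^*,w^*)$ and $X_{i^*}(z^*,y^*)$ are bounded by $C+2$ and hence not truncated---is exactly the additional observation the paper makes.
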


\begin{proof}
    Let $P_{w, x}$ denote the sub-path of $P$ between $w$ and $x$.

    \paragraph{Case 1: $\deg(P_{w, x}) \leq s_{k_0}$}

    This is identical to Case 1 of Lemma \ref{lemma:k0+3-approx-monotone}.

    \paragraph{Case 2: $\deg(P_{w, x}) > s_{k_0}$}
    
    We now arrive at the key case depending on Phase 1.
    Let $y$ denote the vertex in $P_{w, x}$ closest to $x$ with degree $\deg(y) \geq s_{k_0 + 2}$ and $z$ be a vertex of maximum degree on $P_{w, x}$.
    Then, let $i^*$ be the integer $1 \leq i \leq l$ such that $\frac{n}{2^{i^*}} \leq \deg(z) < \frac{n}{2^{i^* - 1}}$.
    Let $y^* = r(y, D_{k_0 + 2})$ and $z^* = r(z, C_{i^*})$.
    
    Since $P \subset F_{i^*}$, in the $i^*$ iteration of Phase 1 $X_{i^*}(z^*, w^*) \leq \delta(z, w) + 2 \leq C + 2$ and $X_{i^*}(z^*, y^*) \leq \delta(z, y) + 2 \leq C + 2$.
    Since the entries are finite, we use $\boundedMinPlus(A_{i^*}, B_{i^*})$ to compute,
    \begin{equation*}
        \hat{\delta}_{i^*}(w^*, y^*) \leq X_{i^*}(w^*, z^*) + X_{i^*}(z^*, y^*) \leq \delta(w, z) + \delta(z, y) + 4 = \delta(w, y) + 4
    \end{equation*}
    
    Now, in the $(k_0 + 3)$-th iteration in Phase 2, when we execute Dijkstra from $w^*$, we have an edge to $y^*$ with weight at most $\delta(w, y) + 4$.
    If $\deg(x) > s_{k_0 + 2}$, and therefore $y = x$, we take an extra edge $(x^*, x) \in E^*$ to find a path of length at most $\delta(w, y) + 4 + 1 = \delta(w, x) + 5$.
    Otherwise, after using the edge $(y^*, y) \in E^*$, the remaining edges of $P_{w, x}$ are in $E_{k_0 + 3}$ and we can conclude,
    \begin{equation*}
        \hat{\delta}_{k_0 + 3}(w^*, x) \leq \hat{\delta}_{i^*}(w^*, y^*) + 1 + \delta(y, x) \leq \delta(w, y) + \delta(y, x) + 5 = \delta(w, x) + 5
    \end{equation*}
\end{proof}

\section{Weighted Approximation via Approximate \texorpdfstring{\minplus}{(min, +)} Product}
\label{sec:weighted-additive-apasp}

In this section, we generalize our results to graphs with arbitrary non-negative weights.
Let $\wt(e)$ denote the weight of an edge $e \in E$.
For any subset of edges $S \subset E$, let $\wt(S) = \sum_{e \in S} \wt(e)$ denote the total weight of edges in $S$.
Throughout, we assume all weights are non-negative.

\paragraph*{Preliminaries}
We first require some preliminary results analogous to unweighted graphs for computing dominating sets, and degree decompositions. Note that this decomposition is slightly different than what is used in previous sections.

\begin{restatable}{lemma}{weightdecompositionlemma}
    \label{lemma:weight-decomposition}
    Let $G$ be a weighted, undirected graph on $n$ vertices.
    For a given edge $e$ incident to vertex $v$, let $\rank_v(e)$ be the order of $e$ among all edges incident to $v$ when ranked in increasing order of weight.
    For example, if $e$ is the lightest edge incident to $v$, $\rank_v(e) = 1$.
    For a neighbor $w \in N(v)$, we also denote $\rank_v(w) = \rank_v(v, w)$ to be the rank of the edge $(v, w)$.
    
    Given thresholds $s_1 > s_2 > \dotsc > s_{k - 1}$, there is an algorithm $\wDecompose$ that outputs edge sets $\set{E_i}_{i = 1}^{k}$, edge set $E^*$, and vertex sets $\set{D_i}_{i = 1}^{k}$ satisfying,
    \begin{enumerate}
        \item $E_i = \set{(u, v) \in E \given \min(\rank_u(u, v), \rank_v(u, v)) < s_{i - 1}}$.
        \item $D_i$ dominates $V_{s_i} = \set{v \in V \given \deg(v) \geq s_i}$ and $|D_i| = \bigtO{\frac{n}{s_i}}$
        \item $E^* = \bigcup_{i = 1}^{k} E_i^{*}$ where each $E_i^* \subset E$ has for every $v \in V_{s_i}$ at least one edge $(v, w) \in E_i^{*}$ where $w \in D_i$ and $\rank_v(v, w) < s_i$.
    \end{enumerate}
    Furthermore, $\wDecompose$ runs in $\tO{k n^2}$-time.

    We define the level of a vertex and edge analogously as in the unweighted case.
    For a given vertex $v \in V$, define the {\bf level of $v$}, denoted $\level(v)$, as the integer $i$ such that $s_i \leq \deg(v) < s_{i - 1}$.
    For a given edge $e \in E$, define the {\bf level of $e$}, denoted $\level(e)$, as the integer $i$ such that $e \in E_i \setminus E_{i + 1}$.
\end{restatable}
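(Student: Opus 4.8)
The plan is to mimic the proof of \Cref{lemma:degree-decomposition} (given in \Cref{app:degree-decomposition}), replacing the notion of ``neighbourhood'' by a weight-sensitive ``light neighbourhood''. For a vertex $v$ and threshold $s$, define $N_s(v) = \set{w \in N(v) \given \rank_v(v, w) < s}$, the set of endpoints of the $s - 1$ lightest edges incident to $v$ (or all of $N(v)$ if $\deg(v) < s$). First I would sort each adjacency list by weight, which computes all ranks $\rank_v(e)$ in $\tO{n^2}$ total time; with these in hand, each $E_i$ (as in item 1) is obtained by a single scan over $E$, so the $k$ edge sets cost $\tO{k n^2}$. Setting $s_0 = n + 1$ makes $E_1 = E$, and since $s_0 > s_1 > \dots > s_{k-1}$ the sets are nested $E_k \subseteq \dots \subseteq E_1$, so the edge levels $\level(e)$ are well defined.

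The vertex sets are built level by level. At level $i$, consider the family $\family_i = \set{N_{s_i}(v) \given v \in V_{s_i}}$; since $v \in V_{s_i}$ implies $\deg(v) \ge s_i$, each member has size at least $s_i - 1$. Apply the deterministic $\hittingSet$ of \Cref{lemma:hitting-set} to $\family_i$ to obtain $D_i$ of size $O\!\left(\frac{n \log n}{s_i}\right) = \bigtO{n/s_i}$ in time $\tO{n s_i} = \tO{n^2}$; summing over $i$ gives the claimed $\tO{k n^2}$ bound. (If one additionally wants $D_1 \subseteq \dots \subseteq D_k = V$ as in the unweighted statement, take running unions of the hitting sets; the size bound survives since $\sum_{j \le i} \bigtO{n/s_j} = \bigtO{n/s_i}$.) Finally, for each $i$ and each $v \in V_{s_i}$, scan $N_{s_i}(v)$ for an element $w \in D_i$ — one exists because $D_i$ hits $\family_i$ — and put the single edge $(v, w)$ into $E_i^*$; set $E^* = \bigcup_i E_i^*$. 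This scan costs $O(n s_i) = \tO{n^2}$ per level, so the overall running time is $\tO{k n^2}$.

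It remains to verify the three properties. Item 1 holds by construction. For item 2, if $v \in V_{s_i}$ then $D_i \cap N_{s_i}(v) \neq \emptyset$ and every element of $N_{s_i}(v)$ is a neighbour of $v$, so $v$ has a neighbour in $D_i$ and $D_i$ dominates $V_{s_i}$; the bound $|D_i| = \bigtO{n/s_i}$ is exactly what $\hittingSet$ guarantees. For item 3, the edge $(v, w)$ placed in $E_i^*$ satisfies $w \in D_i$ and $\rank_v(v, w) < s_i$ by definition of $N_{s_i}(v)$; moreover $\rank_v(v, w) < s_i < s_{i-1}$, so $(v, w) \in E_i$, hence $E^* \subseteq E_1 = E$ and each $E_i^* \subseteq E_i$, consistent with the edge-level definition. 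The vertex- and edge-level definitions are then verbatim as in the unweighted case once $s_0 = n+1$ and $s_k = 0$ are fixed.

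The only genuinely new point — and the step I expect to require the most care — is choosing the light neighbourhood $N_{s_i}(v)$ so that a \emph{single} hitting set simultaneously yields domination (item 2) and a \emph{low-rank} dominating edge (item 3); everything else is a routine adaptation of \Cref{lemma:degree-decomposition}. The accompanying minor subtlety is the off-by-one between $\deg(v) \ge s_i$ and $|N_{s_i}(v)| \ge s_i - 1$, which is harmless: apply $\hittingSet$ with parameter $s_i - 1$ (the $\bigtO{\cdot}$ absorbs the difference when $s_i \ge 2$, while $s_i = 1$ forces $D_i = V$ trivially), or simply shift the thresholds by one.
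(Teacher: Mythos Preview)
Your proposal is correct and follows essentially the same approach as the paper: both compute ranks, build each $E_i$ by a scan, and obtain $D_i$ by applying the deterministic $\hittingSet$ to the light neighbourhoods of high-degree vertices, then read off $E_i^*$. The only cosmetic difference is that the paper hits the sets $\set{v} \cup \set{w \in N(v) \given \rank_v(w) < s_i}$ (size exactly $s_i$), whereas you hit $N_{s_i}(v)$ (size $s_i - 1$) and explicitly address the resulting off-by-one; your treatment is if anything slightly more careful on this point.
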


\subsection{\texorpdfstring{$(1 + \eps, 2 w)$}{(1+e, 2w)} Approximation}

For a given path $P$, let $\heavy(P)$ denote the weight of the heaviest edge in $P$.
For a pair of vertices $u, v$, let $w_{u, v}$ denote the minimum $h(P)$ over all shortest paths $P$ between $u, v$.

For any integer $k \geq 1$, let $\heavy(P, k)$ denote the weight of the $k$-th heaviest edge in $P$ and $\kHeavy(P, k)$ denote the sum of the weights of the $k$ heaviest edges in $P$.
We analogously define $w_{u, v}(k) = \min_{P} \kHeavy(P, k)$ over shortest paths $P$ between $u, v$.

Using combinatorial methods, Cohen and Zwick \cite{cohen2001smallstretch} gave a $O(n^{7/3})$ algorithm for $+2 w_{u, v}$ approximation.
They in fact prove a stronger result and gain a $(7/3, 0)$ multiplicative approximation.
With Fast Matrix Multiplication, Berman and Kasiviswanathan \cite{berman2007approxapsp} gave a $(1+\eps, 2 w_{u, v})$ algorithm in time $O(n^{2.25})$.
In this work, we use our techniques to improve upon their algorithm and design a $(1+\eps, 2 w_{u, v})$ algorithm with running time roughly $\tO{n^{2.152}}$.

\begin{theorem}
    \label{thm:weighted-2-apasp}
    Let $\eps > 0$.
    Let $G$ be an undirected, weighted graph with $n$ vertices and weight function $w: E \rightarrow \R^+$.
    Algorithm \ref{alg:2-weighted-additive-apasp} computes $\hat{\delta}$ in time,
    \begin{equation*}
        \bigtO{\frac{n^{2.15195331}}{\eps}}
    \end{equation*}
    such that $\delta(u, v) \leq \hat{\delta}(u, v) \leq (1 + \eps) \delta(u, v) +  2 w_{u, v}$.
\end{theorem}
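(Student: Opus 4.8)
The plan is to run the weighted analogue of the additive‑approximation‑via‑$(\min,+)$‑product scheme of \Cref{sec:monotone-min-plus}, in the regime that allows just a single heavy edge in the additive slack --- the ``$\beta = 1$'' case, which the hypothesis $\beta \geq 2$ of \Cref{thm:weighted-additive-apasp} excludes. Algorithm \ref{alg:2-weighted-additive-apasp} first applies $\wDecompose$ (\Cref{lemma:weight-decomposition}) with a rank threshold $s_1 = n^x$, separating a ``high‑degree'' part from a pruned graph $G_x$ of maximum degree below $n^x$; then $G_x$ is split into a constant number of further rank levels $D_2 \subset \dots \subset D_k = V$, and every $\bfs$ of \Cref{alg:k-additive-apasp} is replaced by $\dijkstra$ (still exact inside the relevant weighted subgraph). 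Paths all of whose vertices lie in $G_x$ are handled by iterated $\dijkstra$ from the $D_i$ on the graphs $(V, E_i \cup E^* \cup (w \times V))$ carrying the current estimates as edge weights, exactly as in \Cref{alg:k-additive-apasp}, with the accumulated error controlled by the blocking‑vertex accounting (\Cref{def:blocking-vertices}, \Cref{lemma:block-approx-error-1}) adapted to weights. For paths through a vertex of degree at least $n^x$, we decompose $G$ into $(1-x)\log n$ degree classes $t_i = n / 2^i$, run $\dijkstra$ from each dominating set $C_i$ inside $(V, F_i)$ (total cost $\tO{n^2}$), and stitch these single‑source distances together with a \emph{rectangular approximate} $(\min,+)$ product $\approximateMinPlus$. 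Here one must use $\approximateMinPlus$ rather than $\monotoneMinPlus$, since in a weighted graph an Euler‑tour distance matrix is bounded‑difference only up to edge weights; this is also why the weighted version can be made \emph{faster} than the exact unweighted $+2$ algorithm, as Zwick‑style approximate $(\min,+)$ products \cite{zwick_apsp_weighted_directed} are cheaper than bounded‑difference products.

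For correctness, feasibility ($\delta(u,v) \leq \hat{\delta}(u,v)$) is immediate: every estimate is realised by an actual walk in $G$, and $\approximateMinPlus$ only overestimates the true $(\min,+)$ entry. For the upper bound, fix a shortest path $P$ between $u, v$ and set $h = \heavy(P) = w_{u, v}$. The structural heart is the rank guarantee of $\wDecompose$: if an edge $e$ of $P$ is not in $E_i$, then $\rank_a(e), \rank_b(e) \geq s_{i - 1}$ at its endpoints $a, b$, so each endpoint has at least $s_i$ incident edges of weight strictly below $\wt(e) \leq h$, and hence the $E^*$‑edge connecting it to $D_i$ --- being among that endpoint's $s_i$ lightest edges --- has weight at most $\wt(e) \leq h$. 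Consequently every detour edge used along a realised path is bounded by $h$, and the $\beta = 1$ level structure is arranged so that only a bounded number of detour edges are used and they total $2h = 2 w_{u, v}$ (the weighted refinement of the ``$+2$''). The $(1+\eps)$ multiplicative slack is introduced \emph{only} through $\approximateMinPlus$, invoked a bounded number of times along any realised path (for a fixed $P$ the degree classes $t_i$ are mutually exclusive, and at most one intermediate product is chained), so it compounds to $(1+\eps)^{O(1)} = 1 + O(\eps)$; since $h \leq \delta(u, v)$, the residual $O(\eps h)$ terms fold into $(1 + \eps)\delta(u,v)$ after rescaling $\eps$.

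For the running time: the $\dijkstra$ calls on $G_x$ cost $\tO{n^{2 + x/2}}$ once the rank thresholds are chosen to balance across iterations (as in \Cref{thm:gen-additive-apasp}); the $\dijkstra$ calls inside the $(V, F_i)$ total $\tO{n^2}$; and the dominant term is the approximate rectangular $(\min,+)$ products, whose dimensions are governed by $|C_i| = \tO{n^{1 - x}}$ and the sizes of the dominating sets being bridged, costing $\bigtO{n^{\omega(\cdot, \cdot, \cdot)}/\eps}$. Balancing this exponent against $2 + \tfrac{x}{2}$ --- an equation of the same shape as in \Cref{thm:weighted-additive-apasp} but for $\beta = 1$ --- and optimising $x$ with \cite{Complexity} yields $x = 0.30390662$, hence the claimed $\bigtO{n^{2.15195331}/\eps}$; the long‑path pieces ($\denseAPASP$, or $\additiveAPASP$) and the decompositions are set to stay below this, and a strongly polynomial approximate $(\min,+)$ product removes any $\log W$ dependence.

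I expect the main obstacle to be keeping the additive error pinned at exactly $2 w_{u, v}$ while the $(1 + \eps)$‑approximate products are in play: one must verify that the multiplicative slack is ever multiplied only against quantities of the form $\delta(\cdot, \cdot) + O(h)$ --- never against something that could be a constant factor larger than $\delta(u, v)$ --- and that it chains only $O(1)$ times, and one must simultaneously choose the rank thresholds and the rectangular matrix shape so that the matrix‑multiplication exponent and the $\dijkstra$ cost meet at $2 + x/2$. The two enabling ingredients, the rank decomposition $\wDecompose$ and the approximate $(\min,+)$ product $\approximateMinPlus$, carry the argument, and it is the rectangular use of the latter that gives the improvement over the $\tO{n^{2.24}}$ bound of Berman and Kasiviswanathan.
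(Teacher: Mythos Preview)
Your proposal is essentially the paper's approach: three rank levels via $\wDecompose$ with $s_1=n^x,\ s_2=n^{x/2}$, a $(1-x)\log n$-level high-degree decomposition with $\dijkstra$ from each $C_i$ inside $(V,F_i)$ followed by $\approximateMinPlus$ on $V\times C_i\times V$, and balancing $\omega(1,1-x,1)=2+x/2$ to get $x\approx 0.3039$. The correctness argument you sketch---the rank guarantee of $\wDecompose$ forcing each detour edge to be no heavier than the blocked path edge, and the folding of $(1+\eps/3)\cdot 2h$ into $(1+\eps)\delta$ via $h\le\delta$---is exactly what the paper does.

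One small point to drop: you mention ``long-path pieces ($\denseAPASP$, or $\additiveAPASP$)'' being set below the main bound, but there are none here. Unlike the unweighted $(2,0)$ algorithms, which need a separate long-path call because $\boundedMinPlus$ only handles entries $\le C$, $\approximateMinPlus$ has no magnitude restriction, so Phase~1 already covers \emph{all} paths with a high-degree vertex regardless of length. The paper's algorithm is just the three-case split (high-degree via the product; $P\subset E_2\setminus E_3$ via one $\dijkstra$ detour; $P\subset E_3$ exact), with nothing extra.
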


Throughout, we use the approximate \minplus product of Bringmann et al. \cite{bringmann2019noscaling}.

\begin{restatable*}{theorem}{approxminplusthm}
    \label{thm:approximate-min-plus}
    Let $A, B$ be two integer $n \times n$ matrices with non-negative entries.
    Let $\eps > 0$.
    There is an algorithm $\approximateMinPlus$ that returns $C$ in time $\bigtO{n^{\omega}/\eps}$ such that for all $1 \leq i, j \leq n$,
    \begin{equation*}
        (A * B)_{ij} \leq C_{ij} \leq (1 + \eps) (A * B)_{ij}
    \end{equation*}
\end{restatable*}

We provide the proof in \Cref{sec:prelims:fast-min-plus} for completeness.

\paragraph{High Level Overview}

Let $x$ be a parameter to be set later.
Let $s_1 = n^x$ and $s_2 = n^{x/2}$ and $D_1 \subset D_2 \subset D_3 = V$ and $E_3 \subset E_2 \subset E_1 = E$ be the output of $\wDecompose$.
There are 3 cases to consider.
Let $P$ be a shortest path between $u, v$.
If $P \subset E_3$, then we compute an exact distance from $u$ in the graph $G \subset G_{3, u}$.

Suppose $P \subset E_2$ but $P \not\subset E_3$.
Let $e = (a, b)$ be the closest edge to $u$ such that $e \notin E_3$ and let $a$ be the vertex closer to $u$ so that $P_{a, u} \subset E_3$.
Then, $\deg(a) \geq s_2$ so there is $a^* = r(a, D_2)$ such that $\rank_{a}(a^*) < s_2$.
Since $P \subset E_2$, $\hat{\delta}(a^*, v) \leq \wt(a^*, a) + \delta(a, v)$.
Then, $(v, a^*, a) \circ P_{a, u} \subset G_{3, v}$ so $\dijkstra$ computes an estimate,
\begin{equation*}
    \hat{\delta}(v, u) \leq \hat{\delta}(v, a^*) + \wt(a^*, a) + \delta(a, u) \leq \delta(v, u) + 2 \wt(a^*, a) \leq \delta(v, u) + 2 \wt(a, b) \leq \delta(v, u) + 2 w_{u, v}
\end{equation*}
as $\rank_a(a^*) < s_2 \leq \rank_a(b)$.

Finally, suppose $P \not\subset E_2$.
Let $P \subset F_i$ and $P \not\subset F_{i + 1}$.
Let $e = (a, b) \in F_i \setminus F_{i + 1}$.
Since $P \subset F_i$,
\begin{align*}
    \hat{\delta}(a^*, u) &\leq \wt(a^*, a) + \delta(a, u) \\
    \hat{\delta}(a^*, v) &\leq \wt(a^*, a) + \delta(a, v)
\end{align*}

Then, from the $\approximateMinPlus$ product,

\begin{equation*}
    \hat{\delta}(u, v) \leq \left( 1 + \frac{\eps}{3} \right) \left( \delta(u, v) + \wt(a^*, a) \right) \leq \left( 1 + \frac{\eps}{3} \right) \left( \delta(u, v) + 2 \wt(a, b) \right) \leq (1 + \eps) \delta(u, v) + 2 w_{u, v}
\end{equation*}
where $\rank_a(a^*) < t_j < t_{j - 1} \leq \rank_a(b)$.

\begin{algorithm}[H]
\SetKwInOut{Input}{Input}\SetKwInOut{Output}{Output}
\Input{Weighted, undirected Graph $G = (V, E)$ with $n$ vertices}
\Output{Distance estimate $\hat{\delta}: U \times V \rightarrow \Z$ such that $\delta(u, v) \leq \hat{\delta}(u, v) \leq \delta(u, v) + 2 w_{u, v}$ for all $u, v \in V$}

\BlankLine

\textcolor{blue}{Phase 0: Set up and Decompose Graph}

$\hat{\delta}(u, v) \gets \begin{cases}
    1 & (u, v) \in E \\
    \infty & \otherwise
\end{cases}$

$x \gets $ solution to $\omega\left(1, 1 - x, 1 \right) = 2 + \frac{x}{2}$

$s_1, s_2 \gets n^{x}, n^{\frac{x}{2}}$

$(D_{1}, D_{2}, D_{3}), (E_{1}, E_{2}, E_{3}), E^* \gets \wDecompose(G, (s_1, s_2))$

\textcolor{blue}{Phase 1: Estimate Distances on High-Degree Paths}

$t_{i} \gets \frac{n}{2^i}$ for all $1 \leq i \leq l - 1 = \ceil{(1 - x) \log n}$

$(C_{1}, C_{2}, \dotsc, C_{l}), (F_{1}, F_{2}, \dotsc, F_{l}), F^* \gets \wDecompose(G, (t_{1}, t_{2}, \dotsc, t_{l - 1}))$

\For{$0 \leq i \leq l$}{
    \For{$w \in C_i$}{
        $X_i(w) \gets \dijkstra(w)$ on $G_i = \left(V, F_{i}\right)$
        
        $\hat{\delta}(w, v) \gets \min(\hat{\delta}, X_i(w, v))$ for all $v \in V$
    }

    Construct $V \times C_i$ matrix $A_i$ where $A_i(v, w) = X_i(w, v)$
    
    $\hat{\delta} \gets \min\left(\hat{\delta}, \approximateMinPlus\left(A_i, A_i^T, 1 + \frac{\eps}{3}\right)\right)$
}

\textcolor{blue}{Phase 2: Estimate Distances on Low-Degree Paths}

\For{$2 \leq j \leq 3$}{
    \For{$w \in D_{j}$}{
        $G_{j, w} = (V, E_j \cup (u \times V) \cup E^*)$
        
        $\hat{\delta} \gets \dijkstra(G_{j, w}, w, \hat{\delta})$
    }
}

\caption{$\alg{Weighted2AdditiveAPASP}(G)$} 
\label{alg:2-weighted-additive-apasp}
\end{algorithm}

\begin{proof}
    (Correctness)
    We follow a similar proof to that of Cohen and Zwick \cite{cohen2001smallstretch}.
    Fix a pair of vertices $u, v$ and let $P$ be a shortest path between $u, v$.

    \paragraph{Case 1: $P \not\subset E_2$}

    Let $e_0 \in P$ be edge such that $e_0 \notin E_2$.
    Let $1 \leq j \leq l - 1$ such that $P \subset F_j$ but $P \cap (F_j \setminus F_{j + 1}) = \emptyset$. 
    Such a $j$ must exist as $F_1 = E$ and $F_l$ contains edges with rank at most $t_{l - 1} = \frac{n}{2^{l - 1}} \leq n^x$ so $e_0 \notin F_l$.
    
    In particular, we now choose a specific $e = (a, b)$ such that $e \in F_j \setminus F_{j + 1}$.
    By the correctness of $\wDecompose$, $\rank_a(e), \rank_b(e) \geq t_j$.
    Since both $a, b$ have degree at least $t_j$, let $x \in D_{j}$ such that $(x, a) \in E_{j}^*$ so that $\rank_a(x, a) < t_j \leq \rank_a(e)$.
    Now, since $(x, a), P \subset F_{j}$, when we execute $\dijkstra$ from $x$, we have
    \begin{align*}
        \hat{\delta}(x, u) &\leq \delta(a, u) + \wt(x, a) \leq \delta(a, u) + \wt(a, b) \\
        \hat{\delta}(x, v) &\leq \delta(a, v) + \wt(x, a) \leq \delta(a, v) + \wt(a, b)
    \end{align*}

    Then, from the $\approximateMinPlus$ product,
    \begin{equation*}
        \hat{\delta}(u, v) \leq \left( 1 + \frac{\eps}{3} \right) \left(\hat{\delta}(x, u) + \hat{\delta}(x, v) + 2 h(P) \right) \leq (1 + \eps) \delta(u, v) + 2 h(P)
    \end{equation*}
    where the second inequality follows as $\frac{2}{3} w_{u, v} \leq \frac{2}{3} \delta(u, v)$.

    \paragraph{Case 2: $P \subset E_2$ and $P \not\subset E_3$}
    
    Let $e = (a, b) \in E_2 \setminus E_3$ be the closest such edge to $v$ so that $\rank_a(e), \rank_a(b) \geq s_2$.
    Let $b$ be the closest vertex to $v$.
    Then, there is $x \in D_2$ such that $(x, b) \in E_2^*$ and $\rank_b(x, b) < s_2 \leq \rank_b(e)$.
    When we execute $\dijkstra$ from $x$, since $P \subset E_2$, we find,
    \begin{equation*}
        \hat{\delta}(x, u) \leq \delta(b, u) + \wt(x, b) \leq \delta(b, u) + \wt(a, b)
    \end{equation*}

    Then, when we execute $\dijkstra$ from $u$, we use the edge $(u, x)$ and obtain,
    \begin{equation*}
        \hat{\delta}(u, v) \leq \hat{\delta}(x, u) + \wt(x, b) + \delta(b, v) \leq \delta(u, v) + 2 \wt(a, b) \leq \delta(u, v) + 2 h(P)
    \end{equation*}
    
    \paragraph{Case 3: $P \subset E_3$}

    Note $D_3 = V$.
    In this case, $P \subset E_3$ so that when we execute $\dijkstra$ from $u \in D_3$ we find exactly the path $P$ and therefore $\hat{\delta}(u, v) \leq \delta(u, v)$.

    In all cases, we obtain an approximation $\hat{\delta}(u, v) \leq (1 + \eps) \delta(u, v) + 2 h(P)$ for some shortest path $P$.
    We complete the proof by taking the minimum over all shortest paths $P$.
\end{proof}

\begin{proof}
    (Time Complexity)

    The analysis of time complexity is extremely similar to that of Deng et al. \cite{deng2022apasp}.

    From Lemma \ref{lemma:weight-decomposition}, both calls to $\wDecompose$ require time $\tO{n^2}$.
    In Phase 1, for a fixed $i \in [l]$, we can bound $|C_i| = \tO{t_i}$ and each graph $F_i$ has at most $\bigO{\frac{n^2}{t_i}}$ edges so that all invocations of $\dijkstra$ require $\tO{n^2}$ time over all $l$.
    Since $|C_i| \leq \tO{t_i} = \tO{n^{1-x}}$, the invocations of $\approximateMinPlus$ requires $\bigtO{\frac{n^{\omega(1, 1-x, 1)}}{\eps}}$ time.

    In Phase 2, $|D_2| = \tO{n^{1-x/2}}, |E_2| = O(n^{1 + x})$ and $|D_3| = \tO{n^{1 - x}}, |E_3| = O(n^{1 + x/2})$ so that both invocations of $\dijkstra$ require $\tO{n^{2 + x/2}}$ time.

    Thus, we solve for $x$ as the solution to,
    \begin{equation*}
        \omega(1, 1 - x, 1) = 2 + \frac{x}{2}
    \end{equation*}

    Using \cite{Complexity}, we choose $x = 0.30390661$ and obtain the running time $\tO{n^{2.15195331}}$.
\end{proof}

\subsection{\texorpdfstring{$(1 + \eps, \beta w)$}{(1+e, beta w)} Approximation for \texorpdfstring{$\beta \geq 4$}{beta geq 4}}

We now extend Algorithm \ref{alg:k-additive-apasp} to handle weighted graphs, proving the following analogue of Theorem \ref{thm:gen-additive-apasp}.

\weightedgenadditiveapasp

\paragraph{High Level Overview}

The weighted approximation differs from $\additiveAPASP$ in two key points.
First, we use a weighted decomposition of the graph computed by $\wDecompose$ in place of $\decompose$.
Second, we compute an {\it approximate} \minplus product with in place of an {\it exact} \minplus product.

Whereas $\decompose$ returns edge sets $E_i$ which require that at least one endpoint is of low degree, $\wDecompose$ returns edge sets $E_i$ which contain the $s_{i - 1}$ edges of least weight incident to each vertex.
In particular, for a given vertex $v$ of level $j$, there is an edge $(v, v^*)$ to $v^* \in D_j$ that is among the $s_{j - 1}$ lightest edges incident to $v$.
This ensures that when we encounter a blocking vertex, the edge that we take from its dominating vertex is lighter than the edge on the path we would have traversed, ensuring that the additive error incurred is at most twice the weight of some edge in the path.
Since the blocking vertices are distinct, this allows us to bound our overall error in terms of the weights of the $k$ heaviest edges on the path $P$.

However, since weights can be large and not necessarily bounded by $O(n)$, in order to compute a fast \minplus product, we must make a sacrifice in the accuracy of the \minplus product and settle for an approximate product.
By choosing an appropriate error parameter $\eps$, we can compute an efficient approximate \minplus product while incurring only a $(1 + \eps)$-multiplicative factor in our approximation.

\paragraph{Analysis}

Following \Cref{thm:gen-additive-apasp}, we exhibit an $(1 + \eps, 2 w_{u, v}(2))$ as an example, and note that the general algorithm for $\beta \geq 2$ follows similarly.

For $\beta = 2$, we set $k = 5$ and $k_0 = 2$.
If $P \subset E_3$, then the $\sparseAPASP$ algorithm obtains a $+2 w_{u, v}(2)$ approximation following \Cref{lemma:weighted-sparse-apasp}.
Suppose therefore $P \not\subset E_3$.
As in the unweighted case, we decompose $G$ into $\log n$ levels with degree thresholds $t_i = \frac{n}{2_i}$ and compute dominating sets $C_i$ of size $\tO{n/t_i}$ and edge sets $F_i$ of size $O(n t_{i - 1})$ so that all invocations of $\dijkstra$ require $\tO{n^2}$ time.

Let $P \subset F_j$ but $P \not\subset F_{j + 1}$ and let $e = (a, b) \in F_j \setminus F_{j + 1}$.
Then, $a^* = r(a, C_j)$ and $\rank_a(a^*) \leq t_j$.
Let $e' = (c, d)$ be the edge closest to $u$ such that $e' \notin E_5$ and $c$ the endpoint closest to $u$.
Let $c^* = r(c, D_4)$ so $\rank_c(c^*) < s_4$.
In the overview, we will assume $e' \neq e$ as otherwise we obtain a $+2 w_{u, v}$ approximation.
\begin{align*}
    \hat{\delta}(a^*, c^*) &\leq \wt(a^*, a) + \delta(a, c) + \wt(c, c^*) \\ 
    \hat{\delta}(a^*, v) &\leq \wt(a^*, a) + \delta(a, v)
\end{align*}
Then, the approximate \minplus product computes,
\begin{equation*}
    \hat{\delta}(v, c^*) \leq \left(1 + \frac{\eps}{3} \right) \left( \delta(v, c) + 2 \wt(a^*, a) + \wt(c, c^*) \right)
\end{equation*}
Then, $(v, c^*, c) \circ P_{c, u} \subset G_{5, u}$ so that $\dijkstra$ computes,
\begin{align*}
    \hat{\delta}(v, u) &\leq \hat{\delta}(v, c^*) + \wt(c^*, c) + \delta(c, u) \\
    &\leq \left(1 + \frac{\eps}{3} \right) \left( \delta(v, c) + 2 \wt(a^*, a) + \wt(c, c^*) \right) + \wt(c^*, c) + \delta(c, u) \\
    &\leq \left(1 + \frac{\eps}{3} \right) \left( \delta(v, u) + 2 \wt(a^*, a) + 2 \wt(c, c^*) \right) \\
    &\leq \left(1 + \frac{\eps}{3} \right) \left( \delta(v, u) + 2 \wt(e) + 2 \wt(e') \right) \\
    &\leq (1 + \eps) \delta(u, v) + 2 w_{u, v}(2)
\end{align*}

We note that the general argument follows from a similar adaptation to \Cref{thm:gen-additive-apasp}.

\paragraph{Algorithm}

Algorithm \ref{alg:k-weighted-additive-apasp} closely follows Algorithm \ref{alg:k-additive-apasp}.
Observe that the only differences are the usage of $\wDecompose$ in place of $\decompose$ and the usage of $\approximateMinPlus$ in place of $\monotoneMinPlus$.

We present the algorithm pseudocode and detailed analysis below.

\IncMargin{1em}
\begin{algorithm}[H]
\SetKwInOut{Input}{Input}\SetKwInOut{Output}{Output}
\Input{Unweighted, undirected Graph $G = (V, E)$ with $n$ vertices; approximation parameter $\beta$}
\Output{Distance estimate $\hat{\delta}: U \times V \rightarrow \Z$ such that $\delta(u, v) \leq \hat{\delta}(u, v) \leq \delta(u, v) + \beta$ for all $u, v \in V$}

\BlankLine

\textcolor{blue}{Phase 0: Set up and Decompose Graph}

$\hat{\delta}(u, v) \gets \begin{cases}
    1 & (u, v) \in E \\
    \infty & \otherwise
\end{cases}$

$k \gets 3 \beta - 1$

$x \gets$ solution to $\omega \left( 1 - \frac{\floor{k/3}}{\floor{k/3}+2} x, 1 - x, 1 - \frac{\floor{k/3} - 1}{\floor{k/3}+2} x \right) = 2 + \frac{x}{\floor{k/3} + 2}$

$s_{k - i} \gets n^{i \cdot \frac{x}{\floor{k/3} + 2}}$ for all $1 \leq i \leq \floor{k/3} + 2$

$k_0 \gets k - (\floor{k/3} + 2) = k - \floor{k/3} - 2$

$(D_{k_0}, D_{k_0 + 1}, \dotsc, D_k), (E_{k_0}, E_{k_0 + 1}, \dotsc, E_k), E^* \gets \wDecompose(G, (s_{k_0}, s_{k_0 + 1}, \dotsc, s_{k - 1}))$

\textcolor{blue}{Phase 1: Estimate Distances on High-Degree Paths}

$t_{i} \gets \frac{n}{2^i}$ for all $1 \leq i \leq l - 1 = \ceil{(1 - x) \log n}$

$(C_{1}, C_{2}, \dotsc, C_{l}), (F_{1}, F_{2}, \dotsc, F_{l}), F^* \gets \wDecompose(G, (t_{1}, t_{2}, \dotsc, t_{l - 1}))$

\For{$0 \leq i \leq l$}{
    \For{$w \in C_i$}{
        $X_i(w) \gets \dijkstra(w)$ on $G_i = \left(V, F_{i}\right)$
        
        $\hat{\delta}(w, v) \gets \min(\hat{\delta}, X_i(w, v))$ for all $v \in V$
    }
    Construct $|D_{k_0 + 2}| \times |C_i|$ matrix $A_i$ where $A_i(v, w) = X_i(w, v)$
    
    Construct $|C_i| \times |V|$ matrix $B_i$ where $B_i(w, v) = X_i(w, v)$
    
    $\hat{\delta} \gets \min\left(\hat{\delta}, \approximateMinPlus\left(A_i, B_i, 1 + \frac{\eps}{3}\right)\right)$
}

\textcolor{blue}{Phase 2: Estimate Distances on Low-Degree Paths}

\For{$k_0 + 1 \leq j \leq k$}{
    \For{$w \in D_{j}$}{
        $G_{j, w} = (V, E_j \cup (u \times V) \cup E^*)$
        
        $\hat{\delta} \gets \dijkstra(G_{j, w}, w, \hat{\delta})$
    }
}

\caption{$\alg{WeightedAdditiveAPASP}(G, \beta)$} 
\label{alg:k-weighted-additive-apasp}
\end{algorithm}
\DecMargin{1em}

To verify the correctness of our algorithm, we will require the following analog of Lemma \ref{lemma:sparse-apasp-approx}.
Note that this is equivalent to Lemma 5.2 in Cohen and Zwick \cite{cohen2001smallstretch}.

\begin{lemma}
    \label{lemma:weighted-sparse-apasp}
    Let $G$ be an undirected, weighted graph with $n$ vertices and $m$ edges and weight function $w: E \rightarrow \R$.
    Let $n = s_{k_0} > s_{k_0 + 1} > \dotsc > s_{k - 1} > s_{k} = 0$ be degree thresholds.

    For $k_0 + 1 \leq j \leq i$, let $\delta_j$ denote the estimate after the $j$-th iteration.
    Let $P$ be a path between $u \in D_j$ and $v \in V$ such that $P \subset E_{k_0 + 1}$.
    Define the set $L(P, j) = \set{k_0 + 1 \leq i \leq j - 1 \given P \cap (E_i \setminus E_{i + 1}) \neq \emptyset}$.
    Then, there is a set $\edgeset_j \subset P$ such that $|\edgeset_j| \leq |L(P, j)|$, $\edgeset_j \cap E_j = \emptyset$, and $\delta_j(u, v) \leq \wt(P) + 2 \wt(\edgeset_j)$.
\end{lemma}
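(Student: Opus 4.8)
The plan is to prove the lemma by induction on $j$, running from $j = k_0 + 1$ up to $j = k$, and to prove the slightly strengthened statement that $\edgeset_j$ can be chosen so that its edges have \emph{pairwise distinct levels}. This strengthening makes the cardinality bound automatic: since we also guarantee $\edgeset_j \cap E_j = \emptyset$, every edge of $\edgeset_j$ has level in $\{k_0+1, \dots, j-1\}$, so distinct levels force $|\edgeset_j| \le |L(P,j)|$. For the base case $j = k_0+1$ we have $L(P, k_0+1) = \emptyset$ and, because $s_{k_0} = n$ forces $E_{k_0+1} = E \supseteq P$, running $\dijkstra$ from $u \in D_{k_0+1}$ on $G_{k_0+1,u} \supseteq (V, E_{k_0+1})$ already traces $P$; take $\edgeset_{k_0+1} = \emptyset$.

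For the inductive step, if $P \subseteq E_j$ then $\dijkstra$ from $u \in D_j$ traces $P$ in $G_{j,u}$ and $\edgeset_j = \emptyset$ works. Otherwise, let $e = (a,b) \in P$ be the edge not in $E_j$ that is closest to $v$, with $b$ the endpoint nearer $v$, so $P_{b,v} \subseteq E_j$. Since $e \notin E_j$ we have $\rank_b(e) \ge s_{j-1}$, hence $\deg(b) \ge s_{j-1}$; let $j'$ be the index with $\rank_b(e) \in [s_{j'}, s_{j'-1})$, and observe $k_0 + 1 \le j' \le j-1$ (the upper bound from $\rank_b(e) \ge s_{j-1}$). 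Property (3) of $\wDecompose$ applied with $i = j'$ (using $b \in V_{s_{j'}}$) yields $b^* \in D_{j'}$ with $(b, b^*) \in E_{j'}^* \subseteq E^*$ and $\rank_b(b,b^*) < s_{j'} \le \rank_b(e)$, so $\wt(b, b^*) \le \wt(e)$; moreover $\rank_b(b,b^*) < s_{j'-1}$ gives $(b,b^*) \in E_{j'}$. Apply the induction hypothesis at level $j'$ to the walk $Q = (b^*, b) \circ P_{b,u}$ from $b^* \in D_{j'}$ to $u$: this gives $\edgeset' \subseteq Q$ with pairwise distinct levels, $\edgeset' \cap E_{j'} = \emptyset$, and $\delta_{j'}(b^*, u) \le \wt(Q) + 2\wt(\edgeset')$; since $(b,b^*) \in E_{j'}$ we actually have $\edgeset' \subseteq P_{b,u} \subseteq P$.

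Now in iteration $j > j'$, running $\dijkstra$ from $u$ on $G_{j,u} = (V, E_j \cup E^* \cup (u \times V))$, we may traverse the edge $(u, b^*)$ (present in $u \times V$ with weight $\hat\delta(u, b^*) \le \delta_{j'}(u, b^*)$, as estimates only improve and iteration $j'$ precedes $j$), then $(b^*, b) \in E^*$, then the edges of $P_{b,v} \subseteq E_j$, so that
\begin{align*}
    \delta_j(u,v) &\le \delta_{j'}(u, b^*) + \wt(b^*, b) + \wt(P_{b,v}) \\
    &\le \wt(P_{b,u}) + \wt(P_{b,v}) + 2\wt(b^*,b) + 2\wt(\edgeset') \le \wt(P) + 2\wt(e) + 2\wt(\edgeset').
\end{align*}
Set $\edgeset_j = \{e\} \cup \edgeset'$. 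Then $\edgeset_j \subseteq P$, $\delta_j(u,v) \le \wt(P) + 2\wt(\edgeset_j)$, and $\edgeset_j \cap E_j = \emptyset$ (because $e \notin E_j$ and $E_j \subseteq E_{j'}$). Finally, every edge of $\edgeset'$ has level $\le j'-1$ while the level of $e$ is $\ge j'$ (since $\rank_b(e) < s_{j'-1}$ forces $\min(\rank_a(e),\rank_b(e)) < s_{j'-1}$, hence level at least $j'$); in particular $e \notin \edgeset'$ and the levels in $\edgeset_j$ are pairwise distinct, closing the induction.

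The main work is the $\wDecompose$ bookkeeping: verifying the chain of rank inequalities that gives both $\wt(b,b^*) \le \wt(e)$ and $\level(e) \ge j'$, and checking $k_0+1 \le j' \le j-1$ so that iteration $j'$ really precedes iteration $j$. The one genuine subtlety I expect is that $b^*$ may lie on $P_{b,u}$, so $Q$ need not be a simple path; I would handle this by observing that the induction hypothesis and its proof only use $Q$ as an upper-bounding walk for $\dijkstra$ (equivalently, one may first replace $Q$ by the simple path it shortcuts to, whose edge set is a subset of $Q$'s, affecting none of the inequalities). Apart from that, the argument is the weighted analogue of the unweighted reasoning behind \Cref{lemma:sparse-apasp-approx} (this is the weighted counterpart of Lemma 5.2 of Cohen and Zwick \cite{cohen2001smallstretch}).
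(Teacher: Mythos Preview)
Your proposal is correct and follows essentially the same approach as the paper: both find the edge $e$ closest to $v$ outside $E_j$, move to a representative $b^*$ in an earlier dominating set via an $E^*$-edge of weight at most $\wt(e)$, apply the induction hypothesis to the extended path from $b^*$ to $u$, and then stitch back via $(u,b^*)\in u\times V$, $(b^*,b)\in E^*$, and $P_{b,v}\subseteq E_j$. The only differences are cosmetic: the paper inducts on $|L(P,j)|$ and picks the representative level as $\ell(e)$ (i.e., determined by $\min(\rank_a(e),\rank_b(e))$), whereas you induct on $j$ and pick $j'$ determined by $\rank_b(e)$ alone (so $j'\le \ell(e)$); your added ``pairwise distinct levels'' strengthening is a clean way to get the cardinality bound and is implicit in the paper's counting via $L(P,j)$.
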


\begin{proof}
    We proceed by induction.
    In the base case, let $|L(P, j)| = 0$ so that $P \subset E_{j}$.
    In particular, since $u \in D_j$, when we execute $\dijkstra$ in the $j$-th iteration, we find $P$ so that $\hat{\delta}_j(u, v) = \wt(P)$.

    Now, let $|L(P, j)| > 0$.
    Let $e_0$ be the closest edge to $v$ not in $E_j$.
    Then, $e_0 \in E_l \setminus E_{l + 1}$ for some $l < j$.
    Let $w$ be the endpoint of $e_0$ closest to $v$.
    Since $\deg(w) \geq s_{l}$, there is some $w^* = r(w, D_{l}) \in D_{l}$ such that $\rank_w(w^*, w) < s_{l}$ so that $(w^*, w) \in E_{l + 1} \subset E_{l}$.
    In particular, $\rank_w(w^*, w) < s_{l} < \rank_w(e_0)$.
    If we consider the sub-path $P_{u, w} \subset P$, we have $L(P_{u, w}, l) \subset L(P, j) \cap \set{k_0 + 1, k_0 + 2, \dotsc, l - 1}$.
    Now, applying the inductive hypothesis to the path $P_{u, w} \cup (w, w^*)$,
    \begin{equation*}
        \hat{\delta}_{l}(w^*, u) \leq w(P_{u, w}) + \wt(w^*, w) + 2 \wt(\edgeset_{l})
    \end{equation*}
    where $\edgeset_{l} \subset P_{u, w} \cup (w, w^*)$, $|\edgeset_{l}| \leq |L(P_{u, w} \cup (w, w^*), l)|$ and $\edgeset_{l} \cap E_{l} = \emptyset$.
    Note since $(w, w^*) \in E_l$, $L(P_{u, w}, l) = L(P_{u, w} \cup (w, w^*), l)$.
    
    Now, in the $j$-th iteration, we use the path $(u, w^*), (w^*, w)$ and the remaining sub-path $P_{w, v} \subset E_j$ so that,
    \begin{equation*}
        \hat{\delta}_{j}(w^*, u) \leq \wt(P_{u, w}) + 2 \wt (\edgeset_{l}) + 2 \wt(w^*, w) + \wt(P_{w, v}) \leq \wt(P) + 2 \wt(\edgeset_{l}) + 2 \wt(e_0)
    \end{equation*}

    Note $e_0 \subset E_{l}$ so that $e_0 \notin \edgeset_{l}$ and we can define $\edgeset_{j} = \edgeset_{l} \cup \set{e_0}$.
    Finally, it is easy to check $|\edgeset_{j}| = |\edgeset_{l}| + 1 \leq |L(P_{u, w}, l)| + 1 \leq |L(P, j)|$ as $l \in L(P, j) \setminus L(P_{u, w}, l)$.
\end{proof}

We are now ready to prove Theorem \ref{thm:weighted-additive-apasp}.

\begin{proof}
    (Correctness)
    
    Recall that $\beta \geq 2$.
    We prove the following for all $j \geq k_0 + 3$.
    Let $\hat{\delta}_j$ be the estimate $\hat{\delta}$ after the $j$-th iteration of Phase 2 in Algorithm \ref{alg:k-weighted-additive-apasp}.
    For $u \in D_j, v \in V$ and $P$ a shortest path from $u$ to $v$,
    \begin{equation*}
        \hat{\delta}_j(u, v) \leq \delta(u, v) + 2 \wt(\edgeset_j)
    \end{equation*}
    where $\edgeset_j \subset P$, $\edgeset_j \cap E_{j} = \emptyset$ and $|\edgeset_j| \leq j - (k_0 + 1)$.
    From Lemma \ref{lemma:k0+3-approx-weighted}, we have proved the base case for $j = k_0 + 3$.
    Let $j > k_0 + 3$ and $e = (a, b)$ be the closest edge in $P$ to $v$ not in $E_j$ and $b$ the endpoint closer to $v$.

    Then, $\rank_a(e), \rank_b(e) \geq s_{j - 1}$ and $j \geq k_0 + 3$.
    In particular, $\deg(b) \geq s_{j - 1}$ so there is a vertex $b^* = r(b, D_{j - 1})$ satisfying $\rank_b(b, b^*) < s_{j - 1}$ and therefore $\wt(b, b^*) \leq \wt(e)$ and $(b^*, b) \in E_j$,
    By induction, 
    \begin{equation*}
        \hat{\delta}_{j - 1}(b^*, u) \leq \delta(b^*, u) + 2 \wt(\edgeset_{j - 1})
    \end{equation*}

    Then,
    \begin{equation*}
        \hat{\delta}_{j}(u, v) \leq \hat{\delta}_{j - 1}(b^*, u) + \wt(b^*, b) + \delta(b, v) \leq \delta(u, b) + \delta(b, v) + 2 \wt(\edgeset_{j - 1}) + 2 \wt(b, b^*)
    \end{equation*}

    Thus, define $\edgeset_{j} = \edgeset_{j - 1} \cup \set{(b, b^*)}$ which satisfies the desired conditions since $(b, b^*) \in E_j \subset E_{j - 1}$ and is therefore not in $\edgeset_{j - 1}$.
    Furthermore, with induction, $|\edgeset_{j}| \leq j - (k_0 + 1)$.
    In particular, for $j = k$, since $D_k = V$,
    \begin{equation*}
        \hat{\delta}(u, v) \leq \delta(u, v) + 2 \wt (\edgeset_{k}) \leq \delta(u, v) + 2 \kHeavy(P, \floor{k/3} + 1) = \delta(u, v) + 2 \kHeavy(P, \beta)
    \end{equation*}

    as desired, since $k - (k_0 + 1) = \floor{k/3} + 1$.
\end{proof}

\begin{proof}
    (Time Complexity)

    Phase 0 requires $O(m)$ time, as $\decompose$ requires $O(k(m + n))$-time.

    We now examine Phase 1.
    Each $|C_i| = \bigtO{\frac{n}{t_i}} = \tO{2^i}$
    Each graph $F_i$ has $O(n t_i)$ edges.
    Thus, each $\dijkstra$ requires $\tO{n^2}$-time.
    Over $k$ iterations, this is again $\tO{n^2}$-time.
    
    Now, $|D_{k_0 + 2}| = \bigtO{\frac{n}{s_{k_0 + 2}}} = \bigtO{\frac{n}{s_{k - \floor{k/3}}}} = \bigtO{n^{1 - \frac{\floor{k/3}}{\floor{k/3}+2} x}}$.
    Similarly, we can upper bound the size of $D_{k_0 + 3}$ as $|D_{k_0 + 3}| = \bigtO{\frac{n}{s_{k - (\floor{k/3} - 1)}}} = \bigtO{n^{1 - \frac{\floor{k/3} - 1}{\floor{k/3}+2} x}}$.
    Finally, $|C_i| = \tO{2^l} = \tO{n^{1 - x}}$.
    Thus, each $\approximateMinPlus$ requires time,
    \begin{equation*}
        \frac{1}{\eps} n^{\omega \left( 1 - \frac{\floor{k/3}}{\floor{k/3}+2} x, 1 - x, 1 - \frac{\floor{k/3} - 1}{\floor{k/3}+2} x \right)}
    \end{equation*}

    In Phase 2, each $|D_{j}| = \bigtO{n^{1 - (k - j) \frac{x}{\floor{k/3} + 2}}}$ and $|E_j| \leq \bigO{n^{1 + (k - j + 1) \frac{x}{\floor{k/3} + 2}}}$.
    Therefore, each invocation of $\dijkstra$ in Phase 2 requires time $\bigtO{n^{2 + \frac{x}{\floor{k/3} + 2}}}$.
    Finally, we balance terms to optimize,
    \begin{equation*}
        \omega \left( 1 - \frac{\floor{k/3}}{\floor{k/3}+2} x, 1 - x, 1 - \frac{\floor{k/3} - 1}{\floor{k/3}+2} x \right) = 2 + \frac{x}{\floor{k/3} + 2}
    \end{equation*}
    To get the exact expression, we plug in $k = 3 \beta - 1$ and in particular $\floor{k/3} = \beta - 1$.
\end{proof}

\begin{lemma}
    \label{lemma:k0+3-approx-weighted}
    Let $u \in D_{k_0 + 3}$ and $v \in V$ and $P$ be a shortest path between $u, v$.
    Then, after the $(k_0 + 3)$-th iteration of Phase 2 of Algorithm \ref{alg:k-weighted-additive-apasp},
    \begin{equation*}
        \delta(u, v) \leq \hat{\delta}(u, v) \leq \delta(u, v) + 2 \wt(\edgeset_{k_0 + 3})
    \end{equation*}
    for some $\edgeset_{k_0 + 3} \subset P$ of size at most 2 and $\edgeset_{k_0 + 3} \cap E_{k_0 + 3} = \emptyset$.
\end{lemma}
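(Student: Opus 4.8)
The plan is to mirror the structure of Lemma~\ref{lemma:k0+3-approx-monotone} (the unweighted $(k_0+3)$-iteration lemma) but track edge \emph{weights} rather than additive constants, using the weighted decomposition $\wDecompose$ and the approximate \minplus product in place of their unweighted counterparts. First I would split on whether the sub-path $P$ (equivalently $P_{u,v}$, since $u\in D_{k_0+3}$) lies entirely in $E_{k_0+1}$, i.e.\ whether $\deg(P)\le s_{k_0}$. In that case, I invoke Lemma~\ref{lemma:weighted-sparse-apasp} directly: since $u\in D_{k_0+3}$ we have $L(P,k_0+3)\subset\{k_0+1,k_0+2\}$, so $\sparseAPASP$ produces an estimate $\hat\delta_{k_0+3}(u,v)\le\delta(u,v)+2\wt(\edgeset)$ with $|\edgeset|\le 2$ and $\edgeset\cap E_{k_0+3}=\emptyset$, which is exactly the claim with $\edgeset_{k_0+3}=\edgeset$.

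The substantive case is $\deg(P)>s_{k_0}$, handled by Phase~1. Following the overview preceding Algorithm~\ref{alg:k-weighted-additive-apasp}, let $e'=(c,d)$ be the edge of $P$ closest to $v$ that is not in $E_{k_0+3}$, with $c$ the endpoint nearer $u$, so $\rank_c(e')\ge s_{k_0+2}$ and there is $c^*=r(c,D_{k_0+2})$ with $\rank_c(c,c^*)<s_{k_0+2}\le\rank_c(e')$, hence $\wt(c,c^*)\le\wt(e')$ and $(c^*,c)\in E_{k_0+3}$. Let $e=(a,b)$ be an edge of $P$ of maximum rank (the "blocking" edge witnessing $\deg(P)>s_{k_0}$), let $j$ be the level with $P\subset F_j$, $P\cap(F_j\setminus F_{j+1})\ne\emptyset$, and $a^*=r(a,C_j)$. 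Since $P\subset F_j$, running $\dijkstra$ from $a^*$ in $G_j=(V,F_j)$ gives $\hat\delta(a^*,c^*)\le\wt(a^*,a)+\delta(a,c)+\wt(c,c^*)$ and $\hat\delta(a^*,v)\le\wt(a^*,a)+\delta(a,v)$; feeding the matrices $A_j,B_j$ into $\approximateMinPlus$ with parameter $1+\eps/3$ (Theorem~\ref{thm:approximate-min-plus}) yields
\[
\hat\delta(v,c^*)\le\Bigl(1+\tfrac{\eps}{3}\Bigr)\bigl(\delta(v,c)+2\wt(a^*,a)+\wt(c,c^*)\bigr).
\]
Then in the $(k_0+3)$-th iteration of Phase~2, since $\rank_a(a^*)\le t_j\le t_{j-1}\le\rank_a(b)$ we have $\wt(a^*,a)\le\wt(a,b)=\wt(e)$, and the edge $(v,c^*)$ together with $(c^*,c)\in E^*$ and the remaining sub-path $P_{c,u}\subset E_{k_0+3}$ lies in $G_{k_0+3,u}$, so $\dijkstra$ from $u$ (using edge $(u,v)$ resp.\ routing appropriately) produces $\hat\delta_{k_0+3}(w^*,x)$-type estimates of the form $\delta+2\wt(e)+2\wt(e')$. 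I would then set $\edgeset_{k_0+3}=\{e,e'\}$ (or a singleton if $e=e'$), check $\edgeset_{k_0+3}\cap E_{k_0+3}=\emptyset$ (both $e,e'\notin E_{k_0+3}$ by construction), and absorb the $(1+\eps/3)$ factor into the overall $(1+\eps)$ multiplicative slack exactly as in Case~1 of the proof of Theorem~\ref{thm:weighted-2-apasp}, using $\wt(e),\wt(e')\le\delta(u,v)$. Feasibility ($\delta(u,v)\le\hat\delta(u,v)$) follows from Lemma~\ref{lemma:bounded-additive-feasible-estimates}-style reasoning: every estimate is realized by an actual walk in $G$, and $\approximateMinPlus$ never underestimates.

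The main obstacle I anticipate is bookkeeping the interaction between the \emph{multiplicative} $(1+\eps/3)$ error from $\approximateMinPlus$ and the subsequent $\dijkstra$ relaxation: the approximate product is applied to a quantity that already contains $\delta(v,c)$ plus edge weights, and one must be careful that after adding $\wt(c^*,c)+\delta(c,u)$ the combined error is still of the form $(1+\eps/3)(\delta(u,v)+2\wt(e)+2\wt(e'))$ rather than, say, $(1+\eps/3)\delta + 2(1+\eps/3)\wt(\cdots)$ with an uncontrolled cross term — this is precisely the computation displayed in the overview's final \texttt{align}, so I would reproduce that chain of inequalities carefully. A secondary subtlety is confirming that the edge $(c,c^*)$ chosen by $\wDecompose$ has rank $<s_{k_0+2}$ (not merely $<s_{k_0+3}$), which is what makes $(c^*,c)\in E_{k_0+3}$ and hence usable "for free" in the $(k_0+3)$-th iteration; this is exactly property~3 of Lemma~\ref{lemma:weight-decomposition} applied at level $k_0+2$, so no new argument is needed, only a careful citation.
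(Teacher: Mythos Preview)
Your overall plan---split on whether $P\subset E_{k_0+1}$, invoke Lemma~\ref{lemma:weighted-sparse-apasp} in Case~1, and in Case~2 use the approximate \minplus product to bridge to a dominator $c^*\in D_{k_0+2}$---is exactly the paper's approach. But the execution of Case~2 has the orientation backwards, and this breaks the argument in three places.

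You define $e'=(c,d)$ as the edge of $P$ \emph{closest to $v$} not in $E_{k_0+3}$, with $c$ the endpoint nearer $u$, and then assert $P_{c,u}\subset E_{k_0+3}$. That does not follow: being closest to $v$ tells you $P_{d,v}\subset E_{k_0+3}$, not $P_{c,u}$. Second, with this orientation the high-rank edge $e=(a,b)$ (which is also $\notin E_{k_0+3}$) lies on the $u$-side of $e'$, so $a$ sits between $u$ and $c$; then $\delta(a,c)+\delta(a,v)=2\delta(a,c)+\delta(c,v)\neq\delta(c,v)$, and your displayed bound on $\hat\delta(v,c^*)$ is off by an uncontrolled $2\delta(a,c)$. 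Third, even granting that bound, the edge $(v,c^*)$ with weight $\hat\delta(v,c^*)$ is \emph{not} available in $G_{k_0+3,u}$: that graph contains $u\times V$, not $v\times V$. You seem to be transcribing the $\beta=2$ overview, but there $k_0+3=k$ so $D_{k_0+3}=V$ and one may run $\dijkstra$ from $v$; for general $\beta$ the lemma only gives $u\in D_{k_0+3}$.

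The fix is to flip roles, which is what the paper's formal proof does: take the closest-to-$v$ bad edge $e_0$ and let $w_0$ be its endpoint \emph{closer to $v$} (so $P_{w_0,v}\subset E_{k_0+3}$), set $w_0^*=r(w_0,D_{k_0+2})$, and use the \minplus product to bound $\hat\delta(u,w_0^*)$ rather than $\hat\delta(v,c^*)$. Now $a$ lies between $u$ and $w_0$, so $\delta(a,u)+\delta(a,w_0)=\delta(u,w_0)$, and in the $(k_0+3)$-th iteration you route $u\to w_0^*\to w_0\to\cdots\to v$ using $(u,w_0^*)\in u\times V$, $(w_0^*,w_0)\in E^*$, and $P_{w_0,v}\subset E_{k_0+3}$. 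The set $\edgeset_{k_0+3}=\{(a,b),e_0\}$ (or a singleton if they coincide) then works, and the verification that $(a,b)\notin E_{k_0+3}$ uses $\rank_a(a,b)\ge t_j\ge t_{l-1}> s_{k_0+2}$, exactly as you anticipated.
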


\begin{proof}
    We proceed by case analysis.
    
    \paragraph{Case 1: $P \subset E_{k_0 + 1}$}

    By Lemma \ref{lemma:weighted-sparse-apasp} and we have $\hat{\delta}(u, v) \leq \delta_{k_0 + 3}(u, v) \leq \delta(u, v) + 2 \wt(\edgeset_{k_0 + 3})$ where $|\edgeset_{k_0 + 3}| \leq |L(P, k_0 + 3)| \leq 2$ and $\edgeset_{k_0 + 3} \cap E_{k_0 + 3} = \emptyset$.
    
    \paragraph{Case 2: $P \not\subset E_{k_0 + 1}$}

    Finally, suppose $P$ contains some edge not in $E_{k_0 + 1}$.
    Let $e_0$ be the closest edge in $P$ to $v$ not in $E_{k_0 + 3}$ and $w_0$ be the endpoint closer to $v$.
    Then, both endpoints of $e_0$ have degree at least $s_{k_0 + 2}$ so that there is $w_0^* = r(w_0, D_{k_0 + 2})$ such that $\rank_w(w_0, w) < s_{k_0 + 2}$ and therefore $\wt(w_0^*, w) \leq \wt(e_0)$.
    
    Note that there is some $1 \leq j \leq l - 1$ such that $P \subset F_{j} \setminus F_{j+1}$ as $t_l = \frac{n}{2^l} \leq n^x = s_{k_0}$.
    Let $e = (a, b) \in P$ such that $e \in F_{j} \setminus F_{j + 1}$.
    Then, $\rank_a(a, b), \rank_b(a, b) \geq t_{j}$.
    Since $\deg(a) > s_j$, there is vertex $a^* = r(a, D_{j})$ such that $\rank_a(a, a^*) < s_j$ and $(a, a^*) \in F_{j + 1} \subset F_{j}$. 
    Since $P \subset F_{j}$, in the $j$-th iteration and $w_0^* \in D_{k_0 + 2}$,
    \begin{align*}
        X_j(a^*, u) &\leq \wt(a^*, a) + \delta(a, u) \leq \delta(a, u) + \wt(a, b) \\
        X_j(a^*, w_0^*) &\leq \wt(a^*, a) + \delta(a, w_0) + \wt(w_0, w_0^*) \leq \delta(a, w_0) + \wt(a, b) + \wt(e_0)
    \end{align*}

    From the $\approximateMinPlus$, we have,
    \begin{equation*}
        \hat{\delta}(u, w_0^*) \leq \left(1 + \frac{\eps}{3}\right) ( \delta(a, u) + \delta(a, w_0) + 2 \wt(a, b) + \wt(e_0)) \leq (1 + \eps) \delta(u, w_0) + 2 \wt(a, b) + \wt(e_0)
    \end{equation*}

    If $e = e_0$, then we automatically have $X_j(w_0^*, u) \leq \delta(w_0, u) + \wt(w_0^*, w_0) \leq \delta(w_0, u) + \wt(e_0)$.
    Thus, we can assume $e \neq e_0$.

    Then, when we execute $\dijkstra$ from $u$ in the $(k_0 + 3)$-th iteration, we use the edges $(u, w_0^*)$, $(w_0^*, w_0)$ and the remaining edges in $E_{k_0 + 3}$, so that,
    \begin{equation*}
        \hat{\delta}(u, v) \leq \hat{\delta}(u, w_0^*) + \wt(e_0) + \delta(w_0, v) \leq (1 + \eps) \delta(u, v) + \leq (1 + \eps) \delta(u, v) + 2 (\wt(a, b) + \wt(e_0))
    \end{equation*}

    We claim $\set{(a, b), e_0}$ satisfies the desired conditions.
    Clearly $\edgeset_{k_0 + 3}$ has size 2.
    By definition, $e_0 \notin E_{k_0 + 3}$.
    Recall $\rank_a(e), \rank_b(e) \geq t_{j} \geq t_{l} \geq \frac{n^x}{4} > s_{k_0 + 2}$ so that $(a, b) \notin E_{k_0 + 3}$.
\end{proof}

\section{\texorpdfstring{$(\alpha, \beta)$}{(alpha, beta)}-Approximate APSP}
\label{sec:sub-2-mult-approx}

In Section \ref{sec:mult-approx-bounded}, we showed the best known algorithm for $(2, 0)$-multiplicative approximation on general graphs, as well as observed that $(\frac{7}{3}, 0)$-approximation can be done in quadratic time, building on a quadratic $(2, 1)$-approximation due to Baswana and Kavitha \cite{baswana2010fasterapasp} and Berman  and Kasiviswanathan \cite{berman2007approxapsp}.
Furthermore, it is well known that any $(2 - \eps, 0)$-approximation is as hard as Boolean Matrix Multiplication for any $\eps > 0$, and therefore requires $\Omega(n^{\omega})$ time.
However, it is possible to obtain $(\alpha, \beta)$ approximations for $\alpha < 2$ if we allow sufficient additive error $\beta$.
We will briefly describe the trade-off in this section.

\subsection{\texorpdfstring{$(\alpha, \beta)$}{(alpha, beta)}-Approximation for Unweighted Graphs}

Since our algorithms on unweighted graphs generally provide additive guarantees, the limiting factors are the pairs of vertices with short distances.
The following result shows the running time required to obtain an $(\alpha, \beta)$ approximation using our previous results for $\alpha < 2$.

\begin{theorem}
    \label{thm:sub-2-mult-add-approx}
    Let $\beta \geq 2$ and $1 < \alpha < 2$.
    Let $T(\gamma)$ denote the running time of Algorithm \ref{alg:bounded-additive-apasp} when given approximation parameter $\gamma$.
    
    Then, Algorithm \ref{alg:alpha-beta-apasp} produces a $(\alpha, \beta)$-approximate $\hat{\delta}$ such that for all $u, v$,
    \begin{equation*}
        \delta(u, v) \leq \hat{\delta}(u, v) \leq \alpha \delta(u, v) + \beta
    \end{equation*}
    in time $\bigtO{T(\gamma)}$ where $\gamma$ is the largest even integer satisfying,
    \begin{equation*}
        \gamma \leq 1 + \frac{\beta - 1}{2 - \alpha}
    \end{equation*}
\end{theorem}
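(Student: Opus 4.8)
The plan is to run Algorithm \ref{alg:bounded-additive-apasp} (i.e. $\boundedAdditiveAPASP$) with additive parameter $\gamma$ and a suitable distance bound $C$, and then to argue that the $+\gamma$ guarantee on short paths, combined with the trivial observation that for long paths a purely additive error $\gamma$ is already within an $\alpha$ multiplicative factor, yields the desired $(\alpha,\beta)$ estimate everywhere. Concretely, first I would set $\gamma$ to be the largest even integer with $\gamma \le 1 + \frac{\beta-1}{2-\alpha}$, and choose the threshold $C$ so that for every pair $u,v$ with $\delta(u,v) > C$ we have $\delta(u,v) + \gamma \le \alpha\,\delta(u,v) + \beta$; solving $\gamma \le (\alpha - 1)\delta(u,v) + \beta$ shows it suffices to take $C = \lceil \frac{\gamma - \beta}{\alpha - 1} \rceil$ (and if $\gamma \le \beta$ this condition holds for all distances, so any constant $C$ works). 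For paths with $\delta(u,v) \le C$, Theorem \ref{thm:gen-bounded-additive-apasp} guarantees $\hat\delta(u,v) \le \delta(u,v) + \gamma$, and I must check this is also bounded by $\alpha\,\delta(u,v) + \beta$ in the relevant range.

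The key inequality to verify is: for $2 \le \delta(u,v) \le C$ (and noting $1$-edge paths are exact), $\delta(u,v) + \gamma \le \alpha\,\delta(u,v) + \beta$, equivalently $\gamma - \beta \le (\alpha-1)\delta(u,v)$. The worst case is the smallest distance we must approximate. Here I would carefully handle the boundary: for $\delta(u,v) \ge 1$ we need $\gamma \le (\alpha-1)\cdot 1 + \beta = \alpha - 1 + \beta$; but the theorem's bound is the slightly stronger-looking $\gamma \le 1 + \frac{\beta-1}{2-\alpha}$, so I should reconcile these. Rewriting $1 + \frac{\beta-1}{2-\alpha} = \frac{(2-\alpha) + \beta - 1}{2-\alpha} = \frac{1 - \alpha + \beta}{2-\alpha}$; since $2 - \alpha < 1$, this quantity is at least $1 - \alpha + \beta$, so the theorem's hypothesis is consistent — in fact it permits a larger $\gamma$ than the naive bound. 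The resolution is that the short-distance cases where $\delta(u,v)$ is, say, $2$ need a weaker condition ($\gamma \le 2(\alpha-1) + \beta$), and the precise bound $1 + \frac{\beta-1}{2-\alpha}$ comes from balancing against the $\delta(u,v) = 2$ case or similar; I would trace through $\delta(u,v)=2,3,\dots$ to find exactly which $\delta$ is binding and confirm the stated formula. Then the running time is simply $T(\gamma)$ as stated, since $C$ is a constant depending only on $\alpha,\beta,\gamma$, and Theorem \ref{thm:gen-bounded-additive-apasp} gives running time $\bigtO{n^{2 + 2x/(\gamma+2)}} = \bigtO{T(\gamma)}$.

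Algorithm \ref{alg:alpha-beta-apasp} itself is then: invoke $\boundedAdditiveAPASP(G, \gamma, C)$ and return its output $\hat\delta$ directly (possibly capped by a call to $\denseAPASP$ or $\additiveAPASP$ for the long paths, though as observed a pure $+\gamma$ estimate already suffices multiplicatively once $\delta > C$ — so no extra work is needed beyond what $\boundedAdditiveAPASP$ produces on short paths, and on long paths we can just keep whatever estimate $\boundedAdditiveAPASP$ returns, which is feasible by Lemma \ref{lemma:bounded-additive-feasible-estimates}). The main obstacle is purely bookkeeping: pinning down the exact binding value of $\delta(u,v)$ that produces the clean expression $1 + \frac{\beta-1}{2-\alpha}$, and making sure the edge cases $\delta(u,v) \in \{1, 2\}$ are handled — $1$-edge paths exactly, $2$-edge and up via the additive guarantee — so that the claimed bound holds for all pairs and not just asymptotically large distances. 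No deep new idea is required; it is a reduction to Theorem \ref{thm:gen-bounded-additive-apasp}.
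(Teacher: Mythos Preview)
Your proposal has two genuine gaps, and together they mean the approach as written cannot achieve the stated $\gamma$.

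First, for short paths ($\delta(u,v) < \gamma$), a $+\gamma$ additive estimate is \emph{not} good enough. Take for instance $\alpha = 9/5$, $\beta = 2$: the theorem allows $\gamma = 6$, but on a path of length $2$ the bound $\hat\delta \le 2 + 6 = 8$ exceeds $\alpha\cdot 2 + \beta = 5.6$. You noticed the tension (``the theorem's bound \ldots permits a larger $\gamma$ than the naive bound'') but did not resolve it. The paper resolves it by invoking the $\tO{n^2}$-time $(2,1)$-approximation $\baswanaAPASP$ of \cite{baswana2010fasterapasp, berman2007approxapsp} on all paths of length at most $\gamma - 1$. For such a path, $\hat\delta \le 2\delta + 1 \le (2 - \tfrac{\beta-1}{\gamma-1})\delta + \beta$, and the requirement $2 - \tfrac{\beta-1}{\gamma-1} \le \alpha$ rearranges exactly to $\gamma \le 1 + \tfrac{\beta-1}{2-\alpha}$. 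This is where the stated formula comes from; it cannot be recovered from the $+\gamma$ guarantee alone.

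Second, for long paths ($\delta(u,v) > C$), $\boundedAdditiveAPASP$ gives only feasibility $\hat\delta \ge \delta$ and \emph{no} upper bound at all --- its $+\gamma$ guarantee applies only when $\delta \le C$. So ``keep whatever estimate $\boundedAdditiveAPASP$ returns'' provides no multiplicative control on long paths. The paper handles this by additionally running $\additiveAPASP(G,k)$ for a constant $k$ chosen large enough that its cost is $O(T(\gamma))$, and setting $C = \tfrac{k-\beta}{\alpha-1}$ so that the resulting $+k$ estimate satisfies $\delta + k \le \alpha\delta + \beta$ whenever $\delta \ge C$. Thus Algorithm~\ref{alg:alpha-beta-apasp} is a three-way minimum of $\baswanaAPASP$, $\boundedAdditiveAPASP(G,\gamma,C)$, and $\additiveAPASP(G,k)$, not a single call to $\boundedAdditiveAPASP$.
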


Before discussing the algorithm and proof, we give a high level overview and discuss our parameter choices.

\paragraph{High Level Overview}

Our goal in this section is to use our previous algorithms to obtain good bi-criteria approximations for the APSP Problem.
The idea is that by increasing the tolerance for additive error, we can in fact obtain approximations where the multiplicative factor is less than 2.

The key component of our algorithm is running Algorithm \ref{alg:bounded-additive-apasp}.
We will ensure small error on short paths by invoking $\baswanaAPASP$ and ensure accuracy on long paths with $\additiveAPASP$ (or alternatively $\denseAPASP$) since we can choose an arbitrary constant $k$ such that executing $\additiveAPASP$ or $\denseAPASP$ is not the computational bottleneck.
The constant $C$ is chosen so that any path incurring the larger additive error $k$ is long enough to ensure that the multiplicative error component $\alpha$ is small.

Why do we require $\beta \geq 2$?
Let $\gamma$ be the approximation parameter used in the invocation of Algorithm \ref{alg:bounded-additive-apasp}.
Then for all paths of length $\gamma - 1 \leq \delta(u, v) \leq C$, we obtain a $+\gamma$ additive error.
Therefore, for a fixed $\beta$, the paths of length $\gamma - 1$ have the worst multiplicative error.
Let $P$ be a path of length $\gamma - 1$.
On such paths, we incur an error of $\gamma$ and thus $\hat{\delta}(u, v) \leq 2 \gamma - 1$.
If $\beta = 1$, then we must have $\alpha \geq 2$, whereas we are interested in $\alpha < 2$.
Thus, we require $\beta \geq 2$ in the input of our algorithm.

As a sub-routine, we will require the $(2, 1)$-approximation algorithm of Baswana and Kavitha \cite{baswana2010fasterapasp} and Berman and Kasiviswanathan \cite{berman2007approxapsp}.

\begin{lemma}
    \label{lemma:2-1-approx}
    There is an algorithm $\baswanaAPASP$ that runs in time $\tO{n^2}$ and outputs estimate $\hat{\delta}$ satisfying,
    \begin{equation*}
        \delta(u, v) \leq \hat{\delta}(u, v) \leq 2 \delta(u, v) + 1 
    \end{equation*}
    for all $u, v \in V$.
\end{lemma}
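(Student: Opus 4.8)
The plan is to reuse the degree-decomposition machinery of Dor--Halperin--Zwick \cite{dor2000apasp} and Baswana--Kavitha \cite{baswana2010fasterapasp} --- the same machinery behind \Cref{thm:baswana} and \sparseAPASP --- and to spend the single unit of additive slack to replace its exact shortest-path computations by cheaper, approximate ones. First I would fix a degree threshold $s$ (morally $\sqrt n$, raised to $\Theta(m/n)$ on dense inputs) and, via \dominate or \rDominate (\Cref{lemma:dominating-set}), compute a set $D$ dominating $V_s=\{v:\deg(v)\ge s\}$ with $|D|=\bigtO{n/s}$. I would then compute: (i) $\delta(d,v)$ for all $d\in D$, $v\in V$, by running \bfs from each $d$; (ii) for each $v$, the ``ball'' $B(v)=\{w:\delta(v,w)\le\delta(v,D)-2\}$ with exact distances, by a \bfs from $v$ truncated at the already-known depth $\delta(v,D)-2$; and (iii) a $(2,1)$-estimate of distances in the sparse part $G_s=(V,E_s)$, via the $\sparseAPASP$-style multilevel decomposition (recursing/iterating to cover pairs whose shortest path avoids $V_s$). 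The answer $\hat\delta(u,v)$ is the minimum over: $\delta(u,v)$ if $v\in B(u)$ or $u\in B(v)$; the estimate from (iii); and $\min_{d\in D}\bigl(\delta(d,u)+\delta(d,v)\bigr)$.

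For correctness, feasibility ($\delta\le\hat\delta$) is clear because every term is a genuine walk length, and I would prove $\hat\delta(u,v)\le 2\delta(u,v)+1$ by cases on a shortest $u$--$v$ path $P$. If some vertex $x\in P$ has $\deg(x)\ge s$, it has a neighbour $d\in D$, so $\delta(d,u)+\delta(d,v)\le(\delta(x,u)+1)+(\delta(x,v)+1)=\delta(u,v)+2\le 2\delta(u,v)+1$ since $\delta(u,v)\ge1$. If $P\subseteq G_s$, then $\delta_{G_s}(u,v)=\delta(u,v)$ and (iii) yields $\le 2\delta(u,v)+1$. The very shortest pairs, whose path stays inside a ball, are recovered exactly; they are exactly the ones for which the first two mechanisms are too weak, and recovering them exactly is what keeps the additive error at $+1$ rather than the $O(\log n)$ a bare decomposition would give. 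Note the $+1$ is used precisely to charge the ``long hop'' to a vertex only adjacent to $P$ rather than on it.

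For the running time I would observe two things. The \bfs's in (i) cost $\bigtO{|D|\cdot m}=\bigtO{(n/s)m}$, which is $\bigtO{n^{2.5}}$ for dense $G$ at $s=\sqrt n$ but drops to $\bigtO{n^2}$ once $s\ge m/n$. The balls in (ii) are cheap because $B(v)$ contains no vertex of degree $\ge s$ --- such a $w$ would be adjacent to some $d\in D$, forcing $\delta(v,D)\le\delta(v,d)\le\delta(v,w)+1\le\delta(v,D)-1$ --- so the truncated \bfs from $v$ stays inside $G_s$ and costs $\sum_{w\in B(v)}\deg(w)\le|B(v)|\cdot s$; with $D$ a random sample of rate $\Theta(\tfrac{\log n}{s})$ one has $\mathbb E|B(v)|=\bigtO{s}$ by the standard bunch bound, so (ii) is $\bigtO{n^2}$ in expectation. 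Step (iii) cannot afford all-pairs \bfs on $G_s$ (that is $\bigtO{n\,|E_s|}>n^2$ once $|E_s|>n$), so it must be done by a $\Theta(\log n)$-level geometric degree decomposition, which costs $\bigtO{n^2}$ per level via the Dijkstra-relaxed-by-current-estimates trick.

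The hard part is not hitting $\bigtO{n^2}$ but keeping the additive error exactly $+1$ while doing so. The multilevel decomposition by itself contributes $+2$ for each level boundary a path crosses --- this is the blocking-vertex count of \Cref{def:blocking-vertices} and \Cref{lemma:block-approx-error-1} --- hence $+\Theta(\log n)$ in the worst case; and brute force on $G_s$ is too slow. Resolving this tension requires interleaving the ball structure with the decomposition so that every pair is settled either wholly inside its balls (exactly) or by a single hop through a dominating vertex adjacent to the path (error exactly $+2$, and thus within $2\delta+1$), with no accumulation across levels. Making that interface precise --- the probabilistic ball-size bound, a \hittingSet-based derandomization (\Cref{lemma:hitting-set}) for the deterministic variant, and the bookkeeping guaranteeing a path triggers at most one lossy transition --- is the crux, and is exactly what \cite{baswana2010fasterapasp, berman2007approxapsp} carry out.
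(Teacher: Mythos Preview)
The paper does not prove this lemma at all: it is quoted as a black-box result from \cite{baswana2010fasterapasp} and \cite{berman2007approxapsp}, with no accompanying argument. So there is no ``paper's own proof'' to compare your proposal against.

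Your sketch is a reasonable high-level narrative of why such a result is plausible, and you correctly isolate the real difficulty --- that a naive $\Theta(\log n)$-level degree decomposition accumulates $+\Theta(\log n)$ additive error, while brute force on the sparse part is too slow --- but your resolution is to defer that difficulty back to the cited papers. That makes the proposal circular as a self-contained proof. Moreover, the actual constructions in \cite{baswana2010fasterapasp, berman2007approxapsp} are not organized around the degree-threshold decomposition you describe; they use a Thorup--Zwick-style hierarchy of randomly sampled centers together with bunches/clusters, and the $(2,1)$ bound comes from a careful case analysis on which center in the hierarchy first separates $u$ from $v$. Your ``ball'' $B(v)$ is morally their bunch, but the surrounding machinery (your steps (i) and (iii)) does not match what those papers do, so the final sentence ``is exactly what \cite{baswana2010fasterapasp, berman2007approxapsp} carry out'' overstates the correspondence.

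In short: the paper treats this as a citation, and your write-up is best read as an informal motivation rather than a proof. If you want a genuine proof here, you would need to either reproduce the center-hierarchy argument from the cited works, or make your degree-decomposition-plus-balls approach fully rigorous by showing explicitly how every pair is settled with at most one lossy hop --- the part you yourself flag as ``the crux.''
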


We will briefly remark that Algorithm $\baswanaAPASP$ in fact returns a $(2, 0)$-approximation for paths of even length, only returning a $(2, 1)$-approximation for paths of odd length.

\IncMargin{1em}
\begin{algorithm}[H]
\SetKwInOut{Input}{Input}\SetKwInOut{Output}{Output}
\Input{Unweighted, undirected Graph $G = (V, E)$ with $n$ vertices; approximation parameters $\alpha, \beta$}
\Output{Distance estimate $\hat{\delta}: U \times V \rightarrow \Z$ such that $\delta(u, v) \leq \hat{\delta}(u, v) \leq \alpha \delta(u, v) + \beta$ for all $u, v \in V$}

\BlankLine

$\gamma \gets$ maximum even $\gamma$ such that $\gamma \leq 1 + \frac{\beta - 1}{2 - \alpha}$

$k \gets $ minimum $k$ such that $\additiveAPASP(G, k)$ requires $O(T(\gamma))$ time

$C \gets \frac{k - \beta}{\alpha - 1}$

$\hat{\delta}(u, v) \gets \begin{cases}
    1 & (u, v) \in E \\
    \infty & \otherwise
\end{cases}$

$\hat{\delta} \gets \min(\hat{\delta}, \baswanaAPASP(G))$

$\hat{\delta} \gets \min(\hat{\delta}, \boundedAdditiveAPASP(G, \gamma, C))$

$\hat{\delta} \gets \min(\hat{\delta}, \additiveAPASP(G, k))$

\caption{$\alphabetaAPASP(G, \alpha, \beta)$} 
\label{alg:alpha-beta-apasp}
\end{algorithm}
\DecMargin{1em}

\begin{proof}
    (Correctness)
    Let $u, v \in V$.

    \paragraph{Case 1: $\delta(u, v) \geq C$}

    Then, from $\additiveAPASP$,
    \begin{equation*}
        \hat{\delta}(u, v) \leq \delta(u, v) + k = (\alpha - 1) C + \beta \leq \alpha \delta(u, v) + \beta
    \end{equation*}

    \paragraph{Case 2: $\gamma \leq \delta(u, v) \leq C$}

    Then, from $\boundedAdditiveAPASP(G, \gamma, C)$,
    \begin{align*}
        \hat{\delta}(u, v) &\leq \delta(u, v) + \gamma
    \end{align*}

    Note that if $\beta \geq \gamma$, we have in fact already obtained a $(1, \beta)$ approximation. 
    Thus, in the following, assume $\beta < \gamma$.
    
    For now, assume $\delta(u, v) = \gamma$.
    Then,
    \begin{align*}
        \hat{\delta}(u, v) &\leq 2 \delta(u, v) \\
        &= (2 \delta(u, v) - \beta) + \beta \\
        &= \frac{2 \delta(u, v) - \beta}{\delta(u, v)} \delta(u, v) + \beta \\
        &= \left( 2 - \frac{\beta}{\gamma} \right) \delta(u, v) + \beta
    \end{align*}

    noting that $2 - \frac{\beta}{\gamma} \leq 2 - \frac{\beta - 1}{\gamma - 1} \leq \alpha$ as $\beta < \gamma$.
    We conclude the proof by noting that for all larger paths, an additive error of $\gamma$ implies an $(\alpha, \beta)$ approximation, as desired.

    \paragraph{Case 3: $\delta(u, v) \leq \gamma - 1$}
    
    From $\baswanaAPASP$,
    \begin{equation*}
        \hat{\delta}(u, v) \leq 2 \delta(u, v) + 1 = 2 \delta(u, v) - (\beta - 1) + \beta \leq \left( 2 - \frac{\beta - 1}{\gamma - 1} \right) \delta(u, v) + \beta
    \end{equation*}

    By our choice of $\gamma$,
    \begin{align*}
        2 - \frac{\beta - 1}{\gamma - 1} \leq \alpha
    \end{align*}
\end{proof}

\begin{proof}
    (Time Complexity)
    Throughout, we use the fact that $\gamma, \beta$ are constants.
    Then, we chose a constant $k$ to satisfy $T'(k) = O(T(\gamma))$ where $T'(k)$ is the running time of $\additiveAPASP$ when required to produce additive error $k$.
    Since $C$ is a function of constants, $C$ is a constant as well.
    In particular, $\boundedAdditiveAPASP$ requires $O(T(\gamma))$ time, $\baswanaAPASP$ requires $\tO{n^2} = O(T(\gamma))$ time, and $\additiveAPASP$ requires $O(T(\gamma))$ time, proving the desired time complexity.
\end{proof}

We give some examples for $\gamma = 6$.

\begin{corollary}
    In time $T(6) = n^{2.06382}$ (Corollary \ref{cor:bounded-additive-examples}), we can obtain any of the following approximations.

    \begin{equation*}
        \left(\frac{9}{5}, 2\right), \left(\frac{8}{5}, 3\right), \left(\frac{7}{5}, 4\right), \left(\frac{6}{5}, 5\right)
    \end{equation*}
\end{corollary}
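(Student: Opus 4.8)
This final statement is a corollary about specific $(\alpha, \beta)$ approximations obtainable within time $T(6)$, so the plan is essentially to instantiate \Cref{thm:sub-2-mult-add-approx} with $\gamma = 6$ and check that each listed pair $(\alpha, \beta)$ satisfies the hypothesis $\gamma \leq 1 + \frac{\beta - 1}{2 - \alpha}$. First I would recall that \Cref{thm:sub-2-mult-add-approx} produces an $(\alpha, \beta)$-approximation in time $\tO{T(\gamma)}$ where $\gamma$ is the largest even integer with $\gamma \leq 1 + \frac{\beta-1}{2-\alpha}$, and that by \Cref{cor:bounded-additive-examples} we have $T(6) = n^{2.06382}$ (the $\beta = 6$ row of that table, which is the running time of $\boundedAdditiveAPASP(G, 6, C)$ for constant $C$).

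Then for each of the four pairs I would verify the arithmetic inequality. For $\left(\frac{9}{5}, 2\right)$: $2 - \alpha = \frac{1}{5}$ and $\beta - 1 = 1$, so $1 + \frac{\beta-1}{2-\alpha} = 1 + 5 = 6 \geq 6$, hence $\gamma = 6$ works. For $\left(\frac{8}{5}, 3\right)$: $2 - \alpha = \frac{2}{5}$, $\beta - 1 = 2$, so $1 + \frac{2}{2/5} = 1 + 5 = 6$. For $\left(\frac{7}{5}, 4\right)$: $2 - \alpha = \frac{3}{5}$, $\beta - 1 = 3$, so $1 + \frac{3}{3/5} = 1 + 5 = 6$. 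For $\left(\frac{6}{5}, 5\right)$: $2 - \alpha = \frac{4}{5}$, $\beta - 1 = 4$, so $1 + \frac{4}{4/5} = 1 + 5 = 6$. In every case the bound is exactly $6$, so the largest even $\gamma$ satisfying the constraint is $\gamma = 6$, and \Cref{thm:sub-2-mult-add-approx} then yields the claimed approximation in time $\tO{T(6)} = \tO{n^{2.06382}}$.

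I would also note that each pair has $\beta \geq 2$ and $1 < \alpha < 2$, so the hypotheses of \Cref{thm:sub-2-mult-add-approx} are met. The only mild subtlety — and the closest thing to an obstacle — is confirming that the $T(6)$ appearing here is literally the running time of Algorithm \ref{alg:bounded-additive-apasp} with parameter $6$, matching the notation $T(\gamma)$ used in \Cref{thm:sub-2-mult-add-approx}; this is exactly the content of \Cref{cor:bounded-additive-examples}, so no new work is required. Thus the proof is a short verification and would read roughly:

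\begin{proof}
    For each listed pair $(\alpha, \beta)$ we have $\beta \geq 2$, $1 < \alpha < 2$, and a direct computation gives $1 + \frac{\beta - 1}{2 - \alpha} = 6$ in all four cases, so the largest even integer $\gamma$ with $\gamma \leq 1 + \frac{\beta-1}{2-\alpha}$ is $\gamma = 6$. By \Cref{thm:sub-2-mult-add-approx}, Algorithm \ref{alg:alpha-beta-apasp} then produces an $(\alpha, \beta)$-approximation in time $\bigtO{T(6)}$, and by \Cref{cor:bounded-additive-examples}, $T(6) = n^{2.06382}$.
\end{proof}
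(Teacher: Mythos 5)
Your proposal is correct and matches the paper's intent exactly: the corollary is a direct instantiation of \Cref{thm:sub-2-mult-add-approx} with $\gamma = 6$, and your arithmetic verifying $1 + \frac{\beta-1}{2-\alpha} = 6$ for all four pairs is exactly the (unwritten) justification the paper relies on. The only thing worth flagging is a paper-side inconsistency, not an error of yours: the stated value $T(6) = n^{2.06382}$ does not match the $n^{2.05794292}$ entry for $\beta = 6$ in \Cref{cor:bounded-additive-examples}, though this does not affect the validity of the argument.
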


\subsection{Extension to Weighted Graphs}

In this section, we give the simple observation that a good additive approximation can imply a multiplicative approximation even for weighted graphs, if given sufficient additional additive error.

\begin{theorem}
    \label{thm:mult-add-weighted-approx}
    Let $1 \leq \alpha < 3$ and $\beta \geq 1$ be an integer.
    Let $T(\gamma)$ be the running time of Theorem \ref{thm:weighted-additive-apasp} when asked to produce a $(1 + \eps, 2 w_{u, v}(\gamma))$ approximation.
    Let $\gamma = \ceil{\frac{\alpha (\beta + 1) + (\beta - 1)}{2}} - 1$.
    Then, Algorithm \ref{alg:k-weighted-additive-apasp} produces an $(\alpha, 2 w_{u, v}(\beta))$-approximation in time $T(\gamma)$.
\end{theorem}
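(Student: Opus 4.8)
\textbf{Proof Proposal for Theorem \ref{thm:mult-add-weighted-approx}.}

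The plan is to mirror the structure of Theorem \ref{thm:sub-2-mult-add-approx} (the unweighted bi-criteria result), replacing the unweighted additive approximation routines with their weighted analogues. The key engine is Algorithm \ref{alg:k-weighted-additive-apasp}, which given parameter $\gamma$ produces an estimate $\hat{\delta}(u,v) \leq (1 + \eps)\delta(u,v) + 2 w_{u,v}(\gamma)$ in time $T(\gamma)$. The point is that $w_{u,v}(\gamma)$, the total weight of the $\gamma$ heaviest edges on a shortest $u$-$v$ path, is automatically bounded by a multiplicative factor of $\delta(u,v)$ when the path has many edges: if the shortest path $P$ has at least $\gamma + \beta$ edges, then the $\gamma$ heaviest edges beyond the top $\beta$ contribute at most $\frac{\gamma - \beta}{|P|}$ fraction of the total weight (after accounting for the $\beta$ heaviest separately), which converts the extra additive error $2w_{u,v}(\gamma) - 2w_{u,v}(\beta)$ into a small multiplicative term. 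So the main steps are: (i) run Algorithm \ref{alg:k-weighted-additive-apasp} with parameter $\gamma = \ceil{\frac{\alpha(\beta+1) + (\beta-1)}{2}} - 1$; (ii) argue correctness by splitting into the case of shortest paths with few edges (at most some constant-like threshold related to $\gamma$ and $\beta$) versus many edges.

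First I would handle the \textbf{long-path case}. Suppose the shortest path $P$ between $u,v$ has $t$ edges with $t$ large. We have $\hat{\delta}(u,v) \leq (1+\eps)\delta(u,v) + 2 w_{u,v}(\gamma)$. Write $w_{u,v}(\gamma) = w_{u,v}(\beta) + (\text{weight of edges ranked } \beta+1 \text{ through } \gamma)$. Each of those $\gamma - \beta$ edges has weight at most the average edge weight of the remaining $t - \beta$ edges, hence at most $\frac{\delta(u,v) - w_{u,v}(\beta)}{t - \beta} \leq \frac{\delta(u,v)}{t - \beta}$. So $2w_{u,v}(\gamma) \leq 2 w_{u,v}(\beta) + \frac{2(\gamma - \beta)}{t - \beta}\delta(u,v)$. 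Combined with the $(1+\eps)$ factor, for $t$ large enough (a constant depending only on $\alpha, \beta, \gamma, \eps$) this gives $\hat{\delta}(u,v) \leq \alpha \delta(u,v) + 2 w_{u,v}(\beta)$ since $\alpha < 3$ and the coefficient $(1+\eps) + \frac{2(\gamma-\beta)}{t-\beta}$ can be driven below $\alpha$; the precise choice of $\gamma$ is calibrated exactly so the boundary case works out. This is where the formula $\gamma = \ceil{\frac{\alpha(\beta+1)+(\beta-1)}{2}} - 1$ comes from.

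Next, the \textbf{short-path case}: shortest paths with few edges, say $t \leq t_0$ edges for the threshold $t_0$ above. Here I would note that a path with $t \leq \gamma$ edges has $w_{u,v}(\gamma) = \delta(u,v)$ exactly (all edges are among the $\gamma$ heaviest), so $\hat{\delta}(u,v) \leq (1+\eps)\delta(u,v) + 2\delta(u,v) = (3+\eps)\delta(u,v)$, which is not good enough directly for $\alpha < 3$ — so I will need the more careful counting: for $t$ with $\beta < t \leq t_0$, rewrite $2 w_{u,v}(\gamma) = 2 w_{u,v}(\beta) + 2(w_{u,v}(\min(t,\gamma)) - w_{u,v}(\beta))$ and bound the second term by $\frac{2(t - \beta)}{t - \beta}(\delta(u,v) - w_{u,v}(\beta)) \leq 2(\delta(u,v) - w_{u,v}(\beta))$ when $t \le \gamma$, giving again $\hat{\delta}(u,v) \leq (3+\eps)\delta(u,v)$. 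Since the theorem only claims error $2w_{u,v}(\beta)$ and multiplier $\alpha$, and the condition $\gamma = \ceil{\cdots} - 1$ is designed so that even the worst path length yields multiplier at most $\alpha$, I would carefully check the inequality $ (1+\eps) + \frac{2(\min(t,\gamma) - \beta)}{t} \leq \alpha$ holds at the maximizing $t = \gamma$, which reduces to $\frac{2(\gamma - \beta)}{\gamma} \leq \alpha - 1 - \eps$, i.e. $\gamma \leq \frac{2\beta}{3 - \alpha - \eps}$ roughly — I need to verify this is consistent with the stated $\gamma$, possibly absorbing $\eps$ into the $(1+\eps)$ already present. The time complexity claim is immediate: we make exactly one call to Algorithm \ref{alg:k-weighted-additive-apasp} with parameter $\gamma$, costing $T(\gamma)$.

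\textbf{Main obstacle.} The delicate part is verifying that the single choice $\gamma = \ceil{\frac{\alpha(\beta+1)+(\beta-1)}{2}} - 1$ simultaneously makes both the short-path worst case (maximized around $t \approx \gamma$) and the transition to long paths work, while correctly tracking where the "$w_{u,v}(\beta)$ vs.\ total weight" split is charged and making sure the $(1+\eps)$ multiplicative slack does not push us over $\alpha$. I expect the arithmetic reconciling the ceiling expression with the inequality $\frac{2(\gamma-\beta)}{\gamma} \le \alpha - 1$ to require care with rounding, and one may need to observe that $w_{u,v}(\beta)$ appearing inside the $(1+\eps)$ factor can be pulled out (since $w_{u,v}(\beta) \le \delta(u,v)$, the term $(1+\eps) \cdot 2 w_{u,v}(\beta)$ is at most $2w_{u,v}(\beta) + 2\eps\,\delta(u,v)$, folding the extra into the multiplicative budget). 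Everything else is a routine adaptation of the unweighted argument and of Cohen–Zwick's edge-weight accounting.
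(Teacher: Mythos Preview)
Your key bounding step is incorrect. You claim that ``each of those $\gamma - \beta$ edges has weight at most the average edge weight of the remaining $t - \beta$ edges,'' but the $(\beta+1)$-th heaviest edge is by definition the \emph{maximum} among the remaining $t-\beta$ edges, hence at \emph{least} their average, not at most. For a concrete counterexample take $\beta = 1$ and a path with edge weights $100, 10, 1, 1, 1$: after removing the heaviest, the remaining four edges have average $13/4$, yet the second-heaviest edge has weight $10$. Your subsequent case analysis on $t$ is built on this faulty inequality, which is why the short-path case refuses to close.

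The paper's argument sidesteps this entirely with a different, uniformly valid bound: since the $\beta+1$ heaviest edges of $P$ each have weight at least $\heavy(P, \beta+1)$, one has $\heavy(P, \beta+1) \le \frac{\wt(P)}{\beta+1} = \frac{\delta(u,v)}{\beta+1}$. Every edge ranked $\beta+1$ through $\gamma$ has weight at most $\heavy(P, \beta+1)$, so
\[
w_{u,v}(\gamma) - w_{u,v}(\beta) \;\le\; (\gamma - \beta)\,\heavy(P,\beta+1) \;\le\; \frac{\gamma-\beta}{\beta+1}\,\delta(u,v).
\]
This is independent of the number $t$ of edges on the path, so no case split on $t$ is needed at all. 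Substituting into $\hat{\delta}(u,v) \le (1+\eps)\delta(u,v) + 2w_{u,v}(\gamma)$ yields a multiplicative factor $\frac{2\gamma - \beta + 1}{\beta+1} + \eps$, and solving $\frac{2\gamma - \beta + 1}{\beta+1} \le \alpha - \eps$ for the largest integer $\gamma$ (with $\eps$ chosen sufficiently small) gives exactly the stated formula $\gamma = \ceil{\frac{\alpha(\beta+1)+(\beta-1)}{2}} - 1$.
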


Suppose for some integer $\gamma \geq 1$ we have a $(1+\eps, 2 w_{u, v}(\gamma))$ approximation from Theorem \ref{thm:weighted-additive-apasp}.
Given an integer value $\beta < \gamma$, suppose we want a $(\alpha, 2 w_{u, v}(\beta))$ approximation for APSP on weighted graphs.
Consider the edges that are the $\beta + 1, \beta + 2, \dotsc, \gamma$-th heaviest edges in a given shortest path $P$.
Then,
\begin{equation*}
    \sum_{i = \beta + 1}^{\gamma} \heavy(P, i) \leq (\gamma - \beta) \heavy(P, \beta + 1) \leq \frac{\gamma - \beta}{\beta + 1} \wt(P)
\end{equation*}

as the weight of the $\beta + 1$-heaviest edge is at most a $\frac{1}{\beta + 1}$ fraction of the path's total weight.
In particular, if $\hat{\delta}$ is the distance estimate given by Theorem \ref{thm:weighted-additive-apasp},
\begin{align*}
    \hat{\delta}(u, v) &\leq (1 + \eps) \wt(P) + 2 \kHeavy(P, \gamma) \\
    &\leq (1 + \eps) \wt(P) + 2 \frac{\gamma - \beta}{\beta + 1} \wt(P) + 2 \kHeavy(P, \beta) \\
    &\leq \left( \frac{2 \gamma - \beta + 1}{\beta + 1} + \eps \right) \delta(u, v) + 2 \kHeavy(P, \beta)
\end{align*}

Then, let us express $\gamma$ in terms of $\alpha, \beta$.
In particular, we require $\gamma$ such that,

\begin{align*}
    \alpha &\geq \frac{2 \gamma - \beta + 1}{\beta + 1} + \eps \\
    \gamma &\leq \frac{(\alpha - \eps)(\beta + 1) + (\beta - 1)}{2}
\end{align*}

Since $\gamma$ is an integer, we conclude that to give any $(\alpha, 2 w_{u, v}(\beta))$ approximation, it suffices to find a $(1 + \eps, 2 w_{u, v}(\gamma))$ approximation for $\gamma = \floor{\frac{(\alpha - \eps)(\beta + 1) + (\beta - 1)}{2}}$.
In particular, since we are free to choose $\eps$, we can define,
\begin{equation*}
    \gamma = \begin{cases}
        \frac{\alpha (\beta + 1) + (\beta - 1)}{2} - 1 & \frac{\alpha (\beta + 1) + (\beta - 1)}{2} \in \Z \\
        \floor{\frac{\alpha (\beta + 1) + (\beta - 1)}{2}} & \otherwise
    \end{cases}
\end{equation*}

or equivalently, 
\begin{equation*}
    \gamma = \ceil{\frac{\alpha (\beta + 1) + (\beta - 1)}{2}} - 1
\end{equation*}

As an example, suppose we want a $\left( \frac{4}{3} + \eps, 2 w_{u, v}(2) \right)$ approximation.
Then, define $\gamma = 3$ and apply Theorem \ref{thm:weighted-additive-apasp}.

\newpage
\bibliographystyle{alpha}
\bibliography{references}

\appendix
\section{Additive Approximation Algorithms of \texorpdfstring{\cite{dor2000apasp}}{[DHZ00]} with a Different Analysis}

In this appendix, we state the approximation algorithms of Dor, Halperin and Zwick \cite{dor2000apasp}, proving some useful properties about the algorithms that we require in our algorithm design.

\dhzapasp*

The pseudocode for $\denseAPASP$ and $\sparseAPASP$ are provided in Algorithms \ref{alg:dhz-dense-apasp} and \ref{alg:dhz-sparse-apasp} respectively.
Although we never explicitly call $\sparseAPASP$, we implicitly run it as a sub-routine in many of our algorithms.

\IncMargin{1em}
\begin{algorithm}[H]
\SetKwInOut{Input}{Input}\SetKwInOut{Output}{Output}
\Input{Unweighted, undirected Graph $G = (V, E)$ with $n$ vertices and even integer $\beta$}
\Output{$\hat{\delta}: U \times V \rightarrow \Z$ such that $\delta(u, v) \leq \hat{\delta}(u, v) \leq \delta(u, v) + \beta$ for all $u, v \in V$}
\BlankLine
$\hat{\delta} \gets \begin{cases}
        1 & (u, v) \in E \\
        \infty & \otherwise
    \end{cases}$
    
$k \gets \frac{3 \beta - 2}{2}$

$s_i \gets n^{1 - \frac{i}{k}}$ for all $1 \leq i \leq k - 1$

$(D_{1}, D_{2}, \dotsc, D_k), (E_{1}, E_{2}, \dotsc, E_k), E^* \gets \decompose(G, (s_{1}, s_{2}, \dotsc, s_{k - 1}))$

\For{$1 \leq i \leq k$}{
    \For{$w \in D_i$}{
        $G_{i, w} \gets \left(V, E_i \cup (u \times V) \cup E^* \cup \left(\bigcup_{i + j_1 + j_2 \leq 2k + 1} D_{j_1} \times D_{j_2} \right)\right)$
        
        $\hat{\delta} \gets \dijkstra(G_{i, w}, w, \hat{\delta})$
    }
}

\caption{$\denseAPASP(G, \beta)$}
\label{alg:dhz-dense-apasp}
\end{algorithm}
\DecMargin{1em}

\IncMargin{1em}
\begin{algorithm}[H]
\SetKwInOut{Input}{Input}\SetKwInOut{Output}{Output}
\Input{Unweighted, undirected Graph $G = (V, E)$ with $n$ vertices, $m$ edges, and even integer $\beta$}
\Output{$\hat{\delta}: U \times V \rightarrow \Z$ such that $\delta(u, v) \leq \hat{\delta}(u, v) \leq \delta(u, v) + \beta$ for all $u, v \in V$}
\BlankLine
$\hat{\delta} \gets \begin{cases}
        1 & (u, v) \in E \\
        \infty & \otherwise
    \end{cases}$
    
$k \gets \frac{\beta + 2}{2}$

$s_i \gets (m/n)^{1 - \frac{i}{k}}$ for all $1 \leq i \leq k - 1$

$(D_{1}, D_{2}, \dotsc, D_k), (E_{1}, E_{2}, \dotsc, E_k), E^* \gets \decompose(G, (s_{1}, s_{2}, \dotsc, s_{k - 1}))$

\For{$1 \leq i \leq k$}{
    \For{$w \in D_i$}{
        $\hat{\delta} \gets \dijkstra(G_{i, w}, w, \hat{\delta})$ where $G_{i, w} \gets \left(V, E_i \cup (u \times V) \cup E^* \right)$
    }
}

\caption{$\sparseAPASP(G, \beta)$}
\label{alg:dhz-sparse-apasp}
\end{algorithm}
\DecMargin{1em}

\subsection{Closer Analysis of \texorpdfstring{$\sparseAPASP$}{SparseAPASP}}
\label{sec:sparse-apasp-analysis}

Instead of bounding the approximation error simply by using the number of thresholds used in the degree decomposition, we perform a more fine-grained analysis to bound the approximation error based on the structure and degrees of a given path $P$.
Recall the definition of blocking vertices and blocking levels.

\blockingvertices*

We begin with some simple properties to note about the blocking vertices and blocking levels of a given path $P$.

\begin{lemma}
    \label{lemma:prop-blocking}
    Let $P$ be a shortest path and $B(P) = \set{x_0, x_1, \dotsc, x_t}$ be the set of blocking vertices. 
    Then,
    \begin{enumerate}
        \item $k \geq \level(x_1) > \level(x_2) > \dotsc > \level(x_t) \geq 1$.
        \item For all $2 \leq i \leq t$, $x_{i - 1} \notin P_{x_i, x_{i - 2}}$.
        \item $B(P_{x_i, x_{i + 1}}) \subset B(P)$ and $L_B(P_{x_i, x_{i + 1}}) \subset L_B(P)$ for $1 \leq i \leq t - 1$.
        \item $P_{x_{t - 1}, x_t} \subset P_{x_{t - 2}, x_{t - 1}} \subset \dotsc \subset P_{x_0, x_1} = P$
    \end{enumerate}
    
\end{lemma}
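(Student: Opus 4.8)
The plan is to prove each of the four properties by unwinding \Cref{def:blocking-vertices} and using the elementary fact that, by definition of $\level(\cdot)$ via the degree thresholds in \Cref{lemma:degree-decomposition}, an edge $(a,b)$ lies in $E_j$ iff $\min(\deg(a),\deg(b)) < s_{j-1}$, and a vertex $w$ lies in $V_j$ (so $\level(w) \le j$) iff $\deg(w) \ge s_j$. Throughout, recall that $b(x_{i-1}, P_{x_{i-1},x_{i-2}})$ is the vertex \emph{closest to $x_{i-2}$} that is an endpoint of an edge of $P_{x_{i-1},x_{i-2}}$ not lying in $E_{\level(x_{i-1})}$, and $x_i$ is precisely this vertex (or $x_{i-2}$ if no such edge exists, which terminates the sequence).

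First I would prove (4), which is the structural backbone and makes (1)--(3) almost immediate. By construction $x_{i+1} = b(x_i, P_{x_i, x_{i-1}})$ is a vertex on the sub-path $P_{x_i, x_{i-1}}$, so $x_{i+1}$ lies between $x_i$ and $x_{i-1}$ on $P$; hence $P_{x_i, x_{i+1}}$ is a sub-path of $P_{x_i, x_{i-1}} = P_{x_{i-1}, x_i}$ \dots wait, I need to be careful about orientation. The cleanest formulation: $x_{i+1} \in P_{x_{i-1}, x_i}$, and $x_{i+1}$ is on the side of $x_{i-1}$, so $P_{x_i, x_{i+1}} \subseteq P_{x_i, x_{i-1}} = P_{x_{i-1}, x_i}$. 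Iterating from $i = t-1$ down, this gives the nesting chain $P_{x_{t-1}, x_t} \subseteq P_{x_{t-2}, x_{t-1}} \subseteq \dotsc \subseteq P_{x_0, x_1}$, and $P_{x_0, x_1} = P_{u, v} = P$ by the base cases $x_0 = u$, $x_1 = v$. For (2), $x_{i-1} \notin P_{x_i, x_{i-2}}$: since $x_i = b(x_{i-1}, P_{x_{i-1}, x_{i-2}})$ lies strictly between $x_{i-1}$ and $x_{i-2}$ (or equals $x_{i-2}$, in which case the sequence stops and there is no $x_{i-1} \in P_{x_i, x_{i-2}}$ issue), the sub-path $P_{x_i, x_{i-2}}$ does not contain $x_{i-1}$ as $P$ is a shortest path and hence simple. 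I would phrase this via the simplicity of $P$: the vertices $x_{i-1}, x_i, x_{i-2}$ appear in this order along $P$, so $x_{i-1}$ is outside the segment spanned by $x_i$ and $x_{i-2}$.

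Next (1): $k \ge \level(x_1) > \level(x_2) > \dotsc > \level(x_t) \ge 1$. The bounds $\level(\cdot) \in \{1, \dotsc, k\}$ hold by \Cref{lemma:degree-decomposition}. For the strict decrease, suppose $x_{i+1} \ne x_{i-1}$ (otherwise the sequence has terminated). By definition $x_{i+1}$ is an endpoint of an edge $(a,b) \in P_{x_i, x_{i-1}}$ with $(a,b) \notin E_{\level(x_i)}$, which means $\min(\deg(a), \deg(b)) \ge s_{\level(x_i) - 1}$; in particular $\deg(x_{i+1}) \ge s_{\level(x_i)-1}$ (whichever of $a, b$ it is), so $\level(x_{i+1}) \le \level(x_i) - 1 < \level(x_i)$. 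This is the crux computation and the one place where I expect to need care: I must check that $x_{i+1}$ really is an endpoint of such a high-degree edge and that this forces $\deg(x_{i+1}) \ge s_{\level(x_i)-1}$, which follows because $E_{\level(x_i)} = \{(u,v) : \min(\deg u, \deg v) < s_{\level(x_i)-1}\}$ so any edge outside it has \emph{both} endpoints of degree $\ge s_{\level(x_i)-1}$.

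Finally (3): $B(P_{x_i, x_{i+1}}) \subseteq B(P)$ and $L_B(P_{x_i,x_{i+1}}) \subseteq L_B(P)$ for $1 \le i \le t-1$. The idea is that running the blocking-vertex construction on the sub-path $P_{x_i, x_{i+1}}$ reproduces a tail of the construction on $P$: by (4) the sub-paths involved in defining $x_{i+1}, x_{i+2}, \dotsc, x_t$ are all sub-paths of $P_{x_i, x_{i+1}}$ once we start from the pair $(x_i, x_{i+1})$, and since the level function and the edge sets $E_j$ are global (not depending on which sub-path we consider), the blocking vertices of $P_{x_i, x_{i+1}}$ are exactly $\{x_i, x_{i+1}, x_{i+2}, \dotsc, x_t\} \subseteq B(P)$. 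The level containment follows since $L_B$ is just the image under $\level$ of a subset of $B(P)$ restricted by the same threshold. The main obstacle here is bookkeeping the orientation of the endpoints carefully so that the recursion for $P_{x_i, x_{i+1}}$ genuinely matches the one for $P$ — I would set this up by first establishing (4) cleanly (fixing a global orientation of $P$ from $u$ to $v$ and always listing sub-path endpoints in that order) and then arguing (3) by induction on $t - i$, using (4) at each step.
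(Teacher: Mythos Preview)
Your proposal is correct and follows essentially the same approach as the paper's own proof, which is extremely terse (a single sentence per item). Your treatment is more detailed—particularly the explicit degree computation for (1) showing both endpoints of an edge outside $E_{\level(x_i)}$ have degree $\ge s_{\level(x_i)-1}$, and the orientation bookkeeping for (3)—but the underlying ideas are identical; one small slip is that in (3) the thresholds for $L_B$ are not literally ``the same'' ($\level(x_{i+1})$ versus $\level(x_1)$), though the containment still follows since $\level(x_{i+1}) < \level(x_1)$ by (1).
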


\begin{proof}
    Property 1 follows as for any edge $(a, b)$, $\level(a, b) = \min(\level(a), \level(b))$.

    Property 2 follows as for all $i \geq 2$, $x_i \in P_{x_{i - 1}, x_{i - 2}}$.

    $B(P_{x_i, x_{i + 1}}) \subset B(P)$ follows by the construction of the blocking vertices, noting that $x_i, x_{i + 1}$ determine the remaining blocking vertices.
    Then, $L_B(P_{x_i, x_{i + 1}}) \subset L_B(P)$ follows as the blocking levels are constructed from the blocking vertices.

    Property 4 follows as $x_i$ is chosen as a vertex on the path $P_{x_{i - 2}, x_{i - 1}}$.
\end{proof}

We now give the deferred proof that the number of blocking levels directly determines the additive error accumulated on a given path when executing $\sparseAPASP$ (\Cref{alg:dhz-sparse-apasp}).

\begin{restatable}{lemma}{blockapproxerror}
    \label{lemma:block-approx-error}
    Let $u, v$ be vertices in a graph $G$ and $P$ be a shortest path between $u, v$.
    Let $B(P)$ be the blocking vertices of $P$.
    Then, Algorithm \ref{alg:dhz-sparse-apasp} in the $\level(v)$-th iteration computes distance estimate $\hat{\delta}$ such that,
    \begin{equation*}
        \hat{\delta}(v^*, u) \leq \hat{\delta}_{\level(v)}(v^*, u) \leq \delta(v, u) + 2 |L_B(P)| + 1
    \end{equation*}

    For any level $j \geq \level(v)$, if $v \in D_j$,
    \begin{equation*}
        \hat{\delta}(v, u) \leq \hat{\delta}_{j}(v, u) \leq \delta(v, u) + 2 |L_B(P, j)|
    \end{equation*}
\end{restatable}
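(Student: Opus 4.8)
\textbf{Proof plan for Lemma \ref{lemma:block-approx-error}.}

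The plan is to prove both inequalities simultaneously by induction on $|L_B(P,j)|$, or equivalently on the number of blocking vertices, exploiting the recursive structure of the blocking set guaranteed by \Cref{lemma:prop-blocking}. The base case is when $P \subset E_{\level(v)}$ (equivalently $|L_B(P,\level(v))| = 0$): here there is no blocking edge, so $b(v,P) = u$, and in the $\level(v)$-th iteration of $\sparseAPASP$, since $v \in D_{\level(v)}$ and the entire path lies in $E_{\level(v)} \subset G_{\level(v),v^*}$ (after the hitting-set update from the constituency), $\dijkstra$ from $v^*$ finds the path $(v^*,v)\circ P$, giving $\hat\delta_{\level(v)}(v^*,u) \le \delta(v,u)+1$; and if $v \in D_j$ for $j \geq \level(v)$, running $\dijkstra$ directly from $v$ finds $P$ exactly, giving $\hat\delta_j(v,u) \le \delta(v,u)$.

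For the inductive step, suppose $P \not\subset E_{\level(v)}$. Let $x_1 = v$, $x_2 = b(v, P)$ be the first blocking vertex from $v$; then $\level(x_2) < \level(v)$ (so $\level(x_2)$ is a genuine blocking level) and $x_2^* = r(x_2, D_{\level(x_2)}) \in D_{\level(x_2)}$ exists with $(x_2, x_2^*) \in E^*$. Consider the sub-path $P_{x_2, u}$; by \Cref{lemma:prop-blocking}(3), $L_B(P_{x_2,u})$ is a strict subset of $L_B(P)$ — indeed it omits $\level(v)$-level considerations — so $|L_B(P_{x_2,u}, \level(x_2))| \le |L_B(P,\level(v))| - 1$. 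Apply the induction hypothesis to $P_{x_2,u}$ at level $\level(x_2)$ (using $x_2 \in D_{\level(x_2)}$): we get $\hat\delta_{\level(x_2)}(x_2^*, u) \le \delta(x_2, u) + 2(|L_B(P,\level(v))| - 1) + 1$. Now in the $\level(v)$-th iteration, $\dijkstra$ from $v^*$ has access to the edge $(v^*, x_2^*)$ (this is exactly where the edge set $D_{j_1}\times D_{j_2}$ with $\level(v)+\level(v)+\level(x_2) \le 2k+1$ is not available in $\sparseAPASP$ — so one must instead argue via the current estimate $\hat\delta(v^*,x_2^*)$ being set as an edge weight, i.e. the estimate from a previous iteration since $\level(x_2) < \level(v)$), the edge $(x_2^*, x_2) \in E^*$, and then the remaining path $P_{x_2,u} \subset E_{\level(v)}$ (by definition of the blocking vertex, since $x_2$ is the closest-to-$u$ endpoint of a blocking edge, everything past it toward $u$ stays in $E_{\level(v)}$). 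This yields $\hat\delta_{\level(v)}(v^*,u) \le 1 + \hat\delta_{\level(x_2)}(v^*,x_2^*) + 1 + \delta(x_2,u)$; substituting the inductive bound (and noting $\hat\delta_{\level(x_2)}(v^*,x_2^*) \le 1 + \hat\delta_{\level(x_2)}(x_2^*,u)$ is the wrong direction — actually I need $\hat\delta(v^*,x_2^*)$ directly, which requires a symmetric statement or careful bookkeeping) gives $\delta(v,u) + 2|L_B(P,\level(v))| + 1$ as claimed. The second inequality for general $j \ge \level(v)$ with $v \in D_j$ follows identically but saves the leading $+1$ because $\dijkstra$ runs from $v$ itself rather than from its dominator $v^*$.

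The main obstacle I anticipate is the bookkeeping around \emph{which} distance estimate is available as an edge weight when $\dijkstra$ runs from $v^*$: in $\sparseAPASP$ the graph $G_{i,w}$ only contains $E_i \cup E^* \cup (w\times V)$, with no precomputed $D_{j_1}\times D_{j_2}$ shortcuts, so the estimate $\hat\delta(v^*, x_2^*)$ must already have been deposited into the estimate matrix by an earlier iteration (valid since $\level(x_2) < \level(v)$ means the $\level(x_2)$-th iteration precedes the $\level(v)$-th). This forces the induction to be stated carefully as a statement about $\hat\delta_j$ for \emph{all} $j$, threaded through the iterations in increasing order of level, and one must track that the estimate $\hat\delta(v^*,x_2^*)$ picked up in iteration $\level(x_2)$ — obtained by running $\dijkstra$ from $x_2^*$ and reaching $v^*$, or vice versa — is symmetric and bounded by $\delta(v,x_2)+2|L_B|+O(1)$. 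Getting the constant exactly right ($+1$ versus $+2$ in the two cases) is the delicate part; the cleanest route is to first establish the $v\in D_j$ version (no dominator detour) by induction, then derive the $v^*$ version as a one-line corollary by prepending the edge $(v^*,v)\in E^*$ (recall $v^* = r(v, D_{\level(v)})$ and $(v,v^*)\in E^*$), which adds exactly $1$.
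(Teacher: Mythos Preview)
Your induction structure is right, and you correctly identify that the crux is bounding $\hat{\delta}(v^*, x_2^*)$ so that the edge $(v^*, x_2^*) \in v^* \times V$ can be used in iteration $\level(v)$. The gap is that you apply the induction hypothesis to the wrong sub-path. You induct on $P_{x_2, u}$, obtaining a bound on $\hat{\delta}_{\level(x_2)}(x_2^*, u)$; but that quantity never gets used---as you yourself notice, what you actually need is $\hat{\delta}(v^*, x_2^*)$. The paper instead applies induction to $P_{v, x_2} = P_{x_1, x_2}$: since $\level(x_2) < \level(v)$, in this sub-path $x_2$ plays the role of the low-level endpoint, and the induction yields $\hat{\delta}_{\level(x_2)}(x_2^*, v) \leq \delta(x_2, v) + 2|L_B(P_{v, x_2})| + 1$. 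This is precisely what \Cref{lemma:prop-blocking}(3) is set up to support (it applies to $P_{x_i, x_{i+1}}$, not to $P_{x_2, x_0}$), guaranteeing $L_B(P_{v, x_2}) \subsetneq L_B(P)$. Since Dijkstra from $x_2^*$ in iteration $\level(x_2)$ also sees the edge $(v, v^*) \in E^*$, we get $\hat{\delta}_{\level(x_2)}(x_2^*, v^*) \leq \hat{\delta}_{\level(x_2)}(x_2^*, v) + 1$, which is exactly the weight stored on $(v^*, x_2^*)$ when iteration $\level(v)$ runs. From there the path $(v^*, x_2^*, x_2) \circ P_{x_2, u}$ (with $P_{x_2, u} \subset E_{\level(v)}$ by the definition of $x_2$) gives the claimed bound.

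Your proposed ``cleanest route''---prove the $v \in D_j$ statement first, then prepend $(v^*, v)$---does not work as stated: in general $v \notin D_{\level(v)}$, so no Dijkstra is run from $v$ at iteration $\level(v)$, and there is no $\hat{\delta}_{\level(v)}(v, u)$ to prepend to. The two statements genuinely have to be threaded together: the $v^*$ version at level $\level(x_2)$ (applied to the sub-path $P_{v, x_2}$) is what feeds the $v^*$ version at level $\level(v)$ (for the full path $P$), and similarly for the $v \in D_j$ version.
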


\begin{proof}
    We proceed by induction.
    Suppose $|L_B(P)| = 0$.
    Then, $B(P) = \set{x_0, x_1} = \set{u, v}$.
    In particular, $P \subset E_{\level(v)}$ so that Algorithm \ref{alg:dhz-sparse-apasp} finds the exact distance in the $\level(v)$-th iteration.

    Therefore, assume $|L_B(P)| > 0$ and denote $B(P) = \set{x_0, x_1, \dotsc, x_t}$.
    Since $\level(v) \leq \level(u)$, $x_2 \neq u$.
    Consider the sub-path $P_{v, x_2} = P_{x_1, x_2}$. 
    Then, the set $L_B(P_{v, x_2}) \subset L_B(P)$ by Lemma \ref{lemma:prop-blocking} and $\level(x_2) \in L_B(P) \setminus L_B(P_{v, x_2})$ so that $L_B(P)$ is strictly larger.
    By the inductive hypothesis, in the $\level(v)$-th iteration of Algorithm \ref{alg:dhz-sparse-apasp},
    \begin{align*}
        \hat{\delta}_{\level(v)}(v^*, u) &\leq (\hat{\delta}_{\level(x_2)}(x_2^*, v) + 1) + \delta(x_2, u) + 1 \\
        &\leq \delta(x_2, v) + 2 |L_B(P_{v, x_2})| + \delta(x_2, u) + 3 \\
        &\leq \delta(u, v) + 2 |L_B(P)| + 1
    \end{align*}

    When $v \in D_{\level(v)}$,
    \begin{align*}
        \hat{\delta}(v, u) &\leq \hat{\delta}_{\level(x_2)}(x_2^*, v) + 1 + \delta(x_2, u) \\
        &\leq \delta(x_2, v) + 2 |L_B(P_{x_1, x_2})| + \delta(x_2, u) + 2 \\
        &\leq \delta(v, u) + 2 |L_B(P)|
    \end{align*}

    Now, let $j \geq \level(v)$ and $v \in D_j$. 
    If $|L_B(P, j)| = 0$, then $P \subset E_j$ and we compute an exact distance in the $j$-th iteration.
    Otherwise, let $b = b(v, P)$ and $b^* \in D_{\level(b)}$ where $\level(b) < j$ so that,
    \begin{align*}
        \hat{\delta}_j(v, u) &\leq \hat{\delta}_{\level(b)}(v, b^*) + 1 + \delta(b, u) \\
        &\leq \delta(v, b) + 2 |L_B(P_{b, v})| + 2 + \delta(b, u) \\
        &\leq \delta(u, v) + 2 |L_B(P, j)|
    \end{align*}

    where we have used the inductive hypothesis and $L_B(P_{b, v}) \subset L_B(P, j)$ and $\level(b) \in L_B(P, j) \setminus L_B(P_{b, v})$.
\end{proof}

In many cases for our algorithms, it suffices to use the following weaker lemma.
Here, we bound the additive error simply by the number of levels occurring in the shortest path, whereas above the notion of blocking vertices takes into account the specific positions where each level occurs.

\begin{restatable}{lemma}{sparseapaspapprox}
    \label{lemma:sparse-apasp-approx}
    Let $u, v$ be vertices in a graph $G$ and $P$ be a shortest path between $u, v$.
    Suppose $\level(v) \leq \level(u)$.

    For $j \geq \level(v)$, let $L(P, j) = \set{\level(w) \given w \in P \andT \level(w) < j}$ and let $L(P) = L(P, \level(v)) = \set{\level(w) \given w \in P \andT \level(w) < \level(v)}$.
    Then, Algorithm \ref{alg:dhz-sparse-apasp} in the $\level(v)$-th iteration computes distance estimate $\hat{\delta}$ such that,
    \begin{equation*}
        \delta(v^*, u) \leq \hat{\delta}_{\level(v)}(v^*, u) \leq \delta(v, u) + 2 |L(P)| + 1
    \end{equation*}

    Furthermore, for $j \geq \level(v)$ such that $v \in D_j$,
    \begin{equation*}
        \hat{\delta}_{j}(v, u) \leq \delta(v, u) + 2 |L(P, j)|
    \end{equation*}
\end{restatable}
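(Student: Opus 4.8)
The plan is to obtain this lemma as a direct weakening of \Cref{lemma:block-approx-error} rather than re-running the induction. \Cref{lemma:block-approx-error} already gives, in the $\level(v)$-th iteration of \Cref{alg:dhz-sparse-apasp}, the bound $\hat{\delta}_{\level(v)}(v^*, u) \leq \delta(v, u) + 2 |L_B(P)| + 1$, and for every $j \geq \level(v)$ with $v \in D_j$ the bound $\hat{\delta}_{j}(v, u) \leq \delta(v, u) + 2 |L_B(P, j)|$. The feasibility inequalities $\delta(v^*, u) \leq \hat{\delta}_{\level(v)}(v^*, u)$ and $\delta(v, u) \leq \hat{\delta}_j(v, u)$ hold because every estimate produced by $\dijkstra$ on the graphs $G_{i, w}$ is the length of an actual walk in $G$ (this is the content of the feasibility lemmas such as \Cref{lemma:additive-feasible-estimates}). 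Hence, to deduce \Cref{lemma:sparse-apasp-approx} it suffices to prove $|L_B(P, j)| \leq |L(P, j)|$ for every $j \geq \level(v)$; the two displayed bounds then follow by substitution, and the statement for $L_B(P)$ versus $L(P)$ is just the special case $j = \level(v)$.

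The one thing to verify is the set containment $L_B(P, j) \subseteq L(P, j)$, and this is immediate from \Cref{def:blocking-vertices}: every blocking vertex of $P$ lies on $P$. Indeed $x_0 = u$ and $x_1 = v$ are the endpoints of $P$, and for $i \geq 2$ the vertex $x_i = b(x_{i-1}, P_{x_{i-1}, x_{i-2}})$ is by definition an endpoint of an edge of the sub-path $P_{x_{i-1}, x_{i-2}} \subseteq P$, hence a vertex of $P$. Therefore
\begin{equation*}
    L_B(P, j) = \set{\level(x_i) \given x_i \in B(P) \andT \level(x_i) < j} \subseteq \set{\level(w) \given w \in P \andT \level(w) < j} = L(P, j),
\end{equation*}
so taking cardinalities yields $|L_B(P, j)| \leq |L(P, j)|$, completing the reduction.

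I expect no real obstacle here; the statement is essentially a corollary of \Cref{lemma:block-approx-error}, repackaged because in most of our applications it is the number of distinct levels occurring on $P$, not their precise positions, that is readily bounded. If a self-contained proof is preferred, the alternative is to rerun the same induction with $|L(P, j)|$ in place of $|L_B(P, j)|$: the base case $|L(P)| = 0$ forces $P \subseteq E_{\level(v)}$, so $\dijkstra$ from $v$ in the $\level(v)$-th iteration recovers the exact distance; the inductive step recurses on the sub-path $P_{b, v}$ where $b = b(v, P)$ is the blocking vertex from $v$, using $\level(b) < \level(v)$ together with $\level(b) \in L(P) \setminus L(P_{b, v})$ so that the level set strictly shrinks. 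The only point needing a line of care in that route is checking $L(P_{b, v}) \subseteq L(P)$ with this strict drop, which holds since $P_{b, v} \subseteq P$. I would present the short reduction argument as the main proof and relegate the direct induction to a remark.
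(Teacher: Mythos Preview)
Your proposal is correct and matches the paper's approach exactly: the paper's proof is the single line ``This follows from Lemma~\ref{lemma:block-approx-error} as $L_B(P, j) \subset L(P, j)$ for all paths $P$ and iterations $j$,'' which is precisely your reduction argument. Your added justification that every blocking vertex lies on $P$ (hence the containment) and the feasibility remark make the argument more explicit than the paper's, but the idea is identical.
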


\begin{proof}
    This follows from Lemma \ref{lemma:block-approx-error} as $L_B(P, j) \subset L(P, j)$ for all paths $P$ and iterations $j$.
\end{proof}

\section{Dominating Sets and Degree Decomposition}
\label{app:degree-decomposition}

In this section, we give the proofs of Lemmas \ref{lemma:hitting-set}, \ref{lemma:dominating-set}, \ref{lemma:degree-decomposition} and \ref{lemma:weight-decomposition}.

\hittingsetlemma*

\IncMargin{1em}
\begin{algorithm}[H]
\SetKwInOut{Input}{Input}\SetKwInOut{Output}{Output}
\Input{Universe $U$ of $n$ elements and a collection $\family$ of $n$ subsets of size at least $s$}
\Output{Hitting set $X$ of size $\bigO{n \log n/s}$}

\BlankLine

$X \gets \emptyset$

$c(j) \gets |\set{S_i \in \family \given j \in S_i}|$ for all $j \in U$

\While{$\family \neq \emptyset$}{
    $v \gets \arg \max_{j \in U} c(j)$
    
    $X \gets X \cup \set{v}$

    $\family \gets \set{S_i \in \family \given v \notin S_i}$
    
    $c(j) \gets |\set{S_i \in \family \given j \in S_i}|$ for all $j \in U$ 
}

\caption{$\hittingSet(U, \family, s)$} 
\label{alg:hitting-set-greedy}
\end{algorithm}
\DecMargin{1em}

\begin{proof}
    We begin with the deterministic algorithm $\hittingSet$ with pseudocode provided in \Cref{alg:hitting-set-greedy}.
    Without loss of generality, assume all $|S_i| = s$.
    Otherwise, drop all but the first $s$ elements of each subset $S_i$ and observe that any hitting set of the resulting collection must also be a hitting set for the original collection.

    Now, initially we have $\sum_{j \in U} c(j) = \sum_{S_i \in \family} |S_i| = n s$.
    When the algorithm terminates, we have an empty collection $\family$ so that $\sum_{j \in U} c(j) = 0$.
    In each iteration, the most expensive operation is to update $\family$ and $c(j)$.
    By keeping $\set{S_i \in \family \given j \in S_i}$ in addition to the count $c(j)$ alone, we can update $\family$ and $c(j)$ in $\tO{1}$ time for each decrement of $c(j)$.
    Thus, the overall running time can be bounded by $\tO{ns}$.

    We now bound the size of output set $X$.
    Let $T_j$ denote the number of sets remaining in $\family$ after $j$ elements have been added to $X$.
    Then, $T_0 = |\family| = n$, $|X| = \min_{j \geq 0 \given T_j = 0} j$, and $T_j = T_{j - 1} - c(u_j)$ where $u_j$ is the $j$-th element added to $X$.
    By our choice of $c(u_j)$, we have $c(u_j) \geq \frac{T_{j-1} s}{n - j + 1}$ since there are $T_{j - 1}$ sets of size $s$ and $n - (j - 1) = n - j + 1$ elements not in $X$.
    Then,
    \begin{equation*}
        T_j \leq \left( 1 - \frac{s}{n - j + 1} \right) T_{j - 1} \leq T_0 \prod_{l = 0}^{j - 1} \left( 1 - \frac{s}{n - l} \right) < n \left( 1 - \frac{s}{n} \right)^{j} \leq n e^{- \frac{s j}{n}}
    \end{equation*}

    When $j = \frac{n \log n}{s}$ we have $T_j < 1$ so $T_j = 0$.
    In particular, $|X| \leq \frac{n \log n}{s}$.
\end{proof}

\begin{proof}
    The randomized algorithm $\rHittingSet$ simply samples each element with probability $p$.
    Clearly this algorithm runs in time $O(n)$.

    First, we bound the probability that $X$ is not a hitting set.
    For each subset of size $s$, the probability that no element is sampled can be bounded as,
    \begin{equation*}
        \left( 1 - p \right)^{s} \leq e^{- p s}
    \end{equation*}
    By the union bound, the probability some subset has no element sampled can be bounded as,
    \begin{equation*}
        n e^{- p s}
    \end{equation*}
    so if $p = c \log n/s$, we upper bound this as $n^{1 - c}$ for some constant $c > 1$.

    Then, to upper bound the size of $X$, we apply a standard Chernoff bound to see that $|X| = \bigO{n \log n/s}$ with high probability.
\end{proof}

\dominatingsetlemma*

\begin{proof}
    Given the previous result, we simply apply $\hittingSet$ or $\rHittingSet$ on the subsets $N(v) \cup \set{v}$ for all $v \in V_s$.
    Since $|V_s| \leq n$, note that our previous result is also correct for all collections $\family$ of at most $n$ subsets as we can augment $\family$ with $n - |\family|$ subsets and any hitting set of the augmented collection is also a hitting set for the original collection.
\end{proof}

\degreedecompositionlemma*

We first discuss the deterministic algorithm $\decompose$.

\begin{proof}
    By Lemma \ref{lemma:dominating-set}, we can compute each $D_i$ with $\dominate$ in $\tO{n s_i} = \tO{n^2}$ time, so that overall we can compute all $D_i$ in $\tO{k n^2}$ time.
    Furthermore, we can ensure $D_1 \subset D_2 \subset \dotsc \subset D_k = V$ by augmenting the $D_i$ as necessary.
    Of course, when constructing $D_i$, we can easily construct $E_i^*$ by choosing the appropriate edges when we add a vertex to $v$ to $D_i$.
    Finally, we can construct $E_i$ in $O(m)$ time by examining each edge for the degrees of its endpoints.
\end{proof}

Next, we discuss the randomized algorithm $\rDecompose$.

\begin{proof}
    Using the randomized construction $\rDominate$, we can likewise choose $p_i = \frac{c \log n}{s_i}$ for large enough $c$ to ensure correctness by the union bound.
\end{proof}

\weightdecompositionlemma*

\begin{proof}
    As before, constructing $E_i$ can easily be done in $O(m) = O(n^2)$ time.
    To construct $D_i$, for every vertex with degree at least $s_i$, we compute a hitting set for its neighborhood restricted to the $s_i - 1$ lightest neighbors.
    In particular, we compute a hitting set for the sets $\set{v} \cup \set{w \in N(v) \given \rank_v(w) < s_i}$ of size $s_i$ for each $v \in V_{s_i}$.
    Then, this guarantees that the edge $(v, w) \in E_i^*$ satisfies the additional constraint $\rank_v(v, w) < s_i$ for $v \in V_{s_i}$.
    By our previous arguments, $D_i$ dominates $V_{s_i}$ and has size $\bigtO{\frac{n}{s_i}}$.
    Note that as before computing $D_i$ requires $\tO{n^2}$ time.
\end{proof}

\subsection{Useful Results on Fast Matrix Multiplication}

We will extensively use various algorithms for efficiently computing matrix products.
We lay out some of the key results used below.
The matrix multiplication exponent $\omega(a, b, c)$ denotes infimum over all constants $t$ such that we can multiply two matrices of dimension $n^a \times n^b$ and $n^b \times n^c$ in $O(n^t)$ time.
Denote $\omega(1, 1, 1)$ as the constant $\omega$.
As the results given in this section are quite involved, we omit them for brevity and instead provide references to the relevant proofs.

\begin{theorem}
    \label{thm:duan_wu_zhou}
    (\cite{duan2022matrixmult}) The matrix multiplication exponent $\omega \leq 2.37188$.
\end{theorem}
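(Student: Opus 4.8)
The plan is to note at the outset that this statement is not established here but quoted verbatim from Duan, Wu, and Zhou \cite{duan2022matrixmult}; its only role in the present paper is to supply a concrete numerical value of $\omega$ (and, together with \cite{gall2018rectangularmm}, of the rectangular exponents $\omega(a,b,c)$) to substitute into the running-time bounds of the preceding sections. So the "proof" is a pointer to that reference. For completeness I would sketch the high-level structure of the argument there, emphasizing which pieces are genuinely heavy.

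First I would fix a base tensor with good structural properties --- the standard choice being a power of the Coppersmith--Winograd tensor $\mathrm{CW}_q$ --- and recall the asymptotic sum inequality, which reduces bounding $\omega$ to exhibiting, inside a large tensor power $\mathrm{CW}_q^{\otimes N}$, a direct sum of many independent matrix-multiplication tensors $\langle a_i, b_i, c_i\rangle$ whose total volume $\sum_i (a_i b_i c_i)^{\tau}$ is large relative to the tensor's value, where $\tau = \omega/3$. The laser method supplies the mechanism: one zeroes out blocks of the tensor power so that the surviving blocks split into disjoint matrix-product pieces, and then optimizes the zeroing-out probabilities (a convex program over block distributions) to maximize the resulting upper bound on $\tau$.

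Next I would incorporate the refinements that separate \cite{duan2022matrixmult} from the earlier analyses of Coppersmith--Williams, Le Gall, and Alman--Vassilevska Williams: a sharper accounting of the \emph{combination loss} incurred when many surviving components share the same variable blocks, which prior work bounded too crudely. The key technical observation is that part of this loss can be recovered through a hashing-based argument that treats the overlapping components asymmetrically, yielding a strictly better value of the convex program and hence the bound $\omega \le 2.37188$. Because the bound improves with the tensor power $N$ (at the cost of an exponentially larger optimization), this is carried out numerically with rigorous error control.

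The hard part --- and the reason we simply cite the result rather than reprove it --- is precisely this last step: the bookkeeping of the laser method at high tensor powers, the formulation and rigorous numerical solution of the associated optimization, and the verification that the combination-loss savings are legitimate. None of this is needed in detail here; we use the inequality $\omega \le 2.37188$ and the rectangular exponents $\omega(a,b,c)$ purely as a black box, with all concrete running-time exponents in the paper computed via the complexity calculator of \cite{Complexity}.
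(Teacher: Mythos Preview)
Your proposal is correct and matches the paper's approach exactly: the paper does not prove this statement but simply writes ``See Duan, Wu, and Zhou \cite{duan2022matrixmult} for the appropriate proof.'' Your additional high-level sketch of the laser method and the combination-loss refinement is accurate and goes beyond what the paper provides, but it is entirely consistent with treating the result as a black-box citation.
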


See Duan, Wu, and Zhou \cite{duan2022matrixmult} for the appropriate proof.
The dual matrix multiplication exponent $\alpha$ is the supremum over all constants $k$ such that $\omega(1, k, 1) \leq 2$.

\begin{theorem}
    \label{thm:gall_urrutia}
    (\cite{gall2018rectangularmm})
    The dual matrix multiplication exponent $\alpha \geq 0.31389$.
\end{theorem}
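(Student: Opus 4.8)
The statement to establish is a lower bound on the dual matrix multiplication exponent $\alpha$, i.e.\ that one can multiply an $n \times n^{k}$ matrix by an $n^{k} \times n$ matrix in $\bigO{n^{2 + \epsilon}}$ arithmetic operations for every $\epsilon > 0$, where $k$ can be taken as large as $0.31389$. The plan is to exhibit an explicit fast rectangular matrix multiplication scheme achieving $\omega(1, k, 1) = 2$ by applying the laser method to a high tensor power of the Coppersmith--Winograd tensor, following the template of Coppersmith's original construction and its refinement by Le Gall. First I would fix a small integer $q$ and pass to the $N$-th tensor power $\mathrm{CW}_q^{\otimes N}$, whose border rank is $(q+2)^N$; the central combinatorial object is the partition of the index set of this power into blocks indexed by sequences of the three ``parts'' of $\mathrm{CW}_q$.

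The main step is then a carefully chosen zeroing-out: one discards all but a selected family of blocks so that, after a change of basis and a merging of variables, the surviving sub-tensor degenerates into a large direct sum of pairwise-disjoint rectangular matrix multiplication tensors $\langle a, b, c \rangle$ whose aspect ratio is tuned toward $(1, k, 1)$. Counting the number and the dimensions of these independent sub-products and combining them with the border-rank bound via Sch\"onhage's asymptotic sum inequality yields, after letting $N \to \infty$, a bound of the form $\omega(1, k, 1) \le 2$ for the optimal $k$. Quantifying that optimal $k$ reduces to a constrained optimization problem: the zeroing pattern is described by probability distributions over the parts in the two ``outer'' tensor slots and the ``inner'' slot, and one maximizes $k$ subject to the asymptotic sum inequality pinning the total exponent at $2$. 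This is an entropy-style maximization solved with Lagrange multipliers and then evaluated numerically; pushing $q$ large enough, and in the Le Gall--Urrutia improvement using a more refined grouping scheme than the naive laser method, drives the achievable value up to $0.31389$.

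The hard part will be the tensor-analytic bookkeeping rather than the numerics: verifying that the zeroed-out power really degenerates into independent rectangular matrix-multiplication tensors of the claimed dimensions with no hidden rank loss, and that the distributions used in the three tensor slots are mutually consistent so that the laser method applies. The border-rank input for $\mathrm{CW}_q$ is classical, and the final optimization is routine once the symbolic constraints are written down. Since this is a deep result from prior work, in this paper we do not reprove it; we simply invoke \cite{gall2018rectangularmm} and use $\alpha \ge 0.31389$ (together with $\omega \le 2.37188$ from \cite{duan2022matrixmult}) as a black box when numerically optimizing the exponents in our algorithms.
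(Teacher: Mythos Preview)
Your proposal is correct and matches the paper's treatment: the paper does not prove this theorem at all but simply states it with a citation, writing ``See Le Gall and Urrutia \cite{gall2018rectangularmm} and Le Gall \cite{le2012rectangular} for the proofs of the above two results.'' Your sketch of the laser-method argument is accurate background, but as you yourself conclude, the result is invoked as a black box here, and that is exactly what the paper does.
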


\begin{lemma}
    \label{lemma:gall_urrutia_curve}
    (\cite{le2012rectangular})
    For any fixed value $0 \leq k_0 < 1$ and $k_0 \leq k \leq 1$,
    \begin{equation*}
        \omega(1, k, 1) \leq \omega(1, k_0, 1) + (\omega - \omega(1, k_0, 1)) \frac{k - k_0}{1 - k_0}
    \end{equation*}
\end{lemma}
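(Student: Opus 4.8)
\textbf{Proof proposal for Lemma~\ref{lemma:gall_urrutia_curve}.} The plan is to prove the inequality $\omega(1,k,1) \le \omega(1,k_0,1) + (\omega - \omega(1,k_0,1))\frac{k-k_0}{1-k_0}$ by a standard interpolation/block-decomposition argument, exactly as in Le~Gall~\cite{le2012rectangular}. The key observation is that the function $k \mapsto \omega(1,k,1)$ is convex on $[0,1]$; once convexity is established, the claimed bound is simply the statement that on the interval $[k_0,1]$ the convex function lies below the chord joining the endpoints $(k_0,\omega(1,k_0,1))$ and $(1,\omega(1,1,1)) = (1,\omega)$. So the proof reduces to (i) establishing convexity of $k \mapsto \omega(1,k,1)$, and (ii) writing down the chord and reading off the formula.

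For step (i), the plan is to show that for any $k_0 \le k \le 1$ we can write $k = \lambda k_0 + (1-\lambda)\cdot 1$ with $\lambda = \frac{1-k}{1-k_0} \in [0,1]$, and then multiply an $n^1 \times n^k$ matrix by an $n^k \times n^1$ matrix by a tensor-power/block argument. Concretely: take a large integer $N$, set $n = m^N$, and split the shared dimension of size $n^k = m^{Nk}$. Partition the $m^{Nk}$-dimensional index set as a product of a block of size $m^{N\lambda k_0}$ and a block of size $m^{N(1-\lambda)}$ (using $\lambda k_0 + (1-\lambda) = k$). Then the $n \times n^k$ by $n^k \times n$ product decomposes into $m^{N\lambda k_0} \cdot m^{N(1-\lambda)} = n^{\lambda k_0 + 1 - \lambda}$... more carefully, one realizes the rectangular product as a composition: first handle the $m^{N(1-\lambda)}$ part as $n^{1-\lambda}$ independent square-ish multiplications of size roughly $n^{\lambda}$ in the appropriate dimensions, each costing $n^{\lambda \cdot \omega(1,k_0,1)}$ arithmetic operations up to lower-order factors, and the remaining $n^{(1-\lambda)\omega}$ cost from the square part. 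Balancing exponents gives total cost $n^{\lambda\,\omega(1,k_0,1) + (1-\lambda)\omega + o(1)}$, whence
\begin{equation*}
    \omega(1,k,1) \le \lambda\,\omega(1,k_0,1) + (1-\lambda)\,\omega,
\end{equation*}
and substituting $\lambda = \frac{1-k}{1-k_0}$, $1-\lambda = \frac{k-k_0}{1-k_0}$ yields precisely the claimed inequality.

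I expect the main obstacle to be making the block-decomposition bookkeeping fully rigorous: one must be careful that splitting only the \emph{inner} dimension $n^k$ (while keeping the two outer dimensions at $n^1$) genuinely reduces to the rectangular primitive $\omega(1,k_0,1)$ rather than to some other rectangular shape, and that the $o(1)$ terms coming from taking $N \to \infty$ and from padding dimensions up to clean powers are harmless. The cleanest route, which I would follow, is to invoke the general fact (Lemma~2.3 / the interpolation inequality in Le~Gall~\cite{le2012rectangular}, itself going back to Lotti--Romani and Schönhage's $\tau$-theorem machinery) that $\omega(1,k,1)$ is convex and piecewise-linear-bounded in $k$; then the present lemma is an immediate corollary of convexity plus the two endpoint values $\omega(1,k_0,1)$ and $\omega(1,1,1)=\omega$. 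Since the excerpt explicitly permits citing earlier-stated results, and this lemma is itself quoted from~\cite{le2012rectangular}, the proper write-up is a short reduction: \emph{state convexity of $k \mapsto \omega(1,k,1)$ with a reference, then observe the chord inequality.} No substantial new calculation is required.
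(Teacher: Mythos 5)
Your final plan --- establish convexity of $k \mapsto \omega(1,k,1)$ and then read the claimed bound off as the chord inequality between the endpoints $(k_0,\omega(1,k_0,1))$ and $(1,\omega)$ --- is correct, and it is essentially all the paper does: the paper offers no proof of this lemma, deferring to \cite{le2012rectangular}, and it separately records the needed convexity of the exponent function as Lemma~\ref{lemma:lotti-convexity} (Lotti--Romani \cite{lotti1983rectangularmm}). So the "short reduction" you describe at the end (cite convexity, observe the chord inequality) is the right write-up and matches the paper's treatment. One small caveat: the convexity statement the paper actually quotes in Lemma~\ref{lemma:lotti-convexity} is a level-set version (equal values at the two endpoints), whereas the chord argument needs the full convexity $\omega(\lambda v + (1-\lambda)v') \leq \lambda\,\omega(v) + (1-\lambda)\,\omega(v')$ for $v=(1,k_0,1)$, $v'=(1,1,1)$; that stronger form is also in Lotti--Romani and is what you should cite.

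The genuine problem is in your attempted self-contained justification of convexity. The decomposition you sketch --- partitioning only the inner dimension of a single $n \times n^{k}$ by $n^{k}\times n$ product into blocks of width $n^{k_0}$ --- does not prove the lemma. That argument writes $AB=\sum_i A_iB_i$ with $n^{k-k_0}$ summands of shape $(1,k_0,1)$ each, and hence only yields $\omega(1,k,1) \leq \omega(1,k_0,1) + (k-k_0)$. This is strictly weaker than the chord bound whenever $\omega < \omega(1,k_0,1) + (1-k_0)$, which is the case for the parameters used in this paper (e.g.\ $k_0=\alpha$ with $\omega(1,\alpha,1)=2$ gives chord slope $(\omega-2)/(1-\alpha)\approx 0.55 < 1$). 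No partition of a single instance can interpolate between two \emph{different} fast algorithms; the correct proof of convexity takes Kronecker powers of two distinct tensors --- roughly $N\lambda$ copies of $\langle m, m^{k_0}, m\rangle$ tensored with $N(1-\lambda)$ copies of $\langle m,m,m\rangle$, using submultiplicativity of rank under tensor product --- so that the costs multiply (exponents add) rather than the instance counts multiplying. You gesture at "tensor-power" but the bookkeeping you actually write down is the partition argument, which fails. Replacing that sketch with the citation to Lotti--Romani convexity, as you propose in your last paragraph, closes the gap.
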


We refer the reader to Le Gall and Urrutia \cite{gall2018rectangularmm} and Le Gall \cite{le2012rectangular} for the proofs of the above two results.
We will also use some useful properties of $\omega$, in particular that it is convex, symmetric and linear.

\begin{lemma}
    \label{lemma:lotti-convexity}
    (Corollary 5.1 of \cite{lotti1983rectangularmm})
    Suppose $\omega(1, a, b) = \omega(1, a', b') = c$.
    Then, for $t \in [0, 1]$,
    \begin{equation*}
        \omega(1, t*a + (1 - t)*a', t*b + (1  - t)*b') \leq c
    \end{equation*}
\end{lemma}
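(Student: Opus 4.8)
The plan is to derive \Cref{lemma:lotti-convexity} from the behaviour of matrix multiplication exponents under Kronecker (tensor) products of bilinear algorithms, which is the source of essentially all convexity-type statements for $\omega$. Recall that $\omega(1,a,b)=c$ means: for every $\eps>0$ and all sufficiently large $n$ one can multiply an $\ceil{n}\times\ceil{n^a}$ matrix by an $\ceil{n^a}\times\ceil{n^b}$ matrix using $O(n^{c+\eps})$ arithmetic operations; equivalently (by the standard border-rank theory, e.g. Bini's results and Schönhage's $\tau$-theorem) the rank of the matrix multiplication tensor $\langle n,n^a,n^b\rangle$ is $n^{c+o(1)}$, and these two exponents coincide. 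Since we only need an upper bound $\le c$, it suffices to exhibit, for the interpolated dimensions, an algorithm with exponent arbitrarily close to $c$.

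First I would record the one structural fact needed. If an algorithm multiplies $m_1\times n_1$ by $n_1\times p_1$ matrices with $r_1$ essential multiplications, and another multiplies $m_2\times n_2$ by $n_2\times p_2$ matrices with $r_2$, then their Kronecker product multiplies $(m_1m_2)\times(n_1n_2)$ by $(n_1n_2)\times(p_1p_2)$ matrices with $r_1r_2$ multiplications — view the large matrices as block matrices whose blocks are indexed according to the first algorithm and whose block-level products are carried out by the second. Hence $R(\langle m_1m_2,n_1n_2,p_1p_2\rangle)\le R(\langle m_1,n_1,p_1\rangle)\,R(\langle m_2,n_2,p_2\rangle)$, and the number of additions composes polynomially as well, so the total operation count is the product of the two, up to a polynomial overhead in the dimensions.

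Then, given $\omega(1,a,b)=\omega(1,a',b')=c$ and $t\in[0,1]$, write $\alpha=ta+(1-t)a'$ and $\beta=tb+(1-t)b'$. Fix $\eps>0$, let $n$ be a large parameter, and set $N=\ceil{n^t}$ and $M=\ceil{n^{1-t}}$, so that $NM=n^{1+o(1)}$, $N^aM^{a'}=n^{\alpha+o(1)}$ and $N^bM^{b'}=n^{\beta+o(1)}$. Tensoring a near-optimal algorithm for $\langle N,N^a,N^b\rangle$ of cost $N^{c+\eps}=n^{t(c+\eps)}$ with one for $\langle M,M^{a'},M^{b'}\rangle$ of cost $M^{c+\eps}=n^{(1-t)(c+\eps)}$ yields an algorithm for $\langle NM,\,N^aM^{a'},\,N^bM^{b'}\rangle$ of cost $n^{c+\eps+o(1)}$, i.e. an algorithm for multiplying an $n^{1+o(1)}\times n^{\alpha+o(1)}$ matrix by an $n^{\alpha+o(1)}\times n^{\beta+o(1)}$ matrix in $n^{c+\eps+o(1)}$ operations. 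Letting $n\to\infty$ and then $\eps\to0$ gives $\omega(1,\alpha,\beta)\le c$, which is the claim.

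The main obstacle is purely bookkeeping: $n^t$, $n^a$, etc.\ are generally irrational, so $N$, $M$ and their powers match the target dimensions only up to $n^{o(1)}$ factors, and one must verify these errors are absorbed into the exponent. The clean route is to first prove the statement assuming $a,b,a',b',t$ are rational, letting $n$ run over a sequence of perfect powers chosen so that $N$, $M$, $N^a$, $N^b$, $M^{a'}$, $M^{b'}$ are all exact integers and the dimension identities hold exactly; and then to extend to arbitrary real parameters using that $\omega(1,\cdot,\cdot)$ is non-decreasing in each coordinate and continuous on the interior of its domain (continuity following from the convexity-type monotonicity already quoted in \Cref{lemma:gall_urrutia_curve}), together with the fact that $\omega(1,\cdot,\cdot)$ is defined as an infimum over valid exponents. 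No new idea beyond submultiplicativity of tensor rank and the choice $N\approx n^t$, $M\approx n^{1-t}$ is required.
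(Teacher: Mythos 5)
Your proposal is correct. The paper does not actually prove this lemma; it simply cites Corollary~5.1 of Lotti--Romani, and the argument you give — submultiplicativity of rank under tensor products applied to $\langle N, N^a, N^b\rangle \otimes \langle M, M^{a'}, M^{b'}\rangle$ with $N \approx n^t$, $M \approx n^{1-t}$ — is the standard proof of that result and essentially the one in the cited source. The only wrinkle is your appeal to \Cref{lemma:gall_urrutia_curve} for continuity in the rational-to-real extension step: that lemma concerns only the one-parameter family $\omega(1,k,1)$, so it does not directly yield continuity of the two-parameter map $(a,b)\mapsto\omega(1,a,b)$; but monotonicity in each coordinate together with trivial block-padding bounds of the form $\omega(1,a,b)\leq\omega(1,a',b')+(a-a')+(b-b')$ for $a\geq a'$, $b\geq b'$ supplies what you need, so this is a cosmetic rather than substantive gap.
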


\begin{lemma}
    \label{lemma:omega-symmetry-linearity}
    Let $t, a, b, c \geq 0$.
    Then,
    \begin{equation*}
        \omega(a, b, c) = \omega(a, c, b) = \omega(b, a, c) = \omega(b, c, a) = \omega(c, a, b) = \omega(c, b, a)
    \end{equation*}
    and
    \begin{equation*}
        \omega(t a, t b, t c) = t \omega(a, b, c)
    \end{equation*}
\end{lemma}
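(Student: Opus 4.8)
The plan is to verify the two claims directly from the definition of $\omega(a,b,c)$ as the exponent of the product of an $n^a \times n^b$ matrix with an $n^b \times n^c$ matrix. The key observation for the symmetry statement is that matrix multiplication of matrices of shape $p \times q$ and $q \times r$ is, as a bilinear problem, invariant under permutation of the three dimensions $p, q, r$: transposing the product $C = AB$ gives $C^\top = B^\top A^\top$, which swaps the roles of the outer dimensions (hence $\omega(a,b,c) = \omega(c,b,a)$), and a standard reindexing / symmetry of the trilinear tensor $\langle p, q, r\rangle$ underlying the product shows the tensor rank is symmetric in all three arguments. First I would state this tensor-symmetry fact (the bilinear complexity of $\langle p, q, r\rangle$ is symmetric under $S_3$), observe it implies that an $O(n^t)$ algorithm for one shape gives an $O(n^t)$ algorithm for any permuted shape (one simply permutes which matrix is the ``left'' factor and transposes as needed), and conclude that the infimum defining $\omega(\cdot,\cdot,\cdot)$ is the same for all six orderings. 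This handles the first display.

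For the linearity claim $\omega(ta, tb, tc) = t\,\omega(a,b,c)$, the plan is a simple substitution argument. Given an algorithm multiplying $n^{a}\times n^{b}$ by $n^{b}\times n^{c}$ matrices in $O(n^{\omega(a,b,c)+\eps})$ operations, I would apply it with $n$ replaced by $N = n^{t}$: this multiplies $N^{a}\times N^{b} = n^{ta}\times n^{tb}$ matrices by $n^{tb}\times n^{tc}$ matrices in $O(N^{\omega(a,b,c)+\eps}) = O(n^{t\omega(a,b,c)+t\eps})$ operations, which shows $\omega(ta,tb,tc) \le t\,\omega(a,b,c)$. The reverse inequality follows by the same substitution with $n$ replaced by $n^{1/t}$ (valid since $t > 0$), giving $\omega(a,b,c) \le \tfrac{1}{t}\omega(ta,tb,tc)$. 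Combining the two inequalities yields equality. One mild care point: the definition quantifies over ``any constant $\eps > 0$,'' so I would phrase both directions in terms of the infimum to avoid worrying about whether the infimum is attained.

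I do not expect any serious obstacle here; this is a folklore bookkeeping lemma. The only thing requiring a modicum of care is making the symmetry argument rigorous rather than hand-wavy — specifically, being explicit that permuting the three dimensions corresponds to transposing and/or relabeling factors in the bilinear map, so that an algorithm of a given arithmetic cost transfers across all permutations without loss. Once that is spelled out, the linearity part is a one-line rescaling of the variable $n$. If a cleaner exposition is desired, I would simply cite the standard reference (e.g.\ Bürgisler--Clausen--Shokrollahi or the survey of Le Gall) for the $S_3$-symmetry of rectangular matrix multiplication and give the rescaling argument in full.
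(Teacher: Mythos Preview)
Your proposal is correct, but the paper does not actually prove this lemma: it simply states the result and cites Lotti and Romani \cite{lotti1983rectangularmm} for the $S_3$-symmetry and Huang and Pan \cite{huang1998pan} for the homogeneity. So there is no ``paper's own proof'' to compare against; your plan supplies the standard folklore arguments (tensor symmetry via transposition/relabeling, and the rescaling $n \mapsto n^{t}$) that those references contain. The only minor point is that for $t = 0$ the rescaling argument needs a one-line separate check (both sides are $0$), which you may want to mention explicitly since the statement allows $t \geq 0$.
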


See, for example, Lotti and Romani \cite{lotti1983rectangularmm} for the former result and Huang and Pan \cite{huang1998pan} for the latter.
The following theorem of Huang and Pan \cite{huang1998pan} gives improved bounds on general rectangular matrix multiplication for matrix products where $a, b, c$ are distinct.

\begin{theorem}
    \label{thm:huang_pan}
    (Theorem 8.1 of \cite{huang1998pan})
    Suppose $r > 1 > t > 0$.
    Then,
    \begin{equation*}
        \omega(t, 1, r) \leq \begin{cases}
            r + 1 & t \leq \alpha \\ 
            \frac{r(1 - \alpha) + (1 - t) + (\omega - 1)(t - \alpha)}{1 - \alpha} & \alpha < t \leq 1
        \end{cases}
    \end{equation*}
\end{theorem}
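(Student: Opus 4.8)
The plan is to reduce the general rectangular exponent $\omega(t, 1, r)$ with $r > 1 > t > 0$ to the one-sided-thin exponent $\omega(t, 1, 1)$ by tiling the long dimension, and then to bound $\omega(t,1,1)$ using the dual exponent $\alpha$ together with the convex interpolation curve of Lemma~\ref{lemma:gall_urrutia_curve}.

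First I would establish the tiling inequality $\omega(t, 1, r) \le (r - 1) + \omega(t, 1, 1)$. Given an $\ceil{n^t} \times \ceil{n}$ matrix $A$ and an $\ceil{n} \times \ceil{n^r}$ matrix $B$, partition the columns of $B$ into $\ceil{n^{r-1}}$ blocks $B_1, \dots, B_{\ceil{n^{r-1}}}$, each of size roughly $n \times n$. Each product $A B_j$ is an $n^t \times n$ by $n \times n$ product, computable in $O(n^{\omega(t,1,1) + \eps})$ operations, and there are $O(n^{r-1})$ of them; concatenating the results recovers $AB$. Since this holds for every $\eps > 0$, we obtain the stated inequality at the level of exponents.

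Next I would bound $\omega(t, 1, 1)$. By the permutation symmetry of the rectangular matrix multiplication exponent (Lemma~\ref{lemma:omega-symmetry-linearity}), $\omega(t, 1, 1) = \omega(1, t, 1)$. For $t \le \alpha$, the definition of the dual exponent gives $\omega(1, t, 1) \le 2$ (in fact equality, since any product with an $n \times n$ output needs $\Omega(n^2)$ work), so the tiling step yields $\omega(t, 1, r) \le r + 1$, matching the first branch. For $\alpha < t \le 1$, apply Lemma~\ref{lemma:gall_urrutia_curve} with $k_0 = \alpha$, $k = t$, using $\omega(1, \alpha, 1) = 2$, to get
\begin{equation*}
    \omega(t, 1, 1) = \omega(1, t, 1) \le 2 + (\omega - 2) \frac{t - \alpha}{1 - \alpha}.
\end{equation*}
Combining with the tiling inequality, $\omega(t, 1, r) \le r + 1 + (\omega - 2)\frac{t - \alpha}{1 - \alpha}$, and a short rearrangement using $1 - t = (1 - \alpha) - (t - \alpha)$ rewrites the right-hand side as $\frac{r(1 - \alpha) + (1 - t) + (\omega - 1)(t - \alpha)}{1 - \alpha}$, exactly the second branch.

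The only real subtlety — and the step most worth double-checking — is that Lemma~\ref{lemma:gall_urrutia_curve} is phrased for the middle-thin exponent $\omega(1, k, 1)$ on $k_0 \le k \le 1$, so one must invoke permutation symmetry to transfer it to $\omega(t, 1, 1)$, and then verify that the algebraic simplification matches the stated formula verbatim. Everything else is routine bookkeeping with the $n^{\eps}$ slack in the definition of $\omega(a,b,c)$, which vanishes on passing to exponents. This recovers Theorem~8.1 of Huang and Pan~\cite{huang1998pan}.
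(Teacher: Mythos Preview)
Your argument is correct. The paper itself does not prove this theorem; it simply states the result and writes ``See Huang and Pan \cite{huang1998pan} for the proof of the above result.'' So there is no in-paper proof to compare against. Your proof is self-contained and valid: the block-tiling along the long dimension gives $\omega(t,1,r) \le (r-1) + \omega(t,1,1)$, the symmetry lemma converts $\omega(t,1,1)$ to $\omega(1,t,1)$, and then the interpolation curve of Lemma~\ref{lemma:gall_urrutia_curve} with anchor $k_0 = \alpha$ (using $\omega(1,\alpha,1)=2$) yields the second branch; the algebraic rewrite via $1-t = (1-\alpha) - (t-\alpha)$ is correct. The only caveat worth flagging is that $\alpha$ is defined as a supremum, so strictly speaking one uses $\omega(1,k,1) \le 2$ for all $k < \alpha$ and passes to the limit (or invokes continuity/monotonicity of $k \mapsto \omega(1,k,1)$) to get $\omega(1,\alpha,1)=2$ at the anchor point --- standard, but worth a one-line remark if you want to be fully rigorous.
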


See Huang and Pan \cite{huang1998pan} for the proof of the above result.
We will use the following corollary as a generalization,

\begin{restatable}{corollary}{abcrectmatrix}
    \label{cor:a-b-c-rect-matrix}
    Suppose $0 < a < b < c$.
    Then,
    \begin{equation*}
        \omega(a, b, c) \leq \begin{cases}
            b + c & \frac{a}{b} \leq \alpha \\
            \frac{a(\omega - 2) + b(1 + \alpha - \alpha \cdot \omega) + c (1 - \alpha)}{1 - \alpha} & \alpha < \frac{a}{b} \leq 1
        \end{cases}
    \end{equation*}
\end{restatable}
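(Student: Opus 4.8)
This corollary generalizes Theorem~\ref{thm:huang_pan} from the form $\omega(t, 1, r)$ to arbitrary $\omega(a, b, c)$ with $0 < a < b < c$. The plan is to reduce the general rectangular exponent to the normalized form by exploiting the linearity property of $\omega$ from Lemma~\ref{lemma:omega-symmetry-linearity}, namely $\omega(ta, tb, tc) = t\,\omega(a, b, c)$.

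First I would apply linearity with $t = 1/b$ to write $\omega(a, b, c) = b \cdot \omega(a/b, 1, c/b)$. Setting $t' = a/b$ and $r' = c/b$, we have $r' > 1 > t' > 0$ (using $a < b < c$), so Theorem~\ref{thm:huang_pan} applies directly to $\omega(t', 1, r')$. Substituting the two cases of that theorem:
\begin{itemize}
\item If $t' = a/b \leq \alpha$, then $\omega(t', 1, r') \leq r' + 1 = c/b + 1$, so $\omega(a, b, c) \leq b(c/b + 1) = b + c$.
\item If $\alpha < t' = a/b \leq 1$, then
\begin{equation*}
    \omega(t', 1, r') \leq \frac{r'(1 - \alpha) + (1 - t') + (\omega - 1)(t' - \alpha)}{1 - \alpha}.
\end{equation*}
Multiplying through by $b$ and substituting $t' = a/b$, $r' = c/b$ gives
\begin{equation*}
    \omega(a, b, c) \leq \frac{c(1 - \alpha) + (b - a) + (\omega - 1)(a - \alpha b)}{1 - \alpha}.
\end{equation*}
\end{itemize}

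The remaining work is purely algebraic: I would expand the numerator of the second case, $c(1-\alpha) + b - a + (\omega-1)a - (\omega-1)\alpha b = a(\omega - 2) + b(1 + \alpha - \alpha\omega) + c(1 - \alpha)$, which matches the claimed expression. There is essentially no obstacle here — the only subtlety is checking that the hypotheses $r > 1 > t > 0$ of Theorem~\ref{thm:huang_pan} are genuinely satisfied after normalization, which follows immediately from the strict inequalities $0 < a < b < c$, and that the normalized condition $t' \lessgtr \alpha$ corresponds exactly to $\tfrac{a}{b} \lessgtr \alpha$. I would close by remarking that the bound is stated as an inequality (rather than equality) for the same reason as in the source results, since $\omega(\cdot,\cdot,\cdot)$ denotes an infimum over feasible exponents.
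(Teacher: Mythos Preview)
Your proposal is correct and follows essentially the same approach as the paper: normalize the middle argument to $1$ via linearity, apply Theorem~\ref{thm:huang_pan} to $\omega(a/b,1,c/b)$, and simplify. You are slightly more careful than the paper in explicitly verifying the hypotheses $r>1>t>0$ and spelling out the algebraic expansion, but the argument is the same.
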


\begin{proof}
    By linearity (Lemma \ref{lemma:omega-symmetry-linearity}), we have,
    \begin{align*}
        \omega(a, b, c) &= b \cdot \omega\left(\frac{a}{b}, 1, \frac{c}{b} \right) \\
        &\leq \begin{cases}
            b \cdot \left(\frac{c}{b} + 1\right) & \frac{a}{b} \leq \alpha \\
            b \cdot \frac{\frac{c}{b}(1 - \alpha) + \left(1 - \frac{a}{b}\right) + (\omega - 1)\left( \frac{a}{b} - \alpha \right)}{1 - \alpha}
        \end{cases}\\
        &= \begin{cases}
            b + c & \frac{a}{b} \leq \alpha \\
            \frac{a(\omega - 2) + b(1+\alpha-\alpha\omega) + c(1 - \alpha)}{1 - \alpha}
        \end{cases}
    \end{align*}
\end{proof}

\subsection{Fast \texorpdfstring{\minplus}{(min, +)} Matrix Multiplication}
\label{sec:prelims:fast-min-plus}

Recall that for two matrices $A, B$, their \minplus product, denoted $A * B$, is defined as,
\begin{equation*}
    (A * B)_{ij} = \min_k A_{ik} + B_{kj}
\end{equation*}

While computing \minplus product in general is difficult, there are several structured instances in which efficient algorithms are known beyond the trivial $\tO{n^3}$ algorithm.
We describe a few that we will require here.
First, we begin with the case that all entries are bounded.

\begin{restatable}{theorem}{boundedminplusthm}
    \label{thm:bounded-min-plus}
    (\cite{alon1997exponent})
    Let $A, B$ be two $n \times n$ integer matrices with entries bounded in absolute value by $M$.
    Then, there is an algorithm $\boundedMinPlus$ computing the \minplus product $C = A * B$ in time $\tO{M n^{\omega}}$.
\end{restatable}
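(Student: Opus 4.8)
The plan is to follow the classical reduction of Alon, Galil, and Margalit \cite{alon1997exponent}, which turns a \emph{bounded} \minplus product into an ordinary integer matrix product whose entries have small bit length. First I would normalize the entries to be nonnegative: adding $M$ to every entry of $A$ and of $B$ shifts every entry of $A * B$ by exactly $2M$ and does not change which index $k$ attains each minimum, so we may assume all entries of $A$ and $B$ lie in $\set{0, 1, \dotsc, 2M}$, and simply subtract $2M$ from the result at the end.

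Next I would encode the \minplus product algebraically. Set $N = n + 1$ and define integer matrices $\widetilde{A}, \widetilde{B}$ by $\widetilde{A}_{ik} = N^{A_{ik}}$ and $\widetilde{B}_{kj} = N^{B_{kj}}$, so that the ordinary integer product $\widetilde{C} = \widetilde{A}\widetilde{B}$ satisfies
\begin{equation*}
    \widetilde{C}_{ij} = \sum_{k = 1}^{n} N^{A_{ik} + B_{kj}} = \sum_{d \geq 0} c_{ij}^{(d)} N^{d}, \qquad c_{ij}^{(d)} = \bigl| \set{k \given A_{ik} + B_{kj} = d} \bigr|.
\end{equation*}
Because $0 \leq c_{ij}^{(d)} \leq n < N$ and $d \leq 4M$, this is literally the base-$N$ representation of $\widetilde{C}_{ij}$ with no carrying, so $(A * B)_{ij}$ equals the index of the least significant nonzero base-$N$ digit of $\widetilde{C}_{ij}$; since $\widetilde{C}_{ij} \leq n N^{4M}$ has $O(M \log n)$ bits, this digit can be read off in $\tO{M}$ time per entry.

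It then remains to compute $\widetilde{C} = \widetilde{A}\widetilde{B}$, a product of two $n \times n$ integer matrices with nonnegative entries bounded by $N^{2M}$, hence of bit length $O(M \log n)$. Running a fast matrix multiplication algorithm over $\Z$ performs $O(n^{\omega + \eps})$ ring operations for every $\eps > 0$, and I claim each such operation is an addition, subtraction, or multiplication of integers whose bit length stays $O(M \log n)$ throughout, so that with fast integer arithmetic each costs $\tO{M}$ time; the total is then $\tO{M n^{\omega}}$. Absorbing the $O(n^{\omega+\eps})$ operation count into the stated $\tO{n^\omega}$ is a cosmetic step using the definition of $\omega$ as an infimum.

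The main obstacle is exactly the claim in that last paragraph: since fast matrix-multiplication schemes (already Strassen's) use subtractions, one cannot simply say "the entries are nonnegative and bounded" — one must bound the intermediate values of the straight-line program. The standard resolution is that any bilinear algorithm for $n \times n$ matrix multiplication has intermediate values bounded in absolute value by a fixed polynomial in $n$ times the bound on the inputs, so their bit lengths exceed $O(M \log n)$ only by an additive $O(\polylog n)$ term, keeping the per-operation cost $\tO{M}$; this is precisely the analysis carried out in \cite{alon1997exponent}, which I would invoke rather than reproduce.
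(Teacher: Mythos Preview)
Your proposal is correct and is essentially the same Alon--Galil--Margalit reduction the paper uses. The only cosmetic difference is that the paper encodes with a formal indeterminate, setting $A^p_{ik} = x^{A_{ik}}$ and computing the polynomial matrix product $C^p = A^p B^p$, then reading off the minimum-degree term of each entry; you instead substitute $x = N = n+1$ and work with integers. The polynomial version has the minor advantage that the ``intermediate values blow up under subtraction'' issue you flag simply does not arise: additions and subtractions of degree-$\le 2M$ polynomials stay degree $\le 2M$, so each ring operation is transparently $\tO{M}$ via FFT, and no appeal to the bit-length analysis of \cite{alon1997exponent} is needed.
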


\begin{proof}
    Without loss of generality, we can assume the entries of $A, B$ are in fact non-negative and therefore in the range $[0, M]$.
    Otherwise, define non-negative matrices $A'_{ik} = A_{ik} + M$ and $B'_{kj} = B_{kj} + M$ with values in the range $[0, 2M]$.
    Given the product $A' * B'$, we can easily recover $A * B$ as,
    \begin{equation*}
        (A' * B')_{ij} - 2 M = \min_{k} A'_{ik} + B'_{kj} - 2M = \min_{k} A_{ik} + B_{kj} = (A * B)_{ij}
    \end{equation*}

    Given $A$, we define the polynomial matrices $A^p_{ik} = x^{A_{ik}}$ if $A_{ik}$ is finite and $0$ otherwise.
    We define $B^p$ analogously, and compute $C^p = A^p B^p$ using fast matrix multiplication in $\tO{M n^{\omega}}$ time.
    Then, for each entry in $C^p$, if $C^p_{ij} = 0$, we set $C_{ij} = \infty$ and otherwise, we compute the lowest degree term $x^{c}$ and set $C_{ij} = c$.
    Now, we prove that that $C = A * B$.
    Recall that $(A*B)_{ij} = \min_{k} A_{ik} + B_{kj}$.
    $(A * B)_{ij}$ is infinite if and only if all $A_{ik} + B_{kj}$ are infinite.
    If either $A_{ik}, B_{kj}$ is infinite, then the corresponding entry in $A^p, B^p$ is zero.
    Thus, if all $A_{ik} + B_{kj}$ are infinite, $C^p_{ij} = 0$ so $C_{ij} = \infty$.

    Suppose $(A * B)_{ij}$ is finite.
    Then $C^p_{ij} = \sum_{k} x^{A_{ik} + B_{kj}}$ where the sum ranges over all $k$ where $A_{ik} + B_{kj}$ is finite.
    Taking the minimum degree, we have $(A * B)_{ij}$, proving correctness.

    Let us now analyze the performance of the above algorithm.
    Constructing $A^p$ from $A$, $B^p$ from $B$, and $C$ from $C^p$ requires time at most $\tO{M n^2}$.
    Computing $C^p = A^p B^p$ requires $\tO{M n^{\omega}}$ as each multiplication of at most $M$ degree polynomials requires $\tO{M}$ time.
\end{proof}

More generally, there is also a sub-cubic algorithm when the entries have bounded differences \cite{bringmann2019boundeddifference}.

\begin{definition}
    \label{def:bounded-difference-matrix}
    An $n \times n$ matrix is called {\bf $b$-bounded difference} if for all $i, j$, $|A_{i, j} - A_{i, j + 1}| \leq b$ and $|A_{i, j} - A_{i + 1}, j| \leq b$.
    If only the former holds, the matrix is {\bf row bounded difference}.
    If only the latter holds, the matrix is {\bf column bounded difference}.
\end{definition}

A more general requirement may be that the entries are monotone. 
That is, the values of each row (or column) are monotonically increasing.

\begin{definition}
    \label{def:monotone-matrix}
    An $n \times n$ matrix is called {\bf row-monotone} if all entries are non-negative integers bounded by $O(n)$ and each row of this matrix is non-decreasing.
    A {\bf column-monotone} matrix is defined similarly. 
\end{definition}

\begin{theorem}
    \label{thm:monotone-min-plus}
    (\cite{chi2022monotone})
    Let $A, B$ be two $n \times n$ integer matrices with entries $A_{ij}, B_{ij} = O(n)$.
    Suppose furthermore that $B$ is either row-monotone or column-monotone.
    Then, there is an algorithm $\monotoneMinPlus$ computing the \minplus product $C = A * B$ in time $\bigtO{n^{\frac{3 + \omega}{2}}}$
\end{theorem}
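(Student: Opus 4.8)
The plan is to obtain Theorem~\ref{thm:monotone-min-plus} as the balanced square special case of the rectangular result Theorem~\ref{thm:rect-monotone-min-plus} established earlier in the paper. Given $n \times n$ integer matrices $A, B$ with non-negative entries in $O(n)$ and $B$ row-monotone, I would simply apply Theorem~\ref{thm:rect-monotone-min-plus} with $a = b = c = 1$ and $\mu = 1$ (the entry bound $O(n) = O(n^{\mu})$ pins $\mu = 1$). Its running time then reads
\begin{equation*}
    \bigtO{n^{\frac{(a + b + \mu) + \omega(a, b, c)}{2}}} = \bigtO{n^{\frac{3 + \omega(1, 1, 1)}{2}}} = \bigtO{n^{\frac{3 + \omega}{2}}},
\end{equation*}
using $\omega = \omega(1,1,1)$. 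This yields the row-monotone case of $\monotoneMinPlus$ verbatim, with correctness inherited from the correctness of $\monotoneMinPlus$ in Theorem~\ref{thm:rect-monotone-min-plus}.

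For the column-monotone case I would reduce to the row-monotone case by transposition: for the \minplus product one has $(A * B)^T = B^T * A^T$, and $B$ column-monotone makes $B^T$ row-monotone. The hard (or at least the only non-automatic) part is that this transpose interchanges the left and right factors, whereas Theorem~\ref{thm:rect-monotone-min-plus} fixes the monotone matrix to be the \emph{right} factor; so I would note that the construction used to prove Theorem~\ref{thm:rect-monotone-min-plus} — polynomial matrix multiplication over a random prime $p$, the segment-tree bookkeeping of the $C^{(l)}$ matrices along the monotone direction, and the error-set bound of Lemma~\ref{lemma:error-set-bound-monotone} — is symmetric under relabeling indices, so it applies equally when the monotone matrix sits on the left (equivalently, when segment trees are maintained along columns). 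Alternatively one can simply cite that \cite{chi2022monotone} proves both orientations directly.

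Since Theorem~\ref{thm:monotone-min-plus} is a statement quoted from prior work, I would keep the write-up to exactly this: the one-line parameter substitution into Theorem~\ref{thm:rect-monotone-min-plus} for the row-monotone case, the transpose reduction plus symmetry remark for the column-monotone case, and a pointer to \cite{chi2022monotone}. No new estimates are needed, and the arithmetic collapsing $\tfrac{(a+b+\mu)+\omega(a,b,c)}{2}$ to $\tfrac{3+\omega}{2}$ is immediate.
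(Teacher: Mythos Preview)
Your proposal is correct and matches the paper's treatment: the paper does not give an independent proof of this theorem but simply cites \cite{chi2022monotone} and remarks that a proof overview is contained in the proof of the rectangular generalization, Theorem~\ref{thm:rect-monotone-min-plus}. Your derivation of the square case by specializing $a=b=c=\mu=1$ in Theorem~\ref{thm:rect-monotone-min-plus}, together with the transpose reduction for the column-monotone case, is exactly the natural way to recover the statement from the paper's own material.
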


See Chi, Duan, Xie, and Zhang \cite{chi2022monotone} for a detailed proof.
We also provide a proof overview in the proof of \Cref{thm:rect-monotone-min-plus}.
Given a $b$-bounded difference matrix, it is easy to construct a row-monotone matrix, by adding the $b \cdot j$ to the $j$-th column.
To recover the \minplus product, simply subtract the same constant from each column in the result.
We describe this procedure as $\bdToMonotone$.
By a similar argument, we can convert row-bounded difference matrices to row-monotone matrices, and column-bounded difference matrices to column-monotone matrices.
We will also use the simple observation that the sub-matrix of a monotone matrix (chosen by selecting a subset of rows or columns) is a monotone matrix.

We also require the approximate \minplus product of Zwick \cite{zwick2002bridgingsets}.

\approxminplusthm

\begin{proof}
    Suppose the entries of $A, B$ are in the range $[0, M]$.
    Let the resolution $R$ be a parameter to be set later.
    Initialize $C_{ij} \gets \infty$ for all $i, j$.
    For every $\floor{\log R} \leq r \leq \ceil{\log M}$, define,
    \begin{align*}
        A_{ik}^{(r)} &\gets \begin{cases}
            \ceil{R A_{ik}/2^r} & 0 \leq A_{ik} \leq 2^r \\
            \infty & \otherwise
        \end{cases} \\
        B_{kj}^{(r)} &\gets \begin{cases}
            \ceil{R B_{kj}/2^r} & 0 \leq B_{kj} \leq 2^r \\
            \infty & \otherwise
        \end{cases}
    \end{align*}
    and compute $C^{(r)} \gets \boundedMinPlus(A^{(r)}, B^{(r)}, R)$.
    Update $C_{ij} \gets \min\left(C_{ij}, \frac{2^r}{R} C^{(r)}\right)$.
    The inequality $(A * B)_{ij} \leq C_{ij}$ follows as $A^{(r)}, B^{(r)}$ always round its entries up.
    Suppose $(A * B)_{ij} = A_{iq} + B_{qj}$ and without loss of generality $A_{iq} \leq B_{qj}$.
    Let $1 \leq s \leq \ceil{\log M}$ where $2^{s - 1} \leq B_{kj} \leq 2^s$. 
    If $s \leq \log R$, then the $C^{\log R}$ computes $(A * B)_{ij}$ exactly.
    Thus, assume $s \geq \log R$.
    Then,
    \begin{equation*}
        \frac{2^r}{R} C^{(r)}_{ij} \leq \frac{2^r}{R} \left( A^{(r)}_{iq} + B^{(r)}_{qj} \right) \leq A_{iq} + B_{qj} + \frac{2^{r + 1}}{R} \leq \left( 1 + \frac{4}{R} \right) (A * B)_{ij}
    \end{equation*}
    Then, if we set $R = \bigO{1/\eps}$, we obtain the approximation error $(1 + \eps)$ and note that each invocation of $\boundedMinPlus$ requires $\tO{n^{\omega}/\eps}$ time.
\end{proof}
\section{Missing Proofs}

\begin{lemma}
    \label{lemma:dominating-apasp-feasible}
    Let $G$ be an unweighted, undirected graph on $n$ vertices.
    Let $\hat{\delta}$ be the estimate output by Algorithm \ref{alg:dominating-set-apasp}.
    Then, $\delta(u, v) \leq \hat{\delta}(u, v)$ for all $u, v \in V$.
\end{lemma}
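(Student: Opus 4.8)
The plan is to unwind the definition of the estimate $\hat{\delta}$ produced by Algorithm \ref{alg:dominating-set-apasp} ($\dominatingAPASP$) and reduce the claim to the triangle inequality in the undirected graph $G$. First I would observe that the only writes to $\hat{\delta}$ are: (i) the loop setting $\hat{\delta}(w) \gets \bfs(G, w)$ for each $w \in W$, which records the exact shortest-path distances $\delta(G, w, \cdot)$ since $G$ is unweighted; and (ii) the final line $\hat{\delta} \gets \boundedMinPlus(A, B, C)$, which overwrites the $V_1 \times V_2$ entries with the $(\min, +)$-product $A * B$. By Theorem \ref{thm:bounded-min-plus}, $\boundedMinPlus$ computes this product exactly, so for every relevant pair $u \in V_1, v \in V_2$ we have $\hat{\delta}(u, v) = \min_{w \in W} \big( A(u, w) + B(w, v) \big)$.

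Next I would bound each summand. By construction $A(u, w)$ equals $\hat{\delta}(w, u) = \delta(G, w, u)$ when this is at most $C$, and equals $\infty$ otherwise; the same holds for $B(w, v)$ and $\delta(G, w, v)$. In either case $A(u, w) \ge \delta(G, w, u)$ and $B(w, v) \ge \delta(G, w, v)$, so since $G$ is undirected the triangle inequality gives
\begin{equation*}
    \delta(G, u, v) \le \delta(G, u, w) + \delta(G, w, v) \le A(u, w) + B(w, v).
\end{equation*}
Taking the minimum over $w \in W$ yields $\delta(G, u, v) \le \hat{\delta}(u, v)$. The degenerate cases — $W = \emptyset$, or every term in the minimum infinite (in particular when $u, v$ lie in different components, so that $\delta(G, u, v) = \infty$ too) — are covered by the convention $\delta(G, u, v) \le \infty$.

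I do not expect a genuine obstacle here; the statement is a structural sanity check feeding the correctness proofs of Algorithms \ref{alg:random-2-approx-apasp} and \ref{alg:2-approx-apasp}. The one point deserving a moment of care is confirming that $\boundedMinPlus$ propagates $\infty$ entries correctly rather than treating them as large finite values; this is exactly what its implementation in the proof of Theorem \ref{thm:bounded-min-plus} does, via the polynomial encoding mapping infinite entries to the zero polynomial (so a zero entry in the product is read back as $\infty$). With that observation in hand the proof is a short routine verification.
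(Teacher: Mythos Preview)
Your proposal is correct and follows essentially the same approach as the paper: expand $\hat{\delta}(u,v)$ as $\min_{w \in W}\big(A(u,w)+B(w,v)\big)$ and bound each summand below via the triangle inequality. The paper's proof is a one-line version of this; your additional care about $\boundedMinPlus$ handling $\infty$ entries and the degenerate cases is sound but not needed for the paper's level of detail.
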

\begin{proof}
    Consider some $t = 2^j$.
    Then,
    \begin{equation*}
        \hat{\delta}(u, v) = \min_{w \in W} \hat{\delta}(u, w) + \hat{\delta}(w, v) \delta(u, v)
    \end{equation*}
    so that \Cref{alg:dominating-set-apasp} outputs $\delta(u, v) \leq \hat{\delta}(u, v)$ for all $u, v$.
\end{proof}

\begin{lemma}
    \label{lemma:2-approx-feasible}
    Let $G$ be an unweighted, undirected graph on $n$ vertices.
    Let $\hat{\delta}$ be the estimate output by Algorithm \ref{alg:2-approx-apasp}.
    Then, $\delta(u, v) \leq \hat{\delta}(u, v)$ for all $u, v \in V$.
\end{lemma}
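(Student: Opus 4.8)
The plan is to maintain throughout the execution of \Cref{alg:2-approx-apasp} the invariant that every finite entry $\hat{\delta}(u,v)$ is the length of some walk from $u$ to $v$ in $G$. Since the length of any such walk is at least $\delta(u,v)$, and $\delta(u,v) \le \infty$ trivially, this invariant immediately gives the claim. I would prove it by induction over the finitely many updates the algorithm performs on $\hat{\delta}$, handling the phases in order, and using repeatedly that the pointwise minimum of two feasible estimates is feasible.

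For the base case, the initialization sets $\hat{\delta}(u,v) = 1$ exactly on edges $(u,v) \in E$ (the length of the single-edge walk) and $\infty$ otherwise. In Phase 1 and Phase 2.1 every update takes a minimum with the output of $\dominatingAPASP$ on a subgraph ($G_t$, respectively $G_{s_1}$) of $G$; by \Cref{lemma:dominating-apasp-feasible} this output is $\ge \delta(G_t,\cdot,\cdot) \ge \delta(G,\cdot,\cdot)$, and in fact each such entry equals $\min_{w} \delta(G_t,u,w) + \delta(G_t,w,v)$, a concatenation of two genuine walks of $G$. In Phase 2.2 the only updates are minima with $M_{v,2}(u)$ and $M_{v,3}(u)$, which are distances in the subgraphs $G_x$ and $G_2$ of $G$, hence realized by actual walks of $G$. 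So the invariant survives Phases 1, 2.1, and 2.2.

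The delicate point is Phase 2.3. First, the constituency update $\hat{\delta}(w,v) \gets 1 + \hat{\delta}(u,v)$ for $u \in q(w,D_i) \subseteq N(w)$ is the edge $(w,u)$ followed by a walk realizing $\hat{\delta}(u,v)$, hence feasible by the inductive hypothesis. Second, $\dijkstra$ is run on $G_{i,w} = (V, E_i \cup (w \times V) \cup E^*)$ with the \emph{current} estimates $\hat{\delta}$ as the weights of the edges $w \times V$: here the edges of $E_i \cup E^* \subseteq E$ have weight $1$ and are single edges of $G$, while each edge $(w,y) \in w \times V$ has weight $\hat{\delta}(w,y)$, which by induction is $\infty$ or the length of a walk $w \to y$ in $G$. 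A shortest path in $G_{i,w}$ from $w$ to a vertex $v$ thus decomposes into edges whose $G_{i,w}$-weights are each realized by a walk of $G$; concatenating these walks yields a walk $w \to v$ in $G$ of total length exactly the computed $\dijkstra$ distance, so the new estimate is feasible. Finally, Phase 3 only takes minima with the outputs of $\boundedAdditiveAPASP$ and $\denseAPASP$, which are $\ge \delta$ by \Cref{lemma:bounded-additive-feasible-estimates} and \Cref{lemma:dhz-apasp}. I expect the main obstacle to be purely bookkeeping: setting up the induction so that it genuinely ranges over all updates across all iterations $i$ and all $w \in D_i$ in Phase 2.3, and confirming that the weights fed into each $\dijkstra$ call are the up-to-date (hence, by the inductive hypothesis, feasible) estimates.
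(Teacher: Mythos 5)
Your proposal is correct and follows essentially the same route as the paper's proof: both argue that every finite entry of $\hat{\delta}$ is witnessed by a walk in $G$, checking the initialization, the calls to $\dominatingAPASP$ on subgraphs via \Cref{lemma:dominating-apasp-feasible}, the depth-bounded $\bfs$ in Phase 2.2, the constituency update and the $\dijkstra$ runs on $G_{i,w}$ in Phase 2.3, and the feasibility of the Phase 3 subroutines. Your version is somewhat more explicit about the induction over updates and about decomposing the $\dijkstra$ path into concatenated witness walks, but this is only a more careful write-up of the same argument.
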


\begin{proof}
    We will show for all vertex pairs $(u, v)$ that $\delta(u, v) \leq \hat{\delta}(u, v)$. 
    To do so, it suffices to observe that every estimated distance is witnessed by some path in the graph $G$.

    Initially, the only finite entries of $\hat{\delta}$ are edges $(u, v)$ in the graph $G$.
    Phase 1 calls $\dominatingAPASP(G_t, V, V, C_t, C + 1)$ so that $\hat{\delta}(u, v) \geq \delta(G_t, u, v) \geq \delta(u, v)$ for all $u, v$ since $G_t \subset G$ is a sub-graph and \Cref{lemma:dominating-apasp-feasible}.

    Similarly, Phase 2.1 calls $\dominatingAPASP(G_{s_1}, V, D_{k - 1}, D_1, C + 2)$ so that after Phase 2.1, $\hat{\delta}(u, v) \geq \delta(G_{s_1}, u, v) \geq \delta(u, v)$ as $G_{s_1} \subset G$.

    In Phase 2.2, any path found by $\bfs$ in subgraphs $G_x, G_2 \subset G$ is naturally also a path in $G$.

    In Phase 2.3, since for any $u \in q(w, D_i)$, $(u, w) \in E$, by the triangle inequality,
    \begin{equation*}
        1 + \hat{\delta}(u, v) \geq 1 + \delta(u, v) \geq \delta(w, v)
    \end{equation*}
    Thus, any distance in $G_{i, w}$ is witnessed by some path in $G$.
    Therefore, executing $\dijkstra$ on $G_{i, w}$ with weight function $\hat{\delta}$ will not return any distance $\hat{\delta}$ smaller than the shortest distance between $u, v$.

    Finally, we conclude by noting the correctness of $\boundedAdditiveAPASP$ and $\denseAPASP$ implies $\delta(u, v) \leq \hat{\delta}(u, v)$.
\end{proof}

\begin{proposition}
    \label{prop:2-approx-min-plus-convexity}
    For any $x \in [0, 1]$ and any integer $k \geq 1$,
    \begin{equation*}
        \omega \left(1, 1 - \frac{(k - 1)}{k} x, 1 - \frac{x}{k} \right) \leq \omega(1, 1 - x, 1)
    \end{equation*}
\end{proposition}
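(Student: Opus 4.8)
The plan is to exhibit $\left(1 - \frac{k-1}{k}x,\ 1 - \frac{x}{k}\right)$ as a convex combination of two dimension pairs, each of which achieves the value $\omega(1, 1-x, 1)$, and then invoke the convexity property of rectangular matrix multiplication (Lemma \ref{lemma:lotti-convexity}). Concretely, set $c = \omega(1, 1-x, 1)$. By the symmetry of $\omega$ (Lemma \ref{lemma:omega-symmetry-linearity}) we also have $\omega(1, 1, 1-x) = c$, so the two pairs $(a, b) = (1-x, 1)$ and $(a', b') = (1, 1-x)$ both satisfy $\omega(1, a, b) = \omega(1, a', b') = c$.

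Next I would take the mixing parameter $t = \frac{k-1}{k} \in [0, 1]$ (valid since $k \geq 1$, with $1 - t = \frac{1}{k}$) and simply check the two coordinate computations:
\begin{align*}
    t a + (1 - t) a' &= \frac{k-1}{k}(1 - x) + \frac{1}{k} = \frac{k - (k-1)x}{k} = 1 - \frac{k-1}{k} x, \\
    t b + (1 - t) b' &= \frac{k-1}{k} + \frac{1}{k}(1 - x) = \frac{k - x}{k} = 1 - \frac{x}{k}.
\end{align*}
Then Lemma \ref{lemma:lotti-convexity}, applied with first argument fixed at $1$, yields
\begin{equation*}
    \omega\left(1,\ t a + (1-t) a',\ t b + (1-t) b'\right) \leq c,
\end{equation*}
which is exactly the claimed inequality $\omega\left(1, 1 - \frac{k-1}{k} x, 1 - \frac{x}{k}\right) \leq \omega(1, 1-x, 1)$.

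There is essentially no hard step here; the only thing to be careful about is that Lemma \ref{lemma:lotti-convexity} is stated for products whose first dimension exponent is $1$, which matches our situation, and that $t$ lies in $[0,1]$, which holds for all integers $k \geq 1$ (the degenerate case $k = 1$ gives $t = 0$, reducing the statement to the symmetry identity $\omega(1, 1, 1-x) = \omega(1, 1-x, 1)$). Should one prefer to avoid Lemma \ref{lemma:lotti-convexity}, an alternative route is to bound $\omega(1, 1 - \frac{k-1}{k}x, 1 - \frac{x}{k})$ directly by partitioning the $n^{1-x/k}$-dimension into $n^{x(k-1)/k}$ blocks and summing $n^{x(k-1)/k}$ products of shape roughly $(1, 1 - \frac{k-1}{k}x, 1 - \frac{k-1}{k}x)$, but the convexity argument above is cleaner and tight, so I would present that as the main proof.
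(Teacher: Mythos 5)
Your proof is correct and is essentially identical to the paper's: the paper also uses the symmetry identity $\omega(1,1-x,1)=\omega(1,1,1-x)$ and then applies Lemma \ref{lemma:lotti-convexity}, just with the roles of the two pairs swapped so that the mixing parameter is $t=\frac{1}{k}$ rather than your $t=\frac{k-1}{k}$. The coordinate computations check out, so nothing further is needed.
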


\begin{proof}
    Recall that by symmetry of $\omega$, $\omega(1, 1 - x, 1) = \omega(1, 1, 1 - x)$.
    The claim follows from applying Lemma \ref{lemma:lotti-convexity} with $t = \frac{1}{k}$.
\end{proof}

\begin{proposition}
    \label{prop:mult-approx-bk-fmm-examples}
    We obtain the running times in \Cref{tbl:2-approx-geq-beta} for $(2, 0)$ approximation on paths of length $\delta(u, v) \geq k$ from \Cref{thm:mult-approx-bk-fmm}.
\end{proposition}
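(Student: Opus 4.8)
\textbf{Proof proposal for Proposition~\ref{prop:mult-approx-bk-fmm-examples}.}

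The plan is simply to instantiate the general optimization from \Cref{thm:mult-approx-bk-fmm} at the specific values $k = 4, 6, 8, 10$ (even) and verify that the resulting exponents match the last column of \Cref{tbl:2-approx-geq-beta}. Recall that \Cref{thm:mult-approx-bk-fmm} states that $\longMultiplicativeAPASPFMM(G, k)$ runs in time $\tO{n^{2 + \frac{x}{l - k + 2}}}$ where $x, l$ are chosen to minimize
\begin{equation*}
    t_{\max} = \max \left( \omega\left( 1 - \frac{k - 2}{2(l - k + 2)} x, 1 - x, 1 - \frac{k - 4}{2(l - k + 2)} x \right),\ 2 + \frac{x}{l - k + 2},\ 1.5 + x \right).
\end{equation*}
So for each fixed even $k$, the task reduces to a two-variable numerical minimization over $x \in (0,1)$ and the integer $l \geq k$. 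First I would fix $k$ and, for each candidate integer $l$, solve the single-variable problem: balance the three terms inside the $\max$. Since $\omega(\cdot,\cdot,\cdot)$ is decreasing in its (scaled) arguments as $x$ grows while $2 + \frac{x}{l-k+2}$ and $1.5 + x$ are increasing in $x$, the optimal $x$ for a given $l$ is where the dominant increasing term meets the $\omega$-term; one then sweeps $l$ to find the best integer. This is exactly the computation performed by the companion tool \cite{Complexity}, which we invoke to obtain the numerical optima; I would report, for each $k$, the optimal $(x, l, M)$ triple and the corresponding exponent $2 + \frac{x}{l-k+2}$, checking it equals the tabulated value (e.g.\ $n^{2.01084688}$ for $k=6$, $n^{2.00745825}$ for $k=8$, $n^{2.00573823}$ for $k=10$).

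The only subtlety worth spelling out in the write-up is that we must exhibit a value of $M$ (the threshold after which $\denseAPASP$ is invoked) that is a genuine constant and does not spoil the bound: by \Cref{lemma:dhz-apasp}, $\denseAPASP(G, M)$ runs in $\tO{n^{2 + 2/(3M-2)}}$, so it suffices to pick any integer $M$ large enough that $2 + \frac{2}{3M - 2} \leq t_{\max}$. Since $t_{\max} > 2$ strictly for each of these $k$, such an $M$ exists and is $O(1)$; I would just tabulate one valid choice of $M$ per $k$ (e.g.\ the warm-up used $M = 35$ for $k = 4$, and larger $k$ only make $t_{\max} - 2$ smaller, so $M$ grows but stays constant). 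Likewise one should note $l_0 = k - 2 \geq 4 > 2$ for $k \geq 6$, so all the structural hypotheses used in the correctness proof of \Cref{thm:mult-approx-bk-fmm} (namely that $l_0$ levels are available and $l_0/2$ is an integer, which holds as $k$ is even) are satisfied.

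The main obstacle is not conceptual but computational: the objective involves the rectangular matrix-multiplication exponent $\omega(a, b, c)$, which is only known through upper bounds from \cite{gall2018rectangularmm, le2012rectangular} (and the convexity/linearity facts in Lemma~\ref{lemma:lotti-convexity}, Lemma~\ref{lemma:omega-symmetry-linearity}, Corollary~\ref{cor:a-b-c-rect-matrix}), so the reported exponents are upper bounds obtained by plugging the best known bounds on $\omega(\cdot,\cdot,\cdot)$ into the balancing equations. I would therefore present the proof as: (i) recall the optimization from \Cref{thm:mult-approx-bk-fmm}; (ii) for each $k \in \{4,6,8,10\}$, state the optimal parameters found via \cite{Complexity} and the resulting value of $2 + \frac{x}{l-k+2}$; (iii) confirm the constant $M$ can be chosen so $\denseAPASP$ is dominated; and (iv) conclude that the entries of \Cref{tbl:2-approx-geq-beta} follow. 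No nontrivial inequality needs to be proved by hand beyond monotonicity observations already available in the excerpt.
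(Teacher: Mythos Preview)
Your proposal is correct and follows essentially the same approach as the paper: for each fixed $k$, numerically minimize the three-term $\max$ from \Cref{thm:mult-approx-bk-fmm} using \cite{Complexity} and report the resulting exponent. The paper's proof is simply more concrete—it writes out the specialized objective for each $k \in \{6,8,10,12\}$ (the $k=4$ case having been done earlier as the warm-up) and records the specific optimal pairs $(l,x)$ (e.g.\ $l=51$, $x=0.50980335$ for $k=6$), whereas you describe the procedure and defer the numbers; note that \Cref{tbl:2-approx-geq-beta} also includes $k=12$, which you omitted.
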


\begin{proof}
    Recall from \Cref{thm:mult-approx-bk-fmm} we wish to minimize the following,

    \begin{equation*}
        \max \left( \omega\left( 1 - \frac{k - 2}{2(l - k + 2)} x, 1 - x, 1 - \frac{k - 4}{2(l - k + 2)} x \right), 2 + \frac{x}{l - k + 2}, 1.5 + x \right)
    \end{equation*}

    In all cases, we can set $M$ to be some arbitrary constant without affecting the asymptotic running time.

    \paragraph{$k = 6$}
    We minimize,
    \begin{equation*}
        \max \left( \omega\left( 1 - \frac{2 x}{l - 4}, 1 - x, 1 - \frac{x}{l - 4} \right), 2 + \frac{x}{l - 4}, 1.5 + x \right)
    \end{equation*}

    Using \cite{Complexity}, we set $l = 51, x = 0.50980335$ and obtain the running time $\tO{n^{2.01084688}}$.

    \paragraph{$k = 8$}
    We minimize,
    \begin{equation*}
        \max \left( \omega\left( 1 - \frac{3 x}{l - 6}, 1 - x, 1 - \frac{2x}{l - 6} \right), 2 + \frac{x}{l - 6}, 1.5 + x \right)
    \end{equation*}

    Using \cite{Complexity}, we set $l = 74, x = 0.50716126$ and obtain the running time $\tO{n^{2.00745825}}$.

    \paragraph{$k = 10$}
    We minimize,
    \begin{equation*}
        \max \left( \omega\left( 1 - \frac{4 x}{l - 8}, 1 - x, 1 - \frac{3x}{l - 8} \right), 2 + \frac{x}{l - 8}, 1.5 + x \right)
    \end{equation*}

    Using \cite{Complexity}, we set $l = 96, x = 0.50496404$ and obtain the running time $\tO{n^{2.00573823}}$.

    \paragraph{$k = 12$}
    We minimize,
    \begin{equation*}
        \max \left( \omega\left( 1 - \frac{5 x}{l - 10}, 1 - x, 1 - \frac{4 x}{l - 10} \right), 2 + \frac{x}{l - 10}, 1.5 + x \right)
    \end{equation*}

    Using \cite{Complexity}, we set $l = 119, x = 0.50432041$ and obtain the running time $\tO{n^{2.00462679}}$.
\end{proof}

\section{A Quadratic \texorpdfstring{$(\frac{7}{3}, 0)$}{(7/3,0)}-Approximate APSP on Unweighted Graphs}
\label{sec:7/3-det-approx}

In the section, we briefly observe that a $\frac{7}{3}$-approximate APSP solution can be obtained in $\tO{n^2}$-time.
This follows simply from a $\tO{n^2}$-time $(2, 1)$-approximate APSP algorithm due to Baswana and Kavitha \cite{baswana2010fasterapasp} and Berman and Kasiviswanathan \cite{berman2007approxapsp} along with the result of \cite{dor2000apasp}

\begin{lemma}
    \label{lemma:distance-2-apsp}
    Given an undirected, unweighted graph, \Cref{alg:distance-2-apasp} outputs a $(2, 0)$ approximation for paths of length $\delta(u, v) \leq 2$.
    in $\tO{n^2}$-time.
\end{lemma}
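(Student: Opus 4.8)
The plan is to observe that \Cref{alg:distance-2-apasp} does essentially two things: it initializes $\hat{\delta}$ with the adjacency matrix of $G$, so that $\hat{\delta}(u,v) = 1$ exactly when $(u,v) \in E$, and then it replaces $\hat{\delta}$ by its entrywise minimum with $\baswanaAPASP(G)$, the $\tO{n^2}$-time $(2,1)$-approximation of \Cref{lemma:2-1-approx}. Correctness for distances at most $2$ then reduces to a short case analysis, and the running time is immediate from the two components.

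First I would check feasibility, i.e. $\delta(u,v) \le \hat{\delta}(u,v)$ for all pairs $u,v$. The finite entries of the initial matrix are precisely the lengths of single edges, hence genuine distances; $\baswanaAPASP$ returns feasible estimates by \Cref{lemma:2-1-approx}; and the entrywise minimum of two feasible upper-bound estimates is again a feasible estimate. So $\hat{\delta}(u,v) \ge \delta(u,v)$ always holds, which is the lower half of the $(2,0)$ guarantee.

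Next I would prove the upper bound for $\delta(u,v) \le 2$. If $\delta(u,v) = 0$ the claim is trivial (the diagonal is zero). If $\delta(u,v) = 1$ then $(u,v) \in E$, so the initialization forces $\hat{\delta}(u,v) \le 1$, and combined with feasibility we get $\hat{\delta}(u,v) = 1 = \delta(u,v) \le 2\delta(u,v)$. If $\delta(u,v) = 2$, then the shortest $u$--$v$ path has even length, so the refined guarantee for $\baswanaAPASP$ — that it yields a $(2,0)$-approximation, not merely $(2,1)$, on pairs whose shortest path has even length (the remark following \Cref{lemma:2-1-approx}) — gives $\hat{\delta}(u,v) \le 2\delta(u,v) = 4$. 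In all cases $\delta(u,v) \le \hat{\delta}(u,v) \le 2\delta(u,v)$, which is exactly the asserted $(2,0)$-approximation for distances at most $2$.

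For the time bound, building the adjacency matrix and taking the entrywise minimum are each $O(n^2)$, and $\baswanaAPASP$ runs in $\tO{n^2}$ by \Cref{lemma:2-1-approx}, so the total is $\tO{n^2}$. I expect the only real point to get right is the use of the even-length sharpening of $\baswanaAPASP$: the bare $(2,1)$ bound only gives $\hat{\delta}(u,v) \le 5$ when $\delta(u,v) = 2$, which is not a $(2,0)$-approximation, so the parity remark is doing essential work here. A self-contained alternative would be to also certify each pair at distance $2$ directly via a common neighbour, but bounding the cost of the obvious implementation of that by $\tO{n^2}$ on dense graphs is delicate, so routing everything through $\baswanaAPASP$ is the clean choice.
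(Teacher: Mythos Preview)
Your proposal rests on a misreading of \Cref{alg:distance-2-apasp}. The algorithm $\twoedgeAPASP$ does \emph{not} call $\baswanaAPASP$ at any point. What it actually does is: initialize with the adjacency matrix, decompose $G$ into $k = \floor{\log n}+1$ degree levels via $\decompose$ with thresholds $s_i = n/2^i$, and then for each level $i$ run $\dijkstra$ from every vertex $w$ in the dominating set $D_i$ on the graph $(V,\, E_i \cup (w\times V)\cup E^*)$ weighted by the current $\hat{\delta}$. There is no invocation of the $(2,1)$-approximation, so the even-length sharpening you lean on is irrelevant here, and your correctness argument is about the wrong procedure.

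The paper's proof uses the decomposition directly. For $\delta(u,v)=2$ with path $(u,w,v)$ and, say, $\deg(u)\ge\deg(v)$, one has $P\subset E_{\level(u)}$; the representative $u^*=r(u,D_{\level(u)})$ sees the path $(u^*,u,w,v)$ in its $\dijkstra$ call, giving $\hat{\delta}(u^*,v)\le 3$. Then in the final iteration $\dijkstra$ from $v$ uses $(v,u^*,u)$ to get $\hat{\delta}(v,u)\le 4$. The running time is $\tO{n^2}$ because at each level $|D_i|\cdot|E_i| = \tO{(n/s_i)\cdot n s_{i-1}} = \tO{n^2}$ and there are only $\log n$ levels. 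To fix your attempt you need to replace the $\baswanaAPASP$ paragraph entirely and argue along these lines.
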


\IncMargin{1em}
\begin{algorithm}[H]

\SetKwInOut{Input}{Input}\SetKwInOut{Output}{Output}
\Input{Unweighted, undirected Graph $G = (V, E)$ with $n$ vertices}
\Output{Distance estimate $\hat{\delta}: U \times V \rightarrow \Z$ such that $\delta(u, v) \leq \hat{\delta}(u, v)$ for all $u, v$ and $\hat{\delta}(u, v) \leq 4$ for all $\delta(u, v) = 2$}

\BlankLine

$\hat{d}(u, v) \gets \begin{cases}
    1 & (u, v) \in E \\ 
    \infty & \otherwise
\end{cases}$

$s_i \gets \frac{n}{2^i}$ for all $1 \leq i \leq k - 1 = \floor{\log n}$

$(D_1, D_2, \dotsc, D_k), (E_1, E_2, \dotsc, E_k), E^* \gets \decompose(G, (s_1, s_2, \dotsc, s_{k - 1}))$

\For{$1 \leq i \leq k = \floor{\log n} + 1$}{
    \For{$w \in D_i$}{
        $X_i(w) \gets \dijkstra(w)$ on graph $\left(V, E_i \cup (w \times V) \cup E^*, \hat{\delta} \right)$
        
        $\hat{\delta}(w, v) \gets \min(\hat{\delta}(w, v), X_i(w, v))$
    }
}

$\hat{\delta}(u, v) \gets \min(\hat{\delta}(u, v), \hat{\delta}(v, u))$

\caption{$\twoedgeAPASP(G)$}
\label{alg:distance-2-apasp}
\end{algorithm}
\DecMargin{1em}

\begin{proof}
    First we analyze the running time.
    Note $k = \tO{1}$ so that $\decompose$ requires $\tO{n^2}$ time.
    Fix an iteration $i$.
    Then $|D_i| = \bigtO{\frac{n}{s_i}}$, $E_{i}$ has $n s_{i - 1}$ edges, while $E^* \cup (w \times V)$ have size $\tO{n}$.
    Thus, each iteration $i$ requires $\tO{n^2}$ time.
    Over $\log n$ iterations, Algorithm \ref{alg:distance-2-apasp} requires $\tO{n^2}$ time.

    Consider now $\delta(u, v) = 2$ and a path $P = (u, w, v)$.
    Without loss of generality, assume $\deg(u) \geq \deg(v)$ so that $P \subset E_{\level(u)}$.
    Let $u^* \in D_{\level(u)}$ be its representative $r(u, D_{\level(u)})$.
    Then, the execution of $\dijkstra$ from $u^*$ returns $\hat{\delta}(u^*, v) \leq 3$ via the path $(u^*, u, w, v)$.
    Now, in the final iteration, we have $\hat{\delta}(v, u) \leq 4$ via the path $(v, u^*, u)$ when executing $\dijkstra$ from vertex $v$.

    Finally, it is easy to observe that $\hat{\delta}(u, v) \geq \delta(u, v)$ for all $u, v$ (without any assumption on the distance) since all traversed paths exist in the original graph $G$.
\end{proof}

Combining Lemma \ref{lemma:distance-2-apsp} with the algorithm of Baswana et al. \cite{baswana2010fasterapasp} or Berman et al. \cite{berman2007approxapsp} gives the desired result.

\begin{corollary}
    \label{thm:7/3-approx-apasp}
    There is an $\tO{n^{2}}$-time algorithm that computes a $\left( \frac{7}{3}, 0 \right)$-approximate APSP solution in undirected, unweighted graphs.
\end{corollary}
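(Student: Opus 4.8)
The statement is a corollary of two $\tO{n^2}$-time building blocks already in hand: the $(2,1)$-approximation $\baswanaAPASP$ of Baswana--Kavitha and Berman--Kasiviswanathan (\Cref{lemma:2-1-approx}) and the length-$2$ routine $\twoedgeAPASP$ of \Cref{lemma:distance-2-apsp}, which gives a $(2,0)$-approximation on pairs at distance $2$. The plan is to run both of these, preceded by an $O(m)$-time pass that records exact distances for adjacent pairs, and to return the pointwise minimum of the three estimates. Each of the three components is a feasible lower bound (all produced distances are realized by actual paths, hence $\hat\delta \geq \delta$ everywhere), so the minimum is feasible as well, and the total running time is $\tO{n^2}$.

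It then remains to verify the upper bound $\hat\delta(u,v) \le \tfrac{7}{3}\,\delta(u,v)$ by a short case analysis on $\delta(u,v)$. For $\delta(u,v)=1$ the adjacency-matrix initialization already gives $\hat\delta(u,v)=1$. For $\delta(u,v)=2$, \Cref{lemma:distance-2-apsp} guarantees $\hat\delta(u,v)\le 4 \le \tfrac{14}{3} = \tfrac{7}{3}\cdot 2$. For $\delta(u,v)\ge 3$, the $(2,1)$-approximation gives $\hat\delta(u,v)\le 2\,\delta(u,v)+1$, and since $1 \le \delta(u,v)/3$ whenever $\delta(u,v)\ge 3$, this is at most $2\,\delta(u,v) + \delta(u,v)/3 = \tfrac{7}{3}\,\delta(u,v)$. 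Taking the three cases together yields the claimed $\left(\tfrac{7}{3},0\right)$ guarantee.

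There is no real obstacle here --- this is genuinely a corollary --- so the only thing to be careful about is that the bound is tight exactly at $\delta(u,v)=3$, where $2\cdot 3 + 1 = 7 = \tfrac{7}{3}\cdot 3$; hence one must not weaken the additive $+1$ of \Cref{lemma:2-1-approx}, and the $\delta(u,v)=2$ pairs genuinely need a dedicated subroutine rather than $\baswanaAPASP$ on its own. One could instead invoke the remark that $\baswanaAPASP$ is in fact a $(2,0)$-approximation on even-length paths to cover $\delta(u,v)=2$ and dispense with $\twoedgeAPASP$ entirely, but folding in \Cref{lemma:distance-2-apsp} keeps the argument self-contained and robust to that parity subtlety.
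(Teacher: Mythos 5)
Your proposal is correct and follows exactly the paper's own argument: handle $\delta(u,v)=1$ via the adjacency matrix, $\delta(u,v)=2$ via \Cref{lemma:distance-2-apsp}, and $\delta(u,v)\ge 3$ via the $(2,1)$-approximation, noting that $2\delta(u,v)+1\le \tfrac{7}{3}\delta(u,v)$ once $\delta(u,v)\ge 3$. The arithmetic checks (including tightness at $\delta(u,v)=3$) are right, so nothing further is needed.
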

\begin{proof}
    Of course length $1$ paths can be directly deduced from the adjacency matrix in time $O(n^2)$. From Lemma~\ref{lemma:distance-2-apsp}, all distance $2$ paths can be approximated within $(2,0)$-approximation in $\bigtO{n^2}$ time. For paths of length $3$ or more, a $(2,1)$-approximation is at least as good as a $(\frac{7}{3},0)$ approximation.
\end{proof}

\section{Faster Combinatorial \texorpdfstring{$(2,0)$}{(2,0)}-approximation for path lengths \texorpdfstring{$\geq 4$}{>=4}.}
\label{sec:application-mult-approx-long-path}

In this section, we give an improved combinatorial algorithm for finding $(2, 0)$ approximations for paths with length $k \geq 4$. We obtain a $+\beta$-additive approximations for paths of length at most $\beta+1$ for every even $\beta$ on dense graphs. In particular, we apply the algorithm with $\beta = 4$ to compute a $+4$ approximation on paths with length at most 5.
This implies a $(2, 0)$ approximation for paths of length at least $4$.

\begin{restatable}{theorem}{shortadditiveapprox}
    Let $G$ be an undirected, unweighted graph with $n$ vertices.
    Let $\beta \geq 4$ be an even odd integer.
    \Cref{alg:short-additive-apasp} computes in expected time $\bigtO{n^{2 + \frac{2}{3 \beta + 2}}}$ a distance estimate $\hat{\delta}$ such that $\delta(u, v) \leq \hat{\delta}(u, v)$ for all $u, v \in V$ and $\hat{\delta}(u, v) \leq \delta(u, v) + \beta$ for all $\delta(u, v) \leq \beta + 1$.
    \label{thm:short-additive-approx}
\end{restatable}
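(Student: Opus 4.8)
The plan is to run the dense degree‑decomposition framework of Dor, Halperin and Zwick (\Cref{alg:dhz-dense-apasp}, \Cref{thm:gen-additive-apasp}) but with a larger number of levels $l$ chosen so that the final, most expensive Dijkstra costs $\tilde O(n^{2+1/l})$ with $1/l=\tfrac{2}{3\beta+2}$, i.e. $l=\tfrac{3\beta+2}{2}$, and to absorb the cheap bottom levels with a single call to $\bkAPASP$ so that they do not cost time while the path structure they contain is still handled. Concretely: call \decompose on $G$ with geometric thresholds $s_1>\dots>s_{l-1}$, $s_i=n^{(l-i)/l}$ (so $|D_j|=\tilde O(n/s_j)$, $|E_j|=O(n s_{j-1})$, and $|D_{j_1}\times D_{j_2}|=\tilde O(n^{(j_1+j_2)/l})$); fix a cut level $l_0$ (roughly $l_0=\beta$, to be pinned down by the error analysis); for $l_0+1\le j\le l$ run $\dijkstra(G_{j,w},w,\hat\delta)$ from each $w\in D_j$ on $G_{j,w}=(V,E_j\cup E^*\cup(w\times V)\cup\bigcup_{j+j_1+j_2\le 2l+1}D_{j_1}\times D_{j_2})$; additionally run $\bkAPASP$ on $G_{l_0+1}=(V,E_{l_0+1})$; and, to cover distances $>\beta+1$ (which the theorem does not constrain), run any constant‑additive routine on $G$ whose cost is dominated by $\tilde O(n^{2+2/(3\beta+2)})$.

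\textbf{Feasibility and approximation.} Feasibility ($\delta(u,v)\le\hat\delta(u,v)$) is routine: every finite estimate is the length of an actual walk in $G$ --- the shortcut edge weights $D_{j_1}\times D_{j_2}$ and the weights on $w\times V$ are distances found by previous iterations, and $E^*,E_j\subseteq E$ --- so this follows as in \Cref{lemma:dominating-apasp-feasible} and \Cref{lemma:bounded-additive-feasible-estimates}, together with correctness of $\bkAPASP$ (\Cref{thm:baswana}). For the upper bound, fix a shortest path $P$ with $\delta(u,v)\le\beta+1$ and split on $\deg(P)$. If $\deg(P)<s_{l_0}$ then $P\subseteq E_{l_0+1}$; here $\bkAPASP$ already gives $\hat\delta(u,v)\le2\delta(u,v)\le\delta(u,v)+\beta$ for $\delta(u,v)\le\beta$, and for $\delta(u,v)=\beta+1$ the iterations themselves give additive error $2|L(P,l)|$ by \Cref{lemma:sparse-apasp-approx}, where $L(P,l)\subseteq\{l_0+1,\dots,l-1\}$ so $|L(P,l)|\le l-1-l_0$; the choice of $l_0$ makes this $\le\beta/2$. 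If $\deg(P)\ge s_{l_0}$, there is $w\in P$ with large degree, and following the argument of \Cref{lemma:k0+3-approx-monotone} (using the shortcut edges available in the early iterations to ``jump'' over the high‑degree portion of $P$) one gets, at iteration $l_0+3$, $\hat\delta_{l_0+3}(w^*,x)\le\delta(w,x)+5$ for every $x\in P$; then the inductive step of \Cref{lemma:block-approx-error}/\Cref{thm:gen-additive-apasp} propagates this with a $+2$ per subsequent iteration, and in the final ($D_l=V$) iteration the accumulated error is $2(l-l_0)-2$, which the choice of $l$ and $l_0$ forces to be exactly $\beta$ --- here is precisely where $\delta(u,v)\le\beta+1$ is used, to bound the number of blocking levels of $P$ above $l_0$ and keep the window $[l_0+1,l]$ short enough.

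\textbf{Running time and the main obstacle.} \decompose costs $\tilde O(l n^2)=\tilde O(n^2)$; each iteration $l_0+1\le j\le l$ costs $|D_j|\cdot\big(|E_j|+\sum_{j_1+j_2\le 2l+1-j}|D_{j_1}\times D_{j_2}|\big)=\tilde O(n^{2+1/l})$ by the standard balancing; $\bkAPASP$ on $G_{l_0+1}$ costs $\tilde O(|E_{l_0+1}|\sqrt n+n^2)=\tilde O(n^{1.5+\beta/(3\beta+2)}+n^2)$, which is $o(n^{2+2/(3\beta+2)})$; substituting $l=\tfrac{3\beta+2}{2}$ yields the claimed $\tilde O(n^{2+2/(3\beta+2)})$, in expectation because of $\bkAPASP$. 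The hard part is the error bound: plain \denseAPASP with $l=\tfrac{3\beta+2}{2}$ levels only guarantees $+(\beta+2)$, and the two extra levels (relative to DHZ's $\tfrac{3\beta-2}{2}$) that give the speedup have to be paid for by the short‑path hypothesis. The delicate point is to choose $l_0$ and verify that (i) the ``$+5$ at iteration $l_0+3$'' bound still goes through with the reduced shortcut budget when the bottom levels are removed, and (ii) on a path with at most $\beta+2$ vertices the number of blocking levels (\Cref{def:blocking-vertices}, \Cref{lemma:prop-blocking}) above $l_0$ is small enough that the propagated error lands at $\beta$ rather than $\beta+2$. Getting these two constants to agree --- essentially re‑running the inductive analysis of \Cref{thm:gen-additive-apasp} while tracking the length of $P$ --- is the technical core; the rest is bookkeeping.
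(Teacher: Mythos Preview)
Your approach has a genuine gap that cannot be closed by ``tracking the length of $P$'' alone. The issue is precisely the case you flag as delicate: $\delta(u,v)=\beta+1$ with $P\subseteq E_{l_0+1}$. No single choice of $l_0$ makes both of your cases go through. If you take $l_0=\beta-2$ (which is what the shortcut-edge argument of Case~2 needs, since the constraint $\level(x_t)+\level(x_{a+1})+\level(x_a)\le 2l+1$ forces $3l_0+9\le 2l+1$, i.e.\ $l_0\le\beta-2$), then in Case~1 the number of levels above $l_0$ is $l-1-l_0=\tfrac{\beta}{2}+2$, so \Cref{lemma:sparse-apasp-approx} gives error $\beta+4$; and $\bkAPASP$ on a path of length $\beta+1$ gives $\hat\delta\le 2(\beta+1)$, i.e.\ error $\beta+1$, still one too many. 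If instead you take $l_0=\beta$ so that $l-1-l_0=\tfrac{\beta}{2}$ and Case~1 closes, then Case~2's shortcut-edge argument fails and the blocking-vertex calculation yields error $\beta+2$. The short-path hypothesis does not rescue you: a path of length $\beta+1$ has $\beta+2$ vertices, comfortably more than the $\tfrac{\beta}{2}+2$ levels in $\{l_0+1,\dots,l-1\}$, so nothing prevents $P$ from realizing all of them as blocking levels.

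The paper's \Cref{alg:short-additive-apasp} closes this one-step gap with genuinely different machinery that your proposal omits. It samples the dominating sets $D_i$ \emph{randomly} and, for each vertex $u$ and each level $k_0+1\le i\le k_0+\tfrac{\beta}{2}-1$, computes either a size-$s_i$ subset $X(u,i)\subseteq N(u,r_i)$ (if the $r_i$-ball is large) or the small set $C(u,i)=N(u,r_i)\cap D_{i+1}$ (which is $\tilde O(s_{k-1})$ w.h.p.\ by the random sampling). These neighborhood probes, together with a depth-$2$ BFS from low-degree vertices, let the algorithm impose the degree restriction $\level(u_{r_i}),\level(v_{r_i})>i+1$ on all but the two central vertices of $P$; once that holds, only one of the central vertices can be a blocking vertex at level $\le k_0+2$, and the depth-$2$ BFS bypasses the central edge to recover the missing $+2$. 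None of this structure is available in the Dijkstra\,+\,$\bkAPASP$ skeleton you describe, and it is exactly what buys the two extra levels.
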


\paragraph{Notation and Definitions}

We define some notation that will be useful in this section.
Let $G$ be an undirected, unweighted graph.
Let $\beta$ be an even integer, $k = \frac{3 \beta + 2}{2}$ and $k_0 = \beta - 2$.
Let $s_i, s_2, \dotsc, s_{k - 1}$ be degree thresholds.
For $k_0 + 1 \leq k \leq k_0 + \frac{\beta}{2} - 1$ define $r_i = \frac{\beta}{2} - (i - (k_0 + 1))$ and let $N(u, r_i) = \set{v \in V \given \delta(u, v) \leq r_i}$ be the $r_i$-neighborhood of $u$.

\begin{definition}
    \label{def:nearest-vertex-set}
    For a vertex $u \in V$ and set $S \subset V$, let $p(u, S) = \arg \min_{v \in S} \delta(u, v)$ be the closest vertex in $S$ to $u$.
\end{definition}

We note that $p(u, S)$ can be efficiently computed by executing $\dijkstra$ on the graph $G$ augmented with a dummy vertex $w$ with 0 weight edges to every vertex $v \in S$.
For a given $i$, $X(u, i)$ is an arbitrary subset of $N(u, r_i)$ if $|N(u, r_i)| \geq s_i$ and $C(u_i) = N(u, r_i) \cap D_{i + 1}$ if $|N(u, r_i) \cap D_{i + 1}| \leq 12 s_{k - 1} \log n$.
Roughly speaking, $X(u, i)$ is the first $s_i$ vertices encountered when conducting a $\bfs$ from $u$.
For details on how $X(u, i), C(u, i)$ are computed see \Cref{lemma:x-c-computation}. 

\paragraph{High Level Overview}

To illustrate our algorithm, we walk through an overview of how to compute a $+4$ approximation for paths of length at most 5.
For a $+4$ approximation, the $\denseAPASP$ algorithm decomposes graph $G$ into $5$ levels.
Instead, we will show that we can decompose the graph $G$ into $7$ levels.
Thus, define thresholds $s_1 = n^{6/7}, s_2 = n^{5/7}, \dotsc, s_6 = n^{1/7}$.
As length 5 paths are the most challenging to handle, let $u, v$ be vertices and $P$ a shortest path of length at exactly $5$.
If $P$ has a vertex of degree at least $s_{2}$, since $2 + 6 + 7 = 15 = 2 * 7 + 1$, we may include the edge sets $\bigcup_{i = 1}^{2} D_{i} \times D_{6} = \tO{|E_9|}$. we can include these edges and obtain a $+4$ approximation following similar arguments to $\denseAPASP$ (\Cref{fig:+6-min-plus-approx}).

Thus, let us assume $P \subset E_3$.
If $P \subset E_5$, then the blocking levels $L(P, 7) \subset \set{5, 6}$ (\Cref{def:blocking-vertices}) so that we can obtain a $+4$ approximation via \Cref{lemma:sparse-apasp-approx}.
Then, assume $P \not\subset E_5$, so $P$ has some edge in $E_j \setminus E_{j + 1}$ for $j \in \set{3, 4}$.
Denote the length 5 path $P = (u, u_2, u_3, v_3, v_2, v)$.

We will argue that if $\level(u_2) \leq 4$, then we obtain a good additive approximation.
Suppose $|N(u, 2)| \geq s_3$, so that from $p(u, D_3) \in N(u, 2)$ (\Cref{def:nearest-vertex-set}) we compute an exact distance $\hat{\delta}(p(u, D_3), v) = \delta(p(u, D_3), v) \leq \delta(u, v) + 2$.
When we add $2$, we obtain a $+4$ approximation.
\Cref{fig:x(u)-approx} gives an illustration\\

\begin{figure}[ht]
    \centering
    \includegraphics[width=0.6\textwidth]{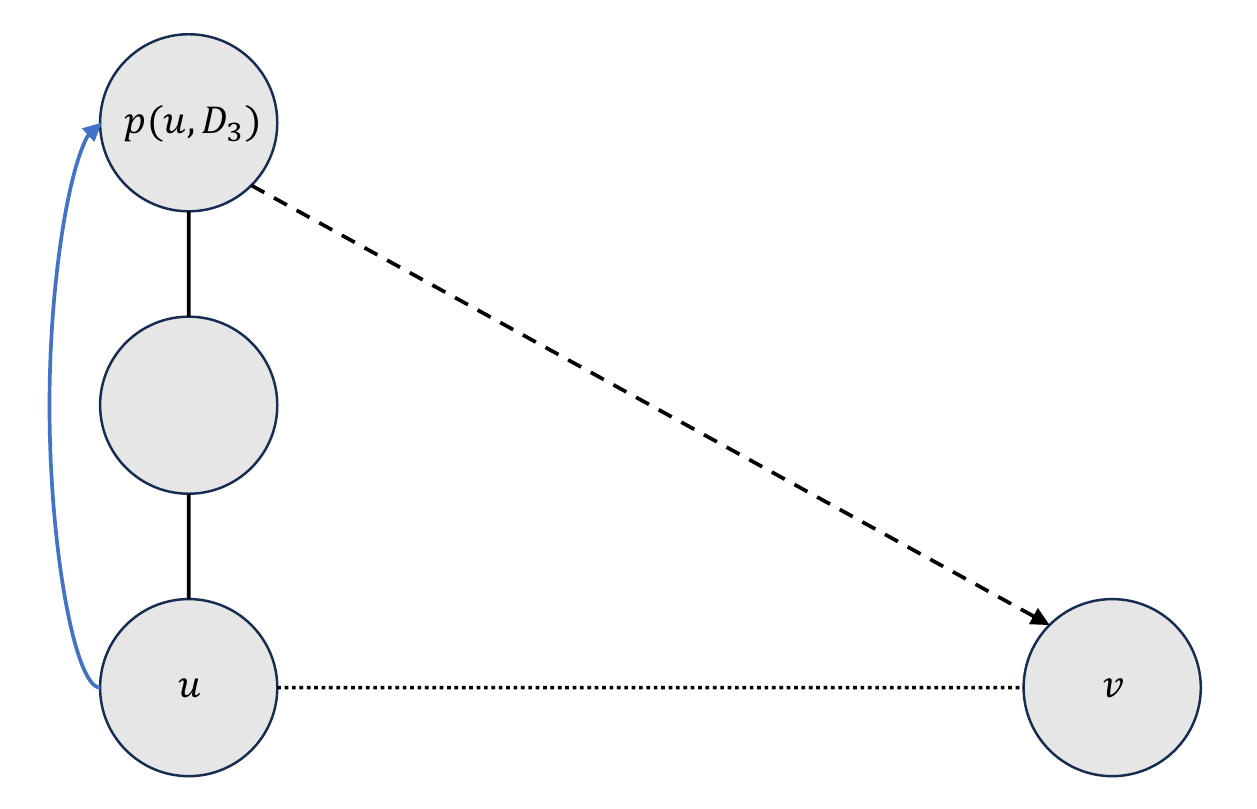}
    \caption{Estimate when $|N(u, 2)| \geq s_3$.
    Solid lines denote edges and dotted lines denote paths in $G$.
    Black dashed arrows denote computed distances estimates.
    Blue solid arrows denote a constant additive term.}
    \label{fig:x(u)-approx}
\end{figure}

$\shortAdditiveAPASP$ (\Cref{alg:short-additive-apasp}) samples each dominating set $D_i$ randomly with the algorithm $\rDominate$, rather than following the greedy deterministic construction of $\dominate$.
With high probability, if $|N(u, 2)| \leq s_3$, then $|C(u, 3)| = |N(u, 2) \cap D_{4}| = \tO{s_{6}}$ and if $\level(u_2) \leq 4$, then \Cref{lemma:sparse-apasp-approx} guarantees that $\hat{\delta}(u_2^*, v) \leq \delta(u_2, v) + 3 = \delta(u, v) + 2$.
We efficiently iterate over $C(u, 3)$ and update $\hat{\delta}(u, v) \leq 2 + \min_{w \in C(u, 3)} \hat{\delta}(w, v)$ to obtain a $+4$ approximation.
\Cref{fig:degree-restriction-intro} gives an illustration.

\begin{figure}[ht]
\centering
\includegraphics[width=0.9\textwidth]{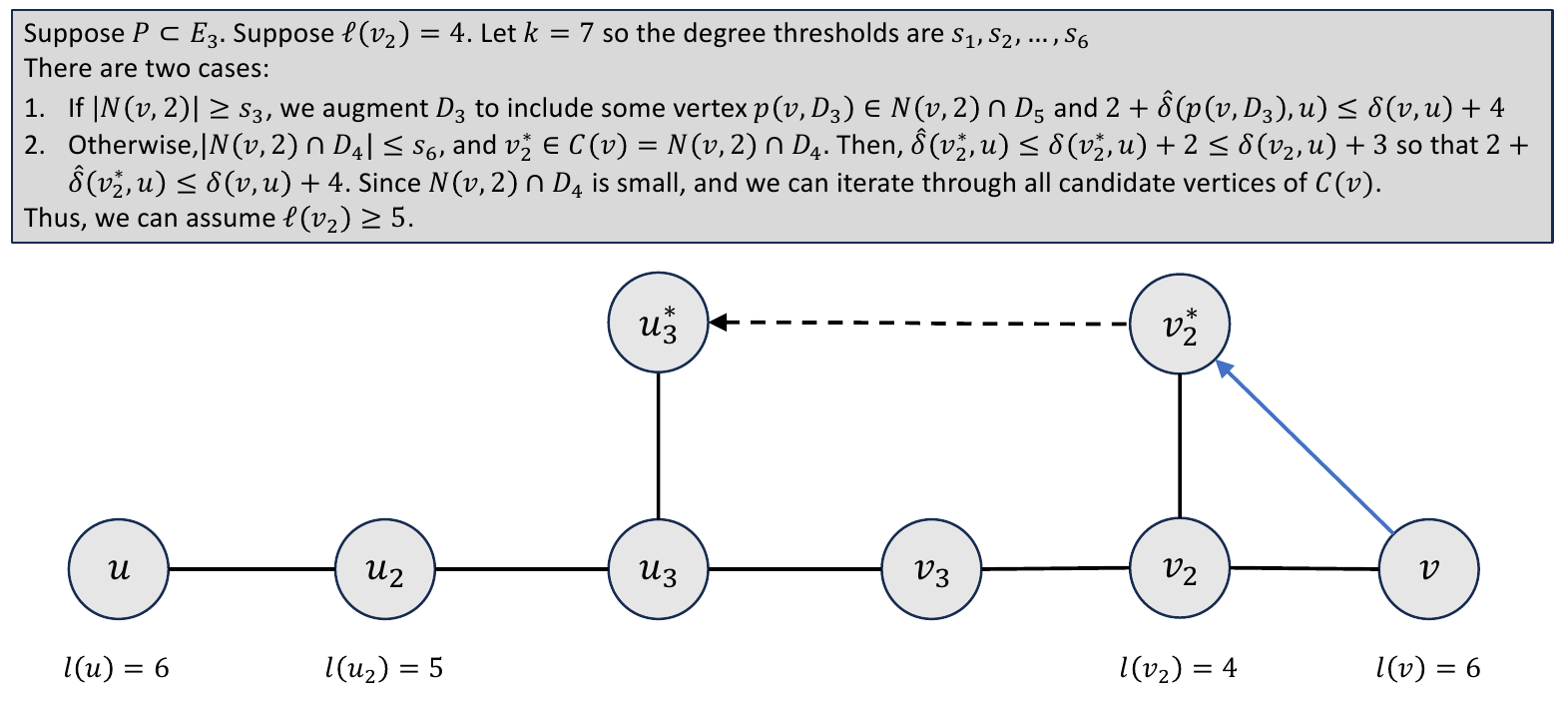}
\caption{Imposing the degree restriction $\level(v_2) \geq 5$}
\label{fig:degree-restriction-intro}
\end{figure}

We can compute the sets $X(u, i), C(u, i)$ efficiently as we will terminate our $\bfs$ whenever $s_i$ distinct vertices are encountered, which requires time at most $s_i^2 \leq s_{3}^2 = \tO{n^{8/7}}$.

Then, we can assume $\deg(u_2), \deg(v_2) < s_4$.
In particular, only one edge $(u_3, v_3)$ is not in $E_5$.
We now execute $\dijkstra$ from every vertex $w \in D_j$ on graph $G_{i, w}$ containing (at least) the edges $E_j \cup E^* \cup (w \times V)$.

Since only one edge is not in $E_5$, the blocking levels of $P$ (\Cref{def:blocking-vertices}), $L_B(P)$ cannot contain both $\set{3, 4}$.
If the blocking levels $L_B(P)$ do not contain every level from $\set{5, 6}$, then we obtain a good additive approximation from \Cref{lemma:block-approx-error} as there are at most 2 blocking levels.
On the other hand, suppose $\set{5, 6} \subset L_B(P)$.
Note $\level(u_3), \level(v_3) < 5$, otherwise we obtain a $+4$ approximation as $L_B(P) = \set{5, 6}$.
Assume without loss of generality that $\level(v) \leq \level(u)$ as in \Cref{def:blocking-vertices}.
If $\level(v) = 5$, then since $P \subset E_{\level(u_3)}$, $\hat{\delta}(u_3^*, v) \leq \delta(u_3, v)$.
Since the remaining edges are in $E_5$, $G_{5, v^*}$

Otherwise, $\level(v) \geq 6$.
If $\level(v) = 7$, then $\level(u) = \level(v) = 7$ and we obtain a $+4$ approximation following the same argument as \Cref{thm:2-approx-apsp}.
Therefore, assume $\level(v) = 6$.
If $\level(u_2) \geq 6$, then the blocking vertex $b(v, P) = u_3$ and we obtain a $+4$ approximation as $B(P) = \set{u, v, u_3}$.

We can therefore assume $\level(u_2) = 5$.
However, since this vertex has degree at most $s_4$, we use a depth 2 $\bfs$ in \Cref{line:short-approx:depth-2-bfs} to bypass the middle edge entirely, therefore obtaining a $+4$ approximation, as illustrated in \Cref{fig:bfs-2-intro}.
The depth 2 $\bfs$ requires time $s_2 s_4 = \tO{n^{8/7}}$.

\begin{figure}[ht]
\centering
\includegraphics[width=0.9\textwidth]{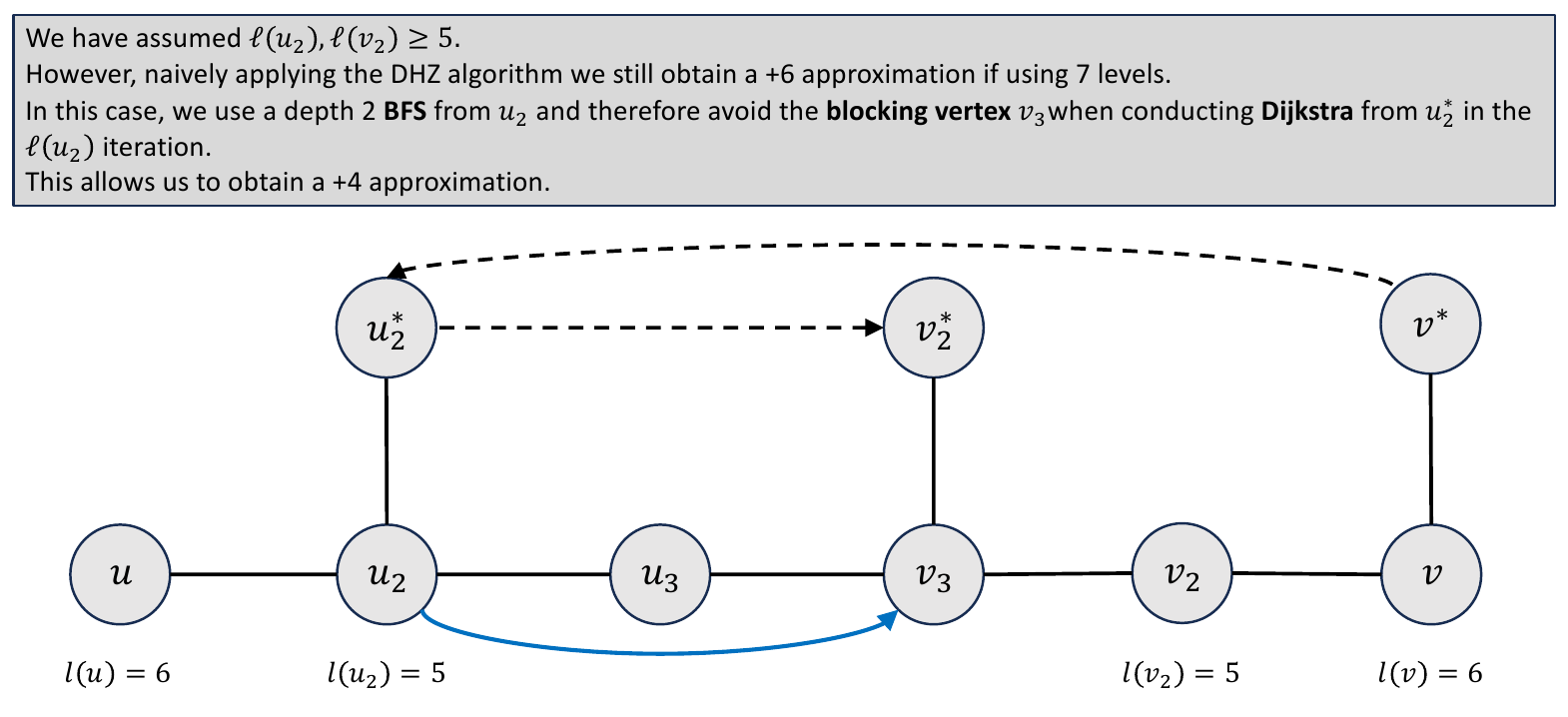}
\caption{Using depth 2 $\bfs$ to bypass the central edge.}
\label{fig:bfs-2-intro}
\end{figure}

\paragraph{Algorithm}

In Phase 0, we initialize the distance estimates to the adjacency matrix, set parameters $k, k_0$ and decompose the graph $G$ into $k$ degree thresholds.
To compute the dominating sets, we use a random construction $\rDecompose$ which samples all vertices with identical probability.
Finally, we check that the dominating sets are computed correctly and are not too large.
If there is an error in the sample dominating sets, we return an exact $\APSP$ solution.

In Phase 1, we compute the sets $X(u, i)$, $C(u, i)$.
For a given $i$, if $|N(u, r_i)| \geq s_i$ is large, we let $X(u, i)$ be an arbitrary subset of size $s_i$ within this neighborhood.
On the other hand, if $|N(u, r_i)| < s_i$ is small, we let $C(u, i) = N(u, r_i) \cap D_{i + 1}$ given $|N(u, r_i) \cap D_{i + 1}|$ is small.
If we are unable to compute either $X(u, i)$ or $C(u, i)$, we return an exact $\APSP$ computation.

Then, for vertices $u$ with degree at most $s_{k_0 + 2}$, we compute a depth 2 $\bfs$ in the graph $G_{k_0 + 1}$.
Finally, we augment $D_i$ with a hitting set for $X(u, i)$ for all $u$ with large $N(u, r_i)$.
Note that when we augment $D_i$, this does not affect the sets $r(u, D_i)$ for $\deg(u) \geq s_i$ or $q(w, D_i)$ for $w \in D_i$.
These relationships are fixed at the computation of $\rDecompose$.

In Phase 2, we begin by iteratively executing $\dijkstra$ from each vertex $w \in D_j$ on the graph $G_{j, w}$ consisting of edges $E_j, E^*$ as well as $w \times V$ and $D_{j_1} \times D_{j_2}$ for all indices $j + j_1 + j_2 \leq 2 k + 1$.
Before the execution of $\dijkstra$, each vertex $w \in D_j$ examines its constituency (Definition \ref{def:dominating-set}) and updates its distance estimates where an improvement can be found.

Finally, each vertex $u$ examines $p(u, D_i)$ and $C(u, i)$ where appropriate to update its distance estimates for all other vertices $v \in V$.

\subsubsection*{Preliminaries for Proving Theorem \ref{thm:short-additive-approx}}

A key ingredient of our algorithm is computing small neighborhoods of each vertex.
In the following lemma, we describe how to compute $X(u, i), C(u, i)$ efficiently.

\begin{lemma}
    \label{lemma:x-c-computation}
    Suppose $\beta \geq 4$.
    Let $u \in V$ and $k_0 + 1 \leq i \leq k_0 + \frac{\beta}{2} - 1$.
    Algorithm \ref{alg:short-additive-apasp} computes $X(u, i), C(u, i)$ in time $\bigO{n^{2 + \frac{1}{k}}}$ such that,
    \begin{enumerate}
        \item $X(u, i) \subset N(u, r_i)$ and $|X(u, i)| \geq s_i$ if $|N(u, r_i)| \geq s_i$.
        \item $C(u, i) = N(u, r_i) \cap D_{i + 1}$ if $|N(u, r_i) \cap D_{i + 1}| < 12 s_{k - 1} \log n$.
    \end{enumerate}
    
\end{lemma}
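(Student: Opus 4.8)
The plan is to realize both $X(u,i)$ and $C(u,i)$ as the output of a single truncated, depth-$r_i$ breadth-first search from $u$ on $G$. Run BFS from $u$, expanding vertices in nondecreasing order of distance, maintaining a counter of the distinct vertices discovered so far, and halt as soon as either the counter reaches $s_i$ or every vertex at distance at most $r_i-1$ has been expanded. If the search halts the first way, the discovered set (truncated to size $s_i$) is taken as $X(u,i)$; if it halts the second way, it has enumerated all of $N(u,r_i)$ and $|N(u,r_i)| < s_i$, and we then form $C(u,i) = N(u,r_i)\cap D_{i+1}$ by a single membership scan, discarding it (and, per the algorithm, falling back to an exact $\APSP$ computation) only in the event $|N(u,r_i)\cap D_{i+1}| \ge 12 s_{k-1}\log n$. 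The stated properties follow directly from the stopping rule: every discovered vertex lies within distance $r_i$ of $u$, so $X(u,i)\subseteq N(u,r_i)$; when $|N(u,r_i)|\ge s_i$ the search reaches the discovery cap (or the shortcut below), so $|X(u,i)|\ge s_i$; and in the small-ball branch $C(u,i)$ is computed exactly as $N(u,r_i)\cap D_{i+1}$.

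The crucial point for the running time is a degree shortcut: if the BFS is about to expand a vertex $v$ with $\delta(u,v) < r_i$ and $\deg(v) \ge s_i$, then $N(v)\subseteq N(u,r_i)$ already witnesses $|N(u,r_i)|\ge s_i$, so we halt immediately and output any $s_i$ neighbours of $v$ as $X(u,i)$ at cost $\bigO{s_i}$. Hence either the search ends via this shortcut, or every expanded vertex has degree below $s_i$; since at most $s_i$ vertices are ever expanded (each is a discovered vertex and we stop after $s_i$ discoveries), the total work is $\bigO{s_i^2}$ up to logarithmic factors. Recalling $s_i = n^{1 - i/k}$ (as in the warm-up with $\beta=4$, $k=7$), a single pair $(u,i)$ costs $\tO{n^{2 - 2i/k}}$, which is at most $n^{1 + 1/k}$ exactly when $2i+1 \ge k$. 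Since $i \ge k_0 + 1 = \beta-1$ and $k = \tfrac{3\beta+2}{2}$, the worst (smallest) index gives $2(\beta-1)+1 \ge \tfrac{3\beta+2}{2}$, i.e. $\beta \ge 4$ — precisely the hypothesis. Summing over the $n$ choices of $u$ and the $\bigO{1}$ relevant indices $i$ yields the claimed $\bigO{n^{2 + 1/k}}$.

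To close the argument I would also check that the abort branch is benign: $D_{i+1}$ is obtained from $\rDominate$ by sampling each vertex with probability $\Theta(\log n / s_{i+1})$, and in the small-ball case $|N(u,r_i)| < s_i$, so $\mathbb{E}\,|N(u,r_i)\cap D_{i+1}| = O(s_i \log n / s_{i+1}) = O(n^{1/k}\log n) = O(s_{k-1}\log n)$; a Chernoff bound followed by a union bound over the $\bigO{n}$ pairs makes the event $|N(u,r_i)\cap D_{i+1}| \ge 12 s_{k-1}\log n$ occur with only polynomially small probability, so its contribution to the overall expected running time of Algorithm \ref{alg:short-additive-apasp} is negligible. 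I expect the degree-shortcut observation — and, with it, seeing that $2i+1 \ge k$ for every relevant $i$ is exactly what forces $\beta \ge 4$ so that $s_i^2$ stays within the $n^{1+1/k}$ per-vertex budget — to be the main point; everything else is bookkeeping.
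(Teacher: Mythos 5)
Your proof is correct and follows essentially the same route as the paper: a truncated $\bfs$ from $u$ that halts either upon reaching depth $r_i+1$ or upon discovering $s_i$ distinct vertices, with per-pair cost $O(s_i^2)$ and the identical balancing $2(k_0+1)+1 \geq k \iff \beta \geq 4$. The one ingredient you flag as ``crucial'' --- the degree shortcut --- is in fact not needed: because the search stops the moment $s_i$ distinct vertices have been discovered, every edge ever examined has both endpoints among the at most $s_i$ discovered vertices, so the $O(s_i^2)$ bound on the work already holds with no degree test (this is all the paper uses); your shortcut is a harmless alternative organization of the same bound, and your closing remark about the abort branch is handled separately in the paper (Lemma \ref{lemma:short-approx-error-case}) rather than inside this lemma.
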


\begin{proof}
    Fix some iteration $i$ and vertex $u$.
    We conduct a $\bfs$ on the graph $G$ from $u$ until we either encounter a vertex at depth $r_i + 1$ or $s_i$ distinct vertices.
    If we encounter $s_i$ distinct vertices first, then $|N(u, r_i)| \geq s_i$, so define $X(u, i)$ to be the distinct vertices encountered thus far.
    Otherwise, if we encounter a vertex at depth $r_i + 1$ first, then $|N(u, r_i)| < s_i$ and we can efficiently compute $C(u, i) = N(u, r_i) \cap D_{i + 1}$ by iterating over the distinct vertices found and querying their membership in $D_{i + 1}$.
    
    To bound the time required to compute these sets, note that finding $s_i$ distinct vertices requires time at most $s_i^2$.
    In particular, we can bound the time consumed by computing the $\bfs$ as,
    \begin{equation*}
        s_i^2 \leq s_{k_0 + 1}^2 = n^{2 - \frac{2 (k_0 + 1)}{k}}
    \end{equation*}
    Note $k_0 = \beta - 2 = \frac{2}{3} (k - 1) - 2 = \frac{2}{3} k - \frac{8}{3}$.
    Summing over all vertices $u$ and iterations $i$, the total time consumed is,
    \begin{equation*}
        \bigO{n^{3 - \frac{2 (k_0 + 1)}{k}}} = \bigO{n^{3 - \frac{4k/3 - 10/3}{k}}} =  \bigO{n^{\frac{5}{3} + \frac{10}{3k}}} = \bigO{n^{2 + \frac{1}{k}}} 
    \end{equation*}
    The last inequality holds whenever $k \geq 7$ or alternatively $\beta = \frac{2}{3} (k - 1) \geq 4$.
    Computing $C(u, i)$ requires only $s_i$ time as we only need to iterate over the vertices found in $N(u, r_i)$.
\end{proof}

\IncMargin{1em}
\begin{algorithm}[H]
\SetKwInOut{Input}{Input}
\SetKwInOut{Output}{Output}
\Input{Unweighted, undirected graph $G = (V, E)$ with $n$ vertices; approximation parameter $\beta$}
\Output{Distance estimate $\hat{\delta}: U \times V \rightarrow \Z$ such that $\delta(u, v) \leq \hat{\delta}(u, v)$ for all $u, v \in V$ and $\hat{\delta}(u, v) \leq \delta(u, v) + \beta$ whenever $\delta(u, v) \leq \beta + 1$}

\BlankLine

\textcolor{blue}{Phase 0: Set up and Decompose Graph}

$\hat{\delta}(u, v) \gets \begin{cases}
    1 & (u, v) \in E \\
    \infty & \otherwise
\end{cases}$

$k \gets \frac{3 \beta + 2}{2}$ and $k_0 \gets \beta - 2$.

$s_{i} \gets n^{1 - \frac{i}{k}}$ for all $1 \leq i \leq k - 1$

$(D_{1}, D_{2}, \dotsc, D_{k}), (E_{1}, E_{2}, \dotsc, E_{k}), E^* \gets \rDecompose(G, (s_{1}, s_{2}, \dotsc, s_{k-1}))$

\For{$1 \leq i \leq k$}{
    \If{$D_i$ does not dominate all neighborhoods $|N(v)| \geq s_i$ or $|D_i| \geq \frac{12 n \log n}{s_i}$}{
        \Return $\APSP(G)$
    }
}

\textcolor{blue}{Phase 1: Compute $X(u), C(u)$ for all $u \in V$}

\For{$u \in V$}{
    \For{$k_0 + 1 \leq i \leq k_0 + \frac{\beta}{2} - 1$}{
        $r_i \gets \frac{\beta}{2} - (i - (k_0 + 1))$
        
        \lIf{$|N(u, r_i)| \geq s_{i}$}{compute $X(u, i)$ (see \Cref{lemma:x-c-computation})}
        
        \lElseIf{$|N(u, r_i) \cap D_{i + 1}| \leq 12 s_{k - 1} \log n$}{compute $C(u, i) \gets N(u, r_i) \cap D_{i + 1}$}
        
        \lElse{\Return $\APSP(G)$}
    }
    
    \lIf{$\deg(u) \leq s_{k_0 + 2}$}{$\hat{\delta}(u, v) \gets \min(\hat{\delta}(u, v), \bfs(G_{k_0 + 1}, u, 2))$ where $G_{k_0 + 1} = (V, E_{k_0 + 1})$} \label{line:short-approx:depth-2-bfs}
}

\lFor{$k_0 + 1 \leq i \leq k_0 + \frac{\beta}{2} - 1$}{$D_{i} \gets D_{i} \cup \hittingSet(\set{X(u, i)})$ for all $u$ such that $|X(u, i)| \geq s_{i}$}
\label{line:short-approx:di-augment}

\textcolor{blue}{Phase 2: Compute Distance Estimates}

\For{$1 \leq j \leq k$}{
    \For{$w \in D_{j}$}{
        $G_{j, w} \gets \left(V, E_j \cup \left( \bigcup_{j + j_1 + j_2 \leq 2k + 1} D_{j_1} \times D_{j_2} \right) \cup E^* \cup (w \times V) \right)$

        $\hat{\delta}(w, v) \gets \min(\hat{\delta}(w, v), \min_{u \in q(w, D_j)} 1 + \hat{\delta}(u, v))$ for all $v \in V$ \label{line:short-approx:truncated-bfs-hitting-set}
        
        $\hat{\delta} \gets \dijkstra(G_{j, w}, w, \hat{\delta})$ 
    }
}

\For{$u, v \in V$}{
    \For{$k_0 + 1 \leq i \leq k_0 + \frac{\beta}{2} - 1$}{
        \lIf{$\delta(u, p(u, D_{i})) \leq r_i$}{$\hat{\delta}(u, v) \gets \min(\hat{\delta}(u, v), r_i + \hat{\delta}(p(u, D_{i}), v))$}
        \label{line:short-approx:x(u)-approx}
        
        \lIf{$|C(u, i)| \leq 12 s_{k - 1} \log n$}{$\hat{\delta}(u, v) \gets \min(\hat{\delta}(u, v), \min_{w \in C(u, i)} r_i + \hat{\delta}(w, v))$}
        \label{line:short-approx:c(u)-approx}
    }
}

\caption{$\shortAdditiveAPASP(G, \beta)$} 
\label{alg:short-additive-apasp}
\end{algorithm}
\DecMargin{1em}

The following lemma bounds the probability that we must compute an exact $\APSP$ solution.

\begin{lemma}
    \label{lemma:short-approx-error-case}
    Let $u \in V$ and $k_0 + 1 \leq i \leq k_0 + \frac{\beta}{2} - 1$
    Define the following events for each $u, i$.
    \begin{align*}
        A(u, i) &= \set{|N(u, r_i)| \geq s_i} \\
        B(u, i) &= \set{|N(u, r_i) \cap D_{i + 1}| \geq 12 s_{k - 1} \log n}
    \end{align*}
    For all $u$, let $E(u, i)$ denote the error event $E(u, i) = \neg A(u, i) \wedge B(u, i)$.

    For each $1 \leq i \leq k$, let $E(i)$ denote the event that either $D_i$ fails to dominate some neighborhood of size at least $s_i$ or $|D_i| \geq \frac{12 n \log n}{s_i}$.
    
    Let $E = (\bigcup_{i = 1}^{k} E(i)) \cup (\bigcup_{i, u} E(u, i))$ denote the event that any error event occurs.
    
    Then, $\Pr(E) = \bigO{\frac{1}{n^2}}$
\end{lemma}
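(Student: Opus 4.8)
The plan is to break the error event $E$ into its two constituent families---the decomposition failures $E(i)$ for $1 \le i \le k$, and the neighborhood-computation failures $E(u,i)$ for $k_0+1 \le i \le k_0+\frac{\beta}{2}-1$ and $u \in V$---to bound each by a union bound combined with a Chernoff estimate, and then combine. Recall from the proof of \Cref{lemma:degree-decomposition} that $\rDecompose$ builds $D_i$ by including each vertex independently with probability $p_i = \frac{c\log n}{s_i}$, where $c$ is a constant we are free to choose (it only needs to be large enough for the domination bounds). Since $\beta$, and hence $k = \frac{3\beta+2}{2}$ and $k_0 = \beta-2$, are fixed constants, there are only $\bigO{1}$ relevant indices $i$ in each family, so it suffices to show that each individual event has probability $\bigO{1/n^{3}}$ in the first family and is super-polynomially small in the second.

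The events $E(i)$ are handled exactly as in the proofs of \Cref{lemma:hitting-set} and \Cref{lemma:dominating-set}. For a fixed neighborhood $N(v)$ with $|N(v)| \ge s_i$, the probability that $D_i$ misses it is at most $(1-p_i)^{s_i} \le e^{-p_i s_i} = e^{-c\log n}$, and a union bound over the at most $n$ such neighborhoods bounds the non-domination probability by $n^{1-c}$. For the size bound, $|D_i|$ is a sum of $n$ independent Bernoulli($p_i$) variables with mean $np_i = \frac{cn\log n}{s_i}$; the threshold $\frac{12n\log n}{s_i}$ exceeds this mean by a constant factor (for $c < 12$), and since $n/s_i = n^{i/k} \ge n^{1/k}$ is polynomially large, a multiplicative Chernoff bound makes $\Pr\!\left(|D_i| \ge \tfrac{12n\log n}{s_i}\right)$ super-polynomially small. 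Taking, say, $c = 6$ makes both contributions $\bigO{1/n^{3}}$.

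The events $E(u,i)$ are the heart of the argument. The key observation is that $N(u, r_i)$ is a deterministic function of $G$ and does not depend on the randomness used to sample $D_{i+1}$; in particular $A(u,i) = \set{|N(u,r_i)| \ge s_i}$ is a deterministic event. Hence, conditioned on $\neg A(u,i)$, the quantity $X := |N(u,r_i) \cap D_{i+1}|$ is a sum of $m = |N(u,r_i)| < s_i$ independent Bernoulli($p_{i+1}$) variables, so
\[
  \E\!\left[X \mid \neg A(u,i)\right] < s_i \, p_{i+1} = c\log n \cdot \frac{n^{1-i/k}}{n^{1-(i+1)/k}} = c\, n^{1/k}\log n = c\, s_{k-1}\log n .
\]
Since the threshold in the definition of $B(u,i)$ is $12 s_{k-1}\log n$, which exceeds this conditional mean by a constant factor (again for $c < 12$), and since $s_{k-1} = n^{1/k}$ is polynomially large, a Chernoff (or Bernstein) tail bound gives
\[
  \Pr\!\left(E(u,i)\right) \le \Pr\!\left(X \ge 12 s_{k-1}\log n \;\middle|\; \neg A(u,i)\right) \le \exp\!\left(-\Omega\!\left(n^{1/k}\log n\right)\right).
\]
A union bound over the $\bigO{n}$ choices of $u$ and $\bigO{1}$ choices of $i$ still leaves this contribution super-polynomially small.

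Combining the two families via a union bound yields $\Pr(E) = \bigO{1/n^{3}} + \exp\!\left(-\Omega(n^{1/k}\log n)\right) = \bigO{1/n^{2}}$, as claimed. The only delicate point---and the thing to get right---is the bookkeeping of the constant $c$: it must be large enough that $n^{1-c}$ (and the domination guarantee) beats $n^{-2}$, yet the hard-coded constant $12$ appearing in the thresholds of \Cref{alg:short-additive-apasp} must exceed the relevant means by enough margin to run the Chernoff bounds, which forces $c < 12$; the value $c = 6$ satisfies both requirements, and one could equivalently carry $c$ as a free parameter and verify the two inequalities (roughly $c \ge 4$ and $c < 12$) at the end. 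Everything else is a routine instantiation of estimates already used in \Cref{lemma:hitting-set} and \Cref{lemma:degree-decomposition}.
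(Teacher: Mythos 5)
Your proposal is correct and follows essentially the same route as the paper: split $E$ into the decomposition failures $E(i)$ and the neighborhood failures $E(u,i)$, bound the former by the standard $(1-p)^{s_i}$ domination estimate plus a Chernoff bound on $|D_i|$, and bound the latter by conditioning on $\neg A(u,i)$ (noting $N(u,r_i)$ is independent of the sampling of $D_{i+1}$), applying a Chernoff bound to the sum of at most $s_i$ Bernoulli variables with mean $O(s_{k-1}\log n)$, and union bounding. The paper simply hard-codes the sampling probability as $\tfrac{3\log n}{s_{i+1}}$ rather than carrying the constant $c$, but the bookkeeping is the same.
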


\begin{proof}
    Fix $u \in V$ and $k_0 + 1 \le qi \leq k_0 + \frac{\beta}{2} - 1$,
    Consider the event $E(u, i)$.
    \begin{equation*}
        \Pr(E(u, i)) = \Pr(\neg A(u, i) \wedge B(u, i) \leq \Pr(B(u, i) \mid \neg A(u, i))
    \end{equation*}
    Conditioned on $\neg A(u, i)$, define the random variable $X = |N(u, r_i) \cap D_{i + 1}|$.
    Since $D_{i + 1}$ is constructed by sampling each vertex probability $p = \frac{3 \log n}{s_{i + 1}}$, $X$ is the sum of at most $s_i$ Bernoulli random variables.
    Let $Y$ be the sum of $s_i$ Bernoulli random variables with the same probability $p$ so that $Y$ stochastically dominates $X$.
    Note that $\E[X] \leq \frac{3 \log n s_i}{s_{i + 1}} = 3 s_{k - 1} \log n$.
    By a standard Chernoff bound, 
    \begin{equation*}
        \Pr(X > 12 s_{k - 1} \log n) \leq \Pr(Y > 12 s_{k - 1} \log n) < \exp(-3 s_{k - 1} \log n)
    \end{equation*}

    Thus, we can easily union bound over $O(n)$ events $E(u, i)$ as each occurs with inverse exponential probability.
    Note that $C(u, i)$ is defined from the randomly chosen $D_i$, and these sets are constructed before $D_i$ is augmented by a deterministic hitting set of $X(u, i)$ in Line \ref{line:short-approx:di-augment}.

    From \Cref{lemma:degree-decomposition}, $\rDominate$ with high probability satifies the necessary requirements.
    To see this, consider some threshold $s_i$ and vertex $\deg(v) \geq s_i$.
    The probability $D_i$ fails to dominate $s_i$ with $P$ chosen above is at most,
    \begin{equation*}
        (1 - p)^{s_i} \leq \exp \left( - 3 \log n \right) \leq n^3
    \end{equation*}
    We union bound over all $u, i$ to obtain the probability $\tO{1/n^2}$.

    Second, the probability that $|D_i| \geq \frac{12 n \log n}{s_i}$ is small by an identical argument to the error bound for $E_i$.
    Since we bound over $O(1)$ events, these also occur with negligible probability.
    We conclude the proof by union bounding over all the error events.
\end{proof}

\begin{lemma}
    \label{lemma:short-approx-correctness}
    Suppose that the error event $E$ does not occur.
    Then, Algorithm \ref{alg:short-additive-apasp} outputs $\hat{\delta}$ such that,
    \begin{equation*}  
        \delta(u, v) \leq \hat{\delta}(u, v) \leq \delta(u, v) + \beta
    \end{equation*}
    for all $\delta(u, v) \leq \beta + 1$.
\end{lemma}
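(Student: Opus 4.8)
\textbf{Proof proposal for Lemma \ref{lemma:short-approx-correctness}.}

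The plan is to prove feasibility ($\delta(u,v) \le \hat\delta(u,v)$) quickly, then do case analysis on a shortest path $P$ of length $\delta(u,v) \le \beta + 1$. Feasibility is routine: every distance estimate written by the algorithm is witnessed by an actual path in $G$ --- the $\bfs$/$\dijkstra$ calls operate on subgraphs of $G$ or on graphs $G_{j,w}$ whose extra edges $w\times V$, $D_{j_1}\times D_{j_2}$, $E^*$ carry weights that are themselves previously-computed feasible estimates, and the updates in Lines \ref{line:short-approx:x(u)-approx}, \ref{line:short-approx:c(u)-approx} add $r_i = \delta(u, p(u,D_i))$ (resp.\ $\ge \delta(u,w)$ for $w\in C(u,i)$) to a feasible estimate, so the triangle inequality gives feasibility. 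This mirrors Lemma \ref{lemma:additive-feasible-estimates}/\ref{lemma:2-approx-feasible}.

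For the upper bound, fix $u,v$ with $\delta(u,v)\le \beta+1$ and a shortest path $P$; let $\deg(P)$ be the maximum degree on $P$. First I would handle the \emph{high-degree case}: if $\deg(P) \ge s_{k_0}$, then $P$ contains a vertex of degree $\ge s_{k_0}$, and since $k = \frac{3\beta+2}{2}$, $k_0 = \beta - 2$ satisfy $k_0 + k + (k-1) \le 2k+1$ (more precisely the indices needed for the relevant $D_{j_1}\times D_{j_2}$ edge sets at level $k$ sum to at most $2k+1$), we can replay the $\denseAPASP$-style argument of Lemma \ref{lemma:k0+3-approx-monotone} combined with the inductive step of Lemma \ref{lemma:block-approx-error} to get a $+\beta$ approximation --- the key arithmetic being $2(\lfloor k/3\rfloor + 1) \le \beta$, which holds since $\lfloor k/3 \rfloor = \lfloor \frac{\beta}{2} + \frac{1}{3}\rfloor = \frac{\beta}{2}$ for even $\beta$, wait --- here one must be careful, the correct bookkeeping is that the number of distinct levels touched is bounded so that $2|L(P)| \le \beta$. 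So the substantive new content is the \emph{low-degree case} $\deg(P) < s_{k_0}$, i.e.\ $P \subset E_{k_0+1}$. Here, if $P \subset E_{k_0+1}$ has all its "blocking levels" $L_B(P) \subset \{k_0+1, \dotsc, k-1\}$ then Lemma \ref{lemma:sparse-apasp-approx} already gives $\hat\delta(u,v) \le \delta(u,v) + 2(k-1-k_0) = \delta(u,v) + 2(\lfloor k/3\rfloor+1) \le \delta(u,v)+\beta$ in Phase 2. So the only remaining difficulty is paths $P\subset E_{k_0+1}$ where the blocking set is "full" on the top $\frac{\beta}{2}$ levels and we are about to exceed the $+\beta$ budget by a constant.

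This last situation is exactly what Phase 1 is designed for, and it is the main obstacle. The plan: let $P = (u = x_0, \dotsc, x_{\beta+1} = v)$ (truncating appropriately if $\delta(u,v)<\beta+1$), assume WLOG $\deg(v)\le \deg(u)$. Look at the first blocking vertex, or more usefully at the midpoint region: for the problematic $P$, some vertex $a \in P$ near $u$ lies within radius $r_i$ of $u$ for an appropriate $i \in \{k_0+1,\dotsc,k_0+\frac{\beta}{2}-1\}$. Split on whether $|N(u,r_i)| \ge s_i$. If yes, then $D_i$ was augmented with a hitting set of $X(u,i)$, so $p(u,D_i) \in N(u,r_i)$ with $\delta(u,p(u,D_i)) \le r_i$; from $p(u,D_i)$, a subpath of $P$ starting at $p(u,D_i)$ has few enough remaining blocking levels that $\dijkstra$ computes $\hat\delta(p(u,D_i), v) \le \delta(p(u,D_i),v) + (\text{budget}-2r_i)$, and Line \ref{line:short-approx:x(u)-approx} adds back $r_i$ to land within $\delta(u,v)+\beta$. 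If $|N(u,r_i)| < s_i$, then (error event excluded) $|C(u,i)| = |N(u,r_i)\cap D_{i+1}| \le 12 s_{k-1}\log n$, so Line \ref{line:short-approx:c(u)-approx} iterates over $C(u,i)$ in near-linear time; pick the vertex $w^* = r(a, D_{i+1}) \in C(u,i)$ dominating the appropriate vertex $a$ on $P$, and Lemma \ref{lemma:sparse-apasp-approx} bounds $\hat\delta(w^*, v)$, again closing the budget after adding $r_i$. The genuinely delicate part is the \emph{depth-2 $\bfs$} of Line \ref{line:short-approx:depth-2-bfs}: when both $x_2$ and $x_{\beta}$ (the penultimate vertices) have degree $< s_{k_0+2}$ and sit at a troublesome level, one cannot afford the extra $+2$ from routing through a dominating vertex, so instead $\bfs(G_{k_0+1}, u, 2)$ (or from such a low-degree vertex) directly bypasses the central edge. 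I would verify that in each sub-configuration exactly one of these four mechanisms (top-level $D_{j_1}\times D_{j_2}$ edges, $p(u,D_i)$, $C(u,i)$, depth-2 $\bfs$) applies and that the constants add up to at most $\beta$; the bookkeeping of which mechanism fires for which degree pattern of $(x_2, \dotsc, x_\beta)$ is where the real work lies, exactly as in the $\beta=4$ overview preceding the lemma (Figures \ref{fig:x(u)-approx}, \ref{fig:degree-restriction-intro}, \ref{fig:bfs-2-intro}). I expect this case enumeration --- and checking it covers all residue patterns of blocking levels modulo the $E_{k_0+1}$ constraint --- to be the main obstacle; the probability bound that $E$ does not occur is already handled by Lemma \ref{lemma:short-approx-error-case}, and the running time by Lemma \ref{lemma:x-c-computation}.
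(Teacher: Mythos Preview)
Your proposal identifies all the right mechanisms and follows essentially the same approach as the paper, but the logical organization of the low-degree case ($P \subset E_{k_0+1}$) has a genuine gap. You write that ``if $P\subset E_{k_0+1}$ has all its blocking levels $L_B(P) \subset \{k_0+1,\dotsc,k-1\}$ then Lemma \ref{lemma:sparse-apasp-approx} already gives $\hat\delta(u,v) \le \delta(u,v) + 2(k-1-k_0)$'' --- but the hypothesis is vacuous (it is always true when $P\subset E_{k_0+1}$), and the bound is $2(k-1-k_0) = \beta + 4 > \beta$, so $\sparseAPASP$ alone does \emph{not} close this case. You then say the remaining difficulty is paths where the blocking set is ``full on the top $\tfrac{\beta}{2}$ levels,'' but since $k-1-k_0 = \tfrac{\beta}{2}+2$ there are two extra levels to account for, and your sketch never pins down which index $i$ to feed into the $X(u,i)/C(u,i)$ machinery or why the chosen $a$ actually has a representative inside $N(u,r_i)$.

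The paper resolves this by a cleaner case split: it first checks whether the degree restriction $\level(u_{r_i}),\level(v_{r_i}) > i+1$ holds for \emph{every} $i\in\{k_0+1,\dotsc,k_0+\tfrac{\beta}{2}-1\}$. If it does, then all non-central vertices of $P$ lie in $E_{k_0+3}$, forcing the only possible edge outside $E_{k_0+3}$ to be the central one; Lemma \ref{lemma:k-1-approx-distance} then uses the depth-2 $\bfs$ from the low-degree vertex $u_{\beta/2}$ to bypass that central edge, tightening the blocking-vertex bound by exactly the $+2$ needed. If the restriction fails, the violating index $i$ is precisely the one to use: $u_{r_i}$ has $\level(u_{r_i})\le i+1$, so $r(u_{r_i},D_{i+1}) \in N(u,r_i)$, and now the split on $|N(u,r_i)|\gtrless s_i$ (Lemmas \ref{lemma:r_i-neighborhood-large} and \ref{lemma:r_i-neighborhood-small}) fires correctly. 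Your four mechanisms are the right ones, but the trigger is this level-restriction hypothesis, not the size of the blocking set per se; without it you cannot name the vertex $a$ and index $i$ that make the $C(u,i)$ argument go through.
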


We delay the proof of Lemma \ref{lemma:short-approx-correctness}, instead beginning with a few useful lemmas that will be crucial to our proof.
In the following lemma, we prove that a reasonable additive approximation is obtained whenever $A(u, i)$ is true. 
That is, we assume $|N(u, r_i)| \geq s_i$ is large.

\begin{lemma}
    \label{lemma:r_i-neighborhood-large}
    Let $k_0 + 1 \leq i \leq k_0 + \frac{\beta}{2} - 1$.
    Let $u, v$ be a pair of vertices and $P \subset E_{k_0 + 1}$ be a shortest path.
    Suppose $|N(u, r_i)| \geq s_i$ where $r_i = \frac{\beta}{2} - (i - (k_0 + 1))$.
    Then, after Line \ref{line:short-approx:x(u)-approx} we have,
    \begin{equation*}
        \hat{\delta}(u, v) \leq \delta(u, v) + \beta
    \end{equation*}
\end{lemma}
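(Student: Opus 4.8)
The plan is to locate, after the augmentation in \Cref{line:short-approx:di-augment}, a vertex of $D_i$ within distance $r_i$ of $u$, and then to stitch that cheap hop together with the additive guarantee that the $\sparseAPASP$-style iterations of Phase 2 give from that vertex to $v$. First I would note that since $|N(u, r_i)| \ge s_i$, \Cref{lemma:x-c-computation} guarantees that \Cref{alg:short-additive-apasp} computes a set $X(u, i) \subseteq N(u, r_i)$ with $|X(u, i)| \ge s_i$; \Cref{line:short-approx:di-augment} then augments $D_i$ with a hitting set (via \Cref{lemma:hitting-set}) of the family of all such $X(u', i)$, so $D_i$ must contain an element of $X(u, i) \subseteq N(u, r_i)$. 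Hence $\delta(u, w) \le r_i$ for $w := p(u, D_i)$, the guard in \Cref{line:short-approx:x(u)-approx} fires, and the algorithm sets $\hat\delta(u, v) \le r_i + \hat\delta(w, v)$.

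The crux is to bound $\hat\delta(w, v)$. Every distance estimate is witnessed by a path in $G$, so $\hat\delta(w, v) \ge \delta(w, v)$, and by the triangle inequality $\delta(w, v) \le \delta(w, u) + \delta(u, v) \le r_i + \delta(u, v)$. I would then show Phase 2 produces
\[
    \hat\delta(w, v) \le \delta(w, v) + 2(i - k_0 - 1).
\]
Since $w \in D_i \subseteq D_j$ for $j := \max(i, \level(w))$ and $j \ge \level(w)$, \Cref{lemma:block-approx-error} (or its coarser form \Cref{lemma:sparse-apasp-approx}) applied in the $j$-th iteration gives $\hat\delta(w, v) \le \hat\delta_j(w, v) \le \delta(w, v) + 2|L_B(Q, j)|$ for a shortest $w$--$v$ path $Q$, so it remains to bound $|L_B(Q, j)|$ by $i - k_0 - 1$. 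This is where one exploits that $w$ lies within the constant radius $r_i$ of the endpoint $u$ of a path $P \subseteq E_{k_0 + 1}$: in the regime $\delta(u, v) \le \beta + 1$ of interest one reroutes $Q$ so that every blocking level of $Q$ is at least $k_0 + 1$ (a blocking vertex of level $\le k_0$ would have degree $\ge s_{k_0 + 3}$, for which the $D_{j_1} \times D_{j_2}$ shortcut edges of $G_{j, w}$ already supply a $+O(1)$ estimate and which can therefore be taken to be absent from the critical portion of $Q$), leaving only the levels in $\{k_0 + 1, \dots, i - 1\}$ to contribute.

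Finally, using $r_i = \tfrac{\beta}{2} - (i - k_0 - 1)$ and combining the two bounds,
\[
    \hat\delta(u, v) \le r_i + \hat\delta(w, v) \le r_i + \bigl(r_i + \delta(u, v)\bigr) + 2(i - k_0 - 1) = \delta(u, v) + \beta,
\]
while $\hat\delta(u, v) \ge \delta(u, v)$ holds by feasibility, which is the claimed bound. The main obstacle is the middle step: the radius-$r_i$ hop from $w$ to $u$ is only guaranteed inside $G$, not inside $E_{k_0 + 1}$, so one must carefully reroute the $w$--$v$ geodesic through the $E_{k_0 + 1}$-structure (or defer to the $\denseAPASP$-style shortcut-edge analysis) to control its blocking levels, and one must dispose of the case $\level(w) > i$ — showing that it does not arise or that its extra error is absorbed — since otherwise the blocking-level count at iteration $\max(i, \level(w))$ could exceed $i - k_0 - 1$.
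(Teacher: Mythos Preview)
Your overall shape is right — locate $w=p(u,D_i)$ inside $N(u,r_i)$ via the hitting-set augmentation, invoke \Cref{line:short-approx:x(u)-approx}, and close with the arithmetic $2r_i+2(i-k_0-1)=\beta$. The gap is exactly where you flag it: the bound $\hat\delta(w,v)\le\delta(w,v)+2(i-k_0-1)$ cannot be extracted from an \emph{arbitrary} shortest $w$--$v$ path $Q$. You have no structural control over $Q$; it may leave $E_{k_0+1}$ entirely, and your appeal to the $D_{j_1}\times D_{j_2}$ shortcut edges to ``reroute away'' any level-$\le k_0$ blocking vertex is not a proof (those edges are only present when $j+j_1+j_2\le 2k+1$, so you would have to redo the whole $\denseAPASP$-type case analysis inside this lemma). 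Your worry about $\level(w)>i$ is a symptom of the same problem: once you pass to $j=\max(i,\level(w))$ you lose the cap $i-1$ on the blocking levels.

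The paper dissolves this by exploiting \emph{minimality} of $p(u,D_i)$, which you never use. If some internal vertex $y$ on a shortest $u$-to-$w$ path had $\deg(y)\ge s_{i-1}$, then $y$ would have a neighbour in $D_{i-1}\subset D_i$ strictly closer to $u$ than $w$, contradicting $w=p(u,D_i)$. Hence every vertex on that path except possibly the one immediately before $w$ has degree $<s_{i-1}$ (level $\ge i$), so the path lies in $E_i$ except possibly its last edge, which is covered by $w\times V\subset G_{i,w}$. Now work with the \emph{specific} concatenated path $P'=(w\to u)\circ P$ rather than an unknown $Q$: its first segment contributes no level $<i$, and the second segment $P\subset E_{k_0+1}$ contributes only levels in $\{k_0+1,\dots,i-1\}$. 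Applying \Cref{lemma:sparse-apasp-approx} at level $i$ (valid because $w\in D_i$, independent of $\level(w)$) to $P'$ gives $\hat\delta_i(w,v)\le|P'|+2(i-k_0-1)\le r_i+\delta(u,v)+2(i-k_0-1)$, and the rest is your arithmetic. The missing idea is thus a single line — ``minimality of $p(u,D_i)$ forces the $u$-to-$w$ geodesic into $E_i$'' — but without it the blocking-level bound is unfounded.
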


\begin{proof}
    Suppose $|N(u, r_i)| \geq s_i$, then there is some vertex $w \in X(u, i) \cap D_i \subset N(u, r_i) \cap D_i$ so that $\delta(u, p(u, D_i)) \leq r_i$.
    In particular,
    \begin{equation*}
        \hat{\delta}(u, v) \leq r_i + \hat{\delta}(p(u, D_i), v)
    \end{equation*}

    Let us examine the quantity $\hat{\delta}(p(u, D_i), v)$.
    Since after Line \ref{line:short-approx:di-augment} the shortest path in $G$ between $u, p(u, D_i)$ is at most $r_i$, we begin by showing that the path between $u, p(u, D_i)$ exists in the graph $G_{i, u}$.
    If there is an edge missing from $E_i$, let $w$ be the vertex closest to $u$ such that $\deg(w) > s_{i - 1}$.
    Therefore, $N(w) \cap D_{i - 1} \neq \emptyset$ and $N(w) \cap D_{i} \neq \emptyset$.
    If $w$ is not the vertex immediately preceding $p(u, D_i)$, then this violates the property that $p(u, D_i)$ is the nearest vertex in $D_i$ to $u$.
    Otherwise, if $w$ is the vertex immediately preceding $p(u, D_i)$, then the edge $(w, p(u, D_i))$ is accessible in the graph $G_{i, p(u, D_i)}$.
    Thus, the shortest path of length $r_i$ between $p(u, D_i), u$ is available in the $i$-th iteration.

    Consider now the a path $P'$ from $p(u, D_i)$ to $v$ of length at most $|P| + r_i$.
    By our arguments above and $P \subset E_{k_0 + 1}$, the path $P'$ is also in $E_{k_0 + 1}$.
    From Lemma \ref{lemma:sparse-apasp-approx}, $L(P') \subset \set{k_0 + 1, \dotsc, i - 1}$ has size at most $i - (k_0 + 1)$, so we can conclude the total error is at most,
    \begin{equation*}
        \hat{\delta}_{i}(p(u, D_i), v) \leq \delta(u, v) + r_i + 2 (i - (k_0 + 1))
    \end{equation*}

    Then,
    \begin{align*}
        \hat{\delta}(u, v) &\leq r_i + \hat{\delta}(p(u, D_i), v) \\
        &\leq \delta(u, v) + 2 r_i + 2 (i - (k_0 + 1)) \\
        &= \delta(u, v) + 2 \left( r_i + (i - (k_0 + 1)) \right) \\
        &\leq \delta(u, v) + \beta
    \end{align*}
\end{proof}

Next, we prove correctness under the assumption that $B(u)$ is false.
In particular, assume the set $|C(u, i)| = |N(u, r_i) \cap D_{i + 1}| \leq 12 s_{k - 1} \log n$ is small.

\begin{lemma}
    \label{lemma:r_i-neighborhood-small}
    Suppose that the error event $E$ does not occur.

    Let $k_0 + 1 \leq i \leq k_0 + \frac{\beta}{2} - 1$.
    Let $u, v$ be a pair of vertices with a shortest path $P = (u, u_2, u_3, \dotsc, v_3, v_2, v)$ with all edges in $E_{k_0 + 1}$.
    Suppose $|N(u, r_i) \cap D_{i + 1}| \leq 12 s_{k - 1} \log n$.
    If $\level(u_{r_i}) \leq i + 1$ then, after Line \ref{line:short-approx:c(u)-approx} we have,
    \begin{equation*}
        \hat{\delta}(u, v) \leq \hat{\delta}(u, v) + \beta
    \end{equation*}
\end{lemma}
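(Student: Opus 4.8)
The plan is to follow the template of Lemma~\ref{lemma:r_i-neighborhood-large}: exhibit one explicit member of $C(u,i)$ that already carries a sufficiently good estimate to $v$, and then read off the bound from Line~\ref{line:short-approx:c(u)-approx}. Write $P=(u=u_1,u_2,\dots,u_L=v)$ and set $a:=u_{r_i}$, the $r_i$-th vertex of $P$, so $\delta(u,a)=r_i-1$ and $\delta(a,v)=\delta(u,v)-r_i+1$. Since $\level(a)\le i+1$ we have $a\in V_{s_{i+1}}$, so $D_{\level(a)}\subseteq D_{i+1}$ contains the representative $w:=r(a,D_{\level(a)})$, with $(a,w)\in E^*$ and $w\in N(a)$. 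Then $\delta(u,w)\le \delta(u,a)+1=r_i$, hence $w\in N(u,r_i)\cap D_{i+1}=C(u,i)$. Moreover $|C(u,i)|\le 12 s_{k-1}\log n$ by hypothesis (and $|N(u,r_i)|<s_i$, since otherwise Lemma~\ref{lemma:r_i-neighborhood-large} already applies), so Line~\ref{line:short-approx:c(u)-approx} performs $\hat\delta(u,v)\gets\min(\hat\delta(u,v),\,r_i+\hat\delta(w,v))$; it therefore suffices to bound $\hat\delta(w,v)$.

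\textbf{Bounding $\hat\delta(w,v)$.} Consider the sub-path $Q:=P_{a,v}\subset P\subset E_{k_0+1}$, a shortest $a$--$v$ path, and recall that in the main proof (Lemma~\ref{lemma:short-approx-correctness}) the paths containing a vertex of degree $\ge s_{k_0}$ have already been handled, so every vertex of $P$ has level $\ge k_0+1$. Since $w=r(a,D_{\level(a)})$ is a dominating vertex from which $\dijkstra$ is run in the $\level(a)$-th iteration of Phase~2, Lemma~\ref{lemma:sparse-apasp-approx} (equivalently Lemma~\ref{lemma:block-approx-error}) applied to $Q$ — whose lower-level endpoint is $a$ when $\level(a)\le\level(v)$ — yields
\begin{equation*}
  \hat\delta(w,v)\le\hat\delta_{\level(a)}(w,v)\le \delta(a,v)+2\,|L(Q)|+1 ,
\end{equation*}
where $L(Q)=\{\level(z):z\in Q,\ \level(z)<\level(a)\}$. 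Every level occurring in $Q$ below $\level(a)$ lies in $\{k_0+1,\dots,\level(a)-1\}\subseteq\{k_0+1,\dots,i\}$, so $|L(Q)|\le i-k_0$, giving $\hat\delta(w,v)\le \delta(a,v)+2(i-k_0)+1$. Combining with $r_i+\delta(a,v)=\delta(u,v)+1$ and $i\le k_0+\tfrac{\beta}{2}-1$,
\begin{align*}
  \hat\delta(u,v)&\le r_i+\hat\delta(w,v)\le \delta(u,v)+2+2(i-k_0)\\
  &\le \delta(u,v)+2+2\!\left(\tfrac{\beta}{2}-1\right)=\delta(u,v)+\beta ,
\end{align*}
as required; the lower bound $\delta(u,v)\le\hat\delta(u,v)$ holds since every estimate is witnessed by a genuine path in $G$.

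\textbf{Main obstacle.} The delicate point is the second step: charging the additive error to the number of blocking levels of $Q$ and keeping it at exactly $i-k_0$. Two things must be handled carefully. First, the estimate from $w$ to $v$ must carry no spurious extra ``$+1$'' beyond the one already present in Lemma~\ref{lemma:sparse-apasp-approx}; this is why $w$ is taken to be the representative $r(a,D_{\level(a)})$ that is actually iterated over (so the sharp form of the blocking lemma applies) rather than a vertex reached only through Line~\ref{line:short-approx:truncated-bfs-hitting-set}, and it forces a case split on whether $\level(a)\le\level(v)$ or $\level(a)>\level(v)$ — in the latter case one argues from $v$'s side / directly along $Q$ exactly as in the length-$5$ warm-up (e.g.\ when $\level(v)$ is small the whole tail $P_{a,v}$ is nearly contained in $E_{\level(v)}$). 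Second, the claim $L(Q)\subseteq\{k_0+1,\dots,i\}$ relies on $Q\subset E_{k_0+1}$: every edge of $Q$ has a low-degree endpoint, which makes the blocking recursion terminate immediately whenever it would otherwise reach a vertex of level $\le k_0$ — this is precisely the bookkeeping packaged into Lemmas~\ref{lemma:sparse-apasp-approx}--\ref{lemma:block-approx-error}, so once invoked correctly it is routine. The remaining checks — that $(a,w)\in E_{\level(a)}$ (because $\deg(a)<s_{\level(a)-1}$) and that all edges/representatives used lie in the graphs $G_{\level(a),w}$ — are direct.
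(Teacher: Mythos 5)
Your proof is correct and follows essentially the same route as the paper's: you exhibit the representative of $u_{r_i}$ in $D_{i+1}$ as an explicit member of $C(u,i)$, bound its estimate to $v$ via Lemma~\ref{lemma:sparse-apasp-approx} with $|L|\le i-k_0$ blocking levels (picking up the same $+1$), and add $r_i$ to conclude via Line~\ref{line:short-approx:c(u)-approx}. The only cosmetic difference is taking $r(u_{r_i},D_{\level(u_{r_i})})\subseteq D_{i+1}$ rather than $r(u_{r_i},D_{i+1})$ directly, which changes nothing.
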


\begin{proof}
    Since $\level(u_{r_i}) \leq i + 1$, there is some vertex $z \in N(u, r_i) \cap D_{i + 1}$ such that $z = r(u_{r_i}, D_{i + 1})$ as $u_{r_i} \in N(u, r_i - 1)$.
    Since $P \subset E_{k_0 + 1}$, then after the $(i + 1)$-th iteration, 
    \begin{equation*}
        \hat{\delta}(z, v) \leq \delta(u_{r_i}, v) + 2 (i - k_0) + 1
    \end{equation*}
    where we apply Lemma \ref{lemma:sparse-apasp-approx} with $L(P) \subset \set{k_0 + 1, \dotsc, i}$ has size at most $i - k_0$.

    Then, after Line \ref{line:short-approx:c(u)-approx}, since $z \in N(u, r_i) \cap D_{i + 1} = C(u, i)$,
    \begin{equation*}
        \hat{\delta}(u, v) \leq r_i + \hat{\delta}(z, v) \leq \delta(u_{r_i}, v) + r_i + 2(i - k_0) + 1
    \end{equation*}

    Now, $\delta(u_{r_i}, v) = \delta(u, v) - r_i + 1$ so that,
    \begin{equation*}
        \hat{\delta}(u, v) \leq \delta(u, v) + 2 (i - k_0 + 1) \leq \delta(u, v) + \beta
    \end{equation*}
    as $i + 1 \leq k_0 + \frac{\beta}{2}$.
\end{proof}

From Lemmas \ref{lemma:r_i-neighborhood-large} and \ref{lemma:r_i-neighborhood-small} we can place an upper bound on the levels of many vertices of the shortest path.
The following lemma shows that we obtain a good additive approximation when these bounds hold.

\begin{lemma}
    Let $P = (u, u_2, \dotsc, v_2, v)$ be a shortest path of length at most $\beta + 1$ such that $P \subset E_{k_0 + 1}$.
    Suppose the degrees $\level(u_{r_i}), \level(v_{r_i}) > i + 1$ for all $k_0 + 1 \leq i \leq k_0 + \frac{\beta}{2} - 1$.

    Then, Algorithm \ref{alg:short-additive-apasp} obtains an approximation such that $\hat{\delta}(u, v) \leq \delta(u, v) + \beta$.
    \label{lemma:k-1-approx-distance}
\end{lemma}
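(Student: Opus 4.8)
The plan is to prove Lemma \ref{lemma:k-1-approx-distance} by a careful induction on the ``blocking structure'' of the path $P$, in the spirit of Lemma \ref{lemma:block-approx-error} and Lemma \ref{lemma:sparse-apasp-approx}, but using the extra degree hypotheses $\level(u_{r_i}), \level(v_{r_i}) > i + 1$ to control where blocking vertices can occur. First I would recall that $P \subset E_{k_0 + 1}$ means every edge of $P$ lies in $E_{k_0+1}$, so the only levels that can appear among vertices of $P$ are $k_0 + 1, k_0 + 2, \dotsc, k$. Without loss of generality assume $\level(v) \leq \level(u)$ as usual. If the blocking levels $L_B(P)$ are contained in $\set{k_0+1, \dotsc, k_0 + \frac{\beta}{2} - 1} \cup \set{\text{top levels}}$ in a way that keeps $|L_B(P)| \leq \frac{\beta}{2}$, then Lemma \ref{lemma:block-approx-error} already yields $\hat{\delta}(v^*, u) \leq \delta(v,u) + 2|L_B(P)| + 1$ and then $\hat{\delta}(u,v) \leq \delta(u,v) + \beta$ after the final iteration. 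So the content of the lemma is to show that the degree hypotheses force exactly this bound on the number of blocking levels.

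The key combinatorial observation I would establish is the following: because $P$ has length at most $\beta + 1$, the vertex $u_{r_i}$ is the vertex at distance $r_i - 1$ from $u$ along $P$ (and symmetrically $v_{r_i}$ from $v$), and $r_i = \frac{\beta}{2} - (i - (k_0+1))$ decreases as $i$ increases. The hypothesis $\level(u_{r_i}) > i + 1$ says that the vertex not-too-close to $u$ has large degree; as $i$ increases toward $k_0 + \frac{\beta}{2} - 1$, this pins down that vertices increasingly close to $u$ (down to $u_2$ when $i$ is largest) have level exceeding $i+1$. Dually for $v$. I would translate these into statements about the blocking vertices: a blocking vertex $x_j$ of low level cannot be located in the ``interior interval'' of $P$ dictated by these degree bounds, because any such vertex would be one of the $u_{r_i}$ or $v_{r_i}$ (or lie on the sub-path between them) and would violate $\level(u_{r_i}) > i+1$. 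Concretely, I expect to show that $L_B(P) \subset \set{k_0 + \frac{\beta}{2}, k_0 + \frac{\beta}{2} + 1, \dotsc, k-1}$ together with possibly one or two levels from the bottom range, and in total $|L_B(P, k)| \leq \frac{\beta}{2}$. Then by Lemma \ref{lemma:sparse-apasp-approx} / Lemma \ref{lemma:block-approx-error}, in the final iteration ($D_k = V$), $\hat{\delta}(u,v) \leq \delta(u,v) + 2 \cdot \frac{\beta}{2} = \delta(u,v) + \beta$.

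Concretely, the steps in order: (1) set up notation, WLOG $\level(v) \leq \level(u)$, and recall $B(P) = \set{x_0, x_1, \dotsc, x_t}$ with $k \geq \level(x_1) > \dotsc > \level(x_t) \geq k_0 + 1$ (Lemma \ref{lemma:prop-blocking}); (2) observe that a blocking vertex $x_j$ is by construction an endpoint of an edge leaving the relevant $E_{\level(\cdot)}$ class, so $\level(x_j)$ low forces $x_j$ to sit at a controlled position along $P$; (3) use the length bound $\delta(u,v) \leq \beta+1$ together with the positions of the $u_{r_i}, v_{r_i}$ to argue that each level $\ell$ with $k_0 + 1 \leq \ell \leq k_0 + \frac{\beta}{2} - 1$ can appear at most ``once from the $u$ side and once from the $v$ side,'' but the degree hypotheses $\level(u_{r_i}), \level(v_{r_i}) > i+1$ actually exclude these appearances, leaving only the $\frac{\beta}{2}$ top levels $k_0 + \frac{\beta}{2}, \dotsc, k-1$ available; (4) plug $|L_B(P, k)| \leq \frac{\beta}{2}$ into Lemma \ref{lemma:block-approx-error} applied at iteration $k$ (where $v \in D_k = V$) to conclude $\hat{\delta}(u, v) \leq \hat{\delta}_k(v, u) \leq \delta(v, u) + 2|L_B(P,k)| \leq \delta(u,v) + \beta$; and (5) note feasibility $\delta(u,v) \leq \hat{\delta}(u,v)$ follows as in Lemma \ref{lemma:additive-feasible-estimates} since every estimate is witnessed by a genuine path. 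The main obstacle I anticipate is step (3): carefully matching the indices $r_i$ against blocking-vertex positions, handling the parity/rounding of path lengths (the $\delta(u,v) = \beta$ vs $\beta+1$ cases, and whether $u_3 = v_3$ type collisions occur), and making sure no level in the bottom range slips through — this is a delicate but elementary index bookkeeping argument, and the degree hypotheses are exactly tuned so that it closes, which is why Lemmas \ref{lemma:r_i-neighborhood-large} and \ref{lemma:r_i-neighborhood-small} handle the complementary cases where some $\level(u_{r_i}) \leq i+1$.
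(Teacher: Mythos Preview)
Your plan has a genuine gap in step (3): the bound $|L_B(P,k)|\le \beta/2$ that you hope to derive from the hypotheses is simply false. Take $\beta=4$, so $k=7$, $k_0=2$, and consider a length-$5$ path $P=(u,u_2,u_3,v_3,v_2,v)$ with $\level(u)=7$, $\level(v)=6$, $\level(u_2)=\level(v_2)=5$, $\level(u_3)=\level(v_3)=3$. Then $P\subset E_3$ and the sole hypothesis $\level(u_2),\level(v_2)>4$ holds, yet a direct computation gives $B(P)=\{u,v,u_2,v_3\}$ and $L_B(P,7)=\{6,5,3\}$, so $|L_B(P,7)|=3>\beta/2$. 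Plugging this into Lemma~\ref{lemma:block-approx-error} only yields a $+6$ estimate, not $+4$. More generally, the degree hypotheses pin down the levels of $u_2,\dotsc,u_{\beta/2}$ and $v_2,\dotsc,v_{\beta/2}$ to lie above $k_0+2$, but they say nothing about the two central vertices $u_{\beta/2+1},v_{\beta/2+1}$; one of these can contribute an extra low blocking level, pushing $|L_B(P,k)|$ to $\beta/2+1$.

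The paper's proof confronts exactly this case and resolves it with an ingredient you never invoke: the depth-$2$ $\bfs$ in Line~\ref{line:short-approx:depth-2-bfs} of Algorithm~\ref{alg:short-additive-apasp}. Because $u_{\beta/2}$ has $\deg(u_{\beta/2})<s_{k_0+2}$, that $\bfs$ computes $\hat\delta(u_{\beta/2},v_{\beta/2+1})=2$ exactly, and Line~\ref{line:short-approx:truncated-bfs-hitting-set} then gives $\hat\delta(u_{\beta/2}^*,v_{\beta/2+1})\le 3$. This lets the paper prove the sharpened inductive bound $\hat\delta_{\level(w)}(w^*,x)\le \delta(w,x)+2|L_B(P_{w,x})|-1$ (Equation~\eqref{eq:short-approx-tight-bound}), gaining precisely the ``$-1$'' needed to absorb the extra blocking level coming from the central edge. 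Without this algorithmic step your induction cannot close.
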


\begin{proof}
    As before, let $\deg(P) = \max_{v \in P} \deg(v)$ be the maximum degree of any vertex in $P$ and $\level(P) = \min_{v \in P} \level(v)$ the minimum level of any vertex in $P$.
    In the following, we will frequently use (some variant of) the following equality,
    \begin{equation}
        k - k_0 = \left(\frac{3}{2} \beta + 1\right) - \left(\beta - 2\right) = \frac{\beta}{2} + 3
        \label{eq:comb-k-k0-diff}
    \end{equation}

    Let us begin by assuming $P$ has length exactly $\beta + 1$ so that we can label the vertices $P = (u, u_2, \dotsc, u_{\beta/2 + 1}, v_{\beta/2 + 1}, \dotsc, v_2, v)$.

    If $P \subset E_{k_0 + 3}$, then $L(P, k) \subset \set{k_0 + 3, \dotsc, k - 1}$ is a set of size at most $k - k_0 - 3 = \frac{\beta}{2}$, so we may apply Lemma \ref{lemma:sparse-apasp-approx} and obtain a $+\beta$-approximation.
    Thus, in the following, assume $P$ has an edge in $E_{k_0 + 1} \setminus E_{k_0 + 3}$.
    In particular, there are at least two vertices $x \in P$ with $\deg(x) > s_{k_0 + 2}$.

    Suppose $|L_B(P, k)| \leq \frac{\beta}{2}$.
    In the $k$-th iteration, as $D_k = V$, we have by \ref{lemma:block-approx-error}, $\hat{\delta}_{k}(u, v) \leq \delta(u, v) + \beta$.
    Thus, we can assume $|L_B(P, k)| > \frac{\beta}{2}$.

    By assumption, the levels of vertices $u_{r_i}, v_{r_i}$ are restricted.
    Recall that $r_i = \frac{\beta}{2} - (i - (k_0 + 1))$ so that,
    \begin{align*}
        r_{k_0 + 1} &= \frac{\beta}{2} - ((k_0 + 1) - (k_0 + 1)) = \frac{\beta}{2} \\
        r_{k_0 + \frac{\beta}{2} - 1} &= \frac{\beta}{2} - ((k_0 + \frac{\beta}{2} - 1) - (k_0 + 1)) = 2
    \end{align*}

    In particular, other than the two middle vertices, $u_{\frac{\beta}{2} + 1}, v_{\frac{\beta}{2} + 1}$, the remaining vertices have level at least $i + 2 \geq k_0 + 3$, so that we can assume both central vertices have degree at least $s_{k_0 + 2}$.
    Furthermore, at most one of these vertices can be a blocking vertex.
    Suppose for contradiction that both are, so that one, for example $u_{\frac{\beta}{2} + 1}$, has level $k_0 + 1$ and the other level $k_0 + 2$.
    However, in this case $P \subset E_{k_0 + 2}$ so that $u_{\frac{\beta}{2} + 1}$ cannot be a blocking vertex.
    In particular, if $L(P, k)$ does not contain all levels $\set{k_0 + 3, \dotsc, k - 1}$ then we can already obtain a $\beta$ approximation by Lemma \ref{lemma:sparse-apasp-approx}.

    Therefore, let $L_B(P)$ denote the blocking vertices of $P$.
    Since $k_0 + 3 \in L_B(P)$, we have either $u_{\beta/2}, v_{\beta/2} \in B(P)$.
    Without loss of generality, suppose $u_{\beta/2} \in B(P)$.
    
    In Line \ref{line:short-approx:depth-2-bfs}, since $\deg(u_{\beta/2}) < s_{k_0 + 2}$ and $P \subset E_{k_0 + 1}$, we compute $\hat{\delta}(u_{\beta/2}, v_{\beta/2 + 1}) = 2$.
    Then, if $u_{\beta/2}^* = r(u_{\beta/2}, D_{k_0 + 3})$, in Line \ref{line:short-approx:truncated-bfs-hitting-set}, we have $\hat{\delta}(u_{\beta/2}^*, v_{\beta/2 + 1}) \leq 3$.
    We claim that,
    \begin{equation*}
        \hat{\delta}_{k_0 + 3}(u_{\beta/2}^*, x) \leq \delta(u_{\beta/2}, x) + 1
    \end{equation*}
    for all $x \in P$.
    We have shown this already for $x \in \set{u_{\beta/2 + 1}, v_{\beta/2 + 1}}$.
    For $x \in \set{u, u_2, \dotsc, u_{\beta/2}}$, this follows simply from the assumption that all these vertices have level at least $k_0 + 3$ so that $P_{w, x} \subset E_{k_0 + 3}$.
    For $x \in \set{v_{\beta/2}, \dotsc, v_2, v}$, when executing $\dijkstra$ we take the edge $(u_{\beta/2}^*, v_{\beta/2 + 1})$ and the remaining edges of $P$ in $E_{k_0 + 3}$.

    This allows us to make the following claim stronger than Lemma \ref{lemma:block-approx-error}.
    For any $w, x \in P$ where $\level(w) \geq k_0 + 3$,
    \begin{equation}
        \hat{\delta}_{\level(w)}(w^*, x) \leq \delta(w, x) + 2 |L_B(P_{w, x})| - 1
        \label{eq:short-approx-tight-bound}
    \end{equation}
    where we have shown the base case for $\level(u_{\beta/2}) = k_0 + 3$.
    The inductive step will follow similarly.
    Let $j \geq k_0 + 3$ and $\level(w) = j$.
    Let $y = b(w, P_{w, x})$ be the blocking vertex so that $\level(y) < \level(w)$.
    Then, 
    \begin{align*}
        \hat{\delta}_{\level(w)}(w^*, x) &\leq \hat{\delta}_{\level(y)}(w^*, y^*) + 1 + \delta(y, x) \\
        &\leq \delta(w, y) + 2 |L_B(P_{y, w})| + 1 + \delta(y, x) \\
        &\leq \delta(w, x) + 2 |L_B(P_{w, x})| - 1
    \end{align*}
    where we have used $|L_B(P_{w, x})| > |L_B(P_{y, w})|$ as $\level(y) \in L_B(P_{w, x}) \setminus L_B(P_{y, w})$.
    Then, in the $k$-th iteration, let $w$ be the closest vertex to $v$ such that $\deg(w) \geq s_{k - 1}$ so that,
    \begin{equation*}
        \hat{\delta}(u, v) \leq \hat{\delta}_{k - 1}(u, w^*) + 1 + \delta(w, v) \leq \delta(u, v) + 2 |L_B(P_{w, u})| \leq \delta(u, v) + \beta
    \end{equation*}
    as $L_B(P_{w, u}) = \set{k_0 + 3, \dotsc, k - 2} \cup \set{\min(\level(u_{\beta/2 + 1}), \level(v_{\beta/2 + 1})}$ has size $\frac{\beta}{2}$.

    To conclude, if $P$ has length less than $\beta$, then $P \subset E_{k_0 + 3}$, so we obtain a $\beta$ approximation by Lemma \ref{lemma:sparse-apasp-approx}.
\end{proof}

Finally, we claim that all the estimates produced by Algorithm \ref{alg:short-additive-apasp} are feasible.

\begin{lemma}
    Suppose error event $E$ does not occur.
    Then, Algorithm \ref{alg:short-additive-apasp} obtains an approximation such that $\delta(u, v) \leq \hat{\delta}(u, v)$ for all $u, v \in V$.
    \label{lemma:short-approx-feasible}
\end{lemma}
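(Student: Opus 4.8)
\textbf{Proof proposal for Lemma \ref{lemma:short-approx-feasible}.}

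The plan is to argue, exactly as in Lemmas \ref{lemma:additive-feasible-estimates} and \ref{lemma:bounded-additive-feasible-estimates}, that every finite entry ever written into $\hat{\delta}$ is the length of some genuine walk in $G$, so that $\hat{\delta}(u,v) \geq \delta(u,v)$ is maintained as an invariant throughout the execution of \Cref{alg:short-additive-apasp}. The only subtlety here compared to the earlier feasibility lemmas is that the algorithm augments the distance estimates in several extra places — the depth-$2$ $\bfs$ in \Cref{line:short-approx:depth-2-bfs}, the constituency updates in \Cref{line:short-approx:truncated-bfs-hitting-set}, the $p(u, D_i)$ updates in \Cref{line:short-approx:x(u)-approx}, and the $C(u,i)$ updates in \Cref{line:short-approx:c(u)-approx} — and one must check each of these preserves the invariant. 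I would proceed line by line in the order the algorithm runs.

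First I would note the base case: immediately after initialization the only finite entries of $\hat{\delta}$ correspond to edges of $G$, which are trivially feasible, and if the error event forces a call to $\APSP(G)$ then the output is exactly correct. For Phase 1, the depth-$2$ $\bfs$ on $G_{k_0+1} = (V, E_{k_0+1})$ finds genuine paths of length at most $2$ in a subgraph of $G$, hence $\hat{\delta}(u,v) \geq \delta(G_{k_0+1}, u, v) \geq \delta(u,v)$. For Phase 2, the argument mirrors \Cref{lemma:2-approx-feasible}: every edge of $G_{j,w}$ is either an original edge of $G$ (those in $E_j$ and $E^*$), or an edge of the form $w \times V$ or $D_{j_1} \times D_{j_2}$ whose weight, if finite, was set in an earlier step as the length of an actual walk in $G$, so by the triangle inequality any path found by $\dijkstra$ on $G_{j,w}$ with weight function $\hat{\delta}$ corresponds to a walk in $G$ of the same or greater length. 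The constituency update in \Cref{line:short-approx:truncated-bfs-hitting-set} is feasible because for $u \in q(w, D_j)$ the edge $(u,w) \in E$, so $1 + \hat{\delta}(u,v) \geq 1 + \delta(u,v) \geq \delta(w,v)$. The two neighborhood updates are feasible because $\delta(u, p(u,D_i)) \leq r_i$ and $w \in C(u,i) \subset N(u, r_i)$ imply $\delta(u, w) \leq r_i$, so $r_i + \hat{\delta}(\cdot, v) \geq \delta(u, \cdot) + \delta(\cdot, v) \geq \delta(u,v)$ again by the triangle inequality and induction.

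I do not expect any real obstacle here — this is the routine ``every estimate witnesses a walk'' bookkeeping that accompanies every algorithm in the paper. The only mild care needed is to confirm that the edge weights on the auxiliary edges $D_{j_1} \times D_{j_2}$ (and $w \times V$) in $G_{j,w}$ are indeed always populated from previously computed feasible estimates and never from an over-optimistic source; but since $\hat{\delta}$ is only ever decreased to values that were themselves feasible (by the inductive invariant), a $\dijkstra$ run on top of it cannot produce anything below the true distance. Hence the invariant is preserved at every step and $\delta(u,v) \leq \hat{\delta}(u,v)$ for all $u,v \in V$ at termination, which is the claim. $\qed$
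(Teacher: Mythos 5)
Your proposal is correct and follows essentially the same route as the paper's own proof: a line-by-line check that every finite entry of $\hat{\delta}$ is witnessed by a walk in $G$, using the triangle inequality for the constituency update in \Cref{line:short-approx:truncated-bfs-hitting-set} and for the $p(u,D_i)$ and $C(u,i)$ updates via $p(u,D_i), C(u,i) \subset N(u,r_i)$. No gaps; the remark about the $\APSP(G)$ fallback is harmless but unnecessary since the lemma already assumes the error event $E$ does not occur.
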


\begin{proof}
    We analyze each instance where the distance estimates are updated.
    In Phase 0, we initialize the distance estimates to the adjacency matrix.
    In Phase 1, we compute length 2 paths in a sub-graph $G_{k_0 + 1} \subset G$.
    Next, the distance estimates are updated in Phase 2 in computing $\dijkstra(G_{j, w}, w, \hat{\delta})$ for all $w \in D_j$.
    Note that this is identical to Algorithm $\denseAPASP$, and produces feasible distance estimates.
    
    Line \ref{line:2-approx:truncated-bfs-hitting-set} does not violate feasibility as if $u \in q(w, D_j)$ then $(u, w) \in E$ so that,
    \begin{equation*}
        \hat{\delta}(w, v) \geq 1 + \hat{\delta}(u, v) \geq 1 + \delta(u, v) \geq \delta(w, v)
    \end{equation*}
    by the triangle inequality, where we have also used that the previous estimates were feasible.
    
    Finally, since $p(u, D_i) \in N(u, r_i)$ and $C(u, i) \subset N(u, r_i)$, the final distance estimates are feasible as for any $z \in N(u, r_i)$, the updated distance estimate,
    \begin{equation*}
        \hat{\delta}(u, v) = r_i + \hat{\delta}(z, v) \geq r_i + \delta(z, v) \geq \delta(u, v)
    \end{equation*}
    where in the first inequality we used that the previous estimates are feasible and in the second we used the triangle inequality.
\end{proof}

\subsubsection*{Proof of Theorem \ref{thm:short-additive-approx}}

To prove the correctness of Algorithm \ref{alg:short-additive-apasp}, it suffices to prove Lemma \ref{lemma:short-approx-correctness}.
Indeed, whenever $E$ occurs, we compute an exact $\APSP$ solution, which is obviously correct.
If $E$ does not occur, Lemma \ref{lemma:short-approx-feasible} guarantees that each estimate is feasible, and is produced by some path in $G$.

\begin{proof}
    Let $u, v$ be a pair of vertices with shortest path $P$ length at most $\beta + 1$.
    We proceed by case analysis.

    \paragraph{Case 1: $P \subset E_{k_0 + 1}$}
    
    Suppose the assumptions of Lemma \ref{lemma:k-1-approx-distance} are not met, otherwise we obtain an accurate approximation.
    That is, for some $i$, $\level(u_{r_i}) \leq i + 1$.
    In particular, $u^* = r(u, D_{i + 1})$ must be in $N(u, r_i) \cap D_{i + 1}$.
    Then, if $|N(u, r_i) \cap D_{i + 1}| \leq 12 s_{k - 1} \log n$, we obtain a correct estimate by Lemma \ref{lemma:r_i-neighborhood-small}.
    Otherwise, since $E(u, i)$ does not occur, $|N(u, i)| \geq s_i$, and we obtain a correct estimate by Lemma \ref{lemma:r_i-neighborhood-large}.

    \paragraph{Case 2: $P \not\subset E_{k_0 + 1}$}

    We begin with the special case $\beta = 4$, generalizing to $\beta \geq 6$ later.

    \paragraph{Warm up: $\beta = 4$}

    Note $k = 7, k_0 = 2$.
    Let $z$ be the vertex of maximum degree.
    Let $w = b(u, P)$ be the vertex closest to $v$ such that $\deg(w) \geq s_{k_0 + 4} = s_{6}$.
    Since $P = P_{u, w} \cdot P_{w, v}$, the sub-path $P_{u, w}$ must contain the vertex $z$.
    Let $z^* = r(z, D_{\level(z)}), w^* = r(w, D_{k_0 + 4})$.
    Since $P \subset E_{\level(z)}$, we have $\hat{\delta}_{\level(z)}(z^*, u) \leq \delta(z, u) + 1$ and $\hat{\delta}_{\level(z)}(z^*, w^*) \leq \delta(z, w) + 2$.
    Furthermore, since $\level(z) + (k_0 + 4) + (k_0 + 5) \leq 3 k_0 + 9 = 2k + 1$, in the $k_0 + 5$-th iteration,
    \begin{equation*}
        \hat{\delta}(u, v) \leq \hat{\delta}_{\level(z)}(u, z^*) + \hat{\delta}_{\level(z)}(z^*, w^*) + 1 + \delta(w, v) \leq \delta(u, v) + 4
    \end{equation*}

    as desired.

    \paragraph{Generalization to $\beta \geq 6$}

    Suppose $\level(v) \leq \level(u)$.
    Recall from Lemma \ref{lemma:block-approx-error} the blocking vertices $B(P) = \set{x_0, x_1, \dotsc, x_t}$ and levels $L_B(P)$ of path $P$.
    
    Let $a = \min_{\level(x_i) \leq k_0 + 5} i$ be the minimum index of an element in the blocking set such that $\level(x_i) \leq k_0 + 5$.
    Since $k \geq \level(x_1) > \level(x_2) > \dotsc > \level(x_t) \geq 1$, we can upper bound,
    \begin{equation*}
        a \leq k - (k_0 + 5) = \frac{\beta}{2} - 2
    \end{equation*}
    Since $P$ is not contained in $E_{k_0 + 1}$, we can assume $a$ exists and $a \geq 1$.
    
    Let $x_a \in B(P)$ be the corresponding vertex in $B(P)$.
    Since $P$ has an edge not in $E_{k_0 + 1}$, the last blocking vertex of minimum level must have $\level(x_t) \leq k_0$.
    Let $v^* = r\left(v, D_{\level(v)}\right)$ for any vertex $v$.
    Since $P \subset E_{\level(x_t)}$, we again have $\hat{\delta}_{\level(x_t)}(x_t^*, x_a^*) \leq \delta(x_t, x_a) + 2$ and $\hat{\delta}_{\level(x_t)}(x_t^*, x_{a+1}^*) \leq \delta(x_t, x_{a + 1}) + 2$.
    
    Consider the $\level(x_a)$-th iteration.
    The edges $D_{\level(x_t)} \times D_{\level(x_{a + 1})}$ are in $G_{\level(x_a), x_a}$ as,
    \begin{equation*}
        \level(z) + \level(x_{a + 1}) + \level(x_{a}) \leq k_0 + (k_0 + 4) + (k_0 + 5) = 3 k_0 + 9 = 3 \left( k - \frac{\beta}{2} - 3 \right) + 9 = 2 k + 1
    \end{equation*}
    Then, since $x_t \in P_{x_a, x_{a + 1}}$,
    \begin{align*}
        \hat{\delta}_{\level(x_a)}(x_a^*, x_{a - 1}) &\leq \hat{\delta}_{\level(x_t)}(x_a^*, x_t^*) + \hat{\delta}_{\level(x_t)}(x_t^*, x_{a + 1}^*) + 1 + \delta(x_{a + 1}, x_{a - 1}) \\ 
        &\leq \delta(x_a, x_t) + \delta(x_t, x_{a + 1}) + \delta(x_{a + 1}, x_{a - 1}) + 5 \\ 
        &\leq \delta(x_a, x_{a - 1}) + 5
    \end{align*}

    Then, following a similar argument to the inductive step of Lemma \ref{lemma:block-approx-error}, we claim the following for all $1 \leq j \leq a$.
    \begin{equation*}
        \hat{\delta}_{\level(x_j)}(x_j^*, x_{j - 1}) \leq \delta(x_j, x_{j - 1}) + 2(2 + (a - j)) + 1
    \end{equation*}
    where we have established the base case $j = a$ above.
    We now proceed by induction for $j < a$.
    Consider an execution of $\dijkstra$ from $x_j^*$ in $G_{\level(x_j), x_j^*}$.
    Let $x_{j + 1}$ be the blocking vertex from the previous iteration.
    We take the edges $(x_j^*, x_{j + 1}^*)$, $(x_{j + 1}^*, x_{j + 1}) \in E^*$, and the remaining edges in $E_{\level(x_j)}$.
    By induction, we have,
    \begin{align*}
        \hat{\delta}_{\level(x_j)} (x_j^*, x_{j - 1}) &\leq \hat{\delta}_{\level(x_{j + 1})} (x_j^*, x_{j + 1}^*) + 1 + \delta(x_{j + 1}, x_{j - 1}) \\
        &\leq \delta(x_j, x_{j + 1}) +  2 (2 + (a - (j + 1))) + 3 + \delta(x_{j + 1}, x_{j - 1}) \\
        &\leq \delta(x_{j}, x_{j - 1}) + 2 (2 + (a - j)) + 1
    \end{align*}

    Thus, we have,
    \begin{equation*}
        \hat{\delta}_{\level(v)}(v^*, u) = \hat{\delta}_{\level(x_1)} (x_1^*, x_0) \leq \delta(v, u) + 2 (a + 1)  + 1
    \end{equation*}
    
    From $u$, we take the edge $(u, v^*)$, followed by $(v^*, v) \in E^*$ so that,
    \begin{align*}
        \hat{\delta}(u, v) &\leq \hat{\delta}_{\level(v)}(u, v^*) + 1 \\
        &\leq \delta(u, v) + 2 (a + 1) + 2 \\
        &\leq \delta(u, v) + \beta
    \end{align*}
\end{proof}

Thus, to prove Theorem \ref{thm:short-additive-approx}, it only remains to analyze the performance of Algorithm \ref{alg:short-additive-apasp}.

\begin{proof}
    (Time Complexity)

    We separately analyze the complexity of each phase.
    By Lemma \ref{lemma:short-approx-error-case}, $E$ occurs with probability $\bigO{\frac{1}{n^2}}$.
    Since in this case Algorithm \ref{alg:short-additive-apasp} requires $\tO{n^3}$ time, this contributes at most $\tO{n}$ to the expected run-time.
    Thus, in the following assume that $E$ does not occur.

    \paragraph{Phase 0}

    $\rDecompose$ requires $O(m + n) = O(n^2)$ time to sample each vertex set randomly and construct the edge sets $E_i$.
    Verifying that $D_i$ is of the appropriate size and dominates $|N(v)| \geq s_i$ requires time $\tO{n^2}$.

    \paragraph{Phase 1}

    By \Cref{lemma:x-c-computation}, computing the sets $X(u, i), C(u, i)$ require time at most $\bigtO{n^{2 + \frac{1}{k}}}$.
    By Lemma \ref{lemma:hitting-set}, augmenting $D_i$ in Line \ref{line:short-approx:di-augment} requires time $\tO{n^2}$.

    Line \ref{line:short-approx:depth-2-bfs} requires time,
    \begin{equation*}
        \bigtO{n s_{k_0 + 2} s_{k_0}} = \bigtO{n^{3 - \frac{2 k_0 + 2}{k}}} = \bigtO{n^{2 + \frac{1}{k}}}
    \end{equation*}
    whenever $\beta \geq 4$ as shown in \Cref{lemma:x-c-computation}.
    
    \paragraph{Phase 2}

    First, we bound the complexity of each call to $\dijkstra$ as in the proof of Lemma \ref{lemma:dhz-apasp} for $\denseAPASP$.
    Fix some $j$.
    Then, $|E_j| \leq n s_{j - 1} = n^{2 - \frac{j - 1}{k}}$ and $|D_{j_1} \times D_{j_2}| = \bigtO{\frac{n}{s_{j_1}} \times \frac{n}{s_{j_2}}} = \bigtO{n^{\frac{j_1 + j_2}{k}}} = \bigtO{n^{2 - \frac{j - 1}{k}}}$ as $j_1 + j_2 \leq 2k + 1 - j$.
    Note for any fixed $w$ that $(w \times V) \cup E^*$ has size $\tO{n}$.
    Summing over $|D_j| = \bigtO{\frac{n}{s_{j}}} = \bigtO{n^{\frac{j}{k}}}$, all invocations of $\dijkstra$ require time $\bigtO{n^{2 + \frac{1}{k}}}$.

    Line \ref{line:short-approx:truncated-bfs-hitting-set} requires time $\tO{n^2}$ as every vertex $u$ is in at most one $q(w, D_j)$ for any fixed $j$.

    Finally, Line \ref{line:short-approx:x(u)-approx} requires time $\tO{n^2}$ over all $u, v, i$.
    By the bound on $|C(u, i)| = \tO{n^{1/k}}$, Line \ref{line:short-approx:c(u)-approx} requires time $\bigtO{n^{2 + \frac{1}{k}}}$.
    In particular, this is the overall time bound when $E$ does not occur, giving the desired result by substituting $k = \frac{3 \beta}{2} + 1$.
\end{proof}

We have the following Corollary.

Recall the following result of Roditty \cite{roditty2023newapasp}.

\begin{lemma}[Theorem 5.1 of Roditty \cite{roditty2023newapasp}]
Let $\multapproxlimit' \geq 6$ be an even integer.
There is an algorithm that computes an additive $\multapproxlimit' - 2$ approximation in $\tO{n^{2 - \frac{2}{\multapproxlimit' + 2}} m^{\frac{2}{\multapproxlimit' + 2}}}$ expected time, for every $u, v \in V$ for which $\delta(u, v) \leq \multapproxlimit'$.
\label{lemma:roditty-short-sparse}
\end{lemma}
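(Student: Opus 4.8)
The plan is to establish this via a sparse-graph analogue of \Cref{alg:short-additive-apasp}, whose dense counterpart is \Cref{thm:short-additive-approx}. Write $\multapproxlimit' = 2k-2$, so the goal is a $+(\multapproxlimit'-2)=+(2k-4)$ approximation on pairs at distance at most $\multapproxlimit'$, running in time $\tO{n^{2-1/k}m^{1/k}}=\tO{n^{2-2/(\multapproxlimit'+2)}m^{2/(\multapproxlimit'+2)}}$, which is exactly the running time of $\sparseAPASP(G,\multapproxlimit')$ from \Cref{lemma:dhz-apasp} with $\beta=\multapproxlimit'$. Accordingly I would run $\decompose$ (or $\rDecompose$, with the $\tO{1/n^2}$-probability fallback to an exact $\APSP$ computation exactly as in \Cref{lemma:short-approx-error-case}) with $k=(\multapproxlimit'+2)/2$ levels and the \emph{sparse} degree thresholds $s_i=(m/n)^{1-i/k}$, producing $D_1\subset\dots\subset D_k=V$, edge sets $E_i$, and $E^*$. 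The first component of the algorithm is the plain Dor--Halperin--Zwick sparse procedure: for $1\le j\le k$ and $w\in D_j$, run $\dijkstra$ from $w$ on $(V,E_j\cup E^*\cup (w\times V))$ with current estimates as weights. Unlike \Cref{alg:short-additive-apasp}, we do \emph{not} add the shortcut edge sets $D_{j_1}\times D_{j_2}$: with sparse thresholds these have size $\tO{(n/s_{j_1})(n/s_{j_2})}$, which is too large. Standard counting ($|D_j|=\tO{n/s_j}$, $|E_j|=O(ns_{j-1})$, $|E^*|=\tO{n}$) gives that each iteration costs $\tO{n^2 s_{j-1}/s_j}=\tO{n^{2-1/k}m^{1/k}}$ (the $j=1$ iteration is bounded separately by $|D_1|m=\tO{n^{2-1/k}m^{1/k}}$), so the whole first component meets the target time. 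On its own it yields only a $+\multapproxlimit'$ approximation, via \Cref{lemma:sparse-apasp-approx} and \Cref{lemma:block-approx-error}, since a path can touch up to $k-1$ distinct levels.

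The second component shaves this to $+(\multapproxlimit'-2)$ on short paths, following \Cref{alg:short-additive-apasp}: choose a cutoff index $k_0$ and, for each vertex $u$ and each radius $r$ in a window $2\le r\le k-2$ (paired with the level index $i=i(r)$ for which $r+(i-k_0-1)=k-2$), compute by truncated $\bfs$ from $u$ the ball $N(u,r)$, stopping as soon as $s_i$ distinct vertices appear; if the ball is large, augment $D_i$ by a hitting set of these balls (via \Cref{lemma:hitting-set}, which then guarantees $\delta(u,p(u,D_i))\le r$); if the ball is small, keep the set $N(u,r)\cap D_{i+1}$, which is $\tO{s_{k-1}}$ with high probability; also run a depth-$2$ $\bfs$ from every low-degree vertex to bypass a central edge. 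As in \Cref{lemma:x-c-computation}, one such truncated $\bfs$ costs $O(s_i^2)$, because no vertex of degree $\ge s_i$ is ever fully expanded, so the ball computations cost $\tO{n\,s_{k_0+1}^2}$ and the depth-$2$ searches $\tO{n\,s_{k_0+2}s_{k_0}}$. Using $m/n\le n$, both are $O(n^{2-1/k}m^{1/k})$ provided $k_0\gtrsim (k-3)/2$, and this is precisely the range the error analysis will also require, so the two constraints are compatible.

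For correctness on a shortest path $P$ with $\delta(u,v)\le\multapproxlimit'$ I would follow the case split of the proof of \Cref{thm:short-additive-approx}. If $P\subset E_{k_0+1}$: either every interior vertex $u_r,v_r$ (at distance $r$ from an endpoint, $r$ in the window) has level exceeding the paired index $i(r)+1$, in which case $P$ meets at most $k-2$ relevant levels and the plain $\sparseAPASP$ iterations already give error $\le 2(k-2)=\multapproxlimit'-2$ by \Cref{lemma:block-approx-error} (the sparse analogue of \Cref{lemma:k-1-approx-distance}); or some $u_r$ has small level $\le i(r)+1$, and then the ball of radius $r$ about $u$ either meets $D_{i(r)}$, yielding $p(u,D_{i(r)})$ at distance $\le r$, or exhibits the small enumerable set $N(u,r)\cap D_{i(r)+1}$, and from such a dominator the remainder of $P$ is covered with error $\le 2(i(r)-k_0-1)$, for a total of $2(r+i(r)-k_0-1)=2(k-2)=\multapproxlimit'-2$ (the sparse analogues of Lemmas \ref{lemma:r_i-neighborhood-large}--\ref{lemma:r_i-neighborhood-small}). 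The remaining possibility, $P\not\subset E_{k_0+1}$, means $P$ carries an edge with both endpoints of degree $\ge s_{k_0}$; since $\delta(u,v)\le\multapproxlimit'$ this edge lies within distance $\multapproxlimit'/2$ of one endpoint, so a threshold-matched ball computation about that endpoint reaches a $D$-dominator near a maximum-degree vertex $z$ of $P$, and because $P$ then lies entirely in the low-indexed set $E_{\level(z)}$ the estimate from that dominator is near exact, losing only a constant additive term, which is $\le\multapproxlimit'-2$ for $\multapproxlimit'\ge 6$. Pairs at distance $\le 2$ are handled directly (adjacency matrix and the depth-$2$ $\bfs$), and covering $\delta(u,v)=\multapproxlimit'$ rather than $\multapproxlimit'-1$ is obtained by widening the radius window by one, tracked so as not to exceed the time bound.

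The main obstacle is the combinatorial re-balancing. In \Cref{thm:short-additive-approx} the dense schedule obeys the identity $k-k_0=\beta/2+3$, which no longer holds here — with $k=(\multapproxlimit'+2)/2$ one cannot take $k_0=\multapproxlimit'-4$ — so the chain of Lemmas \ref{lemma:r_i-neighborhood-large}--\ref{lemma:k-1-approx-distance} must be re-derived from scratch for the new thresholds, and in particular one must check that the worst-case short path (the one zig-zagging through the maximum number of levels) always triggers one of the auxiliary ball or depth-$2$ computations, including in the $P\not\subset E_{k_0+1}$ case where the relevant ball computation may have to extend down to index $k_0$. Pinning down $k_0$ and the exact radius window so that the $r$-versus-remaining-levels budget closes at exactly $(\multapproxlimit'-2)/2$ while simultaneously keeping $n\,s_{k_0+1}^2=O(n^{2-1/k}m^{1/k})$ is the delicate point; once the schedule is fixed, the feasibility claim $\delta(u,v)\le\hat{\delta}(u,v)$ and the time accounting are routine (the feasibility follows as in \Cref{lemma:short-approx-feasible}, since every estimate is witnessed by a concrete path in $G$).
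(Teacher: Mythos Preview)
The paper does not prove this lemma at all: it is quoted verbatim as ``Theorem 5.1 of Roditty \cite{roditty2023newapasp}'' and used as a black box in the proof of \Cref{cor:2-approx-d-geq-k}. There is nothing in the paper to compare your proposal against.

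Your proposal is therefore an attempt to \emph{rederive} Roditty's result by porting the paper's dense-case algorithm (\Cref{alg:short-additive-apasp}, \Cref{thm:short-additive-approx}) to the sparse threshold schedule $s_i=(m/n)^{1-i/k}$. You correctly identify the obstruction yourself: the dense proof leans on the relation $k-k_0=\beta/2+3$ with $k=(3\beta+2)/2$ levels, whereas here you are forced to work with only $k=(\multapproxlimit'+2)/2$ levels, so the window of radii, the choice of $k_0$, and the level-counting in the analogues of Lemmas \ref{lemma:r_i-neighborhood-large}--\ref{lemma:k-1-approx-distance} all have to be redone. You have not actually carried out that rebalancing; the proposal is a plan rather than a proof, and the ``delicate point'' you flag is precisely the substance of the argument. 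In particular, the dense Case~2 ($P\not\subset E_{k_0+1}$) relies on the shortcut edge sets $D_{j_1}\times D_{j_2}$ with $j_1+j_2+j\le 2k+1$, which you correctly drop in the sparse setting; your replacement (``a threshold-matched ball computation about that endpoint reaches a $D$-dominator near a maximum-degree vertex'') is asserted but not verified, and it is not obvious that the required ball fits within the $O(s_i^2)$ time budget when the high-degree vertex can sit anywhere on a length-$\multapproxlimit'$ path. Whether Roditty's actual proof proceeds along these lines or via a different mechanism is something you would have to check in \cite{roditty2023newapasp}.
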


\begin{restatable}{corollary}{twoapproxgeqk}
    \label{cor:2-approx-d-geq-k}
    Let $\multapproxlimit \geq 4$ be an even integer.
    Then, we can compute a $(2, 0)$-approximation for distances $\delta(u, v) \geq \multapproxlimit$ combinatorially in expected time
    \begin{equation*}
        \bigtO{\min \left(n^{2 - \frac{2}{\multapproxlimit + 4}} m^{\frac{2}{\multapproxlimit + 4}}, n^{2 + \frac{1}{2(\multapproxlimit - 1)}}, n^{2 + \frac{2}{3\multapproxlimit + 2}}\right)}
    \end{equation*}
    In particular, we output $\hat{\delta}$ such that $\delta(u, v) \leq \hat{\delta}(u, v)$ for all $u, v$ and $\hat{\delta}(u, v) \leq 2 \delta(u, v)$ whenever $\delta(u, v) \geq \multapproxlimit$.
\end{restatable}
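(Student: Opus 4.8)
\textbf{Proof proposal for Corollary \ref{cor:2-approx-d-geq-k}.}
The plan is to obtain a $(2,0)$-approximation for pairs with $\delta(u,v)\geq \multapproxlimit$ by assembling three independent algorithms and taking the entrywise minimum of their distance estimates, then arguing that for every pair at distance at least $\multapproxlimit$, at least one of them certifies a value $\leq 2\delta(u,v)$. Since a $+\beta$-additive approximation with $\beta \leq \multapproxlimit$ is automatically a $(2,0)$-approximation on pairs at distance $\geq \multapproxlimit$ (because $\delta+\beta \leq \delta + \delta = 2\delta$), it suffices to produce a $+\multapproxlimit$-additive estimate on those pairs, or better, to directly produce a $(2,0)$-estimate on long paths. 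The three building blocks are: (i) $\sparseAPASP$ from \Cref{lemma:dhz-apasp} run with additive parameter $\beta = \multapproxlimit$, giving the $\tO{n^{2-\frac{2}{\multapproxlimit+2}} m^{\frac{2}{\multapproxlimit+2}}}$ term — but we can do better on the sparse side using Roditty's short-path sparse result (\Cref{lemma:roditty-short-sparse}); (ii) $\longMultiplicativeAPASP$ (\Cref{alg:mult-approx-bk}, \Cref{thm:mult-approx-bk}), which gives a $(2,0)$-approximation on all paths of length $\geq \multapproxlimit$ in expected time $\tO{n^{2+\frac{1}{2(\multapproxlimit-1)}}}$; and (iii) $\denseAPASP$ run with additive parameter $\beta = \multapproxlimit$, giving $\tO{n^{2+\frac{2}{3\multapproxlimit-2}}}$ — but again we improve the dense term to $\tO{n^{2+\frac{2}{3\multapproxlimit+2}}}$ using $\shortAdditiveAPASP$ (\Cref{alg:short-additive-apasp}, \Cref{thm:short-additive-approx}) for the bounded-length regime combined with $\longMultiplicativeAPASP$ or $\denseAPASP$ for genuinely long paths.

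More concretely, I would split into two distance regimes with a threshold $C$ chosen as a large constant (depending only on $\multapproxlimit$). For pairs with $\multapproxlimit \leq \delta(u,v) \leq C$, I invoke $\shortAdditiveAPASP(G, \multapproxlimit)$, which by \Cref{thm:short-additive-approx} computes in expected time $\tO{n^{2+\frac{2}{3\multapproxlimit+2}}}$ an estimate $\hat\delta$ with $\hat\delta(u,v)\leq \delta(u,v)+\multapproxlimit$ whenever $\delta(u,v)\leq \multapproxlimit+1$ — wait, this only covers $\delta \leq \multapproxlimit + 1$, so for the full window $[\multapproxlimit, C]$ I instead use $\boundedAdditiveAPASP(G, \multapproxlimit, C)$ (\Cref{alg:bounded-additive-apasp}, \Cref{thm:gen-bounded-additive-apasp}), which gives a $+\multapproxlimit$ estimate on all $\delta(u,v)\leq C$; however its running time $\tO{n^{2+\frac{2x}{\multapproxlimit+2}}}$ with $x$ solving the FMM equation is not obviously bounded by $\tO{n^{2+\frac{2}{3\multapproxlimit+2}}}$, so to stay combinatorial and match the stated bound I should use Roditty's $\sparseAPASP$-style short-path result and the combinatorial $\denseAPASP(G,\multapproxlimit)$ directly. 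The cleanest route: for $\delta(u,v) \geq C$ invoke $\denseAPASP(G, \multapproxlimit)$ (or $\longMultiplicativeAPASP$) which is correct for long paths and, since $C$ is constant, does not dominate; the $n^{2+\frac{2}{3\multapproxlimit+2}}$ term in the final bound must come from the refined dense analysis in \Cref{sec:application-mult-approx-long-path}, i.e. running $\shortAdditiveAPASP(G, \multapproxlimit)$ for paths up to length $\multapproxlimit + 1$ and $\longMultiplicativeAPASP(G, \multapproxlimit+2)$ for all longer paths — note that $\longMultiplicativeAPASP$ with parameter $\multapproxlimit+2$ handles all paths of length $\geq \multapproxlimit + 2$ in expected time $\tO{n^{2+\frac{1}{2(\multapproxlimit+1)}}}$, which is faster than the $\multapproxlimit$-parameter version, and the remaining gap (paths of length exactly $\multapproxlimit$ and $\multapproxlimit+1$) is filled by $\shortAdditiveAPASP(G,\multapproxlimit)$.

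So the final recipe is: run (a) $\shortAdditiveAPASP(G, \multapproxlimit)$, correct for all $\delta(u,v)\leq \multapproxlimit+1$ in expected time $\tO{n^{2+\frac{2}{3\multapproxlimit+2}}}$; (b) $\longMultiplicativeAPASP(G, \multapproxlimit+2)$, correct for all $\delta(u,v)\geq \multapproxlimit+2$ in expected time $\tO{n^{2+\frac{1}{2(\multapproxlimit+1)}}} = \tO{n^{2+\frac{1}{2(\multapproxlimit-1)}-\Theta(1/\multapproxlimit^2)}}$, hence absorbed into $\tO{n^{2+\frac{1}{2(\multapproxlimit-1)}}}$; and (c) for the sparse side, Roditty's $\sparseAPASP$-type routine with $\multapproxlimit' = \multapproxlimit + 2$ from \Cref{lemma:roditty-short-sparse} giving a $+\multapproxlimit$ approximation on pairs with $\delta(u,v)\leq \multapproxlimit+2$ in $\tO{n^{2-\frac{2}{\multapproxlimit+4}}m^{\frac{2}{\multapproxlimit+4}}}$ time, combined with $\sparseAPASP(G,\multapproxlimit)$ for longer pairs. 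Take the entrywise minimum of all estimates. Correctness: feasibility $\delta(u,v)\leq\hat\delta(u,v)$ holds because every component algorithm produces feasible estimates (paths witnessed in $G$); and for any pair with $\delta(u,v)\geq\multapproxlimit$, if $\delta(u,v)\in\{\multapproxlimit,\multapproxlimit+1\}$ then (a) (or (c) on the sparse side) gives $\hat\delta\leq\delta+\multapproxlimit\leq 2\delta$, while if $\delta(u,v)\geq\multapproxlimit+2$ then (b) gives $\hat\delta\leq 2\delta$ directly. The running time is the sum of the three, i.e. $\tO{\min}$ — actually each of the three terms in the $\min$ corresponds to one of the sparse/combinatorial-FMM-free/dense strategies, and we output whichever is applicable, so the bound is the minimum of the three.

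The main obstacle I anticipate is the bookkeeping of \emph{which} additive parameter to feed each subroutine so that (1) the additive error stays $\leq\multapproxlimit$ on the relevant distance window and (2) the running-time exponents come out \emph{exactly} as $\frac{2}{\multapproxlimit+4}$ (sparse), $\frac{1}{2(\multapproxlimit-1)}$ (combinatorial long-path), and $\frac{2}{3\multapproxlimit+2}$ (dense short-path) rather than slightly worse. In particular, matching Roditty's sparse exponent $\frac{2}{\multapproxlimit+4}$ requires using \Cref{lemma:roditty-short-sparse} with the shift $\multapproxlimit'=\multapproxlimit+2$ (so that $\multapproxlimit'-2=\multapproxlimit$ is the additive error and $\multapproxlimit'+2=\multapproxlimit+4$ is in the exponent), and one must check that a $+\multapproxlimit$ estimate on pairs with $\delta\leq\multapproxlimit+2$ plus a $(2,0)$-estimate on pairs with $\delta\geq\multapproxlimit+2$ indeed cover everything — this is where I would be most careful, since a $+\multapproxlimit$ estimate is only a $(2,0)$-estimate when $\delta\geq\multapproxlimit$, so the windows $[\multapproxlimit,\multapproxlimit+2]$ and $[\multapproxlimit+2,\infty)$ overlap precisely at the right place. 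Everything else is a direct citation of \Cref{lemma:dhz-apasp}, \Cref{thm:mult-approx-bk}, \Cref{thm:short-additive-approx}, and \Cref{lemma:roditty-short-sparse}, together with the trivial observation $\delta+\multapproxlimit\leq 2\delta$ for $\delta\geq\multapproxlimit$.
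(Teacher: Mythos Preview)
Your approach is the same as the paper's---assemble the same four ingredients (\Cref{lemma:roditty-short-sparse}, $\sparseAPASP$ from \Cref{lemma:dhz-apasp}, \Cref{thm:mult-approx-bk}, and \Cref{thm:short-additive-approx}), and for each of the three terms in the $\min$ give a standalone strategy that achieves that bound on all pairs with $\delta\geq\multapproxlimit$. But two of your parameter choices slip, and one of them causes a genuine gap.

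\emph{Sparse term.} In your recipe (c) you pair Roditty's routine at $\multapproxlimit'=\multapproxlimit+2$ with $\sparseAPASP(G,\multapproxlimit)$. The latter runs in $\tO{n^{2-\frac{2}{\multapproxlimit+2}}m^{\frac{2}{\multapproxlimit+2}}}$, which is \emph{larger} than the stated first term $\tO{n^{2-\frac{2}{\multapproxlimit+4}}m^{\frac{2}{\multapproxlimit+4}}}$. The paper instead calls $\sparseAPASP(G,\multapproxlimit+2)$: this gives a $+(\multapproxlimit+2)$ estimate (hence $(2,0)$ for $\delta\geq\multapproxlimit+2$) and its running time matches Roditty's exactly, so the sum is the claimed first term.

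\emph{Middle term.} Your ``final recipe'' uses $\longMultiplicativeAPASP(G,\multapproxlimit+2)$, which only covers $\delta\geq\multapproxlimit+2$, and then leans on $\shortAdditiveAPASP(G,\multapproxlimit)$ for $\delta\in\{\multapproxlimit,\multapproxlimit+1\}$. But $\shortAdditiveAPASP(G,\multapproxlimit)$ costs $\tO{n^{2+\frac{2}{3\multapproxlimit+2}}}$, and for every even $\multapproxlimit\geq 8$ one has $\frac{2}{3\multapproxlimit+2}>\frac{1}{2(\multapproxlimit-1)}$, so this combined recipe overshoots the middle term. The fix is exactly what you wrote earlier as (ii): for the middle term, run $\longMultiplicativeAPASP(G,\multapproxlimit)$ \emph{alone}; by \Cref{thm:mult-approx-bk} it already handles every $\delta\geq\multapproxlimit$ in $\tO{n^{2+\frac{1}{2(\multapproxlimit-1)}}}$.

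\emph{Dense term.} Your combination $\shortAdditiveAPASP(G,\multapproxlimit)+\longMultiplicativeAPASP(G,\multapproxlimit+2)$ is correct and actually slightly cheaper than the paper's choice of $\denseAPASP(G,\multapproxlimit+2)$ for the long-path half; either way the cost is dominated by $\tO{n^{2+\frac{2}{3\multapproxlimit+2}}}$.

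So the structure of your argument is right; just keep the three strategies cleanly separated (choose one based on which term is smallest) and shift the two parameters above.
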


\begin{proof}
    We combine \Cref{thm:mult-approx-bk}, \Cref{thm:short-additive-approx}, \Cref{lemma:dhz-apasp} and \Cref{lemma:roditty-short-sparse}.
    If $n^{2 - \frac{2}{\multapproxlimit + 4}} m^{\frac{2}{\multapproxlimit + 4}}$ is the minimum term, we invoke the algorithm of \Cref{lemma:roditty-short-sparse} with $\multapproxlimit' = \multapproxlimit + 2$ and the algorithm $\sparseAPASP$ of $\Cref{lemma:dhz-apasp}$ with $\beta = \multapproxlimit + 2$.
    For paths of length $\multapproxlimit \leq \delta(u, v) \leq \multapproxlimit + 2$, we obtain a $+\multapproxlimit$ approximation from \Cref{lemma:roditty-short-sparse}.
    For paths of length $\delta(u, v) \geq \multapproxlimit + 2$, we obtain a $+(\multapproxlimit+2)$ approximation from $\sparseAPASP$.
    In either case, we obtain a $(2, 0)$ approximation for all $\delta(u, v) \geq \multapproxlimit$.
    The overall running time is,
    \begin{equation*}
        \tO{n^{2 - \frac{2}{\multapproxlimit + 4}} m^{\frac{2}{\multapproxlimit + 4}}}
    \end{equation*}

    On the other hand, if $n^{2 + \frac{1}{2(k - 1)}}$ is the minimum term, we invoke \Cref{alg:mult-approx-bk} and obtain a $(2, 0)$ approximation for paths of length at least $k$ following \Cref{thm:mult-approx-bk}.

    Finally, if $n^{2 + \frac{2}{3\multapproxlimit + 2}}$ is the minimum term, we call \Cref{thm:short-additive-approx} with $\multapproxlimit' = \multapproxlimit + 1$ and $\denseAPASP$ with $\beta = \multapproxlimit + 2$.
    For any path of length at least $\multapproxlimit + 2$, the estimate from $\denseAPASP$ is a $(2, 0)$-approximation.
    For paths of length $\multapproxlimit \leq \delta(u, v) \leq \multapproxlimit + 1$, the estimate from $\shortAdditiveAPASP$ is a $(2, 0)$-approximation.
    The overall computation time can be bounded by,
    \begin{equation*}
        \bigtO{n^{2 + \frac{2}{3 \multapproxlimit + 2}} + n^{2 + \frac{2}{3 (\multapproxlimit + 2) - 2}}} = \bigtO{n^{2 + \frac{2}{3 \multapproxlimit + 2}}}
    \end{equation*}
\end{proof}

Note that for $k = 6$, $n^{2 + \frac{1}{2(k - 1)}} = n^{21/10} = n^{2 + \frac{2}{3k - 2}}$.
For larger $k$, the former term is smaller and we apply \Cref{thm:mult-approx-bk}. However, for $k=4,5$, 
\Cref{alg:short-additive-apasp} gives the best bound (see Table \ref{tbl:2-approx-geq-beta}).

\end{document}